\documentclass[12pt]{article}
\usepackage{amsmath}
\usepackage{graphicx,psfrag,epsf}
\usepackage{enumerate}
\usepackage[authoryear, sort]{natbib}
\usepackage{url} % not crucial - just used below for the URL 

%\pdfminorversion=4
% NOTE: To produce blinded version, replace "0" with "1" below.
\newcommand{\blind}{1}

% DON'T change margins - should be 1 inch all around.
\addtolength{\oddsidemargin}{-.5in}%
\addtolength{\evensidemargin}{-.5in}%
\addtolength{\textwidth}{1in}%
\addtolength{\textheight}{-.3in}%
\addtolength{\topmargin}{-.8in}%

\usepackage{amssymb}
\usepackage{graphicx}
\usepackage{theorem}
\usepackage{caption}
\usepackage{subcaption}
\newenvironment{proof}{{\bf Proof:  }}{\hfill\rule{2mm}{2mm}}
\newtheorem{fact}{Fact}[section]
\newtheorem{lemma}[fact]{Lemma}
\newtheorem{theorem}[fact]{Theorem}
\newtheorem{definition}[fact]{Definition}

\newtheorem{remark}[fact]{Remark}
\usepackage{tikz}
\usetikzlibrary{arrows}
\usepackage{algorithm}
\usepackage{algorithmic}
\usepackage{bm}

\DeclareMathOperator*{\argmax}{arg\,max}

\renewcommand{\hat}{\widehat}
\renewcommand{\tilde}{\widetilde}
\usepackage{nameref, hyperref}

\begin{document}

\def\spacingset#1{\renewcommand{\baselinestretch}%
{#1}\small\normalsize} \spacingset{1}

%%%%%%%%%%%%%%%%%%%%%%%%%%%%%%%%%%%%%%%%%%%%%%%%%%%%%%%%%%%%%%%%%%%%%%%%%%%%%%

\if1\blind
{
  \title{\bf Robust Estimation from Multiple Graphs\\under Gross Error Contamination}
  \author{Runze Tang, Minh Tang, Joshua T.\ Vogelstein, Carey E.\ Priebe\thanks{
    The authors gratefully acknowledge \textit{the D3M program of the Defense Advanced Research Projects Agency (DARPA) administered through AFRL contract FA8750-17-2-0112; DARPA XDATA contract FA8750-12-2-0303; DARPA SIMPLEX contract N66001-15-C-4041; and DARPA GRAPHS contract N66001-14-1-4028. The authors also wish to thank Greg Kiar for helpful support on brain graphs data.}}\hspace{.2cm}\\
    Department of Applied Maths and Statistics, Johns Hopkins University}
  \maketitle
} \fi

\if0\blind
{
  \bigskip
  \bigskip
  \bigskip
  \begin{center}
    {\LARGE\bf Robust Estimation from Multiple Graphs\\under Gross Error Contamination}
\end{center}
  \medskip
} \fi

\bigskip
\begin{abstract}
Estimation of graph parameters based on a collection of graphs is essential for a wide range of graph inference tasks. In practice, weighted graphs are generally observed with edge contamination. We consider a weighted latent position graph model contaminated via an edge weight gross error model and propose an estimation methodology based on robust L$q$ estimation followed by low-rank adjacency spectral decomposition. We demonstrate that, under appropriate conditions, our estimator both maintains L$q$ robustness and wins the bias-variance tradeoff by exploiting low-rank graph structure. We illustrate the improvement offered by our estimator via both simulations and a human connectome data experiment.
\end{abstract}

\noindent%
{\it Keywords:}  weighted, network, low-rank, embedding
\vfill

\newpage
\spacingset{1.45} % DON'T change the spacing!

\section{Background and Overview}

Network analysis has emerged as an area of intense statistical theory and application activity. In the general parametric framework, $G \sim f \in \mathcal{F} = \{f_{\theta} : \theta \in \Theta \}$, and selecting a principled and productive estimator $\hat{\theta}$ for the unknown graph parameter $\theta$ given a sample of graphs $\{G^{(1)}, \cdots, G^{(m)}\}$ is one of the most foundational and essential tasks, facilitating subsequent inference.
For example, \citet{ginestet2014hypothesis} proposes a method to test for a difference between the networks of two groups of subjects in functional neuroimaging; while hypothesis testing is the ultimate goal, estimation is a key intermediate step.
We propose a widely-applicable, robust, low-rank estimation procedure for a collection of weighted graphs.

Consider for illustration the connectome dataset ``Templeton114'' made available through the
Neurodata repository\footnote{\url{http://m2g.io/}}
and investigated in Section~\ref{section:real_data} below.
We have $m=114$ brain graphs, each having $n=70$ vertices representing different anatomical regions; the (errorfully observed) weight of an edge between two vertices represents the number of white-matter tracts connecting the corresponding two regions of the brain, as measured by
diffusion tensor magnetic resonance imaging.
Our goal in this situation is to estimate the average number of white-matter tracts between different regions of the brain. A more accurate estimate can lead to a better understanding of brain connectivity and hence functionality. Also, better estimates will improve performance on other tasks, such as diagnosis of brain disease. %Details of this experiment are discussed in Section~\ref{section:real_data}.

The maximum likelihood estimate (MLE) -- the edge-wise sample mean, without taking any graph structure into account, as in the (weighted extension of) the independent edge graph model (IEM) \citep{bollobas2007phase} (described in Section~\ref{section:WIEM} below) --
is a natural candidate for our estimation problem. However, the MLE suffers from at least two major deficiencies in our setting: high variance and non-robustness.

In our high dimensional setting (a large number of vertices, $n$), the edge-wise MLE leads to  estimates with unacceptably high variance unless the sample size (the number of graphs, $m$) is exceedingly large.
%Graphs are high dimensional objects, since $n^2$ parameters are needed to model $n^2$ possible edges, where $n$ denotes the number of vertices. Thus the MLE does not perform very well especially when there are only a few observations available, which is likely the case in real world.
However, if the graphs can be assumed to be (approximately) low-rank, then by biasing towards low-rank structure, more elaborate estimators can have greatly reduced variance and win the bias-variance tradeoff.
For our connectome data (Section~\ref{section:real_data} Figure~\ref{fig:screehist})
we observe this approximate low-rank property. \citet{tang2016law} develops an estimator based on a low-rank approximation and proves that this new estimator outperforms the edge-wise MLE, decreasing the overall  variance dramatically by smoothing towards the low-rank structure.

The second edge-wise MLE deficiency in our setting derives from the edge observations being subject to contamination. That is, the weights attributed to edges are possibly observed with noise.
The sample mean is notoriously un-robust to outliers;
%The MLE is asymptotically efficient, i.e. when sample size is large enough, the MLE is at least as accurate as any other estimator. However, when the sample size is moderate, robust estimators can outperform the MLE in terms of mean squared error by winning the bias-variance tradeoff.
%Thus, under contamination models, robust estimators can even beat MLE asymptotically since they are designed to be not unduly affected by outliers.
thus, under the possibility of contamination, it is wise to use robust methods, such as the ML$q$E \citep{ferrari2010maximum, qin2013maximum} considered in this paper.

To address these two deficiencies simultaneously, we propose an estimation methodology which is a natural extension of \citep{tang2016law} to gross error contamination. Our proposed estimator both inherits ML$q$E robustness and wins the bias-variance tradeoff by taking advantage of low-rank structure.

We organize the paper as follows. In Section~\ref{section:model}, we extend the independent edge model, random dot product graph model, and stochastic blockmodel to the weighted versions, and define the gross error contamination model we will consider. In Section~\ref{section:estimators}, we present our estimation methodology in terms of two estimators
%propose two estimators other than the entry-wise MLE, in order
designed to address the two edge-wise MLE deficiencies described above, and we construct our final estimator by combining the two estimators. In Section~\ref{section:theory}, we prove that our estimator is superior, under appropriate conditions, and this result is generalized in Section~\ref{section:extension}. In Section~\ref{section:results}, we illustrate the performance of our estimator through experimental results on simulated and real data.

\section{Models}
\label{section:model}

For this work, we are interested in the scenario where $m$ weighted graphs on $n$ vertices are given as adjacency matrices $\{ \bm{A}^{(t)} \} (t = 1, \dotsc, m)$. The graphs are undirected without self-loops, i.e.\ each $\bm{A}^{(t)}$ is symmetric with zeros along the diagonal. Moreover, we assume the vertex correspondence is known across different graphs, so that vertex $i$ of the $t_1$-th graph corresponds to vertex $i$ of the $t_2$-th graph for any $i \in [n]$ and $t_1, t_2 \in [m]$.

In this section, we present three nested models, the weighted independent edge model (WIEM) in Section~\ref{section:WIEM}, the weighted random dot product graph model (WRDPG) in Section~\ref{section:WRDPG}, and the weighted stochastic blockmodel (WSBM) as a WRDPG in Section~\ref{section:WSBM}. Moreover, we introduce a contaminated model based on Section~\ref{section:WSBM} in Section~\ref{section:Contamination}.

\subsection{Weighted Independent Edge Model}
\label{section:WIEM}

In an independent edge model (IEM) \citep{bollobas2007phase} with probability matrix $\bm{P} \in [0, 1]^{n \times n}$, every edge weight $\bm{A}_{ij}$ is drawn from a Bernoulli distribution with parameter $\bm{P}_{ij}$ independent of all other edges.
We first extend the definition of IEM to the weighted independent edge model (WIEM) with respect to a one-parameter family $\mathcal{F} = \{ f_{\theta} : \theta \in \Theta \subset \mathbb{R} \}$; for example, $f_{\theta}$ may be the Poisson distribution with parameter $\theta$. Denote the graph parameters as a matrix $\bm{P} \in \Theta^{n \times n} \subset \mathbb{R}^{n \times n}$. Then under a WIEM, the (weighted) edge between vertex $i$ and vertex $j$ ($i < j$ due to symmetry) has weight $\bm{A}_{ij}$ drawn from $f_{\bm{P}_{ij}}$ independent of all other edges.
Thus IEM is a special case of WIEM, with $\mathcal{F}$ representing the collection of Bernoulli distributions and $\Theta = [0, 1]$.
% @rt: unclear about the notational conventions, what does bold mean? random variables are bold? or vectors/matrices?  P is bold and A_{ij} is bold, suggesting i don't understand the convention.
% @jovo: I see JASA has a convention that vectors and matrices are bolded. And I am following http://www.tandfonline.com/doi/pdf/10.1080/01621459.2012.699795?needAccess=true, although A_{ij} is a scalar, but A is a matrix so we bold A but not the subscription ij. Do you think this is okay?
% @rt: so, bold means rv or vector/matrix? why not bold for vector/matrix, capital for rv? in any case, the wording "\mathbb{A}_ij drawn from" suggests that A_ij was a scalar valued realization, not a rv? in any case, i'd explicitly state the notational convention you are using, so it is clear.

Note that the graphs considered in this paper are undirected without self-loops, and the parameter matrix $\bm{P}$ can be considered to be symmetric and hollow. That is, for convenience, we still define the parameters to be an $n$-by-$n$ matrix while only $n*(n-1)/2$ of them are active.

\subsection{Weighted Random Dot Product Graph}
\label{section:WRDPG}

Vertices are selective about their adjacencies in graphs. A vertex may be frequently adjacent to one group of vertices but rarely adjacent to the other group of vertices.
The latent position model proposed by \citet{hoff2002latent} captures such properties and model these differences in vertex properties by assigning to each vertex $i$ a corresponding latent vector $\bm{X}_i \in \mathbb{R}^d$. Conditioned on the latent vectors $\bm{X}_i$ and $\bm{X}_j$, the edge weight between vertex $i$ and vertex $j$ is independent of all other edges and depends only on $\bm{X}_i$ and $\bm{X}_j$ through a link function.

A special case of the latent position model is the random dot product graph model (RDPG) in which the link function is the inner product \citep{nickel2008random, young2007random}.
Now we give a definition of the weighted random dot product graph (WRDPG) as a special case of the weighted latent position model as follows:
\begin{definition}[Weighted Random Dot Product Graph Model]
Consider a collection of one-parameter distributions $\mathcal{F} = \{ f_{\theta}, \theta \in \Theta \subset \mathbb{R} \}$. The weighted random dot product graph model (WRDPG) with respect to $\mathcal{F}$ is defined via consideration of latent position matrix $\bm{X} \in \mathbb{R}^{n \times d}$ such that $\bm{X} = [\bm{X}_1, \bm{X}_2, \dotsc, \bm{X}_n]^{\top}$, where $\bm{X}_i \in \mathbb{R}^d$ for all $i \in [n]$. The matrix $\bm{X}$ is random and satisfies $\mathbb{P}\left[ \bm{X}_i^{\top} \bm{X}_j \in \Theta \right] = 1$ for all $i, j \in [n]$. Conditioned on $\bm{X}$, the entries of the adjacency matrix $\bm{A}$ are independent and $\bm{A}_{ij}$ is a random variable following distribution $f_{\theta} \in \mathcal{F}$ with parameter $\theta = \bm{X}_i^{\top} \bm{X}_j $ for all $i < j \in [n]$.
\end{definition}
Under the WRDPG defined above, the parameter matrix $\bm{P} = \bm{X} \bm{X}^{\top} \in \Theta^{n \times n} \subset \mathbb{R}^{n \times n}$ is automatically symmetric because the link function is the inner product. Moreover, to have symmetric graphs without self-loops, only $\bm{A}_{ij}$ ($i < j$) are sampled while leaving the diagonals of $\bm{A}$ to be all zeros.

%\begin{definition}
%Consider a collection of one-parameter distributions $\mathcal{F} = \{ F_{\theta}, \theta \in \Theta \subset \mathbb{R} \}$. Then the weighted random dot product graph model (WRDPG) with respect to $\mathcal{F}$ is defined as following: Let $\boldsymbol{X}, \boldsymbol{Y} \in \mathbb{R}^{n \times d}$ be such that $\boldsymbol{X} = [X_1, X_2, \cdots, X_n]^T$ and $\boldsymbol{Y} = [Y_1, Y_2, \cdots, Y_n]^T$, where $X_u, Y_u \in \mathbb{R}^d$ for all $u \in [n]$. The matrices $\boldsymbol{X}$ and $\boldsymbol{Y}$ are random and satisfy $\mathbb{P}\left[ \langle X_u, Y_v \rangle \in \Theta \right] = 1$ for all $u, v \in [n]$. Conditioned on $\boldsymbol{X}$ and $\boldsymbol{Y}$, the entries of the adjacency matrix $\boldsymbol{A}$ are independent and $\boldsymbol{A}_{uv}$ is a random variable with distribution $f \in \mathcal{F}$ with parameter $\langle X_u, Y_v \rangle$ for all $u \ne v \in [n]$.
%\end{definition}

\subsection{Weighted Stochastic Blockmodel as a Weighted Random Dot Product Graph}
\label{section:WSBM}

Community structure is an important property of graphs under which vertices are clustered into different communities such that vertices within the same community behave similarly. The stochastic blockmodel (SBM) proposed by \citet{holland1983stochastic} captures such a property, where each vertex is assigned to one block and the connectivity between two vertices depends only on their respective block memberships.

Formally, the SBM is determined by the number of blocks $K$ (generally much smaller than the number of vertices $n$), block probability matrix $\bm{B} \in [0, 1]^{K \times K}$, and the block assignment vector $\bm{\tau} \in [K]^n$, where $\bm{\tau}_i = k$ represents that vertex $i$ belongs to block $k$. Conditioned on the block membership $\bm{\tau}$, the connectivity between vertex $i$ and vertex $j$ follows a Bernoulli distribution with parameter $\bm{B}_{\bm{\tau}_i, \bm{\tau}_j}$. This can be easily generalized to the weighted stochastic blockmodel (WSBM), with the Bernoulli distribution replaced by a one-parameter distribution family $\mathcal{F} = \{ f_{\theta}, \theta \in \Theta \subset \mathbb{R} \}$ and the block probability matrix given by $\bm{B} \in \Theta^{K \times K} \subset \mathbb{R}^{K \times K}$.

Since the RDPG/WRDPG setting motivates low-rank estimation -- $\bm{P}$ is of rank less than or equal to $d$ -- all analysis in this work is based on such a setting. In order to consider WSBM as a WRDPG, the block probability matrix $\bm{B}$ needs to be positive semi-definite by the structure of WRDPG. Henceforth, with slight abuse of terminology, we will denote the sub-model of WSBM with positive semi-definite $\bm{B}$ as simply the WSBM.

Now consider the WSBM as a WRDPG with respect to $\mathcal{F} = \{ f_{\theta}, \theta \in \Theta \subset \mathbb{R} \}$. Letting $d =
\mathrm{rank}(\bm{B})$, all vertices in block $k$ have shared latent position $\bm{\nu}_k \in \mathbb{R}^{d}$, where $\bm{B} = \bm{\nu} \bm{\nu}^{\top}$ and $\bm{\nu} = [\bm{\nu}_1, \dotsc, \bm{\nu}_K]^{\top} \in \mathbb{R}^{K \times d}$.
That is to say, $\bm{X}_i = \bm{\nu}_{\bm{\tau}_i}$ and $\bm{A}_{ij}$ $(i < j)$ is distributed as $f$ with parameter $\bm{B}_{\bm{\tau}_i, \bm{\tau}_j} = \bm{\nu}_{\bm{\tau}_i}^{\top} \bm{\nu}_{\bm{\tau}_j}$. Here the parameter matrix $\bm{P} \in \mathbb{R}^{n \times n}$ is symmetric and satisfies $\bm{P}_{ij} = \bm{X}_i^{\top} \bm{X}_j = \bm{\nu}_{\bm{\tau}_i}^{\top} \bm{\nu}_{\bm{\tau}_j} = \bm{B}_{\bm{\tau}_i, \bm{\tau}_j}$.

In order to generate $m$ graphs under this model with known vertex correspondence, we first sample $\bm{\tau}$ from the categorical distribution with parameter $\bm{\rho} = [\bm{\rho}_1, \cdots, \bm{\rho}_K]^{\top}$ with $\bm{\rho}_k \in (0, 1)$ and $\sum_{k = 1}^{K} \bm{\rho}_k = 1$, and keep $\bm{\tau}$ fixed when sampling all $m$ graphs. Then $m$ symmetric and hollow graphs are sampled such that conditioning on $\bm{\tau}$, the adjacency matrices are distributed entry-wise independently as $\bm{A}^{(t)}_{ij} \stackrel{ind}{\sim} f_{\bm{B}_{\bm{\tau}_i, \bm{\tau}_j}} = f_{\bm{P}_{ij}}$ for each $1 \le t \le m$, $1 \le i < j \le n$.

\subsection{Weighted Stochastic Blockmodel as a Weighted Random Dot Product Graph with Contamination}
\label{section:Contamination}

In practice, completely accurate data is difficult to collect -- there will almost always be noise in the observations which deviates from our general model assumptions. In order to incorporate this effect, the gross error contamination model \citep{bickel2007mathematical} is considered in this work.

Generally in a gross error model, we observe good measurement $G^* \sim f_{\bm{P}} \in \mathcal{F}$ most of the time, while there are a few contaminated values $G^{**} \sim h_{\bm{C}} \in \mathcal{H}$ when gross errors occur. Here $\bm{P}$ and $\bm{C}$ represent the respective parameter matrices of the two distribution families.
As for graphs, one way to generalize to the gross error model is to contaminate the entire graph with some small probability $\epsilon \in (0, 1)$, that is $G \sim (1-\epsilon) f_{\bm{P}} + \epsilon h_{\bm{C}}$. However, since all the models we consider are subsets of the WIEM, it is more natural to consider the contamination with respect to each edge, i.e.\ for $1 \le i <  j \le n$, $G_{ij} \sim (1-\epsilon) f_{\bm{P}_{ij}} + \epsilon h_{\bm{C}_{ij}}$ with $f \in \mathcal{F}$ and $h \in \mathcal{H}$, where both $\mathcal{F}$ and $\mathcal{H}$ are one-parameter distribution families.

In this paper, we assume that when gross errors occur, the weights of the edges are also from the same one-parameter family $\mathcal{F}$. Moreover, we also assume that the connectivity follows the WSBM as a WRDPG. Thus, similar to the uncontaminated distribution $f_{P_{ij}}$ with $\bm{P}_{ij} = \bm{B}_{\bm{\tau}_i, \bm{\tau}_j}$ where $\bm{B}$ is the block probability matrix and $\bm{\tau}$ is the block assignments, the contamination distribution $f_{\bm{C}_{ij}}$ with $\bm{C}_{ij} = \bm{B}^{\prime}_{\bm{\tau}^{\prime}_i, \bm{\tau}^{\prime}_j}$ also has the block structure, where $\bm{B}^{\prime}$ is the block probability matrix and $\bm{\tau}^{\prime}$ is the block assignment vector. For clarity, we will consider the sampling procedure when the contamination has the same block structure, i.e.\ $\bm{\tau} = \bm{\tau}^{\prime}$. However, this simplification is not required in our theory.

To generate $m$ graphs under this contamination model with known vertex correspondence, we first sample $\bm{\tau}$ from the categorical distribution with parameter $\bm{\rho}$ and keep $\bm{\tau}$ fixed for all $m$ graphs as in Section \ref{section:WSBM}. Then $m$ symmetric and hollow graphs $G^{(1)}, \dotsc, G^{(m)}$ are sampled such that conditioning on $\bm{\tau}$, the adjacency matrices are distributed entry-wise independently as $\bm{A}^{(t)}_{ij} \stackrel{ind}{\sim} (1-\epsilon) f_{\bm{P}_{ij}} + \epsilon f_{\bm{C}_{ij}}$ for each $1 \le t \le m$, $1 \le i < j \le n$, where $\bm{P}_{ij} = \bm{B}_{\bm{\tau}_i, \bm{\tau}_j}$ and $\bm{C}_{ij} = \bm{B}^{\prime}_{\bm{\tau}_i, \bm{\tau}_j}$. Here $\epsilon$ is the probability of an edge to be contaminated, $\bm{P}$ is the parameter matrix as in Section \ref{section:WSBM}, and $\bm{C}$ is the parameter matrix for contamination.

\section{Estimators}
\label{section:estimators}
Under any model introduced in Section~\ref{section:model}, our goal is to estimate the parameter matrix $\bm{P}$ based on the $m$ observations $\bm{A}^{(1)}, \dotsc, \bm{A}^{(m)}$. Especially when under the contamination model, although there are other parameters such as $\epsilon$ and $\bm{C}$, our goal is still to estimate the uncontaminated parameter matrix $\bm{P}$. In this section, we present four estimators as depicted in Figure~\ref{fig:roadmap}, i.e.\ the standard entry-wise MLE $\bm{\hat{P}}^{(1)}$, the low-rank approximation of the entry-wise MLE $\bm{\widetilde{P}}^{(1)}$, the entry-wise robust estimator ML$q$E $\bm{\hat{P}}^{(q)}$, and the low-rank approximation of the entry-wise ML$q$E $\bm{\widetilde{P}}^{(q)}$. Since the observed graphs are symmetric and hollow with a symmetric parameter matrix of the model, we are not concerned with estimating the diagonal of $\bm{P}$; however, the estimate itself should be at least symmetric.

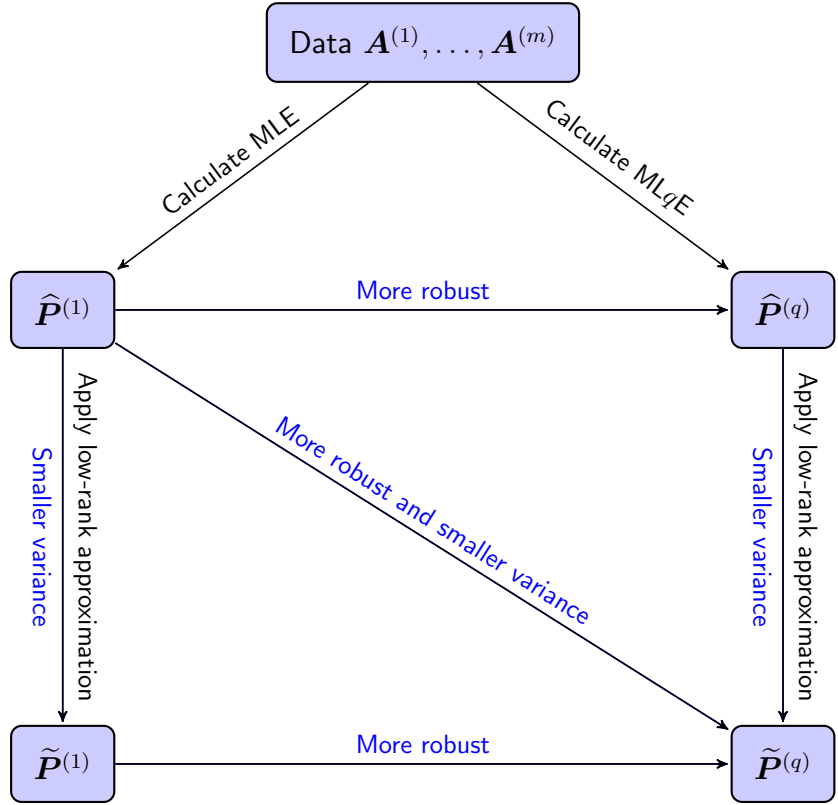
\begin{figure}
\begin{center}
\hspace*{-0.2in}
\begin{tikzpicture}[
  font=\sffamily,
  every matrix/.style={ampersand replacement=\&,column sep=2cm,row sep=2.5cm},
  block/.style={draw,thick,rounded corners,fill=blue!20,inner sep=.3cm},
  process/.style={draw,thick,circle,fill=blue!20},
  sink/.style={source,fill=green!20},
  datastore/.style={draw,very thick,shape=datastore,inner sep=.3cm},
  dots/.style={gray,scale=2},
  to/.style={->,>=stealth',shorten >=1pt,semithick,font=\sffamily\footnotesize},
  tofrom/.style={<->,>=stealth',shorten >=1pt,semithick,font=\sffamily\footnotesize},
  every node/.style={align=center}]

  % Position the nodes using a matrix layout
  \matrix{
  	\& \node[block] (Data) {Data $\bm{A}^{(1)}, \dotsc, \bm{A}^{(m)}$};\\
    \node[block] (MLE) {$\hat{\bm{P}}^{(1)}$};
      \& \& \node[block] (MLqE) {$\hat{\bm{P}}^{(q)}$};\\
	\\
    \node[block] (XMLE) {$\widetilde{\bm{P}}^{(1)}$};
      \& \& \node[block] (XMLqE) {$\widetilde{\bm{P}}^{(q)}$}; \\
  };

  % Draw the arrows between the nodes and label them.
  % \draw[to, blue] (Data) -- node[midway, sloped, below] {UMVUE w/o contamination} (MLE);
  \draw[to] (Data) -- node[midway, sloped, above] {Calculate MLE} (MLE);
  \draw[to] (Data) -- node[midway, sloped, above] {Calculate ML$q$E} (MLqE);
  \draw[to, blue] (MLE) -- node[midway,above] {More robust} (MLqE);
  \draw[to] (MLE) -- (MLqE);
  \draw[to, blue] (MLE) -- node[midway, sloped, below] {Smaller variance} (XMLE);
  \draw[to] (MLE) -- node[midway, sloped, above] {Apply low-rank approximation} (XMLE);
  \draw[to, blue] (MLqE) -- node[midway, sloped, below] {Smaller variance} (XMLqE);
  \draw[to] (MLqE) -- node[midway, sloped, above] {Apply low-rank approximation} (XMLqE);
  \draw[to, blue] (XMLE) -- node[midway,above] {More robust} (XMLqE);
  \draw[to] (XMLE) -- (XMLqE);
  \draw[to, blue] (MLE) -- node[midway, sloped, above] {More robust and smaller variance} (XMLqE);
  \draw[to] (MLE) -- (XMLqE);
\end{tikzpicture}
\end{center}
\caption{\label{fig:roadmap}Roadmap among the data and four estimators.}
\end{figure}

%\subsection{Entry-wise Maximum Likelihood Estimator $\hat{\bm{P}}^{(1)}$}
\subsection{Entry-wise Maximum Likelihood Estimator \texorpdfstring{$\hat{\bm{P}}^{(1)}$}{$P$}}

Under the WIEM, the most natural estimator is the MLE, which happens to be the element-wise MLE $\hat{\bm{P}}^{(1)}$ in this case.
Moreover, when $\mathcal{F}$ is a one-parameter exponential family, such as Bernoulli, Poisson, or Exponential, the entry-wise MLE $\hat{\bm{P}}^{(1)}$ is the uniformly minimum-variance unbiased estimator, i.e.\ it has the smallest variance among all unbiased estimators. In addition, it has desirable asymptotic properties as the number of graphs $m$ goes to infinity.
However, in high dimensional situations such as our graph setting, the entry-wise MLE often leads to inaccurate estimates with very high variance when the sample size $m$ is small. Also, it does not exploit any graph structure. The performance will not improve as the number of vertices in each graph $n$ increases since it is an entry-wise estimator. Moreover, if the graphs are actually distributed under a WRDPG or a WSBM, then the entry-wise MLE is no longer the MLE  and the performance can be improved by considering low-rank estimators.

%\subsection{Estimator $\widetilde{P}^{(1)}$ Based on Adjacency Spectral Embedding of $\hat{P}^{(1)}$}
\subsection{Estimator \texorpdfstring{$\widetilde{\bm{P}}^{(1)}$}{$P$} Based on Adjacency Spectral Embedding of \texorpdfstring{$\hat{\bm{P}}^{(1)}$}{$P$}}

Motivated by the low-rank structure of the parameter matrix $P$ in WRDPG, we consider the estimator $\widetilde{\bm{P}}^{(1)}$ proposed by \citet{tang2016law} based on the spectral decomposition of $\hat{\bm{P}}^{(1)}$.
%The estimator $\widetilde{P}^{(1)}$ is similar to the estimator proposed by \citet{chatterjee2015matrix} with adjustment for the specific estimation task, including a different dimension selection technique discussed in Section~\ref{section:dim_select} and a diagonal augmentation procedure discussed in Section~\ref{section:diag_aug}.
%We introduce the dimension selection technique in Section~\ref{section:dim_select}.
%and the diagonal augmentation procedure is discussed in Section~\ref{section:diag_aug}.
The construction procedure of $\widetilde{\bm{P}}^{(1)}$ consists of several steps, which will be introduced respectively in the following subsections.
% @rt: i migth name thing differently.  eg tilde{P} i might name \hat{P}_d or \hat{P}_{\hat{d}}.  otherwise, i just have to remember the notation, as it is uninformative.  or i'd use Abar^1 and Abar^q and Phat^1 and Phat^q.

\subsubsection{Rank-$d$ Approximation}

Given a dimension $d$, we consider $\widetilde{\bm{P}}^{(1)} = \mathrm{lowrank}_d(\hat{\bm{P}}^{(1)})$ as the best rank-$d$ positive semi-definite approximation of $\hat{\bm{P}}^{(1)}$. To find this best approximation, first calculate the eigen-decomposition of the symmetric matrix $\hat{\bm{P}}^{(1)} = \hat{\bm{U}} \hat{\bm{S}} \hat{\bm{U}}^{\top} + \widetilde{\bm{U}} \widetilde{\bm{S}} \widetilde{\bm{U}}^{\top}$, where $\hat{\bm{S}}$ is the diagonal matrix with the largest $d$ eigenvalues of $\hat{\bm{P}}^{(1)}$, and $\hat{\bm{U}}$ has the corresponding eigenvectors as each column. Similarly, $\tilde{\bm{S}}$ is the diagonal matrix with non-increasing entries along the diagonal corresponding to the remaining $n - d$ eigenvalues of $\hat{\bm{P}}^{(1)}$, and $\tilde{\bm{U}}$ has the columns given by the corresponding eigenvectors.
The $d$-dimensional adjacency spectral embedding (ASE) of $\hat{\bm{P}}^{(1)}$ is given by $\hat{\bm{X}}=\hat{\bm{U}} \hat{\bm{S}}^{1/2}\in \mathbb{R}^{n \times d}$; the best rank-$d$ positive semi-definite approximation  $\hat{\bm{P}}^{(1)}$ is then $\widetilde{\bm{P}}^{(1)} = \hat{\bm{X}} \hat{\bm{X}}^{\top}=\hat{\bm{U}}\hat{\bm{S}}\hat{\bm{U}}^{\top}$. The issue of how to select the emebdding dimesion $d$ is discussed later in Section~\ref{section:dim_select}.
In the RDPG setting, \citet{sussman2014consistent} proved that, provided that $d$ is chosen appropriately, each row of $\hat{\bm{X}}$ can accurately estimate the the latent position for each vertex up to an orthogonal transformation. We will extend these results for $\hat{\mathbf{X}}$ to the WRDPG setting in Section \ref{section:theory}.

Here, we restate the algorithm given in \citep{tang2016law} to give the detailed steps of computing this low-rank approximation of a general $n$-by-$n$ symmetric matrix $\bm{A}$ in Algorithm~\ref{algo:lowrank}.
\begin{algorithm}[H]
\caption{Algorithm to compute the rank-$d$ approximation of a matrix}
\label{algo:lowrank}
\begin{algorithmic}[1]
\REQUIRE Symmetric matrix $\bm{A} \in \mathbb{R}^{n \times n}$ and dimension $d\leq n$
\ENSURE $\mathrm{lowrank}_d(\bm{A})\in \mathbb{R}^{n \times n}$
\STATE Compute the algebraically largest $d$ eigenvalues of $\bm{A}$, $s_1\geq s_2\ge \dotsc \ge s_d$ and corresponding unit-norm eigenvectors $\bm{u}_1,\bm{u}_2,\dotsc,\bm{u}_d\in \mathbb{R}^n$
\STATE Set $\hat{\bm{S}}$ to the $d\times d$ diagonal matrix $\mathrm{diag}(s_1,\dotsc,s_d)$
\STATE Set $\hat{\bm{U}} = [\bm{u}_1,\dotsc,\bm{u}_d]\in \mathbb{R}^{n \times d}$
\STATE Set $\mathrm{lowrank}_d(\bm{A})$ to $\hat{\bm{U}}\hat{\bm{S}}\hat{\bm{U}}^{\top}$
\end{algorithmic}
\end{algorithm}

\subsubsection{Dimension Selection}
\label{section:dim_select}

Although Algorithm~\ref{algo:lowrank} provides a way to calculate the best rank-$d$ positive semi-definite approximation of a general symmetric matrix $\bm{A}$, it does not tell us how to select a proper dimension $d$. If we choose a relatively small dimension $d$, the estimator based on this approximation will fail to capture important information. On the other hand, when $d$ is too large, the approximation will be subject to substantial noise and also lead to a poor estimate. So a carefully selected dimension $d$ is an essential aspect of this approximation/estimation.

A general approach to selecting the dimension $d$ is to analyze the ordered eigenvalues and look for a ``gap'' or ``elbow'' in the scree-plot.
In particular, a method proposed in \citep{zhu2006automatic} finds the gaps in the scree-plot by positing a Gaussian mixture model for the ordered eigenvalues. This method provides multiple choices based on different elbows. In this paper, to avoid under-estimating the dimension, which is often much more harmful than over-estimating it, we choose the third elbow returned by the procedure of \citep{zhu2006automatic}.

Although it is always challenging to select a proper dimension, the results of our real data experiment in Section~\ref{section:real_data} demonstrate that a wide range of dimension choices lead to fairly good and comparable results. Thus any dimension selection method can generally be applied directly without excessive tuning of parameters, thereby making our low-ranked estimators useful in practice.

\begin{algorithm}[H]
\caption{Algorithm to compute $\widetilde{\bm{P}}^{(1)}$}
\label{algo:basic}
\begin{algorithmic}[1]
\REQUIRE Symmetric adjacency matrices $\bm{A}^{(1)}, \bm{A}^{(2)}, \dotsc, \bm{A}^{(m)}$, with each $\bm{A}^{(t)} \in \mathbb{R}^{n \times n}$
\ENSURE Estimate $\widetilde{\bm{P}}^{(1)} \in \mathbb{R}^{n \times n}$
\STATE Calculate the entry-wise MLE $\hat{\bm{P}}^{(1)}$
\STATE Select the dimension $d$ based on the eigenvalues of $\hat{\bm{P}}^{(1)}$; (see Section~\ref{section:dim_select})
\STATE Set $\bm{Q}$ to $\mathrm{lowrank}_d(\hat{\bm{P}}^{(1)})$; (see Algorithm~\ref{algo:lowrank})
\STATE Set $\widetilde{\bm{P}}^{(1)}$ with each entry $\widetilde{\bm{P}}^{(1)}_{ij} = \max(\bm{Q}_{ij}, 0)$
\end{algorithmic}
\end{algorithm}

With the dimension selection procedure as described above, 
the detailed description for calculating our estimator $\widetilde{\bm{P}}^{(1)}$ is then given by Algorithm~\ref{algo:basic}.

%\subsection{Entry-wise Maximum L$q$-likelihood Estimator $\hat{P}^{(q)}$}
\subsection{Entry-wise Maximum L$q$-likelihood Estimator \texorpdfstring{$\hat{\bm{P}}^{(q)}$}{$P$}}

In the case of no contamination, the MLE is asymptotically efficient, i.e.\ when sample size is large enough, the MLE is at least as accurate as any other estimator. However, when the sample size is moderate, robust estimators can outperform the MLE in terms of mean squared error by winning the bias-variance tradeoff. Moreover, under contamination models, robust estimators can even outperform the MLE asymptotically since they are designed to be not unduly affected by outliers. We consider one such robust estimator, the maximum L$q$-likelihood estimator (ML$q$E) proposed by \citet{ferrari2010maximum}.

Let $X_1, \dotsc, X_m$ be sampled from $f_{\theta_0} \in \mathcal{F} = \{ f_{\theta}, \theta \in \Theta \}$, $\theta_0 \in \Theta$. Then the maximum L$q$-likelihood estimate ($q > 0$) of $\theta_0$ based on the parametric model $\mathcal{F}$ is defined as
\[
	\hat{\theta}_{\mathrm{ML}q\mathrm{E}} = \argmax_{\theta \in \Theta} \sum_{i=1}^m L_q[f_{\theta}(X_i)],
\]
where $L_q(u) = (u^{1-q} - 1)/(1- q)$.
Note that $L_q(u) \to \log(u)$ when $q \to 1$. Thus ML$q$E is a generalization of MLE.
Moreover, define
\[
	U_{\theta}(x) = \nabla_{\theta} \log f_{\theta}(x)
\]
and
\[
	U^{\star}_{\theta}(x; q) = U_{\theta}(x) f_{\theta}(x)^{1-q}.
\]
Then the ML$q$E $\hat{\theta}_{\mathrm{ML}q\mathrm{E}}$ can also be seen as a solution to the equation
\[
	\sum_{i=1}^m U^{\star}_{\theta}(X_i; q) = 0.
\]
This form interprets $\hat{\theta}_{\mathrm{ML}q\mathrm{E}}$ as a solution to a weighted likelihood equation. The weights $f_{\theta}(x)^{1-q}$ are proportional to the $(1-q)$th power of the corresponding probability. Specifically, when $0 < q < 1$, the ML$q$E puts less weight on the data points which do not fit the current distribution well. Equal weights are induced by $q=1$ and lead to the standard MLE.

Under the WIEM, we can calculate the robust entry-wise ML$q$E $\hat{\bm{P}}^{(q)}$ based on the adjacency matrices $\bm{A}^{(1)}, \dotsc, \bm{A}^{(m)}$. Note that $\hat{\bm{P}}^{(1)}$, the entry-wise MLE, is a special case of the entry-wise ML$q$E $\hat{\bm{P}}^{(q)}$ for $q = 1$.
%That is what the superscriptions $q$ and $1$ mean.
There is also a bias-variance tradeoff in selecting the parameter $q$. \citet{qin2017robust} proposed a way to select $q$ in general. In this work, we do not focus on automatic selection of $q$.

%\subsection{Estimator $\widetilde{P}^{(q)}$ Based on Adjacency Spectral Embedding $\hat{P}^{(q)}$}
\subsection{Estimator \texorpdfstring{$\widetilde{\bm{P}}^{(q)}$}{$P$} Based on Adjacency Spectral Embedding \texorpdfstring{$\hat{\bm{P}}^{(q)}$}{$P$}}

Intuitively, the low-rank structure of the parameter matrix $\bm{P}$ in WRDPG should be preserved approximately in the entry-wise ML$q$E $\hat{\bm{P}}^{(q)}$. Thus, in order to take advantage of such low-rank structure as well as the robustness, we apply the similar idea here as in building $\widetilde{\bm{P}}^{(1)}$, i.e.\ enforce a low-rank approximation on the entry-wise ML$q$E matrix $\hat{\bm{P}}^{(q)}$ to get $\widetilde{\bm{P}}^{(q)}$.
 The details of the construction of $\widetilde{\bm{P}}^{(q)}$ are given in Algorithm~\ref{algo:basic_q}. Algorithm~\ref{algo:basic_q} is almost identical to that of Algorithm~\ref{algo:basic} for constructing $\widetilde{\bm{P}}^{(1)}$, the main change being the use of $\hat{\bm{P}}^{(q)}$ in place of $\hat{\bm{P}}^{(1)}$. 
%and diagonal augmentation procedure.

%\begin{algorithm}[H]
%\caption{Algorithm to compute $\widetilde{P}^{(q)}$}
%\label{algo:basic_q}
%\begin{algorithmic}[1]
%\REQUIRE Symmetric adjacency matrices $A^{(1)}, A^{(2)}, \dotsc, A^{(m)}$, with each $A^{(t)} \in \mathbb{R}^{n \times n}$
%\ENSURE Estimate $\widetilde{P}^{(q)} \in \mathbb{R}^{n \times n}$
%\STATE Calculate the entry-wise ML$q$E $\hat{P}^{(q)}$;
%\STATE Calculate the scaled degree matrix $D = \mathrm{diag}(\hat{P}^{(q)} \bm{1})/(n-1)$;
%\STATE Select the dimension $d$ based on the eigenvalues of $\hat{P}^{(q)} + D$; (see Section~\ref{section:dim_select})
%\STATE Set $Q$ to $\mathrm{lowrank}_d(\hat{P}^{(q)} + D)$; (see Algorithm~\ref{algo:lowrank})
%\STATE Set $D^{\prime}$ to $ \mathrm{diag}(Q)$, the diagonal matrix with diagonal matching $Q$;
%\STATE Set $Q^{\prime}$ to $\mathrm{lowrank}_d(Q + D^{\prime})$; (see Algorithm~\ref{algo:lowrank})
%%\STATE Set $\widetilde{P}^{(q)}$ with each entry $\widetilde{P}^{(q)}_{ij} = \min( \hat{P}^{(q)}_{ij}, \max(Q^{\prime}_{ij}, 0))$.
%\STATE Set $\widetilde{P}^{(q)}$ with each entry $\widetilde{P}^{(q)}_{ij} = \max(Q^{\prime}_{ij}, 0)$.
%\end{algorithmic}
%\end{algorithm}

\begin{algorithm}[H]
\caption{Algorithm to compute $\widetilde{\bm{P}}^{(q)}$}
\label{algo:basic_q}
\begin{algorithmic}[1]
\REQUIRE Symmetric adjacency matrices $\bm{A}^{(1)}, \bm{A}^{(2)}, \dotsc, \bm{A}^{(m)}$, with each $\bm{A}^{(t)} \in \mathbb{R}^{n \times n}$
\ENSURE Estimate $\widetilde{\bm{P}}^{(q)} \in \mathbb{R}^{n \times n}$
\STATE Calculate the entry-wise ML$q$E $\hat{\bm{P}}^{(q)}$
\STATE Select the dimension $d$ based on the eigenvalues of $\hat{\bm{P}}^{(q)}$; (see Section~\ref{section:dim_select})
\STATE Set $\bm{Q}$ to $\mathrm{lowrank}_d(\hat{\bm{P}}^{(q)})$; (see Algorithm~\ref{algo:lowrank})
\STATE Set $\widetilde{\bm{P}}^{(q)}$ with each entry $\widetilde{\bm{P}}^{(q)}_{ij} = \max(\bm{Q}_{ij}, 0)$
\end{algorithmic}
\end{algorithm}

%\begin{figure}
%\centering
%\begin{subfigure}{.5\textwidth}
%  \centering
%  \includegraphics[width=1.2\linewidth]{SBM_P.png}
%\end{subfigure}%
%\begin{subfigure}{.5\textwidth}
%  \centering
%  \includegraphics[width=1.2\linewidth]{SBM_A.png}
%\end{subfigure}
%\caption{Example illustrating the SBM. The figure on the left is the probability matrix $P$ that follows a SBM with $K = 5$ blocks; The other figure shows the adjacency matrix $A$ for 200 vertices generated from the SBM with probability matrix $P$.}
%\label{fig:SBM_example}
%\end{figure}

\section{Theoretical Results}
\label{section:theory}
In this section, for illustrative purposes, we present theoretical results for the case in which the contamination model introduced in Section~\ref{section:Contamination} is with respect to exponential distributions. That is $\mathcal{F} = \{ f_{\theta}(x) = \exp(-x/\theta)/{\theta}, \theta \in [0, R] \subset \mathbb{R} \}$, where $R > 0$ is a constant. These results can be extended beyond the exponential under appropriate conditions, which will be discussed in Section~\ref{section:extension}.

For clarity, we restate the model settings discussed in Section~\ref{section:Contamination}. Consider the WSBM with parameters $\bm{B}$ and $\bm{\rho}$. First we sample the block membership $\bm{\tau}$ from the categorical distribution with parameter $\bm{\rho}$ and keep it fixed for all $m$ graphs. Conditioned on this $\bm{\tau}$, the uncontaminated probability matrix $\bm{P}$  satisfies $\bm{P}_{ij} = \bm{B}_{\bm{\tau}_i, \bm{\tau}_j}$. In this section, we assume the contamination has the same block membership $\bm{\tau}$, and so the contamination matrix $\bm{C} \in \mathbb{R}^{n \times n}$ has the same block structure as $\bm{P}$.  Denote $\epsilon$ as the probability that an edge is contaminated. Then $m$ symmetric graphs $G^{(1)}, \dotsc, G^{(m)}$  are sampled such that conditioning on $\bm{\tau}$, the adjacency matrices are distributed entry-wise independently as $\bm{A}^{(t)}_{ij} \stackrel{ind}{\sim} (1-\epsilon) f_{\bm{P}_{ij}} + \epsilon f_{\bm{C}_{ij}}$ for each $1 \le t \le m$, $1 \le i < j \le n$.
Note that our theoretical results do not require the
contamination to have the same block structure and block membership $\bm{\tau}$ as the uncontaminated probability matrix; different block structure for $\mathbf{C}$ will lead to similar results -- but potentially require larger embedding dimension for $\hat{\mathbf{X}}$  -- since the rank of $(1-\epsilon) \bm{P}_{ij} + \epsilon \bm{C}_{ij}$ is still finite.

In the setting outlined above, we now analyze the performance of all four estimators based on $m$ adjacency matrices for estimating the probability matrix $\bm{P}$ in terms of the mean squared error. When comparing two estimators, we mainly focus on both asymptotic bias and asymptotic variance. Note that all the results in this section are entry-wise, which easily leads to the result for the total MSE for the entire matrix. We present the main results in this section. Additional results and proofs of stated results are given in the appendix.

%\subsection{$\hat{P}^{(1)}$ vs. $\hat{P}^{(q)}$}
\subsection{\texorpdfstring{$\hat{\bm{P}}^{(1)}$}{$P$} vs. \texorpdfstring{$\hat{\bm{P}}^{(q)}$}{$P$}}
\label{section:MLEvsMLqE}
We first compare the performance between the entry-wise MLE $\hat{\bm{P}}^{(1)}$ and the entry-wise ML$q$E $\hat{\bm{P}}^{(q)}$. Without using the graph structure, the asymptotic results for these two estimators are in terms of the number of graphs $m$, not the number of vertices $n$ within each graph.

\begin{theorem}
\label{thm:MLEvsMLqE}
%\label{lemma:ELqlEMLE}
%\label{lemma:VarLqlVarMLE}
For any $0 < q < 1$ and any $\bm{P}$, there exists a constant $C_0(\epsilon, q) > 0$ depending only on $\epsilon$, $q$, and $\max_{ij} \bm{P}_{ij}$ such that under the contaminated model $\bm{A}_{ij}^{(t)} \sim (1 - \epsilon) f_{\bm{P}_{ij}} + \epsilon f_{\bm{C}_{ij}}$ with $\bm{C}_{ij} > C_0(\epsilon, q)$ for all $i,j$ ML$q$E has smaller entry-wise asymptotic bias compared to MLE, i.e.\
\[
	\lim_{m \to \infty} \left| E[\hat{\bm{P}}^{(q)}_{ij}] - \bm{P}_{ij} \right| <
    \lim_{m \to \infty} \left| E[\hat{\bm{P}}^{(1)}_{ij}] - \bm{P}_{ij} \right|,
\]
for $1 \le i, j \le n$ and $i \ne j$.
Moreover, for any $\epsilon \geq 0$ and any $q$,
\[
	\mathrm{Var}(\hat{\bm{P}}^{(1)}_{ij})
    = \mathrm{Var}(\hat{\bm{P}}^{(q)}_{ij}) = O(1/m).
\]
And thus
\[
	\lim_{m \to \infty} \mathrm{Var}(\hat{\bm{P}}^{(1)}_{ij})
    = \lim_{m \to \infty} \mathrm{Var}(\hat{\bm{P}}^{(q)}_{ij}) = 0.
\]
\end{theorem}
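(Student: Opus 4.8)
The plan is to regard each entry $\hat{\bm{P}}^{(1)}_{ij}$ and $\hat{\bm{P}}^{(q)}_{ij}$ as a one-dimensional $M$-estimator based on the i.i.d.\ sample $\bm{A}^{(1)}_{ij},\dots,\bm{A}^{(m)}_{ij}\sim(1-\epsilon)f_{\bm{P}_{ij}}+\epsilon f_{\bm{C}_{ij}}$, and to analyze its limit by standard estimating-equation asymptotics, exploiting that $\Theta=[0,R]$ is compact and that a finite mixture of exponentials with parameters in $[0,R]$ has finite moments of all orders. The MLE side is essentially free: $\hat{\bm{P}}^{(1)}_{ij}=\tfrac1m\sum_t\bm{A}^{(t)}_{ij}$ is a sample mean, so $E[\hat{\bm{P}}^{(1)}_{ij}]=(1-\epsilon)\bm{P}_{ij}+\epsilon\bm{C}_{ij}$ exactly, its asymptotic (indeed exact) bias is $\epsilon(\bm{C}_{ij}-\bm{P}_{ij})$, which grows without bound in $\bm{C}_{ij}$, and $\mathrm{Var}(\hat{\bm{P}}^{(1)}_{ij})=\tfrac1m\,\mathrm{Var}\!\big((1-\epsilon)f_{\bm{P}_{ij}}+\epsilon f_{\bm{C}_{ij}}\big)=O(1/m)$.

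For the ML$q$E I would first pin down its probability limit. A uniform law of large numbers on the compact parameter space gives that $\hat{\bm{P}}^{(q)}_{ij}$ converges in probability to $\theta^{\star}_{ij}:=\argmax_{\theta}E[L_q(f_\theta(X))]$, which solves $E[U^{\star}_\theta(X;q)]=0$. For the exponential family, $U_\theta(x)=(x-\theta)/\theta^2$ and $f_\theta(x)^{1-q}\propto e^{-(1-q)x/\theta}$, so substituting the exponential Laplace-transform identities $E[e^{-sY}]=(1+s\mu)^{-1}$ and $E[Ye^{-sY}]=\mu(1+s\mu)^{-2}$ for $Y\sim f_\mu$ reduces this equation, writing $\beta:=1-q$, to
\[
(1-\epsilon)\,\frac{q\bm{P}_{ij}-\theta}{(\theta+\beta\bm{P}_{ij})^{2}}\;+\;\epsilon\,\frac{q\bm{C}_{ij}-\theta}{(\theta+\beta\bm{C}_{ij})^{2}}\;=\;0.
\]
The key point is that the contamination term is $O(1/\bm{C}_{ij})$ as $\bm{C}_{ij}\to\infty$, so the uncontaminated term must vanish in the limit and $\theta^{\star}_{ij}\to q\bm{P}_{ij}$; more precisely, the left-hand side is strictly decreasing on $[q\bm{P}_{ij},(1+q)\bm{P}_{ij}]$ and changes sign on that interval once $\bm{C}_{ij}$ exceeds a threshold depending only on $\epsilon$, $q$, and $\max_{ij}\bm{P}_{ij}$, so the relevant root satisfies $\theta^{\star}_{ij}\in(q\bm{P}_{ij},(1+q)\bm{P}_{ij})$ and hence $|\theta^{\star}_{ij}-\bm{P}_{ij}|\le\max(q,1-q)\max_{ij}\bm{P}_{ij}$. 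Since $\hat{\bm{P}}^{(q)}_{ij}\in[0,R]$ is bounded, convergence in probability upgrades to $E[\hat{\bm{P}}^{(q)}_{ij}]\to\theta^{\star}_{ij}$, so the ML$q$E asymptotic bias equals $|\theta^{\star}_{ij}-\bm{P}_{ij}|$ and is uniformly bounded in $\bm{C}_{ij}$. Comparing this bounded quantity with the unbounded MLE bias $\epsilon(\bm{C}_{ij}-\bm{P}_{ij})$ (and taking the threshold above $\max_{ij}\bm{P}_{ij}$) yields the crossover constant $C_0(\epsilon,q)$, also depending on $\max_{ij}\bm{P}_{ij}$, above which the ML$q$E bias is strictly smaller, giving the first claim.

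For the variances I would invoke the usual asymptotic-normality expansion for $Z$-estimators: $\sqrt{m}\,(\hat{\bm{P}}^{(q)}_{ij}-\theta^{\star}_{ij})$ is asymptotically $N(0,V_{ij})$ with sandwich variance $V_{ij}=E[U^{\star}_{\theta^{\star}}(X;q)^{2}]/\big(E[\partial_\theta U^{\star}_{\theta^{\star}}(X;q)]\big)^{2}$, which is finite and positive because the weight $e^{-\beta x/\theta}$ is bounded on $[0,\infty)$, the mixture has all moments, $\theta^{\star}_{ij}$ is interior to $[0,R]$, and the local maximum is nondegenerate. To turn this weak limit into the stated $O(1/m)$ bound on the actual variance, I would truncate at a small $\eta$: on $\{|\hat{\bm{P}}^{(q)}_{ij}-\theta^{\star}_{ij}|<\eta\}$ the one-step Taylor expansion of the estimating equation gives a contribution of order $1/m$, while the complement has probability $O(m^{-2})$ by Markov's inequality applied to a fourth moment of $\tfrac1m\sum_t U^{\star}_{\theta^{\star}}(\bm{A}^{(t)}_{ij};q)$ (which is i.i.d., mean zero, with finite moments), so its contribution is at most $R^{2}\cdot O(m^{-2})=o(1/m)$. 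Combining with the exact MLE variance gives $\mathrm{Var}(\hat{\bm{P}}^{(1)}_{ij})=\mathrm{Var}(\hat{\bm{P}}^{(q)}_{ij})=O(1/m)$, and in particular both tend to $0$.

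The hardest part will be the bias step: controlling the root of the displayed fixed-point equation uniformly as $\bm{C}_{ij}$ grows, and in particular checking that the \emph{global} maximizer $\theta^{\star}_{ij}$ of the limiting L$q$-criterion is the small root near $q\bm{P}_{ij}$ rather than a spurious root near the contamination scale --- i.e.\ ruling out a competing local maximum of $E[L_q(f_\theta(X))]$ on $((1+q)\bm{P}_{ij},R]$. A secondary technical nuisance is the uniform-integrability/truncation argument needed to upgrade asymptotic normality of the ML$q$E to the finite-$m$ statement $\mathrm{Var}(\hat{\bm{P}}^{(q)}_{ij})=O(1/m)$.
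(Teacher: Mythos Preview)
Your approach is essentially the paper's: treat each entry as a one-dimensional $M$-/minimum-contrast estimator, write down the population ML$q$ equation for the exponential mixture, locate its root $\theta^{\star}_{ij}$, and compare $|\theta^{\star}_{ij}-\bm{P}_{ij}|$ with the MLE bias $\epsilon(\bm{C}_{ij}-\bm{P}_{ij})$; for the variance both of you invoke standard $M$-estimator asymptotics (the paper just cites Theorem~5.4.2 of Bickel--Doksum after verifying the minimum-contrast conditions).

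The one place the paper does more than you is precisely the point you flag as hardest. Instead of arguing $\theta^{\star}_{ij}\to q\bm{P}_{ij}$ as $\bm{C}_{ij}\to\infty$ and then worrying about competing maxima, the paper clears denominators in your displayed equation to get a cubic in $\theta$, and checks directly that this cubic has exactly one real root (the other two are complex) and that it always lies in $\bigl(q\bm{P}_{ij},\,(1-\epsilon)\bm{P}_{ij}+\epsilon\bm{C}_{ij}\bigr)$. The upper bound immediately gives $|\theta^{\star}_{ij}-\bm{P}_{ij}|<\epsilon(\bm{C}_{ij}-\bm{P}_{ij})$ whenever $\theta^{\star}_{ij}\ge\bm{P}_{ij}$; the lower bound gives $|\theta^{\star}_{ij}-\bm{P}_{ij}|<(1-q)\bm{P}_{ij}$ otherwise, and from this one reads off the explicit threshold $C_0=\bm{P}_{ij}\bigl(1+(1-q)/\epsilon\bigr)$. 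This is sharper than your asymptotic bracket $(q\bm{P}_{ij},(1+q)\bm{P}_{ij})$ and, because uniqueness of the real root is established algebraically, the ``spurious root near the contamination scale'' concern simply does not arise. Your route is still correct; the cubic computation just makes both the constant and the root-uniqueness elementary.
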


Theorem~\ref{thm:MLEvsMLqE} shows that the entry-wise ML$q$E $\hat{\bm{P}}^{(q)}$ has smaller bias for estimating $\bm{P}$ asymptotically compared to the entry-wise MLE $\hat{\bm{P}}^{(1)}$. Although we put restrictions on the contamination matrix $\bm{C}$ in the statement of the theorem, the result still holds provided that $\epsilon (\bm{C}_{ij} - \bm{P}_{ij}) > (1 - q) \bm{P}_{ij}$ for all $i,j$. This condition can be interpreted as only requiring that the contamination of the model is large enough (either large contamination parameter matrix, or higher likelihood of encountering an outlier). From a different perspective, by putting a condition on $q$ with respect to the amount of contamination, the above condition also corresponds to only requiring that $\hat{\bm{P}}^{(q)}$ be robust enough with respect to the contamination. Thus besides the current condition for $\bm{C}$, equivalently, we can also replace it by the assumption of a large enough $\epsilon$ or a small enough $q$.

Theorem~\ref{thm:MLEvsMLqE} also indicates that both estimators have variances converging to zero as the number of graphs $m$ goes to infinity, following the asymptotic properties of minimum contrast estimates. Thus the bias term will dominate in the comparison in terms of MSE.

As a result, $\hat{\bm{P}}^{(q)}$ asymptotically reduces the bias while keeping the variance asymptotically the same as that of $\hat{\bm{P}}^{(1)}$. Thus in terms of MSE, $\hat{\bm{P}}^{(q)}$ is a better estimator than $\hat{\bm{P}}^{(1)}$ when the number of graphs $m$ is large with enough contamination.

%\subsection{$\hat{P}^{(1)}$ vs. $\widetilde{P}^{(1)}$}
\subsection{\texorpdfstring{$\hat{\bm{P}}^{(1)}$}{$P$} vs. \texorpdfstring{$\widetilde{\bm{P}}^{(1)}$}{$P$}}

We next analyze the effect of the ASE procedure applied to the entry-wise MLE $\hat{\bm{P}}^{(1)}$ under the contamination model, so that we can compare the performance between $\hat{\bm{P}}^{(1)}$ and $\widetilde{\bm{P}}^{(1)}$.

Before proceeding to the comparison between the two estimators, we first recall the definition of the asymptotic relative efficiency (ARE) \citep{serfling2011asymptotic}, which is an important and useful criterion to compare two estimators. Note that the original definition is for unbiased estimators. Here we adapt the definition to estimators with the same asymptotic bias.
\begin{definition}
For any parameter $\theta$ of a distribution $f$, and for estimators $\hat{\theta}^{(1)}$ and $\hat{\theta}^{(2)}$ such that $E[\hat{\theta}^{(1)}] = E[\hat{\theta}^{(2)}] = \theta^{\prime}$, $n \cdot \mathrm{Var}(\hat{\theta}^{(1)}) \to V_1(f)$ and $n \cdot \mathrm{Var}(\hat{\theta}^{(2)}) \to V_2(f)$, the ARE of $\hat{\theta}^{(2)}$ to $\hat{\theta}^{(1)}$ is given by
\[
	\mathrm{ARE}(\hat{\theta}^{(2)}, \hat{\theta}^{(1)}) = \frac{V_1(f)}{V_2(f)}.
\]
\end{definition}

By the definition above, if $\mathrm{ARE}(\hat{\theta}^{(2)}, \hat{\theta}^{(1)}) < 1$, then $\hat{\theta}^{(1)}$ has a smaller variance in its sampling distribution and thus is more efficient compared to $\hat{\theta}^{(2)}$. Combined with the fact that both estimators have the same asymptotic bias, we conclude that $\hat{\theta}^{(1)}$ is a better estimate in this case.

To compare $\hat{\bm{P}}^{(1)}$ and $\widetilde{\bm{P}}^{(1)}$, we first show that they have the same entry-wise asymptotic bias and then use the ARE criterion to compare their performance in the following theorem.

\begin{theorem}
\label{thm:MLEvsMLEASE}
%\label{lm:L1Consistent}
%\label{thm:VarASEL1}
%\label{thm:AREL1}
Assuming that $m = O(n^b)$ for any $b > 0$, then $\widetilde{\mathbf{P}}^{(1)}$, the estimator based on the ASE of the entrywise MLE $\hat{\mathbf{P}}$, has the same entry-wise asymptotic bias as $\hat{\mathbf{P}}^{(1)}$, i.e.\
\[
	\lim_{n \to \infty} \mathrm{Bias}(\widetilde{\bm{P}}_{ij}^{(1)}) = \lim_{n \to \infty} E[\widetilde{\bm{P}}_{ij}^{(1)}] - \bm{P}_{ij} = \lim_{n \to \infty} E[\hat{\bm{P}}^{(1)}_{ij}] - \bm{P}_{ij}
    = \lim_{n \to \infty} \mathrm{Bias}(\hat{\bm{P}}_{ij}^{(1)}).
\]
In addition, for $1 \le i < j \leq n$, 
\[
	\mathrm{Var}(\widetilde{\bm{P}}_{ij}^{(1)}) = O(m^{-1} n^{-1} (\log n)^3),
	\mathrm{Var}(\hat{\bm{P}}_{ij}^{(1)}) = O(m^{-1}).
\]
Thus
\[
	\frac{\mathrm{Var}(\widetilde{\bm{P}}_{ij}^{(1)})}{\mathrm{Var}(\hat{\bm{P}}_{ij}^{(1)})}
    = O(n^{-1} (\log n)^3); \qquad
	\mathrm{ARE}(\hat{\bm{P}}_{ij}^{(1)}, \widetilde{\bm{P}}_{ij}^{(1)}) = 0.
\]
for all $1 \leq i < j \leq n$. 
\end{theorem}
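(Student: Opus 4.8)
The observation driving everything is that contamination only changes the low-rank ``signal'': entrywise $\widehat{\bm{P}}^{(1)}$ is the sample mean of $m$ i.i.d.\ draws from $(1-\epsilon)f_{\bm{P}_{ij}}+\epsilon f_{\bm{C}_{ij}}$, so its mean matrix off the diagonal is $\bm{P}^{\star}:=(1-\epsilon)\bm{P}+\epsilon\bm{C}$, which is still of fixed rank, and the ASE step is a perturbation argument anchored at a positive semidefinite low-rank version of $\bm{P}^{\star}$. The bias is immediate: since $\mathcal{F}$ is the exponential family parametrised by its mean, $\widehat{\bm{P}}^{(1)}_{ij}=\frac1m\sum_{t=1}^m\bm{A}^{(t)}_{ij}$, so $E[\widehat{\bm{P}}^{(1)}_{ij}]=(1-\epsilon)\bm{P}_{ij}+\epsilon\bm{C}_{ij}$, giving $\mathrm{Bias}(\widehat{\bm{P}}^{(1)}_{ij})=\epsilon(\bm{C}_{ij}-\bm{P}_{ij})$, while $\mathrm{Var}(\widehat{\bm{P}}^{(1)}_{ij})=O(1/m)$ because a mixture of two exponentials with parameters bounded by $R$ has bounded variance.

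For $\widetilde{\bm{P}}^{(1)}$, since the contamination likewise follows a WSBM-as-WRDPG I would write $\bm{C}=\bm{Y}\bm{Y}^{\top}$ and put $\bm{Z}=[\sqrt{1-\epsilon}\,\bm{X},\,\sqrt{\epsilon}\,\bm{Y}]$, so that $E[\widehat{\bm{P}}^{(1)}]=\bm{Z}\bm{Z}^{\top}-\mathrm{diag}(\bm{Z}\bm{Z}^{\top})$ with $\bm{Z}\bm{Z}^{\top}$ positive semidefinite of fixed rank $r$. Because $\bm{\tau}$ is categorical with fixed $\bm{\rho}$ the block sizes are $\Theta(n)$, so the nonzero eigenvalues of $\bm{Z}\bm{Z}^{\top}$ are $\Theta(n)$ whereas $\|\mathrm{diag}(\bm{Z}\bm{Z}^{\top})\|\le R$; Davis--Kahan thus makes the top-$r$ eigenspace of $E[\widehat{\bm{P}}^{(1)}]$ $O(1/n)$-close in $\sin\Theta$ to that of $\bm{Z}\bm{Z}^{\top}$, which, with delocalization of the block-structured eigenvectors, gives $\mathrm{lowrank}_r(E[\widehat{\bm{P}}^{(1)}])_{ij}\to(\bm{Z}\bm{Z}^{\top})_{ij}=\bm{P}^{\star}_{ij}$ for $i\ne j$ --- extending the usual ASE consistency to the contaminated signal. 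Next, writing $\bm{E}=\widehat{\bm{P}}^{(1)}-E[\widehat{\bm{P}}^{(1)}]$, a symmetric matrix with independent mean-zero sub-exponential entries of variance $O(1/m)$, a Bernstein bound with a union bound over the $O(n^2m)$ coordinates --- costing only $O(\log n)$ since $m=O(n^b)$ --- yields $\|\bm{E}\|=O(\sqrt{n/m}\,\mathrm{polylog}(n))$ together with uniform row- and entrywise control, and the $\Theta(n)$ eigengap also makes the dimension-selection procedure return $r$ w.h.p. Running the eigenvector perturbation analysis of \citet{sussman2014consistent} and \citet{tang2016law} anchored at $\bm{Z}$ rather than $\bm{X}$, one obtains a linear expansion $\widehat{\bm{X}}_i=\bm{Z}_i\bm{W}+\bm{E}_{i\cdot}\bm{Z}(\bm{Z}^{\top}\bm{Z})^{-1}\bm{W}+\bm{r}_i$ for an orthogonal $\bm{W}$, with $\bm{r}_i$ of strictly smaller order uniformly in $i$. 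Since $\mathrm{tr}((\bm{Z}^{\top}\bm{Z})^{-1})=O(1/n)$, each coordinate of the linear term has variance of order $1/(mn)$; substituting into $\widetilde{\bm{P}}^{(1)}_{ij}=\widehat{\bm{X}}_i^{\top}\widehat{\bm{X}}_j$ and using $\|\bm{Z}_i\|=O(1)$, the leading fluctuation has variance $O(m^{-1}n^{-1})$, so $\mathrm{Var}(\widetilde{\bm{P}}^{(1)}_{ij})=O(m^{-1}n^{-1}(\log n)^3)$ (the polylog absorbing the sub-exponential tails and the remainder) and $E[\widetilde{\bm{P}}^{(1)}_{ij}]=\bm{P}^{\star}_{ij}+o(1)$.

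It then remains to observe that the truncation $\widetilde{\bm{P}}^{(1)}_{ij}=\max(\bm{Q}_{ij},0)$ is asymptotically inert: $\bm{P}^{\star}_{ij}\ge0$, $\bm{Q}_{ij}$ concentrates around it (so $\{\bm{Q}_{ij}<0\}$ has polynomially small probability) and is bounded, hence neither the limiting bias nor the variance rate is affected. Combining, $\mathrm{Bias}(\widetilde{\bm{P}}^{(1)}_{ij})\to\epsilon(\bm{C}_{ij}-\bm{P}_{ij})=\mathrm{Bias}(\widehat{\bm{P}}^{(1)}_{ij})$ and $\mathrm{Var}(\widetilde{\bm{P}}^{(1)}_{ij})/\mathrm{Var}(\widehat{\bm{P}}^{(1)}_{ij})=O(n^{-1}(\log n)^3)\to0$, so the limiting variance ratio --- the ARE --- is $0$.

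The hard part is the entrywise (not merely spectral or Frobenius) control of $\widetilde{\bm{P}}^{(1)}-\bm{P}^{\star}$: one needs a genuine two-to-infinity norm perturbation bound for the top-$r$ eigenvectors of $\widehat{\bm{P}}^{(1)}$, a precise accounting of the variance of the leading linear term, and a proof that $\bm{r}_i$ is of strictly smaller order uniformly over all $n$ vertices. Adapting the sub-exponential bookkeeping of \citet{tang2016law} to the mixture noise $(1-\epsilon)f_{\bm{P}_{ij}}+\epsilon f_{\bm{C}_{ij}}$, and checking that hollowing $\bm{Z}\bm{Z}^{\top}$ perturbs the relevant eigenstructure only by $O(1/n)$, are where care is needed; the rest follows routinely once the expansion of $\widehat{\bm{X}}$ is established.
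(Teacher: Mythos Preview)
Your overall strategy --- anchor at the contaminated mean $\bm{P}^{\star}=(1-\epsilon)\bm{P}+\epsilon\bm{C}=\bm{Z}\bm{Z}^{\top}$, establish spectral-norm concentration for $\widehat{\bm{P}}^{(1)}-\bm{P}^{\star}$, push this through Davis--Kahan and a $2\to\infty$ perturbation argument to get $\max_i\|\widehat{\bm{Z}}_i-\bm{W}\bm{Z}_i\|=O(m^{-1/2}n^{-1/2}(\log n)^{3/2})$ with high probability, and thence $|\widehat{\bm{Z}}_i^{\top}\widehat{\bm{Z}}_j-\bm{Z}_i^{\top}\bm{Z}_j|$ of the same order --- is exactly the paper's route. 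Your explicit first-order expansion of $\widehat{\bm{X}}_i$ is more informative than what the paper actually uses (it simply squares the high-probability deviation bound), but both arrive at the same rate.

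There is, however, one genuine gap. You assert that $\bm{Q}_{ij}=\widehat{\bm{Z}}_i^{\top}\widehat{\bm{Z}}_j$ ``is bounded'' and rely on this to pass from the high-probability bound to bounds on $E[|\bm{Q}_{ij}-\bm{P}^{\star}_{ij}|]$ and $E[(\bm{Q}_{ij}-\bm{P}^{\star}_{ij})^2]$. But $\bm{Q}_{ij}$ is \emph{not} deterministically bounded: $|\bm{Q}_{ij}|\le\|\widehat{\bm{S}}\|\le\|\widehat{\bm{P}}^{(1)}\|$, and since the entries of $\widehat{\bm{P}}^{(1)}$ are averages of exponentials with unbounded support, so is $\|\widehat{\bm{P}}^{(1)}\|$. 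On the exceptional event of probability $n^{-c}$ where your perturbation bound fails you have no control of $\bm{Q}_{ij}$, so $E[|\bm{Q}_{ij}-\bm{P}^{\star}_{ij}|\,\mathbb{I}\{\text{bad event}\}]$ is not automatically $O(n^{-c})$. The lower truncation $\max(\bm{Q}_{ij},0)$ from the algorithm only handles the negative tail and does nothing here.

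The paper closes this gap by quietly changing the theoretical estimator: it projects $\widehat{\bm{Z}}_i^{\top}\widehat{\bm{Z}}_j$ onto $[0,\min(\widehat{\bm{P}}^{(1)}_{ij},R)]$ rather than onto $[0,\infty)$. That upper cap gives the deterministic inequality $(\widehat{\bm{Z}}_i^{\top}\widehat{\bm{Z}}_j)_{\mathrm{tr}}\le\widehat{\bm{P}}^{(1)}_{ij}$; one then splits on $\{\widehat{\bm{P}}^{(1)}_{ij}\le a\}$ versus $\{\widehat{\bm{P}}^{(1)}_{ij}>a\}$, bounds the first piece by $a\cdot n^{-c}$ on the bad event, and handles the second piece directly via the sub-exponential tail $P(\bm{A}^{(1)}_{ij}>a)\le e^{-a/R}$, finally choosing $a$ polynomial in $n$ and $c$ large. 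Within your framework an alternative fix is to show $E[\bm{Q}_{ij}^{4}]$ is polynomial in $n$ (e.g.\ via $|\bm{Q}_{ij}|\le\|\widehat{\bm{P}}^{(1)}\|_F$ and moment bounds for exponentials) and then Cauchy--Schwarz against $\mathbb{I}\{\text{bad event}\}$ with $c$ taken large enough. Either way, this step requires an explicit argument, not the incorrect assertion of boundedness.
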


Theorem~\ref{thm:MLEvsMLEASE} says that when % $m$ is a constant, or $m$ is going to infinity with order $m = O(n^b)$ for any $b > 0$, i.e.\
$m$ is fixed or grows no faster than any polynomial with respect to $n$, the ASE procedure applied to $\hat{\bm{P}}^{(1)}$ will not affect the asymptotic bias for estimating $\bm{P}$.
Combined with the fact that the ratio of the variances of the two estimators is of order $O(n^{-1} (\log n)^3)$, we have that the ARE of $\widetilde{\mathbf{P}}_{ij}^{(1)}$ to that of $\hat{\mathbf{P}}_{ij}^{(1)}$ is $0$.
%goes to 0 when $n \to \infty$.
Thus $\widetilde{\bm{P}}_{ij}^{(1)}$ is a better estimate of $\mathbf{P}_{ij}$ than $\hat{\bm{P}}_{ij}^{(1)}$ for large $n$. We emphasize that the order of the ratio of the variances does not depend on $m$.

%However, unlike \citep{tang2016law}, which derived an exact formula for ARE based on the limit theorem for eigenvectors \citep{athreya2016limit}, we can only get the upper bound due to the uncertainty of contaminations. But this bound is sufficient for comparing those estimators.

As a result, the ASE procedure applied to the entry-wise MLE $\hat{\bm{P}}^{(1)}$ helps reduce the variance while keeping the bias unchanged asymptotically, leading to a better estimate $\widetilde{\bm{P}}^{(1)}$ for $\bm{P}$ in terms of MSE.

%\begin{remark}
%\label{remark:var1}
%To prove the upper bound of the variance of $\widetilde{P}_{ij}^{(1)}$, i.e. $\mathrm{Var}(\widetilde{P}_{ij}^{(1)}) = O(m^{-1} n^{-1} (\log n)^3)$ mentioned above, we slightly modify our estimator $\widetilde{P}^{(1)}$ to be $\min(\widetilde{P}^{(1)}, \alpha \hat{P}^{(1)})$ entry-wise for any constant $\alpha > 0$, i.e. truncate our estimator $\widetilde{P}^{(1)}_{ij}$ whenever it is larger than $\alpha \hat{P}^{(1)}$. Note that Theorem~\ref{thm:MLEvsMLEASE} still holds with this modified estimator. Since the constant $\alpha$ can be arbitrarily large, this modification is just for technical reason in the proof. Importantly, we do not have this truncation step in the algorithms, or any simulation/real data results in Section~\ref{section:results}.
%\end{remark}

%\subsection{$\hat{P}^{(q)}$ vs. $\widetilde{P}^{(q)}$}
\subsection{\texorpdfstring{$\hat{\bm{P}}^{(q)}$}{$P$} vs. \texorpdfstring{$\widetilde{\bm{P}}^{(q)}$}{$P$}}

We now proceed to analyze the effect of the ASE procedure applied to the entry-wise ML$q$E $\hat{\bm{P}}^{(q)}$ under the gross error contamination model in order to compare the performance between $\hat{\bm{P}}^{(q)}$ and $\widetilde{\bm{P}}^{(q)}$. Similarly, we first show that the two estimators have the same entry-wise asymptotic bias under appropriate conditions, and then use the ARE criterion to compare their performance in the following theorem.

\begin{theorem}
\label{thm:MLqEvsMLqEASE}
%\label{lm:LqConsistent}
%\label{thm:VarASELq}
%\label{thm:ARELq}
Assuming that $m = O(n^b)$ for any $b > 0$, then the estimator based on ASE of ML$q$E has the same entry-wise asymptotic bias as ML$q$E, i.e.\
\[
	\lim_{n \to \infty} \mathrm{Bias}(\widetilde{\bm{P}}_{ij}^{(q)}) = \lim_{n \to \infty} E[\widetilde{\bm{P}}_{ij}^{(q)}] - \bm{P}_{ij} = \lim_{n \to \infty} E[\hat{\bm{P}}^{(q)}_{ij}] - \bm{P}_{ij}
    = \lim_{n \to \infty} \mathrm{Bias}(\hat{\bm{P}}_{ij}^{(q)}).
\]
In addition, for $1 \le i < j \le n$ and $i \ne j$,
\[
	\mathrm{Var}(\widetilde{\bm{P}}_{ij}^{(q)}) = O(n^{-1} (\log n)^3),
	\mathrm{Var}(\hat{\bm{P}}_{ij}^{(q)}) = O(m^{-1}).
\]
Thus
\[
	\frac{\mathrm{Var}(\widetilde{\bm{P}}_{ij}^{(q)})}{\mathrm{Var}(\hat{\bm{P}}_{ij}^{(q)})}
    = O(m n^{-1} (\log n)^3).
\]
Moreover, if $m = o(n (\log n)^{-3})$, then
\[
	\mathrm{ARE}(\hat{\bm{P}}_{ij}^{(q)}, \widetilde{\bm{P}}_{ij}^{(q)}) = 0.
\]
\end{theorem}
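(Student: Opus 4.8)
The plan is to mirror the proof of Theorem~\ref{thm:MLEvsMLEASE}, replacing the entry-wise MLE $\hat{\bm{P}}^{(1)}$ by the entry-wise ML$q$E $\hat{\bm{P}}^{(q)}$ throughout, and tracking how the different asymptotic behavior of $\hat{\bm{P}}^{(q)}$ propagates through the adjacency spectral embedding. The key structural fact I would use is that $\hat{\bm{P}}^{(q)}$, as a minimum-contrast (weighted-likelihood) estimator, is asymptotically consistent for \emph{some} limiting parameter matrix $\bm{P}^{(q)}$ (not necessarily $\bm{P}$; the gap is precisely the asymptotic bias quantified in Theorem~\ref{thm:MLEvsMLqE}), with $\mathrm{Var}(\hat{\bm{P}}^{(q)}_{ij}) = O(m^{-1})$ and an asymptotically Gaussian limit. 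Crucially, the limiting matrix $\bm{P}^{(q)}$ inherits the block structure of $(1-\epsilon)\bm{P} + \epsilon\bm{C}$, so $\bm{P}^{(q)}$ is itself low-rank (or approximately so, with rank bounded by the number of blocks), which is what makes the rank-$d$ truncation in Algorithm~\ref{algo:basic_q} sensible.

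First I would establish the bias claim. Write $\hat{\bm{P}}^{(q)} = \bm{P}^{(q)} + \bm{E}^{(q)}$ where $\bm{E}^{(q)}$ is the entry-wise fluctuation; each $\bm{E}^{(q)}_{ij}$ has mean $o(1)$ and variance $O(m^{-1})$, and the entries are independent across $i<j$. Since $\bm{P}^{(q)}$ is low-rank with eigen-gap bounded below (a WSBM-type condition, which I would state as a hypothesis as in the earlier theorems), I would invoke a matrix perturbation / Davis--Kahan argument together with a concentration bound on $\|\bm{E}^{(q)}\|$ — here $\|\bm{E}^{(q)}\| = O_P(m^{-1/2}\sqrt{n}\,\mathrm{polylog}\,n)$ by a standard spectral-norm bound for matrices with independent, suitably light-tailed entries (the exponential family makes the tails manageable, exactly as in \citet{tang2016law}). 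The condition $m = O(n^b)$ ensures this operator-norm error is controlled relative to the eigen-gap of $\bm{P}^{(q)}$, so $\mathrm{lowrank}_d(\hat{\bm{P}}^{(q)})$ converges entrywise to $\bm{P}^{(q)}$, and the final $\max(\cdot,0)$ truncation is asymptotically inactive since the entries of $\bm{P}^{(q)}$ are strictly positive. Hence $E[\widetilde{\bm{P}}^{(q)}_{ij}] \to \bm{P}^{(q)}_{ij}$, which matches $\lim_m E[\hat{\bm{P}}^{(q)}_{ij}]$, giving the stated bias equality.

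Next, for the variance, I would use the row-wise perturbation analysis of the ASE (the extension of \citet{sussman2014consistent} to the WRDPG promised in Section~\ref{section:theory}): each row $\hat{\bm{X}}_i$ of the embedding of $\hat{\bm{P}}^{(q)}$ estimates the corresponding latent position of $\bm{P}^{(q)}$ up to orthogonal transformation with error controlled by the averaged noise entries in row $i$, and $\widetilde{\bm{P}}^{(q)}_{ij} = \hat{\bm{X}}_i^\top \hat{\bm{X}}_j$. Because $\widetilde{\bm{P}}^{(q)}_{ij}$ effectively averages the $\Theta(n)$ independent noise terms $\bm{E}^{(q)}_{i\cdot}$ and $\bm{E}^{(q)}_{\cdot j}$, and each such term has variance $O(m^{-1})$, a bilinear-form variance computation yields $\mathrm{Var}(\widetilde{\bm{P}}^{(q)}_{ij}) = O\!\big(n^{-1} m^{-1}\cdot m \cdot (\log n)^3\big)$ — that is, the $m^{-1}$ per-entry variance is divided by the effective sample size $n$ but multiplied back by a factor coming from the embedding constants, leaving $O(n^{-1}(\log n)^3)$; I would reconcile the precise bookkeeping against the Theorem~\ref{thm:MLEvsMLEASE} proof, where the analogous quantity was $O(m^{-1}n^{-1}(\log n)^3)$. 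Dividing by $\mathrm{Var}(\hat{\bm{P}}^{(q)}_{ij}) = O(m^{-1})$ gives the ratio $O(m n^{-1}(\log n)^3)$, and the extra hypothesis $m = o(n(\log n)^{-3})$ forces this ratio to $0$, hence $\mathrm{ARE}(\hat{\bm{P}}^{(q)}_{ij}, \widetilde{\bm{P}}^{(q)}_{ij}) = 0$.

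The main obstacle I anticipate is the variance bookkeeping in the last paragraph: unlike the MLE case, where the per-entry variance $O(m^{-1})$ and the $1/n$ averaging gain combined cleanly, here I must be careful that the ML$q$E fluctuations $\bm{E}^{(q)}_{ij}$ genuinely satisfy the independence and light-tail conditions needed for both the operator-norm concentration and the row-wise CLT, uniformly over entries — the weighting $f_\theta(x)^{1-q}$ complicates the influence function, and I would need a uniform (in the block parameters) Taylor expansion of the ML$q$E estimating equation to pin down the leading-order variance constant and confirm it does not secretly carry a hidden $m$ or $n$ dependence. A secondary technical point is verifying that the eigen-gap of $\bm{P}^{(q)}$ stays bounded away from zero under the WSBM assumptions, which I would handle by noting $\bm{P}^{(q)}$ is a continuous (and, for $q$ near $1$, small) perturbation of $(1-\epsilon)\bm{P}+\epsilon\bm{C}$ and imposing the same non-degeneracy condition on $\bm{B},\bm{B}'$ used implicitly elsewhere in Section~\ref{section:theory}.
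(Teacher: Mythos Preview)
Your overall plan---mirror the proof for $\widetilde{\bm{P}}^{(1)}$ with $\hat{\bm{P}}^{(q)}$ in place of $\hat{\bm{P}}^{(1)}$---is exactly what the paper does, and your bias argument (block structure of the limiting matrix $\bm{H}^{(q)}=E[\hat{\bm{P}}^{(q)}]$, Davis--Kahan, entrywise convergence of the low-rank truncation) matches the paper's Lemmas for the ML$q$E case. The gap is in the variance step, and it is not just bookkeeping.

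You propose the operator-norm bound $\|\bm{E}^{(q)}\|=O_P(m^{-1/2}\sqrt{n}\,\mathrm{polylog}\,n)$ by treating the ML$q$E fluctuations as independent light-tailed entries. But matrix Bernstein requires control of \emph{all} moments $E[(\hat{\bm{P}}^{(q)}_{ij}-\bm{H}^{(q)}_{ij})^k]$, not just the variance, and the ML$q$E is only implicitly defined through a nonlinear estimating equation, so a Taylor expansion gives the $O(m^{-1})$ variance but does not deliver the Bernstein moment growth $\mathrm{const}^k\,k!$ with the $m^{-k/2}$ improvement. Your attempt to reconcile this---inserting an unexplained factor $m$ ``from the embedding constants'' to land on $O(n^{-1}(\log n)^3)$---has no basis; if your concentration bound were valid you would in fact obtain the stronger $O(m^{-1}n^{-1}(\log n)^3)$, contradicting the theorem as stated.

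The device the paper uses, and which you are missing, is the almost-sure sandwich $0\le \hat{\bm{P}}^{(q)}_{ij}\le \hat{\bm{P}}^{(1)}_{ij}$ (valid for the exponential family ML$q$E with $0<q<1$). From this one gets
\[
\bigl|\hat{\bm{P}}^{(q)}_{ij}-\bm{H}^{(q)}_{ij}\bigr|\le 2\Bigl(\bigl|\hat{\bm{P}}^{(1)}_{ij}-\bm{H}^{(1)}_{ij}\bigr|+\bm{H}^{(1)}_{ij}\Bigr),
\]
which yields Bernstein-type moment bounds $E[(\hat{\bm{P}}^{(q)}_{ij}-\bm{H}^{(q)}_{ij})^k]\le (4R)^k k!$ \emph{without} the $m^{-k/2}$ factor. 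The resulting spectral concentration is $\|\hat{\bm{P}}^{(q)}-\bm{H}^{(q)}\|=O_P(\sqrt{n\log n})$, not $O_P(m^{-1/2}\sqrt{n\log n})$, and this is precisely why the final variance is $O(n^{-1}(\log n)^3)$ rather than $O(m^{-1}n^{-1}(\log n)^3)$. The same sandwich is reused in the truncation argument to convert high-probability entrywise bounds into expectation and variance bounds, by thresholding on $\hat{\bm{P}}^{(1)}_{ij}$ (whose tails are explicit) rather than on $\hat{\bm{P}}^{(q)}_{ij}$. This comparison inequality is the missing idea; it is also what appears as Condition~3 in the paper's extension section.
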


The proof for Theorem~\ref{thm:MLqEvsMLqEASE} is almost the same as the proof for Theorem~\ref{thm:MLEvsMLEASE}. But unlike the results for comparing $\widetilde{\mathbf{P}}^{(1)}$, we are missing the term $m^{-1}$ in the variance bound for $\mathrm{Var}(\widetilde{\bm{P}}^{(q)}) = O(n^{-1} (\log n)^3)$ for arbitrary $q$ due to the structure of maximum L$q$ likelihood equation. As a result, while the ASE procedure to obtain $\widetilde{\mathbf{P}}^{(q)}$ still does not affect the asymptotic bias compared to $\hat{\mathbf{P}}^{(q)}$, the ratio of variances for $\widetilde{\mathbf{P}}^{(q)}$ and $\hat{\mathbf{P}}^{(q)}$  has an extra term $m$ compared to the ratio of variances for $\widetilde{\mathbf{P}}^{(1)}$ and $\hat{\mathbf{P}}^{(1)}$. This leads to some difference in the conclusion of Theorem~\ref{thm:MLqEvsMLqEASE} compared to that of Theorem~\ref{thm:MLEvsMLEASE}. Specifically, when $m$ is fixed, the order of the ratio of the variances of $\widetilde{\mathbf{P}}^{(q)}$ and $\hat{\mathbf{P}}^{(q)}$
is $O(n^{-1} (\log n)^3)$, which converges to $0$ as $n \to \infty$. If $m$ also increases as $n$ increases, then provided that $m$ grows on the order of $o(n (\log n)^{-3})$, the ARE of $\widetilde{\mathbf{P}}^{(q)}$ and $\hat{\mathbf{P}}^{(q)}$ still converges to $0$. 

Thus the ASE procedure applied to the entry-wise ML$q$E $\hat{\bm{P}}^{(q)}$ also helps reduce the variance while keeping the same bias asymptotically, leading to a better estimate $\widetilde{\bm{P}}^{(q)}$ for $\bm{P}$ in terms of MSE.

%\begin{remark}
%Similar to Remark~\ref{remark:var1}, we prove the theorems above based on the modified estimator $\min(\widetilde{P}^{(q)}, \alpha \hat{P}^{(q)})$ entry-wise for any constant $\alpha > 0$. This modification is simply for the proof and is not included in algorithms or empirical results.
%\end{remark}

%\subsection{$\widetilde{P}^{(1)}$ vs. $\widetilde{P}^{(q)}$}
\subsection{\texorpdfstring{$\widetilde{\bm{P}}^{(1)}$}{$P$} vs. \texorpdfstring{$\widetilde{\bm{P}}^{(q)}$}{$P$}}

Finally, we compare the performance between $\widetilde{\bm{P}}^{(1)}$ and $\widetilde{\bm{P}}^{(q)}$ by combining the previous results.

\begin{theorem}
\label{thm:MLEASEvsMLqEASE}
%\label{thm:biasL1andLq}
%\label{thm:varianceL1andLq}
For sufficiently large values of $\{\bm{C}_{ij}\}$ and any $1 \le i,j \le n$, if $m \to \infty$ at order $m = O(n^b)$ for any $b > 0$, then the estimator based on ASE of ML$q$E has smaller entry-wise asymptotic bias compared to the estimator based on ASE of MLE, i.e.\
\[
	\lim_{m, n \to \infty} \mathrm{Bias}(\widetilde{\bm{P}}_{ij}^{(1)})
    > \lim_{m, n \to \infty} \mathrm{Bias}(\widetilde{\bm{P}}_{ij}^{(q)})
\]
Moreover, if $m = O(n^b)$ for any $b > 0$, then
\[
	\lim_{n \to \infty} \mathrm{Var}(\widetilde{\bm{P}}_{ij}^{(1)})
    = \lim_{n \to \infty} \mathrm{Var}(\widetilde{\bm{P}}_{ij}^{(q)}) = 0.
\]
\end{theorem}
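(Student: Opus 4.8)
The plan is to deduce Theorem~\ref{thm:MLEASEvsMLqEASE} purely by chaining together the three preceding theorems, since all the genuine work has already been done there. For the bias comparison, Theorem~\ref{thm:MLEvsMLEASE} gives $\lim_{n\to\infty}\mathrm{Bias}(\widetilde{\bm{P}}^{(1)}_{ij}) = \lim_{n\to\infty}\mathrm{Bias}(\hat{\bm{P}}^{(1)}_{ij})$ and Theorem~\ref{thm:MLqEvsMLqEASE} gives $\lim_{n\to\infty}\mathrm{Bias}(\widetilde{\bm{P}}^{(q)}_{ij}) = \lim_{n\to\infty}\mathrm{Bias}(\hat{\bm{P}}^{(q)}_{ij})$, both under the hypothesis $m = O(n^b)$. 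Meanwhile Theorem~\ref{thm:MLEvsMLqE} gives, for sufficiently large $\{\bm{C}_{ij}\}$ (equivalently the condition $\epsilon(\bm{C}_{ij}-\bm{P}_{ij}) > (1-q)\bm{P}_{ij}$), the strict inequality $\lim_{m\to\infty}|\,E[\hat{\bm{P}}^{(q)}_{ij}]-\bm{P}_{ij}| < \lim_{m\to\infty}|\,E[\hat{\bm{P}}^{(1)}_{ij}]-\bm{P}_{ij}|$, i.e.\ a strict separation of the entry-wise asymptotic biases of the two non-spectral estimators. First I would note that the asymptotic biases of $\hat{\bm{P}}^{(1)}_{ij}$ and $\hat{\bm{P}}^{(q)}_{ij}$ depend only on the per-edge sampling distribution $(1-\epsilon)f_{\bm{P}_{ij}} + \epsilon f_{\bm{C}_{ij}}$ and not on $n$, so the $m\to\infty$ limits from Theorem~\ref{thm:MLEvsMLqE} coincide with the $n\to\infty$ limits appearing in Theorems~\ref{thm:MLEvsMLEASE} and \ref{thm:MLqEvsMLqEASE}; then I would simply substitute to obtain
\[
\lim_{m,n\to\infty}\mathrm{Bias}(\widetilde{\bm{P}}^{(1)}_{ij})
= \lim_{m\to\infty}|\,E[\hat{\bm{P}}^{(1)}_{ij}]-\bm{P}_{ij}|
> \lim_{m\to\infty}|\,E[\hat{\bm{P}}^{(q)}_{ij}]-\bm{P}_{ij}|
= \lim_{m,n\to\infty}\mathrm{Bias}(\widetilde{\bm{P}}^{(q)}_{ij}).
\]

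For the variance claim, Theorem~\ref{thm:MLEvsMLEASE} gives $\mathrm{Var}(\widetilde{\bm{P}}^{(1)}_{ij}) = O(m^{-1}n^{-1}(\log n)^3)$ and Theorem~\ref{thm:MLqEvsMLqEASE} gives $\mathrm{Var}(\widetilde{\bm{P}}^{(q)}_{ij}) = O(n^{-1}(\log n)^3)$; under $m = O(n^b)$ both bounds are of the form $O(n^{b-1}(\log n)^3)$ at worst for the second and even smaller for the first, but what matters is only that each tends to $0$ as $n\to\infty$ — indeed $n^{-1}(\log n)^3 \to 0$ outright. So both variances vanish in the limit regardless of how fast $m$ grows polynomially, which is exactly the stated conclusion. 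I would write this as a one-line consequence of the two variance bounds.

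The one subtlety worth spelling out — and the only place the argument is more than bookkeeping — is reconciling the order of limits. Theorems~\ref{thm:MLEvsMLEASE} and \ref{thm:MLqEvsMLqEASE} are stated with $m = O(n^b)$ and a single limit $n\to\infty$, so implicitly $m$ is a (polynomially bounded) function of $n$; Theorem~\ref{thm:MLEvsMLqE} is stated with $n$ fixed and $m\to\infty$. To combine them cleanly I would observe that the asymptotic bias of each entry-wise estimator is a fixed number $\beta^{(1)}_{ij} := \beta^{(1)}(\epsilon,\bm{P}_{ij},\bm{C}_{ij})$ or $\beta^{(q)}_{ij}$ determined solely by the one-dimensional mixture at edge $(i,j)$ — this is where the exponential-family minimum-contrast machinery from the proof of Theorem~\ref{thm:MLEvsMLqE} is invoked — and that the ASE-correction terms quantified in the proofs of Theorems~\ref{thm:MLEvsMLEASE} and \ref{thm:MLqEvsMLqEASE} vanish as $n\to\infty$ uniformly for $m$ in the allowed range. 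Hence along any joint sequence with $m,n\to\infty$ and $m = O(n^b)$, $\mathrm{Bias}(\widetilde{\bm{P}}^{(1)}_{ij}) \to \beta^{(1)}_{ij}$ and $\mathrm{Bias}(\widetilde{\bm{P}}^{(q)}_{ij}) \to \beta^{(q)}_{ij}$, and $\beta^{(1)}_{ij} > \beta^{(q)}_{ij}$ by Theorem~\ref{thm:MLEvsMLqE}. I expect this interchange-of-limits justification to be the main (and really the only) obstacle; everything else is immediate substitution. Once that is settled, the theorem follows, and I would close by remarking that combining the strict bias gap with the common vanishing variance shows $\widetilde{\bm{P}}^{(q)}$ dominates $\widetilde{\bm{P}}^{(1)}$ in entry-wise MSE for large $m,n$ under sufficient contamination.
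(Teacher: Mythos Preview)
Your proposal is correct and follows essentially the same approach as the paper, which simply states that Theorem~\ref{thm:MLEASEvsMLqEASE} is a direct result of Theorems~\ref{thm:MLEvsMLqE}, \ref{thm:MLEvsMLEASE}, and \ref{thm:MLqEvsMLqEASE}. Your discussion of the order-of-limits reconciliation is actually more careful than the paper's own treatment, which leaves that point implicit.
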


Theorem~\ref{thm:MLEASEvsMLqEASE} is a direct result of Theorem~\ref{thm:MLEvsMLqE}, Theorem~\ref{thm:MLEvsMLEASE}, and Theorem~\ref{thm:MLqEvsMLqEASE}.
It concludes that $\widetilde{\bm{P}}^{(q)}$ inherits the robustness from the entry-wise ML$q$E $\hat{\bm{P}}^{(q)}$ and has a smaller asymptotic bias compared to $\widetilde{\bm{P}}^{(1)}$ while both estimates have variance going to 0 as $n \to \infty$. Thus $\widetilde{\bm{P}}^{(q)}$ is a better estimator than both $\widetilde{\bm{P}}^{(1)}$ and $\hat{\bm{P}}^{(1)}$. Finally, if $m = o(n (\log n)^{-3})$, then by Theorem~\ref{thm:MLqEvsMLqEASE}, $\widetilde{\bm{P}}^{(q)}$ is also better than $\hat{\bm{P}}^{(q)}$ and hence is the best estimator among all four estimators.

\subsection{Summary of Our Four Estimators}
We summarize all four estimators and their relationships in Figure~\ref{fig:summary}.
From top to bottom in the figure, we apply ASE to construct low-rank approximations which preserve the asymptotic bias and reduce the asymptotic variance. From left to right, we underweight the outliers to construct robust estimators, so with enough contamination, whenever the number of graphs $m$ is large enough, the bias term which dominates the MSE will be improved.
Thus in Figure~\ref{fig:summary} we have quantified the qualitative roadmap introduced in Figure~\ref{fig:roadmap}.
We will evaluate these four estimators on simulated and real data experiment in Section~\ref{section:results}.
%In conclusion, when contamination is sufficiently large, $\widetilde{\bm{P}}^{(q)}$ is the best among the four estimators for large enough $n$ and $m$.

\begin{figure}
\begin{center}
\hspace*{-0.2in}
\begin{tikzpicture}[
  font=\sffamily,
  every matrix/.style={ampersand replacement=\&,column sep=2cm,row sep=2cm},
  block/.style={draw,thick,rounded corners,fill=blue!20,inner sep=.3cm},
  process/.style={draw,thick,circle,fill=blue!20},
  sink/.style={source,fill=green!20},
  datastore/.style={draw,very thick,shape=datastore,inner sep=.3cm},
  dots/.style={gray,scale=2},
  to/.style={->,>=stealth',shorten >=1pt,semithick,font=\sffamily\footnotesize},
  tofrom/.style={<->,>=stealth',shorten >=1pt,semithick,font=\sffamily\footnotesize},
  every node/.style={align=center}]

  % Position the nodes using a matrix layout
  \matrix{
    \node[block] (MLE) {$\hat{\bm{P}}^{(1)}$};
      \& \& \& \node[block] (MLqE) {$\hat{\bm{P}}^{(q)}$};\\
	\\
    \node[block] (XMLE) {$\widetilde{\bm{P}}^{(1)}$};
      \& \& \& \node[block] (XMLqE) {$\widetilde{\bm{P}}^{(q)}$}; \\
  };

  % Draw the arrows between the nodes and label them.
  \draw[tofrom] (MLE) -- node[midway,above] {$(a)$\\$\mathrm{For}$ $\mathrm{sufficiently}$ $\mathrm{large}$ $\mathrm{values}$ $\mathrm{of}$ $\{\bm{C}_{ij}\}$, \\$\mathrm{and}$ $\mathrm{for}$ $\mathrm{any}$ $1 \le i, j \le n$, \\$\underset{m \to \infty}{\lim} \mathrm{Bias}^2(\hat{\bm{P}}_{ij}^{(1)}) > \underset{m \to \infty}{\lim} \mathrm{Bias}^2(\hat{\bm{P}}_{ij}^{(q)})$}
      node[midway,below] {$\underset{m \to \infty}{\lim} \mathrm{Var}(\hat{\bm{P}}_{ij}^{(1)}) = \underset{m \to \infty}{\lim} \mathrm{Var}(\hat{\bm{P}}_{ij}^{(q)}) = 0$} (MLqE);
  \draw[tofrom] (MLE) -- node[midway,left] {$(b)$\\$\mathrm{For}$ $\mathrm{any}$ $\mathrm{fixed}$ $m$, $\mathrm{or}$ \\ $m = O(n^b)$ $\mathrm{for}$ $b>0$, \\$\mathrm{any}$ $1 \le i, j \le n$, \\$\underset{n \to \infty}{\lim} \mathrm{Bias}(\hat{\bm{P}}_{ij}^{(1)}) =$\\$ \underset{n \to \infty}{\lim} \mathrm{Bias}(\widetilde{\bm{P}}_{ij}^{(1)})$\\$\mathrm{Var}(\widetilde{\bm{P}}_{ij}^{(1)})/\mathrm{Var}(\hat{\bm{P}}_{ij}^{(1)})$\\$ = O(n^{-1} (\log n)^3)$} (XMLE);
  \draw[tofrom] (MLqE) -- node[midway,right] {$(c)$\\$\mathrm{For}$ $\mathrm{any}$ $\mathrm{fixed}$ $m$, $\mathrm{or}$ \\ $m = O(n^b)$ $\mathrm{for}$ $b>0$, \\$\mathrm{any}$ $1 \le i, j \le n$, \\$\underset{n \to \infty}{\lim} \mathrm{Bias}(\hat{\bm{P}}_{ij}^{(q)}) = $\\$\underset{n \to \infty}{\lim} \mathrm{Bias}(\widetilde{\bm{P}}_{ij}^{(q)})$\\$\mathrm{Var}(\widetilde{\bm{P}}_{ij}^{(q)})/\mathrm{Var}(\hat{\bm{P}}_{ij}^{(q)})$\\$= O(m n^{-1} (\log n)^3)$} (XMLqE);
  \draw[tofrom] (XMLE) -- node[midway,above] {$(d)$\\$\mathrm{For}$ $\mathrm{sufficiently}$ $\mathrm{large}$ $\mathrm{values}$ $\mathrm{of}$ $\{\bm{C}_{ij}\}$, \\$\mathrm{if}$ $m \to \infty$, $m = O(n^b)$  $\mathrm{for}$ $b>0$,\\ $\mathrm{any}$ $1 \le i, j \le n$, \\$\underset{m, n \to \infty}{\lim} \mathrm{Bias}^2(\widetilde{\bm{P}}_{ij}^{(1)}) > \underset{m, n \to \infty}{\lim} \mathrm{Bias}^2(\widetilde{\bm{P}}_{ij}^{(q)})$}
      node[midway,below] {$\mathrm{If}$ $m = O(n^b)$  $\mathrm{for}$ $b>0$, $\mathrm{any}$ $1 \le i, j \le n$, \\$\underset{n \to \infty}{\lim} \mathrm{Var}(\widetilde{\bm{P}}_{ij}^{(1)}) = \underset{n \to \infty}{\lim} \mathrm{Var}(\widetilde{\bm{P}}_{ij}^{(q)}) = 0$} (XMLqE);
\end{tikzpicture}
\end{center}
\caption{\label{fig:summary}Relationships among our four estimators.}
\end{figure}
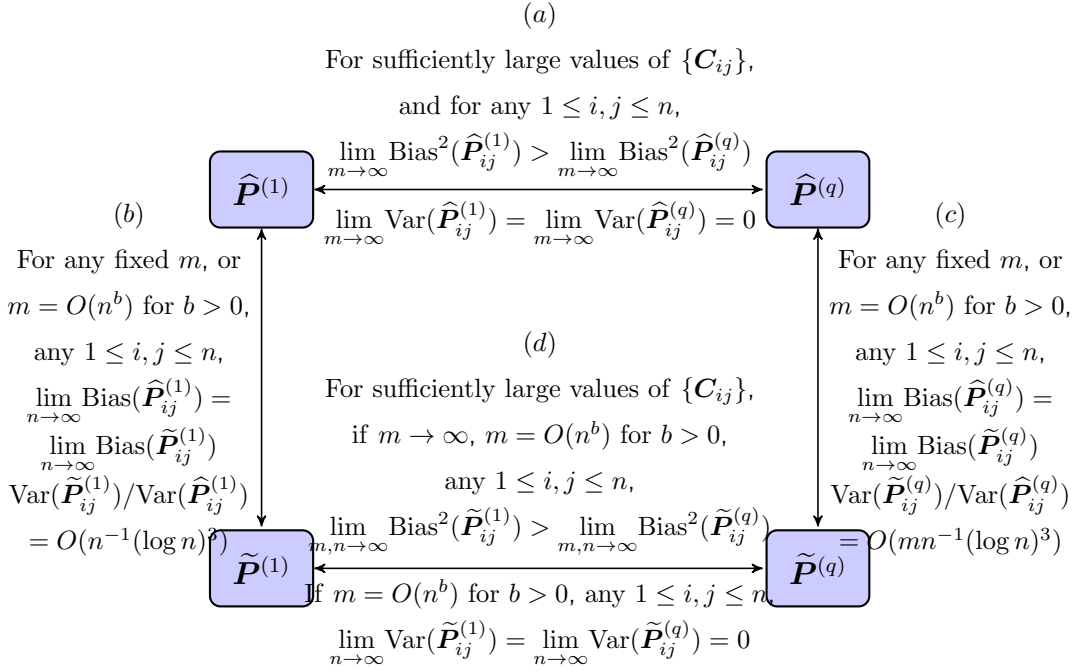

\section{Extensions}
\label{section:extension}
The results in Section~\ref{section:theory} are presented in the specific setting of the exponential distribution and the robust ML$q$E estimator. These results, however, can also be generalized to a broader class of distribution families and/or  different entry-wise robust estimators (denoted as $\hat{\bm{P}}^{(R)}$), provided that the following conditions are satisfied:
\begin{enumerate}
\item Letting $\bm{A}_{ij} \stackrel{ind}{\sim} (1-\epsilon) f_{\bm{P}_{ij}} + \epsilon f_{\bm{C}_{ij}}$, we require $f_\theta$ to satisfies $E[(\bm{A}_{ij} - E[\hat{\bm{P}}_{ij}^{(1)}])^k] \le \mathrm{const}^k \cdot k!$, where $\hat{\bm{P}}^{(1)}$ is the entry-wise MLE;
%$E[\hat{P}_{ij}^{(1)}] = (1-\epsilon) E_f(P_{ij}) + \epsilon E_f(C_{ij})$;
\item There exists $C_0(\bm{P}_{ij}, \epsilon) > 0$ such that under the contaminated model with $\bm{C}_{ij} > C_0(\bm{P}_{ij}, \epsilon)$,
\[
	\lim_{m \to \infty} \left| E[\hat{\bm{P}}^{(R)}_{ij}] - \bm{P}_{ij} \right| <
    \lim_{m \to \infty} \left| E[\hat{\bm{P}}^{(1)}_{ij}] - \bm{P}_{ij} \right|;
\]
\item $\hat{\bm{P}}^{(R)}_{ij} \le \mathrm{const} \cdot \hat{\bm{P}}_{ij}^{(1)}$;
\item $\mathrm{Var}(\hat{\bm{P}}^{(R)}_{ij}) = O(m^{-1})$, where $m$ is the number of graph observations.
\end{enumerate}
% @rt: i feel like condition 2 should be last?
% @jovo: I put condition 2 there since it is consistent with the order of the proofs, where we first prove the bias term between MLE and MLqE. Any specific reason we want it to be the last?
% @rt: it seems like the most important one to me. but doesn't really matter.  also of note, condition 1 seems like it is designed for poisson, it is not obvious that these conditions are "general"?

Condition~1 is to ensure that the observations $\mathbf{A}_{ij}$ does not deviate too far from their expectation so that Bernstein-like concentration inequalities can be applied. 
Condition~2 is similar to that used in Section \ref{section:MLEvsMLqE}; in particular, it assumes that the contamination of the model is large enough (a restriction on the distribution) and/or $\hat{\bm{P}}^{(R)}$ is sufficiently robust with respect to the contamination (a condition on the estimator).
By taking advantage of Condition 1 which controls $\hat{\bm{P}}^{(1)}$, Condition~3 allows one to derive Bernstein-like concentration inequalities for $\hat{\bm{P}}^{(R)}$.
Condition~4 ensures that the variance of $\hat{\bm{P}}^{(R)}_{ij}$ is comparable to the variance of the entry-wise MLE $\hat{\bm{P}}^{(1)}_{ij}$, which is of order $O(m^{-1})$. (Note that in the absence of Condition~4, similar but weaker results can still be derived.)

As an example to clarify the above four conditions, we sketch the argument that the results of Section~\ref{section:theory}
also holds for the
Poisson distribution when the entry-wise ML$q$E estimator is used. The Poisson distribution is a commonly used distribution for nonnegative graphs with integer weights. Lemma~\ref{lm:poisson} verifies Condition~1; intuitively, since the exponential distribution has a fatter tail compared to the Poisson, we should have the bound for the central moment of the Poisson directly from the results for the exponential distribution. Condition~2 is satisfied when we use the ML$q$E with the Poisson distribution. More specifically, since the gross error contamination is to the right, i.e.\ $C_{ij} > P_{ij}$, the weights in the ML$q$E equation will be smaller when the observed values are larger observations, in contrast to the equal weighting in the MLE equation. 
Thus under the gross error model, for sufficiently large $m$, $\hat{\bm{P}}_{ij}^{(R)}$ will be less biased than $\hat{\bm{P}}_{ij}^{(1)}$. For condition 3, $\hat{\bm{P}}^{(R)}_{ij}/\hat{\bm{P}}^{(1)}_{ij}$ is maximized when there are $m$ data points $x_1, \cdots, x_m$ with $0 \le x_1 = \cdots = x_k \le \bar{x} \le x_{k+1} = \cdots = x_m \le m \bar{x}/(m - k)$. In order to have ML$q$E larger than MLE $\bar{x}$, we need the weights of the first $m$ data points to be smaller than the weights of the remaining $m - k$ points. Thus $\exp(-\bar{x}) < \bar{x}^{x_m} \exp(-\bar{x}) / x_m!$. But then $x_m! < \bar{x}^{x_m}$. By the lower bound in Stirling's formula, we have $x_m < e \cdot \bar{x}$ when $x_m > 0$. Note that if $x_m = 0$ then MLE equals ML$q$E since all data points equal zero. Thus ML$q$E is bounded by $e \cdot \bar{x}$. As a result, $\hat{\bm{P}}_{ij} \le e \cdot \hat{\bm{P}}_{ij}^{(1)}$ and Condition~3 is satisfied. Finally, Condition~4 follows directly from the theory of minimum contrast estimators.
%ntry-wise ML$q$E estimator is used. 
In summary, all theorems in Section~\ref{section:theory} hold for the Poisson distribution together with the ML$q$E. The four conditions presented in this section provide a general framework for extending the theory to more general models and robust estimators.

\section{Empirical Results}
\label{section:results}

\subsection{Simulation}
\label{section:sim}

% In this section, we first illustrate the theoretical comparison among the four estimators discussed in Section~\ref{section:theory} via various Monte Carlo simulation experiments in an idealized setting.

\subsubsection{Simulation Setting}
\label{section:sim_setting}
Here we consider a 2-block WSBM with respect to the exponential distribution parameterized by
\begin{equation*}
\bm{B} = \begin{bmatrix}
4 & 2 \\
2 & 7
\end{bmatrix}
,\qquad \bm{\rho} = \begin{bmatrix}
0.5 & 0.5
\end{bmatrix}.
\end{equation*}
Let the contamination also be a 2-block WSBM with the same structure parameterized by
\begin{equation*}
\bm{B}^{\prime} = \begin{bmatrix}
9 & 6 \\
6 & 13
\end{bmatrix}
,\qquad \bm{\rho} = \begin{bmatrix}
0.5 & 0.5
\end{bmatrix}.
\end{equation*}
%The contamination probability $\epsilon = 0.1$ if not specified. However, we will vary $\epsilon$ in the following simulation.
With these parameters specified, we sample graphs according to Section~\ref{section:Contamination}.

For ease of presentation, in the simulation we assume the true dimension $d = \mathrm{rank}(\bm{B}) = 2$ is known, and thus we eliminate the dimension selection step in Algorithm~\ref{algo:basic} and Algorithm~\ref{algo:basic_q}.

\subsubsection{Diagonal Augmentation}
\label{section:diag_aug}

Since the graphs considered in this paper have no self-loops, all the adjacency matrices $\bm{A}^{(t)}$ ($1 \le t \le m$) are hollow, i.e.\ all diagonal entries are zeros. Thus the diagonal of the parameter matrix $\bm{P}$ does not matter since all off-diagonal entries are independent of the diagonal conditioned on the off-diagonal entries of $\bm{P}$.

However, unlike the entry-wise estimators, e.g.\ $\hat{\bm{P}}^{(1)}$, the estimators which take advantage of the graph structure can benefit from the information from the diagonals. As a result, the zero diagonals of the observed graphs will lead to unnecessary biases in those estimates.

To compensate for such unnecessary biases, \citet{marchette2011vertex} suggested using the average of the non-diagonal entries of the corresponding row as the diagonal entry before embedding. Also, \citet{scheinerman2010modeling} proposed an iterative method, which gives a different approach to resolving this issue.

As suggested in \citep{tang2016law}, in this work we combine both ideas by first using Marchette's row-averaging method and then one step of Scheinerman's iterative method.

\subsubsection{Simulation Results}

% In order t
To see how the performance of the four estimators vary with respect to contamination, we first run 1000 Monte Carlo replicates based on the contaminated WSBM specified in Section~\ref{section:sim_setting} with a fixed number of vertices $n = 100$ and a fixed number of graphs $m = 20$ while varying the contamination probability $\epsilon$ from $0$ to $0.4$.
Given each sample, four estimators can be computed following Algorithm~\ref{algo:basic} and Algorithm~\ref{algo:basic_q}. Since we are not focusing on how to select the parameter $q$ in the ML$q$E estimator, we shall use a fixed $q = 0.9$ unless specified otherwise. The MSE of each estimator can be estimated since the probability matrix $\bm{P}$ is known in this simulation.

The results are presented in Figure~\ref{fig:eps}. Different curves represent the simulated MSE associated with the four different estimators.
Firstly, we see that MLE $\hat{\bm{P}}^{(1)}$ is a better estimator compared to ML$q$E $\hat{\bm{P}}^{(q)}$ when there is little or no contamination (i.e.\ $\epsilon \le 0.01$ in the figure); however this estimator degrades dramatically as the contamination probability increases.
On the other hand, the ML$q$E $\hat{\bm{P}}^{(q)}$ is slightly less efficient than the MLE $\hat{\bm{P}}^{(1)}$ when the contamination probability is small,
% @rt: i cannot see the above claim from fig 3
% @jovo: In the figure it is hard to see MLE is better than MLqE when epsilon <= 0.01. Currently I added a specific range of epsilon to highlight this point.
% @rt: where? i don't see it.
 but is much more robust under a large contamination probability compared to the MLE.
Secondly, we see that even with a relatively small number of vertices $n = 100$, the ASE procedure which takes advantage of the low-rank structure already helps improve the performance of $\hat{\bm{P}}^{(1)}$ and lets $\widetilde{\bm{P}}^{(1)}$ win the bias-variance tradeoff. Since the ML$q$E $\hat{\bm{P}}^{(q)}$ approximately preserves the low-rank structure of the original graph, the ASE procedure also helps and makes $\widetilde{\bm{P}}^{(q)}$ a better estimate. Although both $\widetilde{\bm{P}}^{(q)}$ and $\widetilde{\bm{P}}^{(1)}$ take advantage of the low-rank structure and have reduced variances, $\widetilde{\bm{P}}^{(q)}$ constructed based on ML$q$E inherits the robustness from ML$q$E, so when the contamination probability is large enough, $\widetilde{\bm{P}}^{(q)}$ outperforms $\widetilde{\bm{P}}^{(1)}$ and degrades more slowly.

\begin{figure}
\centering
\includegraphics[width=0.8\textwidth]{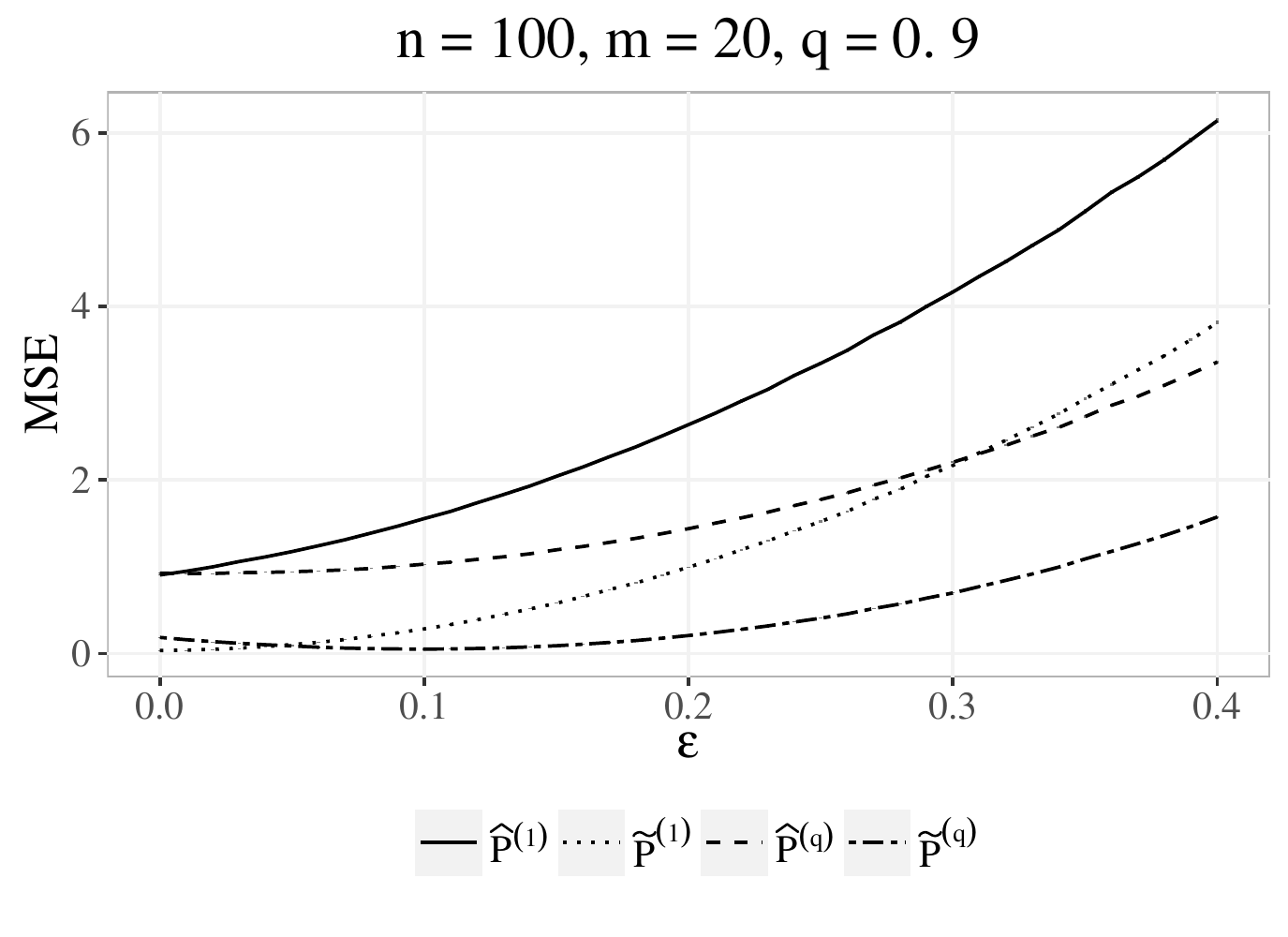}
\caption{Mean squared error in average by varying contamination ratio $\epsilon$ with fixed $n = 100$ and $m = 20$ based on 1000 Monte Carlo replicates, using $q=0.9$ when applying ML$q$E.
Different curves represent the simulated MSE associated with four different estimators.
1. MLE $\hat{\bm{P}}^{(1)}$ vs. ML$q$E $\hat{\bm{P}}^{(q)}$ (Relationship (a) in Figure~\ref{fig:summary}):
MLE outperforms by a small amount when there is no contamination (i.e.\ $\epsilon = 0$), but it degrades dramatically when contamination probability increases;
2. MLE $\hat{\bm{P}}^{(1)}$ vs. ASE $\circ$ MLE $\widetilde{\bm{P}}^{(1)}$ (Relationship (b) in Figure~\ref{fig:summary}):
ASE procedure takes the low rank structure into account and $\widetilde{\bm{P}}^{(1)}$ wins the bias-variance tradeoff;
3. ML$q$E $\hat{\bm{P}}^{(q)}$ vs. ASE $\circ$ ML$q$E $\widetilde{\bm{P}}^{(q)}$ (Relationship (c) in Figure~\ref{fig:summary}):
ML$q$E approximately preserves the low rank structure of the original graph, so ASE procedure still helps and $\widetilde{\bm{P}}^{(q)}$ wins the bias-variance tradeoff;
4. ASE $\circ$ ML$q$E $\widetilde{\bm{P}}^{(q)}$ vs. ASE $\circ$ MLE $\widetilde{\bm{P}}^{(1)}$ (Relationship (d) in Figure~\ref{fig:summary}):
When contamination probability is large enough, $\widetilde{\bm{P}}^{(q)}$ based on ML$q$E is better, since it inherits the robustness from ML$q$E.}
\label{fig:eps}
\end{figure}

Figure~\ref{fig:q} shows additional simulation results by varying the parameter $q$ in ML$q$E with fixed $n = 100$, $m = 20$ and $\epsilon = 0.1$ based on 1000 Monte Carlo replicates.
%Different types of lines represent the simulated MSE associated with four different estimators.
From the figure, we can see that the ASE procedure takes advantage of the graph structure and improves the performance of the corresponding estimators for a wide range of $q$. Moreover, for a wide range of $q$, the ML$q$E wins the bias-variance tradeoff and exhibits the robustness property compared to the MLE. And as $q$ goes to 1, ML$q$E goes to the MLE as expected.

\begin{figure}
\centering
\includegraphics[width=0.8\textwidth]{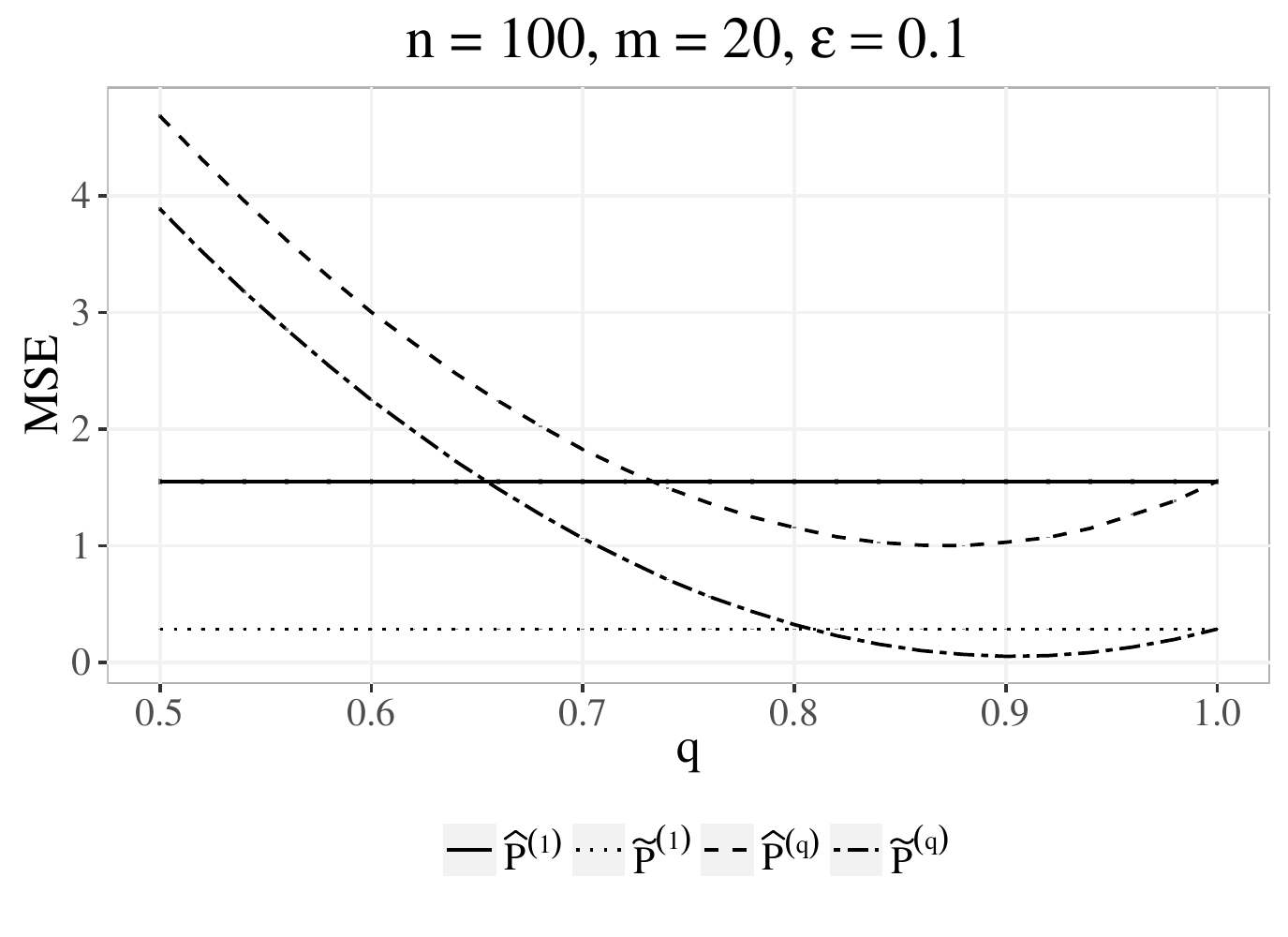}
\caption{Mean squared error in average by varying the parameter $q$ in ML$q$E with fixed $n = 100$, $m = 20$ and $\epsilon = 0.1$ based on 1000 Monte Carlo replicates. Different curves represent the simulated MSE associated with the four different estimators.
1. MLE $\hat{\bm{P}}^{(1)}$ vs. ML$q$E $\hat{\bm{P}}^{(q)}$ (Relationship (a) in Figure~\ref{fig:summary}):
Within an appropriate range of $q$, $\hat{\bm{P}}^{(q)}$ wins the bias-variance tradeoff and exhibits robustness compared to $\hat{\bm{P}}^{(1)}$. Also as $q$ goes to 1, $\hat{\bm{P}}^{(q)}$ goes to $\hat{\bm{P}}^{(1)}$ as expected;
2. MLE $\hat{\bm{P}}^{(1)}$ vs. ASE $\circ$ MLE $\widetilde{\bm{P}}^{(1)}$ (Relationship (b) in Figure~\ref{fig:summary}):
Both estimators are not affected by different choices of $q$. $\widetilde{\bm{P}}^{(1)}$ outperforms $\hat{\bm{P}}^{(1)}$ as shown in Figure~\ref{fig:eps};
3. ML$q$E $\hat{\bm{P}}^{(q)}$ vs. ASE $\circ$ ML$q$E $\widetilde{\bm{P}}^{(q)}$ (Relationship (c) in Figure~\ref{fig:summary}):
ASE procedure takes advantage of the graph structure and $\widetilde{\bm{P}}^{(q)}$ improves the performance of $\hat{\bm{P}}^{(q)}$ independent of the selection of $q$;
4. ASE $\circ$ ML$q$E $\widetilde{\bm{P}}^{(q)}$ vs. ASE $\circ$ MLE $\widetilde{\bm{P}}^{(1)}$ (Relationship (d) in Figure~\ref{fig:summary}):
Within an appropriate range of $q$, $\widetilde{\bm{P}}^{(q)}$ inherits robustness from $\hat{\bm{P}}^{(q)}$ and outperforms $\widetilde{\bm{P}}^{(1)}$.}
\label{fig:q}
\end{figure}

% Figures \ref{fig:eps} and \ref{fig:q} provide a tangible demonstration of the theoretical results presented in Section~\ref{section:theory}.

\subsection{Brain Graphs Experiment}
\label{section:real_data}

%Generally, the graphs in real life may not perfectly follow an RDPG, or not even follow IEM. But we are still interested in how the four estimators perform under those situations.
We now compare the four estimators on a structural connectomic dataset. The graphs in this dataset are based on diffusion tensor MR images. There are 114 different brain scans, each of which was processed to yield an undirected, weighted graph with no self-loops, using the ndmg pipeline \citep{kiar2017science, kiar2016ndmg}. There are different versions of this pipeline. In particular, the estimators are calculated based on graphs generated through version ``ndmg-v0.0.1'' (ndmg1). Note that we will consider another version ``ndmg-v0.0.33'' (ndmg2) to assess the performance of our estimators later in this section.
The vertices of the graphs represent different regions in the brain defined according to an atlas. We used the Desikan atlas with 70 vertices \citep{desikan2006automated} in this experiment. The weight of an edge between two vertices represents the number of white-matter tracts connecting the corresponding two regions of the brain.

Generally, we do not expect the graphs to perfectly follow an RDPG model, or even an IEM. Before proceeding with our analysis, we will perform some exploratory data analysis to check whether the data can reasonably be assumed to have approximate low-rank structure. Indeed, without at least some approximately low-rank structure, we will not expect the ASE procedure to improve the bias-variance tradeoff because of a potential high bias. In the left panel of Figure~\ref{fig:screehist}, we plot the eigenvalues of the mean graph of all 114 graphs (with diagonal augmentation) in decreasing algebraic order.
% @rt: there are no negative eigenvalues?
% @jovo: There are some negative eigenvalues according to the screeplot. Do you mean we want to highlight the fact there are negative e-values?
% @rt: it seems like it is worth mentioning because it is another way that our assumptiosn are invalid.
 % for the Desikan atlases based on the m2g pipeline.
 The eigenvalues first decrease dramatically and then stay around 0 for a large range of dimensions. In addition, we also plot the histogram in the right panel of Figure~\ref{fig:screehist}. From the figure, we see that many eigenvalues are concentrated around zero.
%In addition, we calculate the effective rank of $P$ to be 35 according to \citep{roy2007effective}.
This exploration suggests that the information is mostly contained in the first few dimensions. Such approximate low-rank property provides an opportunity to win the bias-variance tradeoff by applying the ASE procedure.

\begin{figure}
\centering
\includegraphics[width=.48\textwidth]{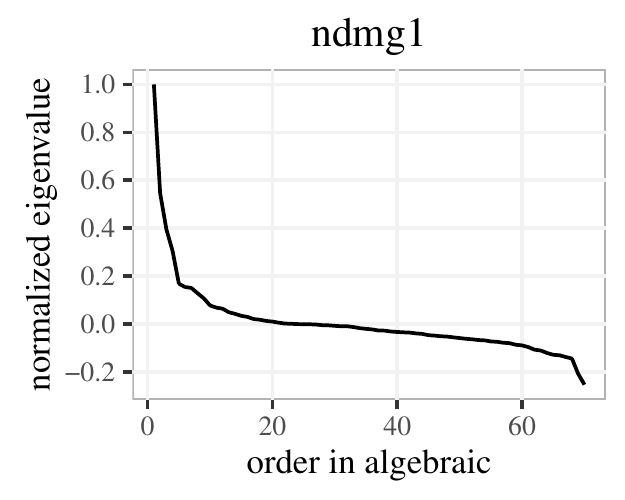}
\includegraphics[width=.48\textwidth]{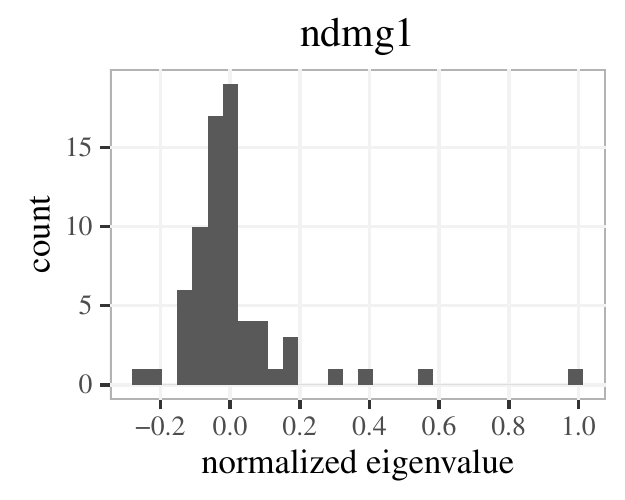}
\caption{Screeplot and the histogram of the normalized eigenvalues of the mean of 114 graphs based on ndmg1 pipeline.
The screeplot in the left panel shows the normalized eigenvalues (divided by the largest eigenvalue $230267.8$) of the mean graph of all 114 graphs with diagonal augmentation in decreasing algebraic order for the Desikan atlas. The right panel shows the histogram of the normalized eigenvalues (divided by the largest eigenvalue $230267.8$) of the mean graph of all 114 graphs with diagonal augmentation. Many eigenvalues are around zero, which lead to an approximate low-rank structure.
%In addition, we calculate the effective rank of $P$ to be 35, which also suggests that the rank is relatively low.
}
\label{fig:screehist}
\end{figure}

We now discuss an important issue with respect to this current dataset. To compare the four estimators, we need a notion of the MSE, which requires the true parameter matrix $\bm{P}$. However, unlike the simulation experiment in Section~\ref{section:sim}, $\bm{P}$ is definitely not available in practice since the 114 graphs themselves are also a sample from the population. We address this issue by finding a surrogate estimate for $\bm{P}$ and using it to calculate the MSE.
%is a feasible way in this experiment.
Recently, \citet{kiar2016ndmg, kiar2017science} updated the ndmg pipeline to a newer version ``ndmg-v0.0.33'' (ndmg2), which generates graphs of better quality compared to the previous version ``ndmg-v0.0.1'' (ndmg1).
So the MLE derived from the 114 graphs in ndmg2 should be a relatively more accurate estimate of the actual probability matrix $\bm{P}$ for the population. We use this as our surrogate for $\bm{P}$ when calculating the MSE. However, such a $\bm{P}$ generally has full rank, which breaks the low-rank assumptions and thus makes it harder for $\widetilde{\bm{P}}^{(1)}$ and $\widetilde{\bm{P}}^{(q)}$ to improve over $\hat{\bm{P}}^{(1)}$ and $\hat{\bm{P}}^{(q)}$. Thus any improvement arising from low-rank approximation is likely to be conservative. Moreover, it is still possible that the 114 graphs from ndmg2 contain outliers. Thus by using the MLE of the ndmg2 data as $\bm{P}$, the performance of ML$q$E-related estimators $\hat{\bm{P}}^{(q)}$ and $\widetilde{\bm{P}}^{(q)}$ are also underestimated.
In summary, our approach to constructing a workable surrogate for $\bm{P}$ relies on the availability of a better pipeline ndmg2, but is biased against both ASE-based and ML$q$E-based estimators; still, as we shall see,  ASE $\circ$ ML$q$E yields the best estimate of our surrogate $\bm{P}$.

In this experiment, we build the four estimates based on the sample of size $m$ from the ndmg1 pipeline, while using the MLE of all 114 graphs from the ndmg2 pipeline as the surrogate probability matrix $\bm{P}$. Note that diagonal augmentation procedure discussed in Section~\ref{section:diag_aug} is also applied here to compensate for the  bias introduced by the zero diagonals of the adjacency matrices.
We run 100 simulations on this dataset for different sample sizes $m = 2, 5, 10$. Specifically, in each Monte Carlo replicate, we sample $m$ graphs out of the 114 from the ndmg1 pipeline and compute the four estimates based on the $m$ sampled graphs. Once again for simplicity, we set $q$ to be 0.9 without further exploration. However, the results are qualitatively similar for many choices of $q$.
We then compare these estimates to the MLE of all 114 graphs in the ndmg2 dataset.
For the two low-rank estimators $\widetilde{\bm{P}}^{(1)}$ and $\widetilde{\bm{P}}^{(q)}$, we apply ASE into all possible dimensions, i.e.\ $d$ ranges from 1 to $n$. The MSE results are shown in Figure~\ref{fig:CCI}.

\begin{figure}
\centering
\includegraphics[width=0.8\textwidth]{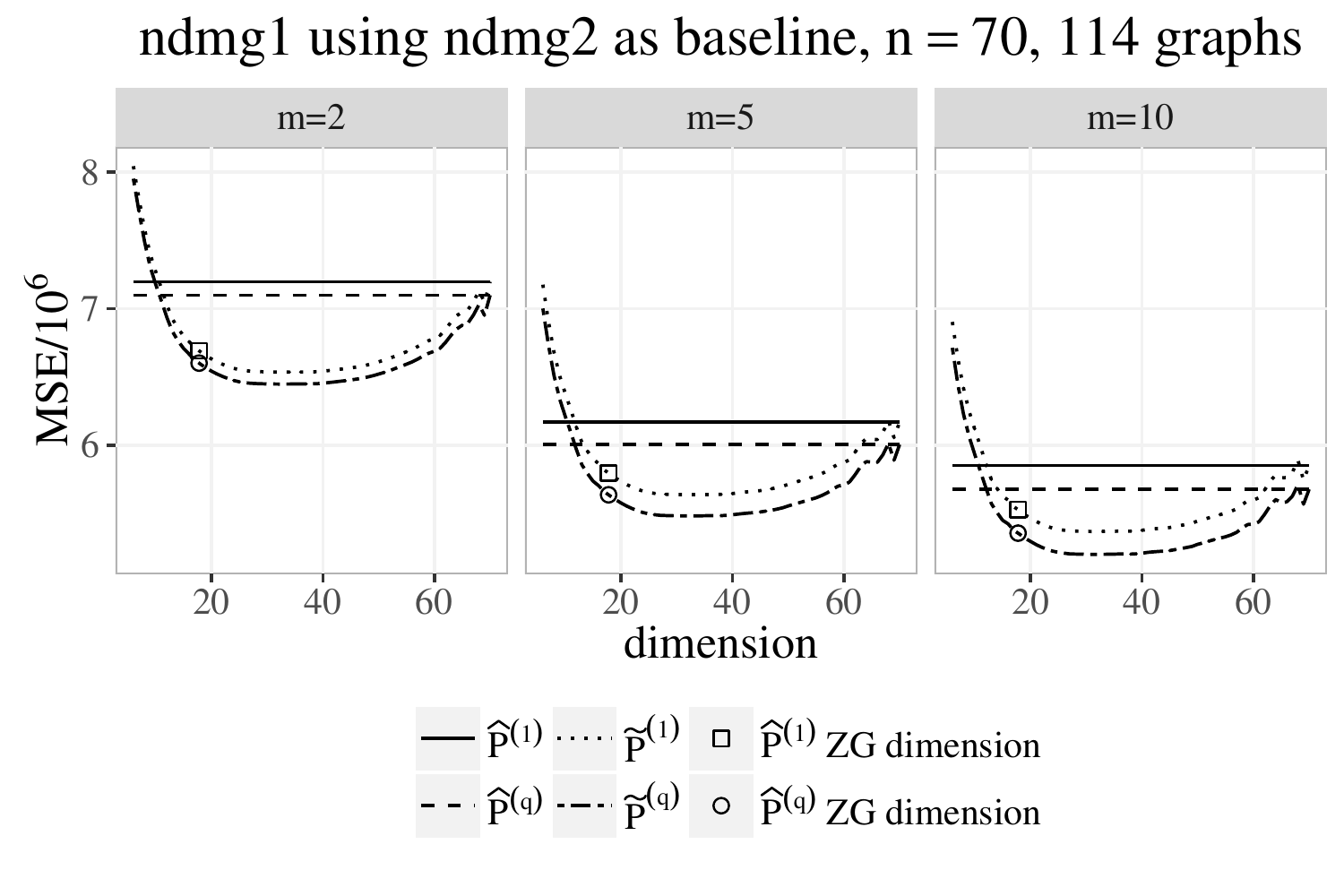}
\caption{Comparison of MSE of the four estimators using the Desikan atlas ndmg graphs at three sample sizes.
Three panels represent different sample sizes $m = 2, 5, 10$. The more samples we have, the better each estimator performs.
The horizontal-axis represents the number of embedded dimensions  while the vertical-axis characterizes the MSE of each estimator. And four estimators perform similarly according to different dimensions under different sample size $m$.
1. MLE $\hat{\bm{P}}^{(1)}$ vs. MLqE $\hat{\bm{P}}^{(q)}$:
$\hat{\bm{P}}^{(q)}$ outperforms $\hat{\bm{P}}^{(1)}$ since in practice observations are always contaminated and robust estimators are preferred;
2. MLE $\hat{\bm{P}}^{(1)}$ vs. ASE $\circ$ MLE $\widetilde{\bm{P}}^{(1)}$:
$\widetilde{\bm{P}}^{(1)}$ wins the bias-variance tradeoff when being embedded into a proper dimension;
3. MLqE $\hat{\bm{P}}^{(q)}$ vs. ASE $\circ$ ML$q$E $\widetilde{\bm{P}}^{(q)}$:
$\widetilde{\bm{P}}^{(q)}$ wins the bias-variance tradeoff when being embedded into a proper dimension;
4. ASE $\circ$ ML$q$E $\widetilde{\bm{P}}^{(q)}$ vs. ASE $\circ$ MLE $\widetilde{\bm{P}}^{(1)}$:
$\widetilde{\bm{P}}^{(q)}$ is better, since it inherits the robustness from $\hat{\bm{P}}^{(q)}$. The squares and circles represent the dimensions selected by the Zhu and Ghodsi method, which we see are reasonable choices. And more importantly, a wide range of dimensions lead to an improvement.
}
\label{fig:CCI}
\end{figure}

When $d$ is small, the ASE procedures underestimate the dimension and fail to capture important information, which leads to poor performance. In this work, we use Zhu and Ghodsi's method discussed in Section~\ref{section:dim_select} to select the dimension $d$. We denote the ZG dimension of $\hat{\bm{P}}^{(1)}$ by square and denote the ZG dimensions of $\hat{\bm{P}}^{(q)}$ by circle in the figure.
We see that the Zhu and Ghodsi's algorithm performs adequately for selecting a dimension in which to embed. More importantly, there is a wide range of dimensions for which applying ASE leads to improved performance. Although the $\bm{P}$ we are estimating is arguably not low-rank, ASE procedures still win the bias-variance tradeoff and improve performance even in this unfavorable setting.

We observed here that the robust estimator $\hat{\bm{P}}^{(q)}$ also performs relatively better than $\hat{\bm{P}}^{(1)}$, even though $\bm{P}$ (presumably) still contains outliers. This strongly indicates that there are many outliers in the original graphs from the ndmg1 pipeline, and $\widetilde{\bm{P}}^{(q)}$ successfully inherits the robustness from ML$q$E and outperforms $\widetilde{\bm{P}}^{(1)}$.

For all three sample sizes ($m = 2, 5, 10$), $\widetilde{\bm{P}}^{(q)}$ estimates the surrogate for $\bm{P}$ most accurately even though the surrogate for $\bm{P}$ is itself an edge-wise MLE, and thus is biased in favor of the other three estimators due to its high rank and non-robustness.
As such, we expect $\widetilde{\bm{P}}^{(q)}$ to provide an even better estimate for the true but unknown $\bm{P}$.

\section{Discussion}

% SBM -> RDPG
In this work, our theoretical analysis is performed mostly in the weighted stochastic blockmodel setting. Note that the results can be extended to the weighted random dot product graph, i.e.\ our estimator does not require the block structure. Indeed, the WSBM assumption is just to ensure $\mathrm{rank}(E[\hat{\bm{P}}^{(q)}])$ has an upper bound under the contamination model that is invariant in the number of vertices. With this assumption on the rank, all the theory still holds in the WRDPG setting. In practice, graphs are not exactly low rank. However, as shown in Figure~\ref{fig:screehist} and Figure~\ref{fig:CCI}, our estimator still provides large improvement with approximate low-rank structure. Thus our method can be applied to a much more general setting instead of being restricted to WSBM.

% Generalization
In Section~\ref{section:theory}, we present theory based on the exponential distribution with ML$q$E for clarity. Section~\ref{section:extension} indicates that these results can be extended to other distributions and robust estimators. Note that the most important condition is Condition~1, which requires that the MLE under the corresponding edge weight and contamination distribution is concentrated so that we obtain the required matrix bounds. This generalization makes the theory more flexible and powerful.

% Select q
Selecting a good distortion parameter $q$ based on real data in ML$q$E is an important but difficult task.
\citet{qin2017robust} presents a thorough, successful, and decidedly non-trivial example of such a selection methodology in the context of hypothesis testing in the one-sample univariate location problem.
While we use a fixed $q = 0.9$ in our real data experiments without presenting a formal automatic selection methodology,
Figure~\ref{fig:q} demonstrates that a poor choice of $q$ can significantly degrade performance.
We suggest that a program to develop an adaptive approach in our setting,
perhaps along the lines of \citet{qin2017robust}, is a promising avenue for widening the applicability of our estimator.
%In order to have improved performance, we might want to select $q$ based on an adaptive method such as that proposed by \citet{qin2017robust}.

% Vertex correspondence
The models considered in this work assume that the vertex correspondence across graphs is known. In some applications this may not be the case.  
Our methods may still be applicable after applying graph matching algorithms such as \citep{lyzinski2016graph, lyzinski2015spectral, lyzinski2014seeded, vogelstein2015fast}.

% uncontaminated
% The robust estimators $\hat{\bm{P}}^{(q)}$ and $\widetilde{\bm{P}}^{(q)}$ outperform the non-robust ones $\hat{\bm{P}}^{(1)}$ and $\widetilde{\bm{P}}^{(1)}$ mainly due to the reduced asymptotic bias in a contaminated model. However, robust estimators such as ML$q$E may still provide improvement without contamination based on finite samples. In this case, the embedding procedure will have a relatively larger impact, leading to a much better estimator $\widetilde{\bm{P}}^{(q)}$ compared to $\hat{\bm{P}}^{(q)}$. More investigation is needed under the uncontaminated setting.

% Other inference
In general, improvement in estimation performance is important not only for the estimation itself but also is important for subsequent statistical inference procedures such as clustering, vertex classification, etc.
For example, \citet{priebe2015statistical} and \citet{chen2016robust} both discuss vertex classification based on a single unweighted graph with contamination.
Additional investigation into subsequent inference tasks based on multiple contaminated weighted graphs should lead to important refinements and extensions of our robust estimation method.

%Moreover, to have an even more general setting, we might want to extend the current RDPG setting to latent positions graphs. \citet{tang2013universally} showed the universally consistency of a vertex classification method based on eigen-decomposition.

% @rt: i would also mention joint models
% @jovo: do you mean eigen-graphs that Shangsi is working on?
% @rt: for example, or durante's paper with dunson, and related work

\bibliography{Bib}{}
\bibliographystyle{plainnat}

\appendix

\section{Supplementary Materials: Proofs for Theory Results}

\subsection{Outline of the Proofs}

First, in Section~\ref{section:pf_MLqEvsMLE}, we prove in Lemma~\ref{lemma:ELqlEMLEproof} that when the contamination is large enough, the robust estimator $\hat{\bm{P}}^{(q)}$ has smaller asymptotic bias compared to $\hat{\bm{P}}^{(1)}$. By the results of minimum contrast estimator, we also show in Lemma~\ref{lemma:VarLqlVarMLEproof} that both estimators have variances going to zero as the number of graphs $m$ goes to infinity.

In Section~\ref{section:pf_MLEvsMLEASE1}, we analyze the properties of the ASE procedure. We first prove Theorem~\ref{thm:P1Diff}, which provides an upper bound for the spectral norm of the difference between the estimator $\hat{\bm{P}}^{(1)}$ and its expectation $\bm{H}_{ij}^{(1)} = E[\hat{\bm{P}}_{ij}^{(1)}]$. Lemma~\ref{lemma:AlmostOrthogonalL1} shows that $\bm{U}^{\top} \hat{\bm{U}}$ can be approximated by an orthogonal matrix $\bm{W}^{*} = \bm{W}_1 \bm{W}_2^{\top}$, where $\bm{U}$ and $\hat{\bm{U}}$ are the eigenspaces with respect to the largest $d$ eigenvalues of $\bm{H}_{ij}^{(1)}$ and $\hat{\bm{P}}^{(1)}$ respectively. More conveniently, Lemma~\ref{lemma:exchangeL1} indicates that we can change the order of $\bm{W}^*$ in the matrix multiplications accordingly without affecting the result much. With these tool results, in Lemma~\ref{lemma:XhatDiffXWexpressionL1} we give an upper bound of $\|\hat{\bm{Z}} - \bm{Z} \bm{W}\|_F$, which controls the error of the $\hat{\bm{Z}}$ for estimating the true latent positions $\bm{Z}$ up to orthogonal transformation.
% Briefly, in order to prove Lemma~\ref{lemma:XhatDiffXWexpressionL1}, we first write $\hat{Z} - U S^{1/2} W^*$ as the sum $(\hat{P} - H) U W^* \hat{S}^{-1/2} - U U^T(\hat{P} - H)U W^*\hat{S}^{-1/2} + (I - U U^T)(\hat{P} - H) Q_3 \hat{S}^{-1/2} + Q_1 \hat{S}^{1/2} + U Q_2$, and bound each term accordingly based on Lemma~\ref{lemma:AlmostOrthogonalL1} and Lemma~\ref{lemma:exchangeL1}. Then by the Bernstein inequality, in Theorem~\ref{thm:XhatDiffXWL1} we give the bound of the ($2 \to \infty$)-norm of the $\hat{Z} - Z W$, i.e. $\max_i \| \hat{Z}_i - W Z_i \|_2$.
Using Lemma~\ref{lemma:XhatDiffXWexpressionL1}, we can then give a bound for the $2 \to \infty$-norm of $\hat{\bm{Z}} - \bm{Z} \bm{W}$, i.e.\ we bound $\max_i \| \hat{\bm{Z}}_i - \bm{W} \bm{Z}_i \|_2$ in Theorem~\ref{thm:XhatDiffXWL1}.

In Section~\ref{section:pf_MLEvsMLEASE2}, we give a bound of the estimation error $\left|  \hat{\bm{Z}}_i^{\top} \hat{\bm{Z}}_j - \bm{Z}_i^{\top} \bm{Z}_j \right|$ in Lemma~\ref{lemma:1stMomentPhatDiffL1} based on the results in Section~\ref{section:pf_MLEvsMLEASE1}. In order to bound the variance of our estimator $\widetilde{\bm{P}}^{(1)}$, all results in this section will be based on a truncated version of $\widetilde{\bm{P}}^{(1)}$ defined in Definition~\ref{def:truncationMLE}. This is purely for technical reasons and will not affect the estimation procedure in practice, which is discussed in details in Remark~\ref{remark:truncation}. We then bound the expectation (Lemma~\ref{lm:L1Consistentproof}) and variance (Theorem~\ref{thm:VarASEL1proof}) of $\widetilde{\bm{P}}^{(1)}$ by carefully choosing a truncation point $a$ and applying the above truncation argument. As a direct result, we obtain the bound for the relative efficiency between $\hat{\bm{P}}_{ij}^{(1)}$ and $\widetilde{\bm{P}}_{ij}^{(1)}$ in Theorem~\ref{thm:AREL1proof}.

In Section~\ref{section:pf_MLqEASEvsMLqE}, we compare the performance between $\widetilde{\bm{P}}^{(q)}$ and $\hat{\bm{P}}^{(q)}$. The results in this section are proved in a similar manner to those in Section~\ref{section:pf_MLEvsMLEASE1} and Section~\ref{section:pf_MLEvsMLEASE2}. However, since the ML$q$E estimator for a mixture distribution model does not have a closed form expression, we explore a relationship between MLE and ML$q$E to bound $\widetilde{\bm{P}}^{(q)}$ and $\hat{\bm{P}}^{(q)}$; this technique could be of independent interest. Finally, in Section~\ref{section:MLqEASEvsMLEASE}, we compare the performance between $\widetilde{\bm{P}}^{(q)}$ and $\widetilde{\bm{P}}^{(1)}$.

In Section~\ref{section:pf_other}, we provide proofs for all supplementary results mentioned in the manuscript.

Before presenting the proofs, we first define the following notion of ``with high probability'' that is used throughout this appendix.
\begin{definition}
Let $(E_n)$ for $n \geq 1$ be a sequence of events.
We say that the events hold with high probability if, for any constant $c > 0$ there exits a constant $n_0(c)$ such that for all $n \geq n_0$, the event $E_n$ holds with probability greater than $1 - n^{-c}$.
\end{definition}

%\subsection{$\hat{P}^{(q)}$ vs. $\hat{P}^{(1)}$}
\subsection{\texorpdfstring{$\hat{\bm{P}}^{(q)}$}{$P$} vs. \texorpdfstring{$\hat{\bm{P}}^{(1)}$}{$P$}}
\label{section:pf_MLqEvsMLE}

\begin{lemma}
\label{lemma:LqlMLE}
Let $X_1, \cdots, X_m \stackrel{iid}{\sim} \mathrm{Exp}(p)$ with $m \ge 2$ and $E[X_1] = p$. Then with probability $1$,
%Given any observation $x = (x_1, \cdots, x_m)$ of $X_1, X_2, \dots, X_m$ such that $x_{(1)} > 0$ and not all $x_i$'s are the same, then no matter how the data is sampled, we have
\begin{itemize}
\item There exists at least one solution to the ML$q$ equation;
\item All the solutions to the ML$q$ equation are less than the MLE.
\end{itemize}
Thus the ML$q$E $\hat{p}^{(q)}$, the root closest to the MLE, is well defined.
\end{lemma}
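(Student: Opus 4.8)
The plan is to work directly with the L$q$-likelihood equation for the exponential family. For $X_1,\dots,X_m \stackrel{iid}{\sim}\mathrm{Exp}(p)$ we have $f_\theta(x) = \theta^{-1}e^{-x/\theta}$, so $\log f_\theta(x) = -\log\theta - x/\theta$ and $U_\theta(x) = \nabla_\theta \log f_\theta(x) = -\theta^{-1} + x\theta^{-2}$. The ML$q$ estimating equation $\sum_{i=1}^m U_\theta(X_i) f_\theta(X_i)^{1-q} = 0$ becomes, after multiplying through by $\theta^2$ and by $\theta^{1-q}$ (both positive),
\[
	\sum_{i=1}^m (X_i - \theta)\, e^{-(1-q)X_i/\theta} = 0.
\]
Define $g(\theta) := \sum_{i=1}^m (X_i - \theta)\, e^{-(1-q)X_i/\theta}$ for $\theta \in (0,\infty)$. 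The first bullet — existence of a solution — I would get by the intermediate value theorem: examine the sign of $g$ as $\theta \to 0^+$ and as $\theta \to \infty$. As $\theta \to \infty$, each exponential tends to $1$ and $g(\theta) \sim \sum X_i - m\theta \to -\infty$. As $\theta \to 0^+$, for the (a.s.\ unique, since the distribution is continuous) largest observation $X_{(m)}$, the factor $e^{-(1-q)X_{(m)}/\theta}$ decays fastest, so the dominant surviving term comes from the smallest $X_i$; since $0 < q < 1$ and $m \ge 2$, a short comparison shows $g(\theta) > 0$ for $\theta$ small enough (the positive contributions from the $X_i$ above $\theta$ dominate; more carefully, one factors out $e^{-(1-q)X_{(1)}/\theta}$ and checks the bracketed sum is eventually positive). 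Since $g$ is continuous on $(0,\infty)$, a root exists with probability one.

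For the second bullet — every root lies below the MLE $\bar X$ — I would argue as follows. Rewrite the equation as a weighted-mean fixed point: any root $\theta^\star$ satisfies
\[
	\theta^\star = \frac{\sum_{i=1}^m w_i X_i}{\sum_{i=1}^m w_i}, \qquad w_i = e^{-(1-q)X_i/\theta^\star} > 0,
\]
so $\theta^\star$ is a weighted average of the $X_i$ in which the weight $w_i$ is a \emph{strictly decreasing} function of $X_i$ (because $1-q>0$ and $\theta^\star>0$). A weighted average that downweights the larger values is strictly less than the unweighted average whenever the data are not all equal — this is essentially a Chebyshev/covariance inequality: $\mathrm{Cov}$ between the decreasing weight sequence and the increasing value sequence is $\le 0$, with equality only if all $X_i$ coincide (a probability-zero event for a continuous distribution). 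Hence $\theta^\star < \bar X$ almost surely. The same covariance argument, applied more carefully, would also give strict monotonic behavior; combined with the sign analysis this pins down that all roots sit strictly below $\bar X$ and at least one exists, so the root closest to the MLE is well defined (it is simply the supremum of the root set, which is attained by continuity and is $<\bar X$).

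The main obstacle I anticipate is the $\theta \to 0^+$ sign analysis for existence: one must be careful that the exponential weights do not conspire to make $g$ negative near zero. The clean way around this is to avoid $g$ entirely and instead use the fixed-point form — define $\Phi(\theta) = \big(\sum w_i(\theta) X_i\big)/\big(\sum w_i(\theta)\big)$ with $w_i(\theta) = e^{-(1-q)X_i/\theta}$, a continuous self-map of the interval $[X_{(1)}, X_{(m)}]$ (each $\Phi(\theta)$ is a convex combination of the data), hence has a fixed point by Brouwer/IVT in one dimension; the fixed points of $\Phi$ are exactly the roots of $g$. This simultaneously delivers existence and makes the "below the MLE" claim transparent from the downweighting-of-large-values structure. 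I would present it in that order: (1) reduce the ML$q$ equation to the weighted-mean fixed-point form; (2) existence via the continuous self-map of $[X_{(1)},X_{(m)}]$; (3) the covariance inequality giving every fixed point $<\bar X$ a.s.; (4) conclude the closest-to-MLE root is well defined.
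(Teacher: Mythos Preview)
Your proof is correct and takes a genuinely different route from the paper's. The paper works directly with the sign of $g(\theta)=\sum_i(X_i-\theta)e^{-(1-q)X_i/\theta}$: for $\theta<X_{(1)}$ every summand is positive so $g>0$ (this is the one-line observation you were missing when worrying about the $\theta\to0^+$ behavior), and for $\theta\ge\bar X$ a rearrangement argument---splitting the data at $\bar X$ into deficits $s_i=\bar X-X_{(i)}$ and surpluses $t_i=X_{(l+i)}-\bar X$ and comparing their exponential weights---shows $g<0$; the intermediate value theorem then places all roots in $(X_{(1)},\bar X)$. You instead recast the equation as a weighted-mean fixed point $\theta=\Phi(\theta)$ with $\Phi$ a continuous self-map of $[X_{(1)},X_{(m)}]$, obtaining existence by Brouwer, and then invoke the Chebyshev covariance inequality (strictly decreasing weights against the data give a weighted mean strictly below the unweighted mean) to force every fixed point below $\bar X$. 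Your argument is more conceptual and arguably cleaner for the ``below the MLE'' claim; the paper's is more hands-on but yields the slightly stronger statement that $g(\theta)<0$ for \emph{all} $\theta\ge\bar X$, which is convenient for the population-level analogue proved next in the paper. Either route establishes the lemma.
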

\begin{proof}
Let $x_1, \cdots, x_m$ be the observed values of $X_1, X_2, \dots, X_m$. Then with probability $1$, the $x_i$ are unique and $x_{(1)} = \min_{i} x_i > 0$. The MLE is
\[
	\hat{p}^{(1)}(x) = \bar{x}.
\]

Let $g(\theta, x) = \sum_{i=1}^m e^{-\frac{(1-q)x_i}{\theta}}(x_i - \theta)$. Then the ML$q$ equation is $g(\theta, x) = 0$. Now let $l$ be the smallest index such that $x_{(1)} \le \cdots \le x_{(l)} \le \bar{x} \le x_{(l+1)} \le \cdots$. Define $s_i = \bar{x} - x_{(i)}$ for $1 \le i \le l$, and $t_{i} = x_{(l+i)} - \bar{x}$ for $1 \le i \le m - l$. Note that $\sum_{i=1}^l s_i = \sum_{i=1}^{m-l} t_i$. Then for any $\theta \ge \bar{x}$, we have
\begin{align*}
g(\theta, x) & = \sum_{i=1}^m e^{-\frac{(1-q)x_{(i)}}{\theta}}(x_{(i)} - \theta)
= \sum_{i=1}^m e^{-\frac{(1-q)x_{(i)}}{\theta}}(x_{(i)} - \bar{x} + \bar{x} - \theta) \\
& = - \sum_{i=1}^l e^{-\frac{(1-q)x_{(i)}}{\theta}}s_i
+ \sum_{i=1}^{m-l} e^{-\frac{(1-q)x_{(i+l)}}{\theta}}t_i
+ \sum_{i=1}^m e^{-\frac{(1-q)x_{(i)}}{\theta}}(\bar{x} - \theta)\\
& \le - \sum_{i=1}^l e^{-\frac{(1-q)x_{(i)}}{\theta}}s_i
+ \sum_{i=1}^{m-l} e^{-\frac{(1-q)x_{(i+l)}}{\theta}}t_i \\
& \le - e^{-\frac{(1-q)x_{(l+1)}}{\theta}} \sum_{i=1}^l s_i
+ \sum_{i=1}^{m-l} e^{-\frac{(1-q)x_{(i+l)}}{\theta}}t_i \\
& \le - e^{-\frac{(1-q)x_{(l+1)}}{\theta}} \sum_{i=1}^{m-l} t_i
+ \sum_{i=1}^{m-l} e^{-\frac{(1-q)x_{(i+l)}}{\theta}}t_i \\
& \le - \sum_{i=1}^{m-l} e^{-\frac{(1-q)x_{(i+l)}}{\theta}}t_i
+ \sum_{i=1}^{m-l} e^{-\frac{(1-q)x_{(i+l)}}{\theta}}t_i\\
& = 0,
\end{align*}
and equality holds if and only if all $x_i$'s are the same, which occurs with probability $0$. Thus with probablity $1$, $g(\theta, x) < 0$ for all $\theta \ge \bar{x}$.

Denote any solution to the ML$q$E equation as $\hat{p}^{(q)}(x)$; we then have that
\begin{itemize}
\item $g(\hat{p}^{(q)}(x), x) = 0$;
\item $\lim_{\theta \rightarrow 0^+}g(\theta, x) = 0$;
\item $g(\theta, x) > 0$ when $\theta < x_{(1)}$;
\end{itemize}

Thus there exists at least one solution to the ML$q$E equation. And since all solutions to the ML$q$E equation are in the interval $(x_{(1)},\bar{x})$, we have $\hat{p}^{(q)}(x_1, \dots, x_m) \leq \hat{p}^{(1)}(x_1, \dots, x_m)$.
\end{proof}

\begin{lemma}
\label{lemma:PopulationLqExist}
Given observed data points $x_1, \dots, x_m$, recall that the ML$q$E equation under the exponential distribution based on $m$ data points $x_1, \dots, x_m$ is $\sum_{i=1}^m \exp(-(1-q)x_i/{\theta})(x_i - \theta) = 0$ while the MLE equation under the exponential distribution based on the same data is $\sum_{i=1}^m (x_i - \theta) = 0$. Now
let $X_1, \cdots, X_m \stackrel{iid}{\sim} (1-\epsilon) \mathrm{Exp}(p) + \epsilon \mathrm{Exp}(c)$ be $m$ data points sampled from a mixture of two exponential distribution. Denoting this mixture as $F$, there exists exactly one real solution $\theta(F)$ of $E_F[\exp(-(1-q)X/{\theta(F)})(X - \theta(F))] = 0$. Note that $E_F[\exp(-(1-q)X/{\theta(F)})(X - \theta(F))] = 0$ is the population version of ML$q$E equation under the exponential distribution.
Moreover, the ML$q$E solution is less than the MLE solution under the exponential distribution, i.e.\
$\theta(F) < E_F[\bar{X}] = (1-\epsilon) p + \epsilon c$.
\end{lemma}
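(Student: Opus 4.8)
The plan is to use the exponential-family structure to reduce the population $\mathrm{ML}q$ equation to an explicit polynomial equation in $\theta$ and then analyze its positive roots; throughout, ``real solution'' is read as ``positive root,'' since $E_F[e^{-(1-q)X/\theta}(X-\theta)]=0$ is only defined for $\theta>0$. First I would record the elementary identities, for $Y\sim\mathrm{Exp}(\lambda)$ and $a>0$, $E[e^{-aY}]=(1+a\lambda)^{-1}$ and $E[Ye^{-aY}]=\lambda(1+a\lambda)^{-2}$, whence with $a=(1-q)/\theta$
\[
  E\bigl[e^{-(1-q)Y/\theta}(Y-\theta)\bigr]=\frac{\theta^{2}(q\lambda-\theta)}{(\theta+(1-q)\lambda)^{2}} .
\]
Adding the two mixture components and dividing by $\theta^{2}>0$, the equation $\phi(\theta):=E_F[e^{-(1-q)X/\theta}(X-\theta)]=0$ is equivalent, after clearing denominators, to the cubic
\[
  P(\theta):=(1-\epsilon)(qp-\theta)(\theta+(1-q)c)^{2}+\epsilon(qc-\theta)(\theta+(1-q)p)^{2}=0,
\]
which has leading coefficient $-1$ and $P(0)=q(1-q)^{2}pc\,[(1-\epsilon)c+\epsilon p]>0$.

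For existence, I would observe that $\phi(\theta)/\theta^{2}\to(1-\epsilon)\frac{q}{(1-q)^{2}p}+\epsilon\frac{q}{(1-q)^{2}c}>0$ as $\theta\to0^{+}$, so $P>0$ near $0$, while for $\theta\ge q\max(p,c)$ both factors $qp-\theta$ and $qc-\theta$ are nonpositive (one strictly), so $\phi(\theta)<0$; by the intermediate value theorem a positive root exists, and every positive root lies in the bounded interval $(q\min(p,c),\,q\max(p,c))$.

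The main obstacle is uniqueness of the positive root of $P$. The route I would pursue is to control the sign of $\phi'$ at a root $\theta^{\ast}$. With $\beta^{\ast}=(1-q)/\theta^{\ast}$ and $d\tilde F\propto e^{-\beta^{\ast}X}\,dF$ the corresponding exponentially tilted law, the root relation gives $E_{\tilde F}[X]=\theta^{\ast}$, and a short computation yields
\[
  \mathrm{sign}\,\phi'(\theta^{\ast})=\mathrm{sign}\!\Bigl(\tfrac{(1-q)\,\mathrm{Var}_{\tilde F}(X)}{(\theta^{\ast})^{2}}-1\Bigr).
\]
Hence it suffices to show $\mathrm{Var}_{\tilde F}(X)<(\theta^{\ast})^{2}/(1-q)$ at every root: then $\phi$ is strictly decreasing through each of its zeros, and so has exactly one. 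Since $\tilde F$ is itself a two-component mixture of exponentials with means $p/(1+\beta^{\ast}p)$ and $c/(1+\beta^{\ast}c)$, each strictly below $1/\beta^{\ast}=\theta^{\ast}/(1-q)$, this is a bounded elementary inequality for mixtures of two exponentials; verifying it carefully (in the regime of interest --- in particular with $\epsilon$ not too large relative to the spread of $\{p,c\}$) is the delicate step where I expect the real work to lie. An equivalent algebraic route is to note that, $P$ being a cubic with $P(0)>0$, a second positive root would force three positive roots and hence a positive root-sum $q\,E_F[X]-2(1-q)[(1-\epsilon)c+\epsilon p]$; combining the sign of this quantity with Descartes' rule of signs for $P$ and the localization above would then close the argument.

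Finally, the inequality $\theta(F)<E_F[\bar X]=(1-\epsilon)p+\epsilon c$ follows cleanly once existence is known. Rearranging $\phi(\theta(F))=0$ gives $\theta(F)=E_F[Xe^{-(1-q)X/\theta(F)}]/E_F[e^{-(1-q)X/\theta(F)}]$, i.e.\ $\theta(F)=E_{\tilde F}[X]$ for the tilted law above. Since $x\mapsto e^{-(1-q)x/\theta(F)}$ is strictly decreasing and $X$ is non-degenerate under $F$, the covariance $\mathrm{Cov}_F\bigl(X,e^{-(1-q)X/\theta(F)}\bigr)$ is strictly negative; therefore $E_F[Xe^{-(1-q)X/\theta(F)}]<E_F[X]\,E_F[e^{-(1-q)X/\theta(F)}]$, and dividing by $E_F[e^{-(1-q)X/\theta(F)}]>0$ yields $\theta(F)<E_F[X]=(1-\epsilon)p+\epsilon c$.
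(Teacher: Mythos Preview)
Your reduction to the cubic $P(\theta)$ and the existence argument via the intermediate value theorem match the paper's approach; both localize the root between $qp$ and $qc$ (in the paper's notation, $h(pq)>0$ and $h(cq)<0$). Your proof of the inequality $\theta(F)<E_F[X]$ is correct and in fact considerably cleaner than the paper's: writing $\theta(F)=E_F[Xe^{-\beta X}]/E_F[e^{-\beta X}]$ and invoking the strict negativity of $\mathrm{Cov}_F(X,e^{-\beta X})$ is a one-line argument, whereas the paper evaluates $h$ at $E_F[X]$, treats the resulting expression as a cubic in $q$, explicitly solves for the three roots in $q$, and then checks algebraically that all real roots satisfy $q\ge 1$, so that $h(E_F[X])<0$ for $0<q<1$.

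The gap in your proposal is uniqueness. Your tilted-measure computation giving $\mathrm{sign}\,\phi'(\theta^\ast)=\mathrm{sign}\bigl((1-q)\mathrm{Var}_{\tilde F}(X)/(\theta^\ast)^2-1\bigr)$ is correct, but the required inequality $\mathrm{Var}_{\tilde F}(X)<(\theta^\ast)^2/(1-q)$ is not established, and your own caveat (``in the regime of interest --- in particular with $\epsilon$ not too large'') signals that you do not expect it to hold unconditionally. Since $\tilde F$ is a two-component exponential mixture, its variance can exceed the square of its mean by an arbitrary factor when the components are well separated, so bounding the component means by $1/\beta^\ast$ is not enough on its own. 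Your alternative via Descartes' rule and the root-sum is also only sketched. The paper sidesteps this entirely: it simply asserts, after writing down the cubic, that direct calculation shows the discriminant is negative, so there is exactly one real root and two complex conjugate roots. That is the step you are missing; either carry out the discriminant computation for $P$, or complete the variance inequality without parameter restrictions, to close the argument.
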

\begin{proof}
For the MLE, i.e.\ $\bar{X}$, we have $E[\bar{X}] = (1-\epsilon) p + \epsilon c$.
According to Equation (3.2) in \citep{ferrari2010maximum}, $\theta(F)$ satisfies
\[
\frac{\epsilon c}{(c(1-q) + \theta)^2} - \frac{\epsilon}{c(1-q) + \theta}
+\frac{(1-\epsilon) p}{(p(1-q) + \theta)^2} - \frac{(1-\epsilon)}{p(1-q) + \theta}
= 0,
\]
i.e.\
\[
\frac{\epsilon (\theta - c q)}{(c(1-q) + \theta)^2} =
\frac{(1-\epsilon) (p q - \theta)}{(p(1-q) + \theta)^2}.
\]
Define $h(\theta) = (c(1-q) + \theta)^2 (1-\epsilon) (p q - \theta) - (p(1-q) + \theta)^2 \epsilon (\theta - c q)$.
Then $\lim_{\theta \to \infty}h(\theta) = -\infty$, $h(0) > 0$, and $h(c q) < 0$.
Consider $q$ as the variable and solve the equation $h(E[\bar{X}]) = 0$, we have three roots and one of them is $q = 1$ obviously.
The other two roots are
\[
\frac{(p + c)\left( (p - c)^2 \epsilon(1-\epsilon) + 2 p c \right)}{2pc(p \epsilon + c(1-\epsilon))}
\pm \sqrt{\frac{\epsilon(1-\epsilon)(c-p)^2\left(\epsilon(1-\epsilon)(c-p)^4 - 4p^2c^2\right)}{4 p^2 c^2 (p\epsilon + c(1-\epsilon))^2}}.
\]
To prove the roots are greater or equal to 1, we need to show
\[
\frac{(p + c)\left( (p-c)^2 \epsilon(1-\epsilon) + 2pc \right)}{2pc(p \epsilon + c(1-\epsilon))}
- \sqrt{\frac{\epsilon(1-\epsilon)(c-p)^2\left(\epsilon(1-\epsilon)(c-p)^4 - 4p^2 c^2\right)}{4 p^2 c^2 (p\epsilon + c(1-\epsilon))^2}} > 1.
\]
For the first part,
\[
\frac{(p + c)\left( (p - c)^2 \epsilon(1-\epsilon) + 2p c \right)}{2pc(p \epsilon + c(1-\epsilon))}\\
> 1 + \frac{(p-c)^2 \epsilon (1-\epsilon) (p+c)}{2pc(p \epsilon + c(1-\epsilon))}.
\]
To prove the roots are greater or equal to 1, we just need to show
\[
(p-c)^4 \epsilon^2 (1-\epsilon)^2 (p+c)^2 \ge \epsilon^2(1-\epsilon)^2(c-p)^6.
\]
Then it is sufficient to show that
\[
(p+c)^2 \ge (c-p)^2,
\]
which is true.
Combined with the fact that when $q = 0$, $h(E[\bar{X}]) < 0$, we have for any $0 < q < 1$, $h(E[\bar{X}]) < 0$.

The equation $h(\theta) = 0$ is a cubic polynomial, so it has at most three real roots. In addition, by calculating we know there is only one real root, while the other two are complex roots. Combined with the fact that $h(p q) > 0$, we have for any $0 < q < 1$, the only real root of the population version of ML$q$ equation is less than $E[\bar{X}] = (1-\epsilon)p + \epsilon c$.
\end{proof}

\begin{lemma}[Theorem~\ref{thm:MLEvsMLqE}]
\label{lemma:ELqlEMLEproof}
%For any $0 < q < 1$, there exists $C_0(\bm{P}_{ij}, \epsilon, q) > 0$ such that under the contaminated model with $C_{ij} > C_0(\bm{P}_{ij}, \epsilon, q)$,
%\[
%	\lim_{m \to \infty} \left| E[\hat{\bm{P}}^{(q)}_{ij}] - \bm{P}_{ij} \right| <
%    \lim_{m \to \infty} \left| E[\hat{\bm{P}}^{(1)}_{ij}] - \bm{P}_{ij} \right|,
%\]
%for $1 \le i, j \le n$ and $i \ne j$.
For any $0 < q < 1$ and any $\bm{P}$, there exists a constant $C_0(\epsilon, q) > 0$ depending only on $\epsilon$, $q$, and $\max_{ij} \bm{P}_{ij}$ such that under the contaminated model $\bm{A}_{ij}^{(t)} \sim (1 - \epsilon) f_{\bm{P}_{ij}} + \epsilon f_{\bm{C}_{ij}}$ with $\bm{C}_{ij} > C_0(\epsilon, q)$ for all $i,j$ ML$q$E has smaller entry-wise asymptotic bias compared to MLE, i.e.\
\[
	\lim_{m \to \infty} \left| E[\hat{\bm{P}}^{(q)}_{ij}] - \bm{P}_{ij} \right| <
    \lim_{m \to \infty} \left| E[\hat{\bm{P}}^{(1)}_{ij}] - \bm{P}_{ij} \right|,
\]
for $1 \le i, j \le n$ and $i \ne j$.
\end{lemma}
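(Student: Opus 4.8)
Fix $i \ne j$ and abbreviate $p = \bm{P}_{ij}$, $c = \bm{C}_{ij}$, so that each $\bm{A}^{(t)}_{ij}$ has the fixed mixture law $F = (1-\epsilon)\,\mathrm{Exp}(p) + \epsilon\,\mathrm{Exp}(c)$. The plan is to send $m \to \infty$ and compare the two limiting biases, which I will identify as $\epsilon(c-p)$ for the MLE and $|\theta(F) - p|$ for the ML$q$E, where $\theta(F)$ is the population ML$q$ root supplied by Lemma~\ref{lemma:PopulationLqExist}; the desired strict inequality then follows from the bound $\theta(F) < (1-\epsilon)p + \epsilon c$ of that lemma together with a crude lower bound $\theta(F) > 0$.

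The MLE side is exact and needs no limiting argument: $\hat{\bm{P}}^{(1)}_{ij} = m^{-1}\sum_t \bm{A}^{(t)}_{ij}$ has $E[\hat{\bm{P}}^{(1)}_{ij}] = (1-\epsilon)p + \epsilon c$ for every $m$, so, insisting that $C_0 \ge \max_{ij}\bm{P}_{ij}$ (hence $c > p$), we get $\lim_{m\to\infty}\bigl|E[\hat{\bm{P}}^{(1)}_{ij}] - p\bigr| = \epsilon(c-p)$. For the ML$q$E side I would argue in two steps. First, consistency: by Lemma~\ref{lemma:LqlMLE} every sample root of the ML$q$ equation — in particular $\hat{\bm{P}}^{(q)}_{ij}$, the one closest to the MLE — lies in the interval $\bigl(\min_t \bm{A}^{(t)}_{ij},\ \bar{\bm{A}}_{ij}\bigr)$, and by a uniform law of large numbers the sample $L_q$-score converges, uniformly on compact subsets of $(0,\infty)$, to the population $L_q$-score $\theta \mapsto E_F\bigl[e^{-(1-q)X/\theta}(X-\theta)\bigr]$, which by Lemma~\ref{lemma:PopulationLqExist} has the single real zero $\theta(F)$; a standard $M$-estimator consistency argument then gives $\hat{\bm{P}}^{(q)}_{ij} \to \theta(F)$ almost surely. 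Second, convergence of means: since $0 \le \hat{\bm{P}}^{(q)}_{ij} \le \bar{\bm{A}}_{ij}$ and the averages $\{\bar{\bm{A}}_{ij}\}_m$ are uniformly integrable (averages of i.i.d.\ integrable variables), the family $\{\hat{\bm{P}}^{(q)}_{ij}\}_m$ is uniformly integrable as well, whence $E[\hat{\bm{P}}^{(q)}_{ij}] \to \theta(F)$ and $\lim_{m\to\infty}\bigl|E[\hat{\bm{P}}^{(q)}_{ij}] - p\bigr| = |\theta(F) - p|$.

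It remains to check $|\theta(F) - p| < \epsilon(c-p)$ once $c$ is large. The inequality $\theta(F) - p < \epsilon(c-p)$ is literally $\theta(F) < (1-\epsilon)p + \epsilon c$, which is Lemma~\ref{lemma:PopulationLqExist}. For the reverse inequality $\theta(F) - p > -\epsilon(c-p)$ I would first record that $\theta(F) > 0$: writing $s = (1-q)/\theta$, the population $L_q$-score is strictly positive as $\theta \to 0^+$ (its leading term is $q\,s^{-2}\bigl(\tfrac{1-\epsilon}{p} + \tfrac{\epsilon}{c}\bigr) > 0$) and strictly negative at $\theta = E_F[X]$ (a decreasing weight $e^{-(1-q)x/\theta}$ is negatively associated with $x - E_F[X]$), so its unique real root lies in $\bigl(0, E_F[X]\bigr)$. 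It then suffices to choose $C_0(\epsilon,q)$, allowed to depend also on $\max_{ij}\bm{P}_{ij}$, large enough that $c > C_0$ forces $p - \epsilon(c-p) < 0$ for every entry — e.g.\ $C_0 = (1+\epsilon)\max_{ij}\bm{P}_{ij}/\epsilon$ works — so that $p - \epsilon(c-p) < 0 < \theta(F)$ holds trivially. Combining the two bounds yields $|\theta(F)-p| < \epsilon(c-p)$ for all $i,j$, which is exactly the claimed inequality of asymptotic biases.

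\textbf{Expected main obstacle.} The only nonroutine point is the almost sure convergence of the ML$q$E (the root closest to the MLE) to the population root $\theta(F)$: one must ensure that this selected root does not drift toward a spurious small root as $m$ grows. This is precisely what the localization of roots in Lemma~\ref{lemma:LqlMLE} and the uniqueness in Lemma~\ref{lemma:PopulationLqExist} are for, combined with a uniform-convergence/$M$-estimator consistency argument; the subsequent upgrade to convergence of expectations via uniform integrability, and the explicit exponential-mixture algebra, are routine.
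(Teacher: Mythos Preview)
Your proposal is correct and follows essentially the same skeleton as the paper: compute the MLE bias exactly as $\epsilon(c-p)$, identify the ML$q$E asymptotic bias with $|\theta(F)-p|$ via consistency, use the upper bound $\theta(F)<(1-\epsilon)p+\epsilon c$ from Lemma~\ref{lemma:PopulationLqExist}, and then pick $C_0$ so that a lower bound on $\theta(F)$ handles the other side of the absolute value.

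The one substantive difference is the lower bound on $\theta(F)$. The paper reads it off the sign structure of the population equation
\[
\frac{\epsilon(\theta - cq)}{(c(1-q)+\theta)^2}=\frac{(1-\epsilon)(pq-\theta)}{(p(1-q)+\theta)^2},
\]
which forces $\theta(F)\in(qp,\,qc)$; from $\theta(F)>qp$ the condition $|\theta(F)-p|<\epsilon(c-p)$ reduces to $\bm{C}_{ij}>\bm{P}_{ij}\bigl(1+(1-q)/\epsilon\bigr)$, a threshold that genuinely depends on $q$ and recovers the remark after Theorem~\ref{thm:MLEvsMLqE}. You instead use the cruder $\theta(F)>0$ and take $C_0=(1+\epsilon)\max_{ij}\bm{P}_{ij}/\epsilon$, which is perfectly valid for the existence statement but loses the $q$-dependence and gives a strictly larger constant. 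Conversely, you are more explicit than the paper about upgrading $\hat{\bm{P}}^{(q)}_{ij}\to\theta(F)$ to $E[\hat{\bm{P}}^{(q)}_{ij}]\to\theta(F)$ via the domination $0\le\hat{\bm{P}}^{(q)}_{ij}\le\bar{\bm{A}}_{ij}$ and uniform integrability; the paper simply invokes Lemma~\ref{lemma:ELqConverge} without spelling this out.
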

\begin{proof}
For the MLE $\hat{\bm{P}}^{(1)}_{ij} = \bar{\bm{A}}_{ij}$,
\[
	E[\hat{\bm{P}}^{(1)}_{ij}] = E[\bar{\bm{A}}_{ij}]
    = \frac{1}{m} \sum_{t = 1}^m E[\bm{A}_{ij}^{(t)}]
    = E[\bm{A}_{ij}^{(1)}]
    = (1-\epsilon) \bm{P}_{ij} + \epsilon \bm{C}_{ij}.
\]
As shown in Lemma \ref{lemma:PopulationLqExist}, $\theta(F)$ satisfies
\[
\frac{\epsilon (\theta(F) - \bm{C}_{ij}q)}{(\bm{C}_{ij}(1-q) + \theta(F))^2} =
\frac{(1-\epsilon) (\bm{P}_{ij} q - \theta(F))}{(\bm{P}_{ij}(1-q) + \theta(F))^2}.
\]
Thus $\theta(F) - \bm{C}_{ij} q$ and $\theta(F) - \bm{P}_{ij} q$ should have different signs. Combined with $\bm{C}_{ij} > \bm{P}_{ij}$, we have
\[
q \bm{P}_{ij} < \theta(F).
\]
To have a smaller asymptotic bias in absolute value, combined with Lemma \ref{lemma:ELqConverge}, we need
\[
|\theta(F) - \bm{P}_{ij}| < \epsilon (\bm{C}_{ij} - \bm{P}_{ij}).
\]
Based on Lemma \ref{lemma:LqlMLE}, we need
\[
q \bm{P}_{ij} > \bm{P}_{ij} - \epsilon(\bm{C}_{ij} - \bm{P}_{ij}),
\]
i.e.\
\[
\bm{C}_{ij} > \bm{P}_{ij} + \frac{(1-q) \bm{P}_{ij}}{\epsilon} = C_0(\bm{P}_{ij}, \epsilon, q).
\]
\end{proof}

\begin{lemma}
\label{lemma:minimumcontrast}
Assume similar setting as Lemma~\ref{lemma:PopulationLqExist}, i.e.\
let $m$ data points sample from the contamination model $X, X_1, \cdots, X_m \stackrel{iid}{\sim} (1-\epsilon) \mathrm{Exp}(p) + \epsilon \mathrm{Exp}(c)$. The ML$q$E under the exponential distribution $\mathrm{Exp}(p)$ is a minimum contrast estimator.
\end{lemma}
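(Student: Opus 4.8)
The plan is to realize the ML$q$E as the minimizer of an empirical contrast and to verify the defining property of a minimum contrast estimator at the population level, leaning on Lemma~\ref{lemma:PopulationLqExist}. Since the ML$q$E maximizes $\sum_{i=1}^m L_q(f_\theta(X_i))$ with $L_q(u)=(u^{1-q}-1)/(1-q)$ and $f_\theta(x)=\theta^{-1}e^{-x/\theta}$, it equivalently minimizes $\tfrac1m\sum_{i=1}^m\rho_q(\theta,X_i)$ for the contrast $\rho_q(\theta,x):=-L_q(f_\theta(x))$. It therefore suffices to show that the population contrast $M(\theta):=E_F[\rho_q(\theta,X)]$, with $F=(1-\epsilon)\mathrm{Exp}(p)+\epsilon\mathrm{Exp}(c)$, is uniquely minimized at the value $\theta(F)$ produced by Lemma~\ref{lemma:PopulationLqExist}.

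First I would put $M$ in closed form. Using $f_\theta(x)^{1-q}=\theta^{-(1-q)}e^{-(1-q)x/\theta}$ and the Laplace transform $E[e^{-sX}]=(1+s\lambda)^{-1}$ for $X\sim\mathrm{Exp}(\lambda)$,
\[
M(\theta)=\frac{1}{1-q}\left(1-\theta^{-(1-q)}\left[\frac{1-\epsilon}{1+(1-q)p/\theta}+\frac{\epsilon}{1+(1-q)c/\theta}\right]\right),
\]
which is $C^\infty$ and finite on $(0,\infty)$. Two elementary facts do the work: (i) $M(\theta)<(1-q)^{-1}$ for every $\theta\in(0,\infty)$, since $E_F[f_\theta(X)^{1-q}]>0$; and (ii) $M(\theta)\to(1-q)^{-1}$ as $\theta\to0^{+}$ and as $\theta\to\infty$. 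Fact (ii) follows by dominated convergence: $f_\theta(x)^{1-q}\to0$ pointwise in $x>0$ at both limits, and the envelope $\sup_{\theta>0}f_\theta(x)^{1-q}=(ex)^{-(1-q)}$ (the supremum being attained at $\theta=x$) is $F$-integrable because $E_F[X^{q-1}]<\infty$ for $0<q<1$.

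Next I would identify the critical points. Differentiating under the integral sign (justified on compact subintervals of $(0,\infty)$ by a similar envelope for $\partial_\theta f_\theta(x)^{1-q}$), one gets $M'(\theta)=-E_F[U^{\star}_\theta(X;q)]=-\theta^{-(3-q)}E_F\big[(X-\theta)e^{-(1-q)X/\theta}\big]$, so $M'(\theta)=0$ exactly when $\theta$ solves the population ML$q$ equation of Lemma~\ref{lemma:PopulationLqExist}; that lemma gives exactly one real root, $\theta(F)$, and by the argument in the proof of Lemma~\ref{lemma:ELqlEMLEproof} one has $qp<\theta(F)<(1-\epsilon)p+\epsilon c$, so $\theta(F)$ is interior. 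Combining (i) and (ii): $M$ is continuous on $(0,\infty)$, strictly below $(1-q)^{-1}$, and tends to $(1-q)^{-1}$ at both ends, so its infimum is attained at an interior point; that point is a critical point, hence equals $\theta(F)$, and the same estimates show the minimum is well separated. Thus $\rho_q$ is a bona fide contrast with target $\theta(F)$ and the ML$q$E is a minimum contrast estimator. (The boundary behavior also holds for the empirical criterion $\tfrac1m\sum_i\rho_q(\theta,X_i)$, so its minimizer is interior and coincides with the root of the empirical ML$q$ equation nearest $\bar X$ singled out in Lemma~\ref{lemma:LqlMLE}.)

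The step I expect to require the most care is the boundary analysis in (ii) together with the differentiation under the integral: one must produce a single $F$-integrable function dominating $f_\theta(X)^{1-q}$, and likewise its $\theta$-derivative, uniformly over $\theta$ near $0$ and $\infty$. The explicit maximizer $\theta=x$ of $\theta\mapsto f_\theta(x)^{1-q}$ makes the envelope $(ex)^{-(1-q)}$ transparent, and its integrability reduces to the moment bound $E_F[X^{q-1}]<\infty$ for $q\in(0,1)$; everything else is bookkeeping with the closed form of $M$.
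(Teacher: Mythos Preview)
Your proposal is correct and follows essentially the same strategy as the paper: define the contrast $\rho_q(\theta,x)=-L_q(f_\theta(x))$ and invoke Lemma~\ref{lemma:PopulationLqExist} to obtain a unique critical point $\theta(F)$ of the population contrast $M(\theta)=E_F[\rho_q(\theta,X)]$. The paper's proof is terser: it introduces a reparameterization $r$ with $r(\theta(F))=p$ so that the ``true parameter'' in the minimum contrast framework is formally the target, and then appeals directly to the proof of Lemma~\ref{lemma:PopulationLqExist} for the unique-minimum property. You instead work in the original $\theta$-parameterization and supply an explicit boundary analysis---computing $M$ in closed form, showing $M(\theta)\to(1-q)^{-1}$ at both endpoints while $M(\theta)<(1-q)^{-1}$ throughout $(0,\infty)$---which cleanly forces the unique critical point to be the global minimum. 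This extra step is a genuine improvement in completeness: the paper's appeal to Lemma~\ref{lemma:PopulationLqExist} furnishes uniqueness of the \emph{root} but does not by itself say why that root is a minimum rather than a maximum or inflection; your envelope $(ex)^{-(1-q)}$ and moment bound $E_F[X^{q-1}]<\infty$ make the limiting behavior and the differentiation under the integral rigorous. The reparameterization in the paper, by contrast, is mainly a formal device to match the textbook statement of the minimum contrast framework and does not add analytic content.
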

\begin{proof}
Consider the contaminated distribution $F(x) = (1-\epsilon) f(x; p) + \epsilon f(x; c)$, where $f(x)$ represents the pdf of exponential distribution. By Lemma~\ref{lemma:PopulationLqExist}, we know there is a one-to-one correspondence between the uncontaminated parameter $p$ and the only real solution $\theta(F)$ of the population version of ML$q$ equation,
i.e.\ $E_F[exp(-(1-q)X/{\theta(F)})(X - \theta(F))] = 0$. Let $r(\theta(F)) = p$.
Then we can define $\rho(x; \theta) = f(x; r(\theta))^{1-q}/{1 - q}$, where $q \in (0, 1)$ is a constant.
By reparameterizing $\rho(x; \theta)$ to $\widetilde{\rho}(x; r)$ such that $\widetilde{\rho}(x; r(\theta)) = \rho(x; \theta)$, we can use the proof of Lemma~\ref{lemma:PopulationLqExist} directly to prove that $D(\theta_0, \theta) = E_{\theta_0}[\rho(X, \theta)]$ is uniquely minimized at $\theta_0$. Thus the ML$q$E is a minimum contrast estimator.
\end{proof}

\begin{lemma}
\label{lemma:UniformConvergence}
Uniform convergence of the MLq equation, i.e.\
\[
	\sup_{\theta \in [0, R]} \left| \frac{1}{m} \sum_{i=1}^m e^{-\frac{(1-q) X_i}{\theta}}(X_i - \theta) - E_F[e^{-\frac{(1-q) X}{\theta}}(X - \theta)] \right| \stackrel{a.s.}{\to} 0.
\]
\end{lemma}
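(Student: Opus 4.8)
The plan is to treat this as a standard uniform strong law of large numbers (a Glivenko--Cantelli-type statement) over the compact parameter set $[0,R]$, for the family of functions $g_\theta(x) = e^{-(1-q)x/\theta}(x-\theta)$. Three ingredients are needed: (i) an integrable envelope dominating the family uniformly in $\theta$; (ii) continuity of $\theta \mapsto g_\theta(x)$ on $[0,R]$ for $F$-almost every $x$; and (iii) the ordinary (Kolmogorov) SLLN applied pointwise. The only mildly delicate point is the behaviour as $\theta \to 0^+$, where the MLq equation degenerates; I would handle it by adopting the continuous extension $g_0(x) := 0$, consistent with $\lim_{\theta\to 0^+} g_\theta(x) = 0$ for $x \ge 0$.

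First I would record the envelope bound. Since $0 < q < 1$ and $x \ge 0$, we have $0 \le e^{-(1-q)x/\theta} \le 1$ for every $\theta > 0$, hence $|g_\theta(x)| \le |x - \theta| \le x + R$ for all $\theta \in (0,R]$; the bound extends to $\theta = 0$ under the convention above. Therefore $\sup_{\theta \in [0,R]} |g_\theta(X)| \le X + R =: G(X)$, and since $F = (1-\epsilon)\mathrm{Exp}(p) + \epsilon\,\mathrm{Exp}(c)$ has finite mean, $E_F[G(X)] = (1-\epsilon)p + \epsilon c + R < \infty$. This simultaneously shows that $E_F[g_\theta(X)]$ is well defined and finite for every $\theta \in [0,R]$ and supplies the domination needed below.

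Next I would verify continuity: for fixed $x > 0$, $\theta \mapsto e^{-(1-q)x/\theta}(x-\theta)$ is continuous on $(0,R]$ and tends to $0$ as $\theta \to 0^+$, matching $g_0(x)$; and $g_\theta(0) = -\theta$ is continuous on $[0,R]$. So $\theta \mapsto g_\theta(x)$ is continuous on the compact set $[0,R]$ for every $x \ge 0$. Combining continuity with the integrable envelope, dominated convergence shows $\theta \mapsto E_F[g_\theta(X)]$ is continuous on $[0,R]$, and the standard uniform SLLN for a compact parameter set with a continuous, dominated integrand (as in Wald's consistency argument, or Ferguson, \emph{A Course in Large Sample Theory}) yields
\[
  \sup_{\theta \in [0,R]} \left| \frac{1}{m}\sum_{i=1}^m g_\theta(X_i) - E_F[g_\theta(X)] \right| \stackrel{a.s.}{\to} 0,
\]
which is exactly the claim. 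If a self-contained argument is preferred to citing a uniform SLLN, the same conclusion follows from a finite-net plus equicontinuity argument: cover $[0,R]$ by finitely many points $\theta_1,\dots,\theta_N$ so every $\theta$ is within $\delta$ of some $\theta_k$; apply the ordinary SLLN at each $\theta_k$; and bound $\sup_{|\theta - \theta_k| \le \delta} |g_\theta(x) - g_{\theta_k}(x)|$ by a modulus of continuity that is $\le 2G(x)$ and vanishes as $\delta \to 0$, so its $F$-expectation is made small by dominated convergence; letting $\delta \to 0$ after $m \to \infty$ closes it.

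The main obstacle — such as it is — is purely the boundary behaviour at $\theta = 0$, since that is where $g_\theta$ fails to be defined by the MLq equation itself; it is dispatched by the bound $e^{-(1-q)x/\theta} \le 1$ together with the continuous extension $g_0 \equiv 0$. Everything else is routine, the key structural facts being that $[0,R]$ is compact and that $F$, as a finite mixture of exponentials, has a finite first moment so that $G(X) = X + R$ is integrable.
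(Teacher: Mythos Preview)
Your proposal is correct and follows essentially the same route as the paper: verify an integrable envelope, continuity of $\theta \mapsto g_\theta(x)$, and compactness of $[0,R]$, then invoke a standard uniform law of large numbers (the paper cites Lemma~2.4 of Newey--McFadden rather than Wald/Ferguson, and uses the slightly tighter envelope $d(x)=e^{-(1-q)x/R}(x+R)$ in place of your $x+R$). Your treatment of the boundary $\theta=0$ is in fact more careful than the paper's, which leaves that point implicit.
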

\begin{proof}
Define $g(x,\theta) = \exp(-(1-q) x/{\theta})(x - \theta)$ and $d(x) = \exp(-(1-q)x/{R})(x + R)$. Then $E_F[d(X)] < \infty$ and $g(x,\theta) \le d(x)$ for all $\theta \in [0, R]$.
Combined with the fact that $[0, R]$ is compact and the function $g(x,\theta)$ is continuous at each $\theta$ for all $x > 0$ and measurable function of $x$ at each $\theta$, we have the uniform convergence by Lemma 2.4 in \citep{newey1994large}.
\end{proof}

\begin{lemma}
\label{lemma:ELqConverge}
$\hat{\bm{P}}_{ij}^{(q)} \stackrel{P}{\to} \theta(F_{ij})$ as $m \to \infty$, where $F_{ij}$ is the contaminated distribution $(1-\epsilon) \mathrm{Exp}(\bm{P}_{ij}) + \epsilon \mathrm{Exp}(\bm{C}_{ij})$, and $ \theta(F_{ij})$ is defined in Lemma~\ref{lemma:PopulationLqExist}.
\end{lemma}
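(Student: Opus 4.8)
The plan is to combine the three preparatory lemmas just established — uniqueness and location of the population root (Lemma~\ref{lemma:PopulationLqExist}), the minimum‑contrast structure of the ML$q$E (Lemma~\ref{lemma:minimumcontrast}), and the uniform law of large numbers for the ML$q$ estimating function (Lemma~\ref{lemma:UniformConvergence}) — with the characterization of $\hat{\bm{P}}_{ij}^{(q)}$ as the root of the empirical equation closest to the MLE (Lemma~\ref{lemma:LqlMLE}). Throughout write $p = \bm{P}_{ij}$, $c = \bm{C}_{ij}$, let $X_1,\dots,X_m$ denote $\bm{A}_{ij}^{(1)},\dots,\bm{A}_{ij}^{(m)}$, and set $\mu = (1-\epsilon)p + \epsilon c = E_{F_{ij}}[X_1]$. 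Introduce
\[
	G_m(\theta) = \frac{1}{m}\sum_{i=1}^m e^{-(1-q)X_i/\theta}(X_i - \theta), \qquad G(\theta) = E_{F_{ij}}\!\left[ e^{-(1-q)X/\theta}(X-\theta) \right].
\]

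First I would record the closed form of $G$. A short Laplace‑transform computation gives $E_{\mathrm{Exp}(\lambda)}[e^{-(1-q)X/\theta}(X-\theta)] = \theta^2(\lambda q - \theta)/(\theta + \lambda(1-q))^2$, hence
\[
	G(\theta) = \frac{\theta^2\, h(\theta)}{(\theta + p(1-q))^2(\theta + c(1-q))^2},
\]
where $h$ is exactly the cubic appearing in the proof of Lemma~\ref{lemma:PopulationLqExist}. Since $h$ has negative leading coefficient, $h(0)>0$, and (by that proof) a single real root $\theta(F_{ij})$, it follows that $G$ is strictly positive on $(0,\theta(F_{ij}))$ and strictly negative on $(\theta(F_{ij}),\infty)$; moreover $q p \le \theta(F_{ij}) < \mu \le R$, so $\theta(F_{ij})$ lies in the interior of a compact interval on which Lemma~\ref{lemma:UniformConvergence} applies (enlarging $[0,R]$ to some $[0,R']$ with $R' > \mu$ if needed, which is immediate from its proof). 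This strict single crossing of $G$ at $\theta(F_{ij})$ is the structural fact driving the argument.

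Next I would carry out the standard M‑estimator localization. Fix $\delta > 0$ small enough that $[\theta(F_{ij})-\delta,\ \theta(F_{ij})+\delta] \subset (0,R')$, and set $\eta = \min\{\, G(\theta(F_{ij})-\delta),\ -\max_{\theta \in [\theta(F_{ij})+\delta,\,R']} G(\theta)\,\}$, which is strictly positive by continuity, compactness, and the sign pattern of $G$. By Lemma~\ref{lemma:UniformConvergence} and the strong law (for $\bar X \to \mu$ and $X_{(1)} \to 0$) there is, almost surely, an $m_0$ such that for all $m \ge m_0$: $\sup_{\theta}|G_m(\theta) - G(\theta)| < \eta/2$, $\bar X < R'$, and $X_{(1)} < \theta(F_{ij})-\delta$. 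On this event $G_m(\theta(F_{ij})-\delta) > \eta/2 > 0$ while $G_m < -\eta/2 < 0$ throughout $[\theta(F_{ij})+\delta,R']$; by continuity of $G_m$ in $\theta$, $G_m$ has a root in $(\theta(F_{ij})-\delta,\theta(F_{ij})+\delta)$ and none in $[\theta(F_{ij})+\delta,R']$. By Lemma~\ref{lemma:LqlMLE} every root of $G_m$ lies in $(X_{(1)},\bar X)\subset(0,R')$, so in fact all roots lie in $(X_{(1)},\theta(F_{ij})+\delta)$ while at least one exceeds $\theta(F_{ij})-\delta$. Since $\hat{\bm{P}}_{ij}^{(q)}$ is the root closest to $\bar X$ — equivalently the largest root, as all roots are below $\bar X$ — we get $\theta(F_{ij})-\delta < \hat{\bm{P}}_{ij}^{(q)} < \theta(F_{ij})+\delta$. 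As $\delta$ was arbitrary, $\hat{\bm{P}}_{ij}^{(q)} \to \theta(F_{ij})$ almost surely, hence in probability.

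I expect the main obstacle to be precisely this selection‑rule issue: the ML$q$ equation can in principle have several roots, and the ML$q$E is defined by the rule ``closest to the MLE'' rather than directly as a maximizer, so consistency does not follow verbatim from a black‑box M‑estimator theorem. One genuinely needs the monotone sign behaviour of the population function $G$ (from the cubic $h$ in Lemma~\ref{lemma:PopulationLqExist}) together with the a~priori localization of all empirical roots in $(X_{(1)},\bar X)$ from Lemma~\ref{lemma:LqlMLE} to pin the selected root near $\theta(F_{ij})$. An alternative packaging is to invoke Lemma~\ref{lemma:minimumcontrast}, which exhibits the ML$q$E as a minimum contrast estimator with population contrast uniquely minimized at $\theta(F_{ij})$, and apply the classical consistency theorem for such estimators via Lemma~\ref{lemma:UniformConvergence}; but verifying that the ``closest root'' rule actually returns the contrast minimizer is exactly the single‑crossing argument above, so the two routes coincide at the crucial step.
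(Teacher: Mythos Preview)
Your proposal is correct and reaches the same conclusion by the same three ingredients the paper uses --- the uniqueness/sign structure of the population root from Lemma~\ref{lemma:PopulationLqExist}, the uniform law of large numbers in Lemma~\ref{lemma:UniformConvergence}, and an M-estimator consistency principle --- but you package them differently. The paper works at the level of the contrast function: it observes (via Lemma~\ref{lemma:minimumcontrast} and the proof of Lemma~\ref{lemma:PopulationLqExist}) that the population contrast $D(\theta_0,\theta)$ is well-separated at $\theta_0=\theta(F_{ij})$, and then simply invokes a black-box consistency theorem for minimum contrast estimators (Theorem~5.2.3 of Bickel--Doksum). You instead work directly with the estimating equation $G_m(\theta)=0$, exploit the explicit single-crossing behaviour of $G$ that comes from the cubic $h$, and carry out the localization by hand. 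What this buys you is an honest treatment of the selection rule: since $\hat{\bm P}_{ij}^{(q)}$ is \emph{defined} as the root of $G_m$ closest to the MLE rather than as the minimizer of the empirical contrast, the paper's appeal to a contrast-minimizer theorem tacitly assumes these coincide, whereas your argument (largest root below $\bar X$, all other roots excluded from $[\theta(F_{ij})+\delta,R']$) verifies this directly. Your closing remark that the two routes ``coincide at the crucial step'' is exactly right.
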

\begin{proof}
By the proof of Lemma~\ref{lemma:PopulationLqExist}, we have
\[
	\inf\{D(\theta_0, \theta): |\theta - \theta_0| \ge \epsilon \} > D(\theta_0, \theta_0)
\]
for every $\epsilon > 0$. Combined with Lemma~\ref{lemma:UniformConvergence}, we know the ML$q$ is consistent based on Theorem 5.2.3 in \citep{bickel2007mathematical}.
\end{proof}

\begin{lemma}[Theorem~\ref{thm:MLEvsMLqE}]
\label{lemma:VarLqlVarMLEproof}
For $1 \le i, j \le n$,
\[
	\mathrm{Var}(\hat{\bm{P}}^{(1)}_{ij})
    = \mathrm{Var}(\hat{\bm{P}}^{(q)}_{ij}) = O(1/m).
\]
And thus
\[
	\lim_{m \to \infty} \mathrm{Var}(\hat{\bm{P}}^{(1)}_{ij})
    = \lim_{m \to \infty} \mathrm{Var}(\hat{\bm{P}}^{(q)}_{ij}) = 0.
\]
\end{lemma}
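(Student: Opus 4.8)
The plan is to handle the two estimators separately: the variance of $\hat{\bm{P}}^{(1)}_{ij}$ is an elementary exact computation, while the variance of $\hat{\bm{P}}^{(q)}_{ij}$ follows from standard M-estimation asymptotics combined with the boundedness and concentration facts already available for the ML$q$E. For $\hat{\bm{P}}^{(1)}_{ij}=\bar{\bm{A}}_{ij}$, since the $\bm{A}^{(t)}_{ij}$ are i.i.d.\ we get $\mathrm{Var}(\hat{\bm{P}}^{(1)}_{ij})=m^{-1}\mathrm{Var}(\bm{A}^{(1)}_{ij})$, and $\bm{A}^{(1)}_{ij}\sim(1-\epsilon)\,\mathrm{Exp}(\bm{P}_{ij})+\epsilon\,\mathrm{Exp}(\bm{C}_{ij})$ has an explicit finite variance depending only on $\epsilon,\bm{P}_{ij},\bm{C}_{ij}$ (namely $(1-\epsilon)\bm{P}_{ij}^2+\epsilon\bm{C}_{ij}^2+\epsilon(1-\epsilon)(\bm{C}_{ij}-\bm{P}_{ij})^2$), so $\mathrm{Var}(\hat{\bm{P}}^{(1)}_{ij})=O(1/m)$ and its limit is $0$.

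For $\hat{\bm{P}}^{(q)}_{ij}$, write $\psi(x,\theta)=e^{-(1-q)x/\theta}(x-\theta)$, so that $\hat{\bm{P}}^{(q)}_{ij}$ solves $\frac1m\sum_t\psi(\bm{A}^{(t)}_{ij},\theta)=0$; recall from Lemma~\ref{lemma:minimumcontrast} that it is a minimum contrast estimator, from Lemma~\ref{lemma:PopulationLqExist} that the population equation has the unique (and simple) root $\theta(F_{ij})>0$, and from Lemma~\ref{lemma:ELqConverge} that $\hat{\bm{P}}^{(q)}_{ij}\xrightarrow{P}\theta(F_{ij})$. I would bound $\mathrm{Var}(\hat{\bm{P}}^{(q)}_{ij})\le E[(\hat{\bm{P}}^{(q)}_{ij}-\theta(F_{ij}))^2]$ and split this over the event $G_m=\{|\hat{\bm{P}}^{(q)}_{ij}-\theta(F_{ij})|\le\delta\}$ and its complement, for a small fixed $\delta>0$. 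On $G_m$, a mean-value expansion of $0=\frac1m\sum_t\psi(\bm{A}^{(t)}_{ij},\hat{\bm{P}}^{(q)}_{ij})$ about $\theta(F_{ij})$ gives
\[
|\hat{\bm{P}}^{(q)}_{ij}-\theta(F_{ij})| = \frac{\bigl|\frac1m\sum_t\psi(\bm{A}^{(t)}_{ij},\theta(F_{ij}))\bigr|}{\bigl|\frac1m\sum_t\dot\psi(\bm{A}^{(t)}_{ij},\theta^*)\bigr|},
\]
with $\theta^*\in[\theta(F_{ij})-\delta,\theta(F_{ij})+\delta]$. Since $E_{F_{ij}}[\dot\psi(\cdot,\theta(F_{ij}))]\ne0$ (the non-degeneracy implicit in the unique, simple root of the cubic in Lemma~\ref{lemma:PopulationLqExist}), a uniform law of large numbers over the compact $\theta$-interval makes the denominator exceed a positive constant on a further high-probability event; meanwhile $E_{F_{ij}}[\psi(\cdot,\theta(F_{ij}))]=0$ and $|\psi(x,\theta)|\le x+\theta$ has finite variance, so $E\bigl[(\frac1m\sum_t\psi(\bm{A}^{(t)}_{ij},\theta(F_{ij})))^2\bigr]=m^{-1}\mathrm{Var}(\psi(\bm{A}^{(1)}_{ij},\theta(F_{ij})))=O(1/m)$, which controls the $G_m$-part. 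On $G_m^{c}$ I would use the pathwise bound $0\le\hat{\bm{P}}^{(q)}_{ij}\le\bar{\bm{A}}_{ij}$ from the proof of Lemma~\ref{lemma:LqlMLE}, giving $(\hat{\bm{P}}^{(q)}_{ij}-\theta(F_{ij}))^2\le 2\theta(F_{ij})^2+2\bar{\bm{A}}_{ij}^2$, together with Cauchy--Schwarz and the uniform fourth-moment bound $E[\bar{\bm{A}}_{ij}^4]\le E[(\bm{A}^{(1)}_{ij})^4]<\infty$ (Jensen), reducing matters to showing $P(|\hat{\bm{P}}^{(q)}_{ij}-\theta(F_{ij})|>\delta)=O(m^{-2})$. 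Adding the two contributions yields $\mathrm{Var}(\hat{\bm{P}}^{(q)}_{ij})=O(1/m)$, and letting $m\to\infty$ gives the stated limits.

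I expect the main obstacle to be precisely the tail estimate $P(|\hat{\bm{P}}^{(q)}_{ij}-\theta(F_{ij})|>\delta)=O(m^{-2})$ used to annihilate the $G_m^{c}$ contribution: consistency (Lemma~\ref{lemma:ELqConverge}) only makes this probability vanish, not decay at a rate. To obtain a polynomial (ideally exponential) rate I would upgrade the uniform convergence of Lemma~\ref{lemma:UniformConvergence} to a uniform deviation inequality, bounding $P\bigl(\sup_{\theta}|\frac1m\sum_t\psi(\bm{A}^{(t)}_{ij},\theta)-E_{F_{ij}}\psi(\cdot,\theta)|>\eta\bigr)$ over the relevant compact range of $\theta$ by a Bernstein-type bound at each fixed $\theta$ (using the sub-exponential tails of the exponential mixture, exactly the kind of control in Condition~1 of Section~\ref{section:extension}) plus a covering argument exploiting the Lipschitz dependence of $\theta\mapsto\psi(x,\theta)$; since $|E_{F_{ij}}\psi(\cdot,\theta)|$ is bounded away from $0$ outside a neighbourhood of $\theta(F_{ij})$, this transfers to the desired tail bound on $\hat{\bm{P}}^{(q)}_{ij}$. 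Everything else is routine M-estimation bookkeeping.
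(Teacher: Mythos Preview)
Your proposal is correct, but it is considerably more elaborate than what the paper actually does. The paper's own proof is a two-line appeal to the standard asymptotic theory of minimum contrast estimators: it notes that both $\hat{\bm{P}}^{(1)}_{ij}$ and $\hat{\bm{P}}^{(q)}_{ij}$ are minimum contrast estimators (Lemma~\ref{lemma:minimumcontrast}), invokes consistency (Lemma~\ref{lemma:ELqConverge}) together with unspecified ``other regularity conditions,'' and then cites Theorem~5.4.2 in Bickel--Doksum to conclude that both variances are $O(1/m)$. No explicit decomposition, no tail bound, no good-event/bad-event split appears.

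Your route is genuinely different in that you unpack the M-estimation machinery by hand: exact computation for the sample mean, a mean-value expansion on a localisation event $G_m$, and a Cauchy--Schwarz plus tail-probability argument on $G_m^c$. This buys you something the paper's proof does not address explicitly: the textbook theorem yields $\sqrt{m}(\hat{\bm{P}}^{(q)}_{ij}-\theta(F_{ij}))\xrightarrow{d}N(0,\sigma^2)$, which identifies the \emph{asymptotic} variance as order $1/m$, but passing from convergence in distribution to the finite-$m$ bound $\mathrm{Var}(\hat{\bm{P}}^{(q)}_{ij})=O(1/m)$ strictly requires a uniform integrability argument of exactly the kind you sketch. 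So your proposal is more self-contained and in fact closes a small gap that the paper's citation leaves implicit. The obstacle you flag --- upgrading consistency to a polynomial tail rate via a Bernstein bound plus covering --- is real, but your outline for handling it (sub-exponential tails of the exponential mixture, Lipschitz dependence of $\psi$ in $\theta$ over a compact) is the right one and will go through.
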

\begin{proof}
Both MLE and ML$q$E are minimum constrast estimators. By consistency (shown in Lemma~\ref{lemma:ELqConverge}) and other regularity conditions, we know the variances are both of order $1/m$ based on Theorem 5.4.2 in \citep{bickel2007mathematical}.
\end{proof}

%\subsection{ASE Procedure of $\hat{P}^{(1)}$}
\subsection{ASE Procedure of \texorpdfstring{$\hat{\bm{P}}^{(1)}$}{$P$}}
\label{section:pf_MLEvsMLEASE1}

\begin{theorem}%[Theorem 3.3]
\label{thm:P1Diff}
Let $\bm{P}$ and $\bm{C}$ be two $n$-by-$n$ symmetric matrices satisfying element-wise conditions $0 < \bm{P}_{ij} \le \bm{C}_{ij} \le R$ for some constant $R > 0$. For $0 < \epsilon < 1$, we define $m$ symmetric and hollow matrices as
\[
	\bm{A}^{(t)} \stackrel{iid}{\sim} (1-\epsilon) \mathrm{Exp}(\bm{P}) + \epsilon \mathrm{Exp}(\bm{C}),
\]
for $1 \le t \le m$.
Let $\hat{\bm{P}}^{(1)}$ be the element-wise MLE based on exponential distribution with $m$ observations.
Define $\bm{H}_{ij}^{(1)} = E[\hat{\bm{P}}_{ij}^{(1)}] = (1-\epsilon) \bm{P}_{ij} + \epsilon \bm{C}_{ij}$,
then for any constant $c > 0$, there exists another constant $n_0(c)$, independent of $n$, $\bm{P}$, $\bm{C}$ and $\epsilon$, such that if $n > n_0$, then for all $\eta$ satisfying $n^{-c} \le \eta \le 1/2$,
\[
	P \left( \| \hat{\bm{P}}^{(1)} - \bm{H}^{(1)} \|_2 \le 4 R \sqrt{n \ln(n/\eta)/m}\right) \ge 1 - \eta.
\]
\end{theorem}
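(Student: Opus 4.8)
The plan is to regard $\hat{\bm{P}}^{(1)} - \bm{H}^{(1)}$ as a sum of independent, mean-zero, symmetric random matrices and apply a matrix Bernstein inequality. With $\bm{e}_1,\dots,\bm{e}_n$ the standard basis of $\mathbb{R}^n$, write
\[
	\hat{\bm{P}}^{(1)} - \bm{H}^{(1)} = \sum_{1 \le i < j \le n} \xi_{ij}\bigl(\bm{e}_i \bm{e}_j^{\top} + \bm{e}_j \bm{e}_i^{\top}\bigr), \qquad \xi_{ij} = \hat{\bm{P}}^{(1)}_{ij} - \bm{H}^{(1)}_{ij} = \frac{1}{m}\sum_{t=1}^m \bigl(\bm{A}^{(t)}_{ij} - \bm{H}^{(1)}_{ij}\bigr) .
\]
Hollowness of both matrices removes the diagonal, and independence of the off-diagonal entries of the $\bm{A}^{(t)}$ makes the summands independent; each is symmetric, mean zero, with square $\xi_{ij}^2(\bm{e}_i\bm{e}_i^{\top} + \bm{e}_j\bm{e}_j^{\top})$.

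First I would record the scalar moment bounds, which is the only place the hypothesis $0 < \bm{P}_{ij} \le \bm{C}_{ij} \le R$ is used and from which the stated uniformity in $n,\bm{P},\bm{C},\epsilon$ follows automatically. A single draw $\bm{A}^{(1)}_{ij} \sim (1-\epsilon)\mathrm{Exp}(\bm{P}_{ij}) + \epsilon\mathrm{Exp}(\bm{C}_{ij})$ has mean $\bm{H}^{(1)}_{ij} \le R$, variance $(1-\epsilon)\bm{P}_{ij}^2 + \epsilon\bm{C}_{ij}^2 + \epsilon(1-\epsilon)(\bm{C}_{ij}-\bm{P}_{ij})^2 \le 2R^2$, and centered absolute moments obeying the Bernstein bound $E\,|\bm{A}^{(1)}_{ij} - \bm{H}^{(1)}_{ij}|^k \le k!\,(c_0 R)^k$ for a universal $c_0$ (since $\mathrm{Exp}(\theta)$ with $\theta \le R$ has $k$-th raw moment $k!\,\theta^k \le k!\,R^k$, and taking the mixture and centering only inflates this by a bounded factor). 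Hence $\xi_{ij}$ has variance $\sigma_{ij}^2 \le 2R^2/m$ and satisfies a Bernstein moment condition with variance proxy $2R^2/m$ and scale $L = O(R/m)$, and the matrix variance statistic is
\[
	v = \Bigl\| \sum_{i<j} \sigma_{ij}^2(\bm{e}_i\bm{e}_i^{\top} + \bm{e}_j\bm{e}_j^{\top}) \Bigr\| = \max_i \sum_{j \ne i} \sigma_{ij}^2 \le \frac{2(n-1)R^2}{m} \le \frac{2nR^2}{m}.
\]

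Next I would apply the sub-exponential form of the matrix Bernstein inequality (the version that runs on a Bernstein moment condition rather than on a uniform operator-norm bound --- this is exactly Condition~1 of Section~\ref{section:extension}), which yields, for all $t > 0$,
\[
	P\bigl( \| \hat{\bm{P}}^{(1)} - \bm{H}^{(1)} \| \ge t \bigr) \le 2n\,\exp\!\left( \frac{-t^2/2}{v + Lt} \right).
\]
Taking $t = 4R\sqrt{n\ln(n/\eta)/m}$, note first that $t^2/v \ge 8\ln(n/\eta)$ is independent of $m$, and that $Lt/v = O\bigl(\sqrt{\ln(n/\eta)/(nm)}\bigr)$; using $\eta \ge n^{-c}$ to replace $\ln(n/\eta)$ by $(c+1)\ln n$, this last ratio is at most $1/3$ once $n \ge n_0(c)$, uniformly in $m \ge 1$ and in $\bm{P},\bm{C},\epsilon$. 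Then $v + Lt \le \tfrac{4}{3} v$, the exponent is at least $\tfrac{3}{8}\,t^2/v \ge 3\ln(n/\eta)$, and
\[
	P\bigl( \| \hat{\bm{P}}^{(1)} - \bm{H}^{(1)} \| \ge t \bigr) \le 2n\,(\eta/n)^{3} = \frac{2\eta^{3}}{n^{2}} \le \eta
\]
for $\eta \le 1/2$, which is the claim. The constant $4$ has slack --- precisely enough to absorb the scale term $Lt$ and any finite-$n$ correction --- and $n_0(c)$ is the threshold beyond which $Lt/v$ (and, on the truncation route below, the truncation-error term) drops below that slack; it depends on $c$ only through the bound $\ln(n/\eta) \le (c+1)\ln n$.

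The main obstacle is that the exponential entries are unbounded, so the standard matrix Bernstein inequality does not apply off the shelf; I see two ways around it. One is to invoke the sub-exponential/sub-gamma matrix Bernstein inequality directly, as above. The other, fully self-contained, is a truncation: replace each $\bm{A}^{(t)}_{ij}$ by its truncation at level $\Theta\bigl(R\ln(n^2 m/\eta)\bigr)$; a union bound over the $O(n^2m)$ entries shows no truncation occurs with probability $\ge 1-\eta/2$, the induced shift of $\bm{H}^{(1)}$ is exponentially small, and the bounded-summand matrix Bernstein inequality then applies with the same $v$ and scale $L = O\bigl(R\ln(n^2m/\eta)/m\bigr)$, whose extra logarithm is still $o(v)$ for $n \ge n_0(c)$ uniformly in $m \ge 1$. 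Either way, what remains is only the bookkeeping of absolute constants needed to land exactly on the bound with constant $4$.
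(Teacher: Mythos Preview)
Your proposal is correct and follows essentially the same route as the paper. The paper does not write out this proof but remarks that it is an extended version of Theorem~3.1 in \citet{oliveira2009concentration}; the shape of the intended argument can be read off from the paper's proof of the companion Theorem~\ref{thm:PqDiff}, which establishes the Bernstein moment bound $E[(\hat{\bm{P}}_{ij}-\bm{H}_{ij})^k]\le \mathrm{const}^k\,k!$ and then invokes the matrix tail bound $P(\|\hat{\bm P}-\bm H\|\ge t)\le n\exp\bigl(-\tfrac{t^2/2}{An+Bt}\bigr)$ before specializing $t$. Your decomposition into independent rank-two symmetric summands, your scalar moment bound for the mixture of exponentials, and your verification that the scale term $Lt$ is negligible relative to the variance term $v$ uniformly in $m\ge 1$ once $n\ge n_0(c)$ (using $\ln(n/\eta)\le (c+1)\ln n$) are exactly the ingredients Oliveira's argument supplies; the truncation alternative you sketch is the standard way to reduce to the bounded-summand matrix Bernstein and is also sound.
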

\textbf{Remark:} This is an extended version of Theorem 3.1 in \citep{oliveira2009concentration}.

Define $\bm{H}^{(1)} = E[\hat{\bm{P}}^{(1)}] = (1-\epsilon) \bm{P} + \epsilon \bm{C}$, where $\bm{P} = \bm{X} \bm{X}^{\top}$, $\bm{X} \in \mathbb{R}^{n \times d}$, $\bm{C} = \bm{Y} \bm{Y}^{\top}$, $\bm{Y} \in \mathbb{R}^{n\times d'}$.
Let $d^{(1)} = \mathrm{rank}(\bm{H}^{(1)})$ be the dimension in which we are going to embed $\hat{\bm{P}}^{(1)}$. Then we can define $\bm{H}^{(1)} = \bm{Z} \bm{Z}^{\top}$ where $\bm{Z} \in \mathbb{R}^{n \times d^{(1)}}$.
Since $\bm{H}^{(1)} = [\sqrt{1-\epsilon} \bm{X}, \sqrt{\epsilon} \bm{Y}] [\sqrt{1-\epsilon} \bm{X}, \sqrt{\epsilon} \bm{Y}]^{\top}$, we have $d^{(1)} \le d+d'$.

For simplicity, from now on, we will use $\hat{\bm{P}}$ to represent $\hat{\bm{P}}^{(1)}$, use $\bm{H}$ to represent $\bm{H}^{(1)}$ and use $k$ to represent the dimension $d^{(1)}$ we are going to embed. Assume $\bm{H} = \bm{U} \bm{S} \bm{U}^{\top} = \bm{Z} \bm{Z}^{\top}$, where $\bm{Z} = [\bm{Z}_1, \cdots, \bm{Z}_n]^{\top}$ is a $n$-by-$k$ matrix. Then our estimate for $\bm{Z}$ up to rotation is $\hat{\bm{Z}} = \hat{\bm{U}} \hat{\bm{S}}^{1/2}$, where $\hat{\bm{U}} \hat{\bm{S}} \hat{\bm{U}}^{\top}$ is the rank-$k$ spectral decomposition of $|\hat{\bm{P}}| = (\hat{\bm{P}}^{\top} \hat{\bm{P}})^{1/2}$.

Furthermore, we assume that the second moment matrix $E[\bm{Z}_1 \bm{Z}_1^{\top}]$ is rank $k$ and has distinct eigenvalues $\lambda_i(E[\bm{Z}_1 \bm{Z}_1^{\top}])$. In particular, we assume that there exists $\delta > 0$ such that
\[
	\delta < \lambda_k(E[\bm{Z}_1 \bm{Z}_1^{\top}])
\]

\begin{lemma}
\label{lemma:eigSShatL1}
Under the above assumptions, $\lambda_i(\bm{H}) = \Theta(n)$ with high probability when $i \le k$, i.e.\ the largest $k$ eigenvalues of $\bm{H}$ is of order $n$. Moreover, we have $\| \bm{S} \|_2 = \Theta(n)$ and $\| \hat{\bm{S}} \|_2 = \Theta(n)$ with high probability.
\end{lemma}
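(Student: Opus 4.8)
The plan is to reduce everything to a single concentration estimate for the empirical second-moment matrix of the i.i.d.\ rows $\bm{Z}_i$, and then transfer the resulting eigenvalue control first to $\bm{H}$ and then, via Theorem~\ref{thm:P1Diff}, to $\hat{\bm{P}}$. The rows $\bm{Z}_1,\dots,\bm{Z}_n$ are i.i.d.\ because they are determined by the i.i.d.\ block memberships $\bm{\tau}_i$, and they are bounded: since $\bm{Z}_i^{\top}\bm{Z}_i = \bm{H}_{ii} \le R$ we have $\|\bm{Z}_i\|_2^2 \le R$ almost surely, so each $\bm{Z}_i\bm{Z}_i^{\top}$ is positive semidefinite with operator norm at most $R$. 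In particular the population matrix obeys the two-sided bound $\delta < \lambda_k\big(E[\bm{Z}_1\bm{Z}_1^{\top}]\big) \le \lambda_1\big(E[\bm{Z}_1\bm{Z}_1^{\top}]\big) \le \mathrm{tr}\,E[\bm{Z}_1\bm{Z}_1^{\top}] = E\|\bm{Z}_1\|_2^2 \le R$, where the lower bound is the standing assumption of the section.

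Next I would apply a matrix Hoeffding/Bernstein inequality for sums of i.i.d.\ bounded self-adjoint matrices to $\tfrac1n \bm{Z}^{\top}\bm{Z} = \tfrac1n\sum_{i=1}^n \bm{Z}_i\bm{Z}_i^{\top}$, which yields $\big\|\tfrac1n\bm{Z}^{\top}\bm{Z} - E[\bm{Z}_1\bm{Z}_1^{\top}]\big\|_2 = O\big(\sqrt{\log n/n}\big)$ with high probability. By Weyl's inequality this gives $\lambda_i(\bm{Z}^{\top}\bm{Z}) = n\,\lambda_i\big(E[\bm{Z}_1\bm{Z}_1^{\top}]\big) + O(\sqrt{n\log n})$ for each $i$, so the two-sided population bound above forces each of the $k$ largest eigenvalues of $\bm{Z}^{\top}\bm{Z}$ to be $\Theta(n)$ with high probability. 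Since $\bm{H} = \bm{Z}\bm{Z}^{\top}$ shares its nonzero eigenvalues with $\bm{Z}^{\top}\bm{Z}$, we get $\lambda_i(\bm{H}) = \Theta(n)$ for $i \le k$, and in particular $\|\bm{S}\|_2 = \lambda_1(\bm{H}) = \Theta(n)$.

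For $\|\hat{\bm{S}}\|_2$ I would invoke Theorem~\ref{thm:P1Diff} with $\eta = n^{-c}$, which gives $\|\hat{\bm{P}} - \bm{H}\|_2 \le 4R\sqrt{(c+1)\,n\ln n/m} = O(\sqrt{n\log n}) = o(n)$ with high probability (only $m \ge 1$ is used; a polynomially growing $m$ shrinks this bound further). Since $\hat{\bm{S}} = \mathrm{diag}(s_1,\dots,s_k)$ collects the $k$ largest eigenvalues of $|\hat{\bm{P}}|$, we have $\|\hat{\bm{S}}\|_2 = s_1 = \sigma_1(\hat{\bm{P}}) = \|\hat{\bm{P}}\|_2$, and the reverse triangle inequality for the operator norm gives $\big|\,\|\hat{\bm{P}}\|_2 - \|\bm{H}\|_2\,\big| \le \|\hat{\bm{P}} - \bm{H}\|_2 = o(n)$; combined with $\|\bm{H}\|_2 = \Theta(n)$ from the previous step this yields $\|\hat{\bm{S}}\|_2 = \Theta(n)$ with high probability. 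A union bound over the two high-probability events (each failing with probability at most $n^{-c}$ for arbitrary $c$) makes all estimates hold simultaneously.

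The only substantive ingredient is the matrix concentration bound in the second step; everything else is Weyl's inequality, the trace bound on $E[\bm{Z}_1\bm{Z}_1^{\top}]$, and triangle inequalities, so I expect no real obstacle beyond the routine bookkeeping of the high-probability events.
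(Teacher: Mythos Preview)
Your proposal is correct and follows essentially the same approach the paper is pointing to: the paper does not give its own proof here but simply cites Proposition~4.3 of \citet{sussman2014consistent}, whose argument is exactly the matrix-concentration bound on $\tfrac{1}{n}\bm{Z}^{\top}\bm{Z}$ followed by Weyl's inequality and the perturbation bound from Theorem~\ref{thm:P1Diff} that you outline. One small technical caveat: the factorization $\bm{H}=\bm{Z}\bm{Z}^{\top}$ via the eigendecomposition makes the rows of $\bm{Z}$ depend on the whole sample through a global rotation, so to say the $\bm{Z}_i$ are i.i.d.\ you should work with the block-membership representation $\bm{Z}_i=\bm{\mu}_{\bm{\tau}_i}$ (for fixed block-level latent vectors $\bm{\mu}_1,\dots$), which differs from $\bm{U}\bm{S}^{1/2}$ only by an orthogonal matrix and hence gives the same $\bm{Z}^{\top}\bm{Z}$ eigenvalues; this is a cosmetic fix and does not affect your argument.
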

\textbf{Remark:} This is an extended version of Proposition 4.3 in \citep{sussman2014consistent}.

We ignore the proofs of the following results since they are similar to the proofs in \citep{lyzinski2017community}.

\begin{lemma}
\label{lemma:AlmostOrthogonalL1}
Let $\bm{W}_1 \Sigma \bm{W}_2^{\top}$ be the singular value decomposition of $\bm{U}^{\top} \hat{\bm{U}}$. Then for sufficiently large $n$,
\[
	\| \bm{U}^{\top} \hat{\bm{U}} - \bm{W}_1 \bm{W}_2^{\top} \|_F = O(m^{-1} n^{-1} \log n)
\]
with high probability.
\end{lemma}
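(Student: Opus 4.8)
The plan is to read this as a second-order Davis--Kahan statement. The quantity $\|\bm{U}^{\top}\hat{\bm{U}} - \bm{W}_1\bm{W}_2^{\top}\|_F$ measures how far $\bm{U}^{\top}\hat{\bm{U}}$ sits from its nearest orthogonal matrix, and a short linear-algebra computation shows this is controlled by the \emph{square} of the sines of the principal angles between the eigenspaces $\bm{U}$ (of $\bm{H}$) and $\hat{\bm{U}}$ (of $|\hat{\bm{P}}|$). So the target rate $O(m^{-1}n^{-1}\log n)$ is precisely the square of the $\sin\Theta$-rate $O\big(\sqrt{\log n/(mn)}\big)$ that Theorem~\ref{thm:P1Diff} and Lemma~\ref{lemma:eigSShatL1} already provide. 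I would carry this out in three steps.

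First I would fix the spectral geometry on a high-probability event. By Theorem~\ref{thm:P1Diff}, $\|\hat{\bm{P}} - \bm{H}\|_2 = O\big(\sqrt{n\log n/m}\big)$ with high probability, and by Lemma~\ref{lemma:eigSShatL1} the top $k = d^{(1)}$ eigenvalues of $\bm{H}$ are $\Theta(n)$ while $\lambda_{k+1}(\bm{H}) = \cdots = \lambda_n(\bm{H}) = 0$ because $\mathrm{rank}(\bm{H}) = k$. Weyl's inequality then forces $\lambda_i(\hat{\bm{P}}) = \Theta(n)$ for $i \le k$ and $|\lambda_i(\hat{\bm{P}})| = O\big(\sqrt{n\log n/m}\big) = o(n)$ for $i > k$; hence for $n$ large the $k$ eigenvalues of $\hat{\bm{P}}$ of largest magnitude are exactly its $k$ largest (positive) ones, so the rank-$k$ leading eigenspace $\hat{\bm{U}}$ of $|\hat{\bm{P}}|$ coincides with the leading eigenspace of $\hat{\bm{P}}$ itself, and the two invariant subspaces are separated by a gap $\delta_n := \lambda_k(\bm{H}) - |\lambda_{k+1}(\hat{\bm{P}})| = \Theta(n)$.

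Second, I would apply the Davis--Kahan $\sin\Theta$ theorem in Frobenius form, $\|\sin\Theta(\bm{U},\hat{\bm{U}})\|_F \le C\sqrt{k}\,\|\hat{\bm{P}} - \bm{H}\|_2/\delta_n$, and use that $k = d^{(1)}$ is a fixed constant (finite rank in the WSBM/WRDPG setting) to conclude $\|\sin\Theta(\bm{U},\hat{\bm{U}})\|_F = O\big(\sqrt{\log n/(mn)}\big)$ with high probability. Third and last, I would invoke the elementary identity: writing $\bm{U}^{\top}\hat{\bm{U}} = \bm{W}_1\Sigma\bm{W}_2^{\top}$ with $\Sigma = \mathrm{diag}(\sigma_1,\dots,\sigma_k)$ and $\sigma_i = \cos\theta_i \in [0,1]$, one has $\|\bm{U}^{\top}\hat{\bm{U}} - \bm{W}_1\bm{W}_2^{\top}\|_F^2 = \sum_{i=1}^k (1-\sigma_i)^2 \le \sum_{i=1}^k (1-\sigma_i^2)^2 \le \big(\sum_{i=1}^k(1-\sigma_i^2)\big)^2 = \|\sin\Theta(\bm{U},\hat{\bm{U}})\|_F^4$, where the first inequality uses $1+\sigma_i \ge 1$ and the second uses nonnegativity of the $1-\sigma_i^2$. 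Taking square roots gives $\|\bm{U}^{\top}\hat{\bm{U}} - \bm{W}_1\bm{W}_2^{\top}\|_F \le \|\sin\Theta(\bm{U},\hat{\bm{U}})\|_F^2 = O(m^{-1}n^{-1}\log n)$; intersecting the two high-probability events above keeps the conclusion ``with high probability''.

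The step I expect to be the main obstacle is the first one: making the no-crossing argument rigorous on an event of the required probability, i.e.\ verifying that $\hat{\bm{U}}$, which is \emph{defined} through $|\hat{\bm{P}}|$ rather than $\hat{\bm{P}}$, genuinely tracks $\bm{U}$ and that the gap $\delta_n$ is genuinely of order $n$. This is exactly where the uniform lower bound $\delta < \lambda_k(E[\bm{Z}_1\bm{Z}_1^{\top}])$ (propagated to $\lambda_k(\bm{H}) = \Theta(n)$ by Lemma~\ref{lemma:eigSShatL1}) and the fact that the perturbation is only of order $\sqrt{n\log n/m} = o(n)$ are both indispensable. Everything else is routine; this is also why the paper is content to cite \citep{lyzinski2017community}, since once Theorem~\ref{thm:P1Diff} replaces the adjacency concentration bound used there, the argument is essentially verbatim.
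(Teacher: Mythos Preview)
Your proposal is correct and is precisely the argument the paper has in mind: the paper omits the proof entirely, deferring to \citep{lyzinski2017community}, and your three-step route (Theorem~\ref{thm:P1Diff} for the perturbation size, Lemma~\ref{lemma:eigSShatL1} for the $\Theta(n)$ gap, Davis--Kahan for $\|\sin\Theta\|_F$, then the elementary bound $\|\bm{U}^{\top}\hat{\bm{U}}-\bm{W}_1\bm{W}_2^{\top}\|_F\le\|\sin\Theta\|_F^2$) is exactly the template used there. Your closing remark already identifies this, and your handling of the $|\hat{\bm{P}}|$-versus-$\hat{\bm{P}}$ eigenspace issue via Weyl is the right way to resolve that technicality.
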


We will denote the orthogonal matrix $\bm{W}_1 \bm{W}_2^{\top}$ by $\bm{W}^*$.

\begin{lemma}
\label{lemma:exchangeL1}
For sufficiently large $n$,
\[
	\| \bm{W}^* \hat{\bm{S}} - \bm{S} \bm{W}^* \|_F = O(m^{-1/2} \log n),
\]
\[
	\|\bm{W}^* \hat{\bm{S}}^{1/2} - \bm{S}^{1/2} \bm{W}^* \|_F = O(m^{-1/2} n^{-1/2} \log n)
\]
and
\[
	\| \bm{W}^* \hat{\bm{S}}^{-1/2} - \bm{S}^{-1/2} \bm{W}^* \|_F = O(m^{-1/2} n^{-3/2} \log n)
\]
with high probability.
\end{lemma}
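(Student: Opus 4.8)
The plan is to reduce all three estimates to a single Frobenius‑norm bound on the $k\times k$ matrix $\bm M := \bm W^*\hat{\bm S}\bm W^{*\top}-\bm S$, namely $\|\bm M\|_F = O(m^{-1/2}\log n)$ with high probability, and then transfer this to the square‑root and inverse‑square‑root versions. The first assertion is then immediate since $\bm W^*$ is orthogonal: $\|\bm W^*\hat{\bm S}-\bm S\bm W^*\|_F = \|\bm M\bm W^*\|_F = \|\bm M\|_F$. For the remaining two, conjugation by the orthogonal $\bm W^*$ commutes with functional calculus, so $\bm W^*\hat{\bm S}^{1/2}-\bm S^{1/2}\bm W^* = \big((\bm W^*\hat{\bm S}\bm W^{*\top})^{1/2}-\bm S^{1/2}\big)\bm W^*$ and likewise for the $(-1/2)$ power via $\bm A^{-1/2}-\bm B^{-1/2}=\bm A^{-1/2}(\bm B^{1/2}-\bm A^{1/2})\bm B^{-1/2}$. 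By Lemma~\ref{lemma:eigSShatL1} together with Weyl's inequality and Theorem~\ref{thm:P1Diff} one has $\lambda_k(\hat{\bm S}),\lambda_k(\bm S)=\Theta(n)$ with high probability, so $x\mapsto x^{1/2}$ is operator‑Lipschitz on the relevant spectral window with constant $O(n^{-1/2})$, turning the $\|\bm M\|_F$ bound into the $O(m^{-1/2}n^{-1/2}\log n)$ bound for $\hat{\bm S}^{1/2}$; one more factor $\|\hat{\bm S}^{-1/2}\|_2\|\bm S^{-1/2}\|_2 = O(n^{-1})$ gives the $O(m^{-1/2}n^{-3/2}\log n)$ bound for $\hat{\bm S}^{-1/2}$.

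It thus remains to bound $\|\bm M\|_F$. Let $\tilde{\bm P}=\hat{\bm U}\hat{\bm S}\hat{\bm U}^\top$ be the rank‑$k$ truncation of $\hat{\bm P}$; since $\bm H$ is positive semidefinite with $\lambda_k(\bm H)=\Theta(n)$ while $\|\hat{\bm P}-\bm H\|_2 = O(\sqrt{n\log n/m})$ (Theorem~\ref{thm:P1Diff}), with high probability the top $k$ eigenvalues of $\hat{\bm P}$ are positive — so $\tilde{\bm P}$ coincides with the top‑$k$ part of $|\hat{\bm P}|$ — and $\|\tilde{\bm P}-\hat{\bm P}\|_2 = |\lambda_{k+1}(\hat{\bm P})|\le\|\hat{\bm P}-\bm H\|_2 = O(\sqrt{n\log n/m})$ by Weyl. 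Write $\bm U^\top\hat{\bm U}=\bm W^*+\bm\Delta$ with $\|\bm\Delta\|_F = O(m^{-1}n^{-1}\log n)$ (Lemma~\ref{lemma:AlmostOrthogonalL1}) and $\bm R := \hat{\bm U}\bm W^{*\top}-\bm U$, where $\|\bm R\|_F = O(\sqrt{\log n/(mn)})$ from the Davis--Kahan $\sin\Theta$ bound (gap $\Theta(n)$, perturbation $O(\sqrt{n\log n/m})$) and Lemma~\ref{lemma:AlmostOrthogonalL1}. Expanding $\bm W^*\hat{\bm S}\bm W^{*\top} = (\hat{\bm U}\bm W^{*\top})^\top\tilde{\bm P}(\hat{\bm U}\bm W^{*\top}) = (\bm U+\bm R)^\top\tilde{\bm P}(\bm U+\bm R)$ and repeatedly invoking the eigen‑relations $\tilde{\bm P}\hat{\bm U}=\hat{\bm U}\hat{\bm S}$, $\bm H\bm U=\bm U\bm S$ together with $\|\tilde{\bm P}\|_2 = \Theta(n)$, one finds $\bm U^\top\tilde{\bm P}\bm U-\bm S = \bm U^\top(\tilde{\bm P}-\bm H)\bm U$, the cross term $\bm U^\top\tilde{\bm P}\bm R = \bm M-\bm U^\top(\tilde{\bm P}-\bm H)\bm U + O(m^{-1}\log n)$, and $\bm R^\top\tilde{\bm P}\bm R = O(m^{-1}\log n)$; substituting into $\bm M = (\bm U^\top\tilde{\bm P}\bm U-\bm S)+\bm U^\top\tilde{\bm P}\bm R+(\bm U^\top\tilde{\bm P}\bm R)^\top+\bm R^\top\tilde{\bm P}\bm R$ and cancelling the two copies of $\bm M$ collapses everything to $\bm M^\top = \bm U^\top(\hat{\bm P}-\bm H)\bm U + O(m^{-1}\log n)$ (the difference $\bm U^\top(\tilde{\bm P}-\hat{\bm P})\bm U$ being $O(m^{-3/2}n^{-1/2}(\log n)^{3/2})$ via the $\sin\Theta$ bound). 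Hence $\|\bm M\|_F\le \|\bm U^\top(\hat{\bm P}-\bm H)\bm U\|_F + O(m^{-1}\log n)$, and the whole lemma reduces to the quadratic‑form estimate $\|\bm U^\top(\hat{\bm P}-\bm H)\bm U\|_F = O(m^{-1/2}\log n)$ with high probability.

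This last estimate is the crux, and it is where one beats the crude Weyl bound $\|\bm U^\top(\hat{\bm P}-\bm H)\bm U\|_2 \le \|\hat{\bm P}-\bm H\|_2 = O(\sqrt{n\log n/m})$, which is too large by a factor $\sqrt n$. The point is that $\bm U$ is deterministic given $\bm\tau$ and, from the WSBM structure, its rows obey $\|\bm U_i\|_2\le\|\bm S^{-1/2}\|_2\|\bm Z_i\|_2 = O(n^{-1/2})$ (bounded latent positions, $\lambda_k(\bm S)=\Theta(n)$), i.e.\ $\bm U$ is delocalized. For a fixed pair of columns $\bm u_a,\bm u_b$ of $\bm U$, the scalar $\bm u_a^\top(\hat{\bm P}-\bm H)\bm u_b = \tfrac1m\sum_{t=1}^m\bm u_a^\top(\bm A^{(t)}-\bm H)\bm u_b$ is a sum of independent, mean‑zero, sub‑exponential variables (the $\bm A^{(t)}_{ij}$ are mixtures of exponentials with parameters at most $R$) with total variance $O(m^{-1})$ (since $\|\bm u_a\|_2=\|\bm u_b\|_2=1$) and individual sub‑exponential scale $O(m^{-1}n^{-1})$ (delocalization); Bernstein's inequality then yields $|\bm u_a^\top(\hat{\bm P}-\bm H)\bm u_b| = O(m^{-1/2}\log n)$ with probability at least $1-n^{-c}$ for any prescribed $c$, and a union bound over the $O(1)$ column pairs finishes the estimate. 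The main obstacle is precisely this delocalized quadratic‑form concentration, together with the bookkeeping ensuring that every cross term in the expansion of $\bm W^*\hat{\bm S}\bm W^{*\top}$ is either of the target order $O(m^{-1/2}\log n)$ or cancels against $\bm M$; the algebra mirrors the corresponding arguments in \citep{lyzinski2017community}, while Lemmas~\ref{lemma:eigSShatL1}--\ref{lemma:AlmostOrthogonalL1} and Theorem~\ref{thm:P1Diff} supply the spectral and concentration inputs, and the finite intersection of the high‑probability events above is again of high probability.
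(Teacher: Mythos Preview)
Your argument is correct and is precisely the standard route from \citep{lyzinski2017community} to which the paper defers (the paper gives no proof of its own here, only the citation). The reduction to $\|\bm M\|_F$ via orthogonality of $\bm W^*$ and functional calculus, the Sylvester-type transfer to $\hat{\bm S}^{1/2}$ and $\hat{\bm S}^{-1/2}$ using $\lambda_k(\bm S),\lambda_k(\hat{\bm S})=\Theta(n)$, the expansion $(\bm U+\bm R)^\top\tilde{\bm P}(\bm U+\bm R)$ with the cancellation yielding $\bm M=\bm U^\top(\hat{\bm P}-\bm H)\bm U+O(m^{-1}\log n)$ (note $\bm M$ is symmetric, so $\bm M^\top=\bm M$), and the entrywise Bernstein bound on $\bm u_a^\top(\hat{\bm P}-\bm H)\bm u_b$ exploiting delocalization $\|\bm U_i\|_2=O(n^{-1/2})$ are exactly the ingredients of the referenced proof, adapted with the extra $m^{-1/2}$ factor coming from the averaging over $m$ graphs in Theorem~\ref{thm:P1Diff}.
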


\begin{lemma}
\label{lemma:XhatDiffXWexpressionL1}
There exists a rotation matrix $\bm{W}$ such that for sufficiently large $n$,
\[
	\|\hat{\bm{Z}} - \bm{Z} \bm{W}\|_F = \| (\hat{\bm{P}} - \bm{H}) \bm{U} \bm{S}^{-1/2} \|_F + O(m^{-1/2} n^{-1/2} (\log n)^{3/2})
\]
with high probability.
\end{lemma}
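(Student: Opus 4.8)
The plan is to take $\bm{W}=\bm{W}^{*}=\bm{W}_1\bm{W}_2^{\top}$, the orthogonal matrix supplied by Lemma~\ref{lemma:AlmostOrthogonalL1}, and to expand $\hat{\bm{Z}}-\bm{Z}\bm{W}^{*}$ around its ``first order'' piece $(\hat{\bm{P}}-\bm{H})\bm{U}\bm{S}^{-1/2}$ using the eigenvector identities for $\hat{\bm{P}}$ and $\bm{H}$. First I would record a preliminary fact: by Theorem~\ref{thm:P1Diff} (applied with $\eta=n^{-c}$) together with Weyl's inequality and Lemma~\ref{lemma:eigSShatL1}, the algebraically largest $k$ eigenvalues of $\hat{\bm{P}}$ are positive with high probability, so $|\hat{\bm{P}}|$ and $\hat{\bm{P}}$ share the same leading rank-$k$ eigenspace and $\hat{\bm{P}}\hat{\bm{U}}=\hat{\bm{U}}\hat{\bm{S}}$. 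Hence $\hat{\bm{Z}}=\hat{\bm{U}}\hat{\bm{S}}^{1/2}=\hat{\bm{P}}\hat{\bm{U}}\hat{\bm{S}}^{-1/2}$. Writing $\hat{\bm{P}}=\bm{H}+(\hat{\bm{P}}-\bm{H})$ and using $\bm{H}=\bm{U}\bm{S}\bm{U}^{\top}$, $\bm{Z}=\bm{U}\bm{S}^{1/2}$, this yields the decomposition
\[
\hat{\bm{Z}}-\bm{Z}\bm{W}^{*} = (\hat{\bm{P}}-\bm{H})\hat{\bm{U}}\hat{\bm{S}}^{-1/2} \;+\; \bm{U}\bm{S}\,(\bm{U}^{\top}\hat{\bm{U}})\,\hat{\bm{S}}^{-1/2} \;-\; \bm{U}\bm{S}^{1/2}\bm{W}^{*}.
\]

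Next I would treat the noise term $(\hat{\bm{P}}-\bm{H})\hat{\bm{U}}\hat{\bm{S}}^{-1/2}$, showing it equals $(\hat{\bm{P}}-\bm{H})\bm{U}\bm{S}^{-1/2}\bm{W}^{*}$ up to an acceptable error. To that end I would split $\hat{\bm{U}}-\bm{U}\bm{W}^{*}=\bm{U}(\bm{U}^{\top}\hat{\bm{U}}-\bm{W}^{*})+(\bm{I}-\bm{U}\bm{U}^{\top})\hat{\bm{U}}$ and bound the two pieces by Lemma~\ref{lemma:AlmostOrthogonalL1} and by the Davis--Kahan estimate $\|(\bm{I}-\bm{U}\bm{U}^{\top})\hat{\bm{U}}\|_F = O\!\left(\|\hat{\bm{P}}-\bm{H}\|_2/\lambda_k(\bm{H})\right)=O(m^{-1/2}n^{-1/2}(\log n)^{1/2})$ (using Theorem~\ref{thm:P1Diff} and Lemma~\ref{lemma:eigSShatL1}); then the cross term $(\hat{\bm{P}}-\bm{H})(\hat{\bm{U}}-\bm{U}\bm{W}^{*})\hat{\bm{S}}^{-1/2}$ is controlled by $\|\hat{\bm{P}}-\bm{H}\|_2\cdot\|\hat{\bm{U}}-\bm{U}\bm{W}^{*}\|_F\cdot\|\hat{\bm{S}}^{-1/2}\|_2$, and the commutation of $\bm{W}^{*}$ past $\hat{\bm{S}}^{-1/2}$ costs $O(m^{-1/2}n^{-3/2}\log n)$ by Lemma~\ref{lemma:exchangeL1}; both fit inside $O(m^{-1/2}n^{-1/2}(\log n)^{3/2})$.

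For the remaining two terms $\bm{U}\bm{S}(\bm{U}^{\top}\hat{\bm{U}})\hat{\bm{S}}^{-1/2}-\bm{U}\bm{S}^{1/2}\bm{W}^{*}$ I would substitute $\bm{U}^{\top}\hat{\bm{U}}=\bm{W}^{*}+\bm{E}$ with $\|\bm{E}\|_F=O(m^{-1}n^{-1}\log n)$ (Lemma~\ref{lemma:AlmostOrthogonalL1}) and $\bm{W}^{*}\hat{\bm{S}}^{-1/2}=\bm{S}^{-1/2}\bm{W}^{*}+\bm{F}$ with $\|\bm{F}\|_F=O(m^{-1/2}n^{-3/2}\log n)$ (Lemma~\ref{lemma:exchangeL1}), so the difference collapses to $\bm{U}\bm{S}\bm{E}\hat{\bm{S}}^{-1/2}+\bm{U}\bm{S}\bm{F}$; each piece is $O(m^{-1/2}n^{-1/2}\log n)$ after using $\|\bm{S}\|_2=\Theta(n)$ and $\|\hat{\bm{S}}^{-1/2}\|_2=\Theta(n^{-1/2})$ from Lemma~\ref{lemma:eigSShatL1}. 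Collecting the three bounds via the triangle inequality and using rotation-invariance of the Frobenius norm, $\|(\hat{\bm{P}}-\bm{H})\bm{U}\bm{S}^{-1/2}\bm{W}^{*}\|_F=\|(\hat{\bm{P}}-\bm{H})\bm{U}\bm{S}^{-1/2}\|_F$, gives the claimed identity. The main obstacle will be the bookkeeping of these error terms: because $\bm{S}$ is ill-conditioned (norm $\Theta(n)$, inverse square-root norm $\Theta(n^{-1/2})$), every small-Frobenius-norm perturbation must be paired with the right norm on the adjacent factor, and the single most delicate term is $(\hat{\bm{P}}-\bm{H})(\hat{\bm{U}}-\bm{U}\bm{W}^{*})\hat{\bm{S}}^{-1/2}$, where the only available control on $\hat{\bm{P}}-\bm{H}$ is the spectral bound $O(\sqrt{n\log n/m})$ of Theorem~\ref{thm:P1Diff}, so it must be combined with the Davis--Kahan Frobenius bound on $\hat{\bm{U}}-\bm{U}\bm{W}^{*}$ rather than a crude operator-norm estimate.
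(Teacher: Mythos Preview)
Your proposal is correct and is precisely the standard argument the paper is pointing to: the paper does not give a proof of this lemma, saying only that it is ``similar to the proofs in \citep{lyzinski2017community}'', and your outline---take $\bm{W}=\bm{W}^{*}$, write $\hat{\bm{Z}}=\hat{\bm{P}}\hat{\bm{U}}\hat{\bm{S}}^{-1/2}$, split off the leading term $(\hat{\bm{P}}-\bm{H})\bm{U}\bm{S}^{-1/2}\bm{W}^{*}$, and control every remainder via Theorem~\ref{thm:P1Diff}, Lemma~\ref{lemma:eigSShatL1}, Lemma~\ref{lemma:AlmostOrthogonalL1}, Lemma~\ref{lemma:exchangeL1}, and Davis--Kahan---is exactly that approach. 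The bookkeeping you describe, including the ``delicate'' term $(\hat{\bm{P}}-\bm{H})(\hat{\bm{U}}-\bm{U}\bm{W}^{*})\hat{\bm{S}}^{-1/2}$, checks out against the stated orders.
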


\begin{theorem}
\label{thm:XhatDiffXWL1}
There exists a rotation matrix $\bm{W}$ such that for sufficiently large $n$,
\[
	\max_i \| \hat{\bm{Z}}_i - \bm{W} \bm{Z}_i \|_2 = O(m^{-1/2} n^{-1/2} (\log n)^{3/2})
\]
with high probability.
\end{theorem}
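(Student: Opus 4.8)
The plan is to upgrade the Frobenius-norm bound of Lemma~\ref{lemma:XhatDiffXWexpressionL1} to a row-wise ($2\to\infty$) bound. By Lemma~\ref{lemma:XhatDiffXWexpressionL1} we have, with high probability,
\[
\|\hat{\bm{Z}} - \bm{Z}\bm{W}\|_F = \|(\hat{\bm{P}} - \bm{H})\bm{U}\bm{S}^{-1/2}\|_F + O(m^{-1/2} n^{-1/2} (\log n)^{3/2}),
\]
so it suffices to control the $i$-th row of $(\hat{\bm{P}} - \bm{H})\bm{U}\bm{S}^{-1/2}$, which is $(\hat{\bm{P}} - \bm{H})_{i\cdot}\,\bm{U}\bm{S}^{-1/2}$, where $(\hat{\bm{P}} - \bm{H})_{i\cdot}$ denotes the $i$-th row. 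The residual $O(m^{-1/2} n^{-1/2} (\log n)^{3/2})$ term already dominates any per-row contribution coming from the Frobenius slack, since the $2\to\infty$ norm is bounded by the Frobenius norm; so the real work is the leading term. First I would fix a row index $i$ and write $\|(\hat{\bm{P}} - \bm{H})_{i\cdot}\,\bm{U}\bm{S}^{-1/2}\|_2 \le \|(\hat{\bm{P}} - \bm{H})_{i\cdot}\,\bm{U}\|_2 \cdot \|\bm{S}^{-1/2}\|_2$. By Lemma~\ref{lemma:eigSShatL1}, $\|\bm{S}^{-1/2}\|_2 = \Theta(n^{-1/2})$ with high probability, so I need $\|(\hat{\bm{P}} - \bm{H})_{i\cdot}\,\bm{U}\|_2 = O(m^{-1/2}(\log n)^{3/2})$ with high probability, uniformly over $i$.

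To bound $\|(\hat{\bm{P}} - \bm{H})_{i\cdot}\,\bm{U}\|_2$ I would note that its square equals $\sum_{\ell=1}^{k}\big(\sum_{j=1}^n (\hat{\bm{P}}_{ij} - \bm{H}_{ij})\bm{U}_{j\ell}\big)^2$. For fixed $i$ and $\ell$, the inner sum $S_{i\ell} := \sum_{j\ne i}(\hat{\bm{P}}_{ij} - \bm{H}_{ij})\bm{U}_{j\ell}$ is a sum of independent mean-zero random variables (the entries $\hat{\bm{P}}_{ij}$ for distinct $(i,j)$ are independent), each bounded in the Bernstein sense: each $\hat{\bm{P}}_{ij} - \bm{H}_{ij}$ is an average of $m$ centered exponential-mixture random variables, so its moment generating function is controlled and its variance is $O(1/m)$ by Lemma~\ref{lemma:VarLqlVarMLEproof} (equivalently Theorem~\ref{thm:P1Diff} supplies the needed concentration). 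Since $\sum_j \bm{U}_{j\ell}^2 = 1$, a Bernstein inequality gives $|S_{i\ell}| = O(m^{-1/2}\sqrt{\log n} + m^{-1}\log n\cdot\max_j|\bm{U}_{j\ell}|)$ with probability $1 - n^{-c}$; using the crude bound $\max_j|\bm{U}_{j\ell}| \le 1$ and absorbing lower-order terms, $|S_{i\ell}| = O(m^{-1/2}\sqrt{\log n})$ with high probability. Summing over $\ell = 1,\dots,k$ (with $k$ fixed, since by the WSBM-as-WRDPG structure the rank of $\bm{H}$ is bounded independent of $n$) gives $\|(\hat{\bm{P}} - \bm{H})_{i\cdot}\,\bm{U}\|_2 = O(m^{-1/2}\sqrt{\log n})$, which is in fact sharper than needed; combining with $\|\bm{S}^{-1/2}\|_2 = \Theta(n^{-1/2})$ yields a per-row bound $O(m^{-1/2} n^{-1/2}\sqrt{\log n})$. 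Finally I would take a union bound over all $n$ row indices $i$ (and over the $k$ values of $\ell$), which costs only an extra polynomial factor in the failure probability and is absorbed by choosing the Bernstein constant large enough, leaving the high-probability guarantee intact. Adding back the $O(m^{-1/2} n^{-1/2}(\log n)^{3/2})$ residual from Lemma~\ref{lemma:XhatDiffXWexpressionL1} (which dominates) gives $\max_i\|\hat{\bm{Z}}_i - \bm{W}\bm{Z}_i\|_2 = O(m^{-1/2} n^{-1/2}(\log n)^{3/2})$ with high probability.

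The main obstacle, and the step requiring the most care, is making the row-wise Bernstein argument genuinely uniform over $i$ while keeping track of the interaction with the orthogonal approximation $\bm{W}^*$ hidden inside Lemma~\ref{lemma:XhatDiffXWexpressionL1}: one must ensure that the rotation $\bm{W}$ produced there is the \emph{same} rotation for which the row-wise bound holds, and that the eigenvector matrix $\bm{U}$ appearing in the expression $(\hat{\bm{P}} - \bm{H})\bm{U}\bm{S}^{-1/2}$ is deterministic (it is, since $\bm{H}$ is deterministic given $\bm{\tau}$), so that the independence needed for Bernstein is not compromised. A secondary technical point is justifying that the lower-order terms in Lemmas~\ref{lemma:AlmostOrthogonalL1} and \ref{lemma:exchangeL1}, when propagated row-wise rather than in Frobenius norm, do not exceed the target rate — but since each such term is already $O(m^{-1/2} n^{-1/2}(\log n)^{3/2})$ or smaller in Frobenius norm, and the $2\to\infty$ norm is dominated by the Frobenius norm, this is routine. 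I would present the details following the template of the analogous argument in \citep{lyzinski2017community}.
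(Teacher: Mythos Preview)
Your proposal is correct and follows essentially the same approach as the paper, which itself defers to \citep{lyzinski2017community}: decompose $\hat{\bm{Z}}-\bm{Z}\bm{W}$ into the leading term $(\hat{\bm{P}}-\bm{H})\bm{U}\bm{S}^{-1/2}$ plus a Frobenius-small residual, control the leading term row-wise via a sub-exponential Bernstein inequality (using that $\bm{U}$ is deterministic, the entries of $\hat{\bm{P}}-\bm{H}$ are independent with variance $O(1/m)$, and $\sum_j \bm{U}_{j\ell}^2=1$), and absorb the residual through $\|\cdot\|_{2\to\infty}\le\|\cdot\|_F$. Your identification of the two delicate points---that the same rotation $\bm{W}$ must serve both the Frobenius and row-wise bounds, and that the residual terms from Lemmas~\ref{lemma:AlmostOrthogonalL1}--\ref{lemma:exchangeL1} must be tracked at the matrix level rather than only in norm---is exactly right, and both are handled by the decomposition underlying Lemma~\ref{lemma:XhatDiffXWexpressionL1}.
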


%\subsection{$\widetilde{P}^{(1)}$ vs. $\hat{P}^{(1)}$}
\subsection{\texorpdfstring{$\widetilde{\bm{P}}^{(1)}$}{$P$} vs. \texorpdfstring{$\hat{\bm{P}}^{(1)}$}{$P$}}
\label{section:pf_MLEvsMLEASE2}

\begin{lemma}
\label{lemma:1stMomentPhatDiffL1}
$\left|  \hat{\bm{Z}}_i^{\top} \hat{\bm{Z}}_j - \bm{Z}_i^{\top} \bm{Z}_j \right| = O(m^{-1/2} n^{-1/2} (\log n)^{3/2})$ with high probability.
\end{lemma}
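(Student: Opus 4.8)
The plan is to deduce the bound directly from Theorem~\ref{thm:XhatDiffXWL1}, which controls the rows of $\hat{\bm{Z}}$ against $\bm{W}\bm{Z}$ in the $2\to\infty$ sense. First I would fix the rotation matrix $\bm{W}$ furnished by Theorem~\ref{thm:XhatDiffXWL1}. Since $\bm{W}$ is orthogonal we have $(\bm{W}\bm{Z}_i)^{\top}(\bm{W}\bm{Z}_j) = \bm{Z}_i^{\top}\bm{Z}_j$, so the quantity of interest equals $\hat{\bm{Z}}_i^{\top}\hat{\bm{Z}}_j - (\bm{W}\bm{Z}_i)^{\top}(\bm{W}\bm{Z}_j)$. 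The next step is the algebraic splitting
\[
\hat{\bm{Z}}_i^{\top}\hat{\bm{Z}}_j - (\bm{W}\bm{Z}_i)^{\top}(\bm{W}\bm{Z}_j) = (\hat{\bm{Z}}_i - \bm{W}\bm{Z}_i)^{\top}\hat{\bm{Z}}_j + (\bm{W}\bm{Z}_i)^{\top}(\hat{\bm{Z}}_j - \bm{W}\bm{Z}_j),
\]
after which Cauchy--Schwarz bounds $\left|\hat{\bm{Z}}_i^{\top}\hat{\bm{Z}}_j - \bm{Z}_i^{\top}\bm{Z}_j\right|$ by $\|\hat{\bm{Z}}_i - \bm{W}\bm{Z}_i\|_2\,\|\hat{\bm{Z}}_j\|_2 + \|\bm{Z}_i\|_2\,\|\hat{\bm{Z}}_j - \bm{W}\bm{Z}_j\|_2$.

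It then remains to check that the two ``norm'' factors $\|\hat{\bm{Z}}_j\|_2$ and $\|\bm{Z}_i\|_2$ are $O(1)$ uniformly in the index. For $\bm{Z}$, recall $\bm{H} = \bm{Z}\bm{Z}^{\top}$, so $\|\bm{Z}_i\|_2^2 = \bm{H}_{ii} = (1-\epsilon)\bm{P}_{ii} + \epsilon\bm{C}_{ii} \le R$ by the element-wise bound $\bm{P}_{ij},\bm{C}_{ij}\in[0,R]$ underlying Theorem~\ref{thm:P1Diff}; hence $\|\bm{Z}_i\|_2 = \|\bm{W}\bm{Z}_i\|_2 = O(1)$. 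For $\hat{\bm{Z}}_j$, the triangle inequality together with Theorem~\ref{thm:XhatDiffXWL1} gives $\|\hat{\bm{Z}}_j\|_2 \le \|\bm{W}\bm{Z}_j\|_2 + \|\hat{\bm{Z}}_j - \bm{W}\bm{Z}_j\|_2 = O(1) + O(m^{-1/2}n^{-1/2}(\log n)^{3/2}) = O(1)$ with high probability. Substituting these $O(1)$ factors, together with the $O(m^{-1/2}n^{-1/2}(\log n)^{3/2})$ bound of Theorem~\ref{thm:XhatDiffXWL1} for $\|\hat{\bm{Z}}_i - \bm{W}\bm{Z}_i\|_2$ and $\|\hat{\bm{Z}}_j - \bm{W}\bm{Z}_j\|_2$, yields the claimed rate.

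Since Theorem~\ref{thm:XhatDiffXWL1} already controls $\max_i\|\hat{\bm{Z}}_i - \bm{W}\bm{Z}_i\|_2$ on a single high-probability event, the bound above holds simultaneously for every pair $(i,j)$ on that event, so no additional union bound over the $\binom{n}{2}$ pairs is required. I do not anticipate a genuine obstacle here; the only point needing mild care is the verification that the latent-position rows $\bm{Z}_i$, and hence $\hat{\bm{Z}}_j$, have uniformly bounded Euclidean norm, which follows from the boundedness hypotheses on $\bm{P}$ and $\bm{C}$ already in force (equivalently, from $\Theta=[0,R]$ and the finite block structure of the WSBM).
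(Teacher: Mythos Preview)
Your proposal is correct and follows essentially the same route as the paper: fix the orthogonal $\bm{W}$ from Theorem~\ref{thm:XhatDiffXWL1}, split the bilinear difference via $\hat{\bm{Z}}_i^{\top}\hat{\bm{Z}}_j - (\bm{W}\bm{Z}_i)^{\top}(\bm{W}\bm{Z}_j)$, apply Cauchy--Schwarz, bound $\|\bm{Z}_i\|_2$ by $\sqrt{R}$ via $\bm{H}_{ii}\le R$, and bound $\|\hat{\bm{Z}}_j\|_2$ by the triangle inequality plus Theorem~\ref{thm:XhatDiffXWL1}. The only cosmetic difference is which factor you put $\hat{\bm{Z}}$ versus $\bm{W}\bm{Z}$ on in the two-term decomposition, which is immaterial.
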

\begin{proof}
Let $\bm{W}$ be the rotation matrix in Theorem~\ref{thm:XhatDiffXWL1}, then
\begin{align*}
	\left|  \hat{\bm{Z}}_i^{\top} \hat{\bm{Z}}_j - \bm{Z}_i^{\top} \bm{Z}_j \right|
    = & \left| \hat{\bm{Z}}_i^{\top} \hat{\bm{Z}}_j - \hat{\bm{Z}}_i^{\top} \bm{W} \bm{Z}_j + \hat{\bm{Z}}_i^{\top} \bm{W} \bm{Z}_j - (\bm{W} \bm{Z}_i)^{\top} \bm{W} \bm{Z}_j \right| \\
    \le & \left| \hat{\bm{Z}}_i^{\top} (\hat{\bm{Z}}_j - \bm{W} \bm{Z}_j) + (\hat{\bm{Z}}_i^{\top} - (\bm{W} \bm{Z}_i)^{\top}) \bm{W} \bm{Z}_j \right| \\
    \le & \|\hat{\bm{Z}}_i\|_2 \|\hat{\bm{Z}}_j - \bm{W} \bm{Z}_j\|_2 + \|\bm{Z}_j\|_2 \|\hat{\bm{Z}}_i^{\top} - (\bm{W} \bm{Z}_i)^{\top}\|_2.
\end{align*}
Since $\|\bm{Z}_i\|_2^2 = \bm{Z}_i^{\top} \bm{Z}_i = \bm{H}^{(1)}_{ii} =  E[\hat{\bm{P}}^{(1)}_{ii}] = (1-\epsilon) \bm{P}_{ij} + \epsilon \bm{C}_{ij} \le R$, we have $\|\bm{Z}_i\|_2 = O(1)$.
Combined with Theorem~\ref{thm:XhatDiffXWL1},
\begin{align*}
    \left|  \hat{\bm{Z}}_i^{\top} \hat{\bm{Z}}_j - \bm{Z}_i^{\top} \bm{Z}_j \right|
    = & (\|\hat{\bm{Z}}_i\|_2 + \|\bm{Z}_j\|_2) O(m^{-1/2} n^{-1/2} (\log n)^{3/2}) \\
    \le & (\|\hat{\bm{Z}}_i - \bm{W} \bm{Z}_i\|_2 + \|\bm{W} \bm{Z}_i\|_2 + \|\bm{Z}_j\|_2) O(m^{-1/2} n^{-1/2} (\log n)^{3/2}) \\
    = & O(m^{-1/2} n^{-1/2} (\log n)^{3/2})
\end{align*}
with high probability.
\end{proof}

\begin{definition}
\label{def:truncationMLE}
Define $\widetilde{\bm{P}}_{ij}^{(1)} = (\hat{\bm{Z}}_i^{\top} \hat{\bm{Z}}_j)_{\mathrm{tr}}$, our estimator for $\bm{P}_{ij}$, to be a projection of $\hat{\bm{Z}}_i^{\top} \hat{\bm{Z}}_j$ onto $[0, \min(\hat{\bm{P}}_{ij}^{(1)}, R)]$.
\end{definition}

\begin{remark}
\label{remark:truncation}
The truncation step above to construct estimator is only for technical reasons. Since the constant $R$ could be arbitrarily large, we do not need this truncation step in practice. Note that Theorem~\ref{thm:MLEvsMLEASE} still holds with this modified estimator. And all our simulation and real data experiment do not contain this truncation procedure.
\end{remark}

\begin{lemma}[Theorem~\ref{thm:MLEvsMLEASE} Part 1]
\label{lm:L1Consistentproof}
Assuming that $m = O(n^b)$ for any $b > 0$, then the estimator based on ASE of MLE has the same entry-wise asymptotic bias as MLE, i.e.\
\[
	\lim_{n \to \infty} \mathrm{Bias}(\widetilde{\bm{P}}_{ij}^{(1)}) = \lim_{n \to \infty} E[\widetilde{\bm{P}}_{ij}^{(1)}] - \bm{P}_{ij} = \lim_{n \to \infty} E[\hat{\bm{P}}^{(1)}_{ij}] - \bm{P}_{ij}
    = \lim_{n \to \infty} \mathrm{Bias}(\hat{\bm{P}}_{ij}^{(1)}).
\]
\end{lemma}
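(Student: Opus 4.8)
The plan is to begin by noting that both sides of the asserted identity are explicit. For the exponential family the entry-wise MLE is the sample mean, $\hat{\bm{P}}^{(1)}_{ij}=\bar{\bm{A}}_{ij}$, so $E[\hat{\bm{P}}^{(1)}_{ij}]=(1-\epsilon)\bm{P}_{ij}+\epsilon\bm{C}_{ij}$ for every $n$ and $m$; hence $\lim_{n\to\infty}\mathrm{Bias}(\hat{\bm{P}}^{(1)}_{ij})=\epsilon(\bm{C}_{ij}-\bm{P}_{ij})$, and since $\bm{H}^{(1)}=(1-\epsilon)\bm{P}+\epsilon\bm{C}$ with $\bm{H}^{(1)}=\bm{Z}\bm{Z}^\top$, this common value equals $\bm{Z}_i^\top\bm{Z}_j-\bm{P}_{ij}$. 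Thus the lemma reduces to showing $\lim_{n\to\infty}E[\widetilde{\bm{P}}^{(1)}_{ij}]=\bm{Z}_i^\top\bm{Z}_j$, i.e.\ that truncating $\hat{\bm{Z}}_i^\top\hat{\bm{Z}}_j$ does not, in the limit, perturb the mean away from the population inner product.

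I would then split the expectation over the high-probability event $E_n$ furnished by Lemma~\ref{lemma:1stMomentPhatDiffL1} and its complement. Fixing a large constant $c$ (so that $P(E_n^c)\le n^{-c}$ for $n$ large, by the definition of ``with high probability''),
\[
E[\widetilde{\bm{P}}^{(1)}_{ij}]-\bm{Z}_i^\top\bm{Z}_j
= E\big[(\widetilde{\bm{P}}^{(1)}_{ij}-\bm{Z}_i^\top\bm{Z}_j)\mathbf{1}_{E_n}\big]
+ E\big[(\widetilde{\bm{P}}^{(1)}_{ij}-\bm{Z}_i^\top\bm{Z}_j)\mathbf{1}_{E_n^c}\big].
\]
For the second term, Definition~\ref{def:truncationMLE} forces $\widetilde{\bm{P}}^{(1)}_{ij}\in[0,R]$ deterministically, and $\bm{Z}_i^\top\bm{Z}_j=E[\hat{\bm{P}}^{(1)}_{ij}]\in[0,R]$, so the integrand is at most $R$ and the term is bounded by $R\,n^{-c}=o(1)$. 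For the first term, I would use that $x\mapsto(x)_{\mathrm{tr}}$ is the metric projection onto the interval $[0,\min(\hat{\bm{P}}^{(1)}_{ij},R)]$, hence $1$-Lipschitz, together with the fact that clamping $\bm{Z}_i^\top\bm{Z}_j\in[0,R]$ to that interval displaces it by at most $(\bm{Z}_i^\top\bm{Z}_j-\hat{\bm{P}}^{(1)}_{ij})^+$; this gives, on $E_n$,
\[
\big|\widetilde{\bm{P}}^{(1)}_{ij}-\bm{Z}_i^\top\bm{Z}_j\big|
\;\le\; \big|\hat{\bm{Z}}_i^\top\hat{\bm{Z}}_j-\bm{Z}_i^\top\bm{Z}_j\big|
\;+\; \big|\hat{\bm{P}}^{(1)}_{ij}-\bm{Z}_i^\top\bm{Z}_j\big|.
\]
Lemma~\ref{lemma:1stMomentPhatDiffL1} controls the first summand by $O(m^{-1/2}n^{-1/2}(\log n)^{3/2})$, and the second is controlled in mean by $\mathrm{Var}(\hat{\bm{P}}^{(1)}_{ij})^{1/2}=O(m^{-1/2})$ using the concentration of the entry-wise MLE about its mean; under the hypothesis $m=O(n^b)$ these remainders are $o(1)$, so the first term is $o(1)$ as well. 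Adding the two pieces yields $E[\widetilde{\bm{P}}^{(1)}_{ij}]\to\bm{Z}_i^\top\bm{Z}_j$, which is the claim.

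The main obstacle is the estimate on $E_n$: one must argue that the truncation of Definition~\ref{def:truncationMLE}, whose interval $[0,\min(\hat{\bm{P}}^{(1)}_{ij},R)]$ is itself random, perturbs $\hat{\bm{Z}}_i^\top\hat{\bm{Z}}_j$ by only a negligible amount \emph{in the mean}, and then to keep careful track of how the polynomial-growth restriction $m=O(n^b)$ makes every $O(m^{-1/2}n^{-1/2}(\log n)^{3/2})$-type term — and the $n^{-c}$ tail on $E_n^c$ — genuinely tend to $0$ as $n\to\infty$. Everything else is a union-bound bookkeeping exercise resting on the Bernstein-type bound for $\hat{\bm{P}}^{(1)}$ in Theorem~\ref{thm:P1Diff} that already underlies Lemma~\ref{lemma:1stMomentPhatDiffL1}, together with the sub-exponential moment control of the entries of $\bar{\bm{A}}$.
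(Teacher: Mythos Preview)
Your decomposition is cleaner than the paper's in one respect: on the complement $E_n^c$ you correctly exploit the deterministic inclusion $\widetilde{\bm{P}}^{(1)}_{ij}\in[0,R]$ furnished by Definition~\ref{def:truncationMLE}, so that piece is immediately $O(n^{-c})$ with no tail calculation needed. However, your handling of the main event $E_n$ has a genuine gap. The extra summand $|\hat{\bm{P}}^{(1)}_{ij}-\bm{Z}_i^\top\bm{Z}_j|$ that you introduce to absorb the random truncation boundary has mean of order $m^{-1/2}$, and the hypothesis $m=O(n^b)$ is only an \emph{upper} bound on $m$; it explicitly permits $m$ to be held fixed (the paper stresses this in the discussion following Theorem~\ref{thm:MLEvsMLEASE}). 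For fixed $m$ your $O(m^{-1/2})$ remainder is a nonzero constant, so the conclusion $E[\widetilde{\bm{P}}^{(1)}_{ij}]\to\bm{Z}_i^\top\bm{Z}_j$ does not follow from your argument.

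The paper circumvents this by a different split: it introduces a large threshold $a$ and partitions first on $\{\hat{\bm{P}}^{(1)}_{ij}\le a\}$ versus $\{\hat{\bm{P}}^{(1)}_{ij}>a\}$. Only on the resulting small-probability pieces (either $\hat{\bm{P}}^{(1)}_{ij}>a$, or $\hat{\bm{P}}^{(1)}_{ij}\le a$ but Lemma~\ref{lemma:1stMomentPhatDiffL1} fails) does it bound $|(\hat{\bm{Z}}_i^\top\hat{\bm{Z}}_j)_{\mathrm{tr}}-\bm{Z}_i^\top\bm{Z}_j|$ crudely through $\hat{\bm{P}}^{(1)}_{ij}$; on the main piece it invokes only the $O(m^{-1/2}n^{-1/2}(\log n)^{3/2})$ bound from Lemma~\ref{lemma:1stMomentPhatDiffL1}. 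The exponential tails of the edge weights then kill the $\{\hat{\bm{P}}^{(1)}_{ij}>a\}$ contribution when $a$ grows like a power of $n$, and tuning $a$ and $c$ yields $E[|(\hat{\bm{Z}}_i^\top\hat{\bm{Z}}_j)_{\mathrm{tr}}-\bm{Z}_i^\top\bm{Z}_j|]=O(m^{-1/2}n^{-1/2}(\log n)^{3/2})$, which does vanish as $n\to\infty$ even for fixed $m$. This sharper rate---in particular the factor $n^{-1/2}$---is also what drives the variance bound in Theorem~\ref{thm:VarASEL1proof}, so even in the regime $m\to\infty$ where your argument goes through for the bias, it would not be strong enough for the subsequent results.
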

\begin{proof}
Fix some $a > 0$, we have
\begin{align*}
	& E[|(\hat{\bm{Z}}_i^{\top} \hat{\bm{Z}}_j)_{\mathrm{tr}} - \bm{Z}_i^{\top} \bm{Z}_j|] \\
	= & E[|(\hat{\bm{Z}}_i^{\top} \hat{\bm{Z}}_j)_{\mathrm{tr}} - \bm{Z}_i^{\top} \bm{Z}_j| \mathbb{I}\{\hat{\bm{P}}_{ij} \le a\}]
	+ E[|(\hat{\bm{Z}}_i^{\top} \hat{\bm{Z}}_j)_{\mathrm{tr}} - \bm{Z}_i^{\top} \bm{Z}_j| \mathbb{I}\{\hat{\bm{P}}_{ij} > a\}]
\end{align*}
For the first term, we have
\begin{align*}
	& E[|(\hat{\bm{Z}}_i^{\top} \hat{\bm{Z}}_j)_{\mathrm{tr}} - \bm{Z}_i^{\top} \bm{Z}_j| \mathbb{I}\{\hat{\bm{P}}_{ij} \le a\}] \\
	\le & E[|(\hat{\bm{Z}}_i^{\top} \hat{\bm{Z}}_j)_{\mathrm{tr}} - \bm{Z}_i^{\top} \bm{Z}_j| \mathbb{I}\{\hat{\bm{P}}_{ij} \le a\} \mathbb{I}\{\mathrm{Lemma\ } \ref{lemma:1stMomentPhatDiffL1} \mathrm{\ holds}\}] \\
	& + E[|(\hat{\bm{Z}}_i^{\top} \hat{\bm{Z}}_j)_{\mathrm{tr}} - \bm{Z}_i^{\top} \bm{Z}_j| \mathbb{I}\{\hat{\bm{P}}_{ij} \le a\} \mathbb{I}\{\mathrm{Lemma\ } \ref{lemma:1stMomentPhatDiffL1} \mathrm{\ does\ not\ hold}\}] \\
	\le & E[|(\hat{\bm{Z}}_i^{\top} \hat{\bm{Z}}_j)_{\mathrm{tr}} - \bm{Z}_i^{\top} \bm{Z}_j| \mathbb{I}\{\hat{\bm{P}}_{ij} \le a\} | \mathrm{Lemma\ } \ref{lemma:1stMomentPhatDiffL1} \mathrm{\ holds}] \\
	& + n^{-c} E[|(\hat{\bm{Z}}_i^{\top} \hat{\bm{Z}}_j)_{\mathrm{tr}} - \bm{Z}_i^{\top} \bm{Z}_j| \mathbb{I}\{\hat{\bm{P}}_{ij} \le a\} | \mathrm{Lemma\ } \ref{lemma:1stMomentPhatDiffL1} \mathrm{\ does\ not\ hold}] \\
	\le & O(m^{-1/2} n^{-1/2} (\log n)^{3/2}) \\
	& + n^{-c} E[|(\hat{\bm{Z}}_i^{\top} \hat{\bm{Z}}_j)_{\mathrm{tr}} - \hat{\bm{P}}_{ij}| \mathbb{I}\{\hat{\bm{P}}_{ij} \le a\}| \mathrm{Lemma\ } \ref{lemma:1stMomentPhatDiffL1} \mathrm{\ does\ not\ hold}] \\
	& + n^{-c} E[|\hat{\bm{P}}_{ij} - \bm{Z}_i^{\top} \bm{Z}_j| \mathbb{I}\{\hat{\bm{P}}_{ij} \le a\}| \mathrm{Lemma\ } \ref{lemma:1stMomentPhatDiffL1} \mathrm{\ does\ not\ hold}] \\
	\le & O(m^{-1/2} n^{-1/2} (\log n)^{3/2}) + n^{-c} E[\hat{\bm{P}}_{ij} \mathbb{I}\{\hat{\bm{P}}_{ij} \le a\}| \mathrm{Lemma\ } \ref{lemma:1stMomentPhatDiffL1} \mathrm{\ does\ not\ hold}] \\
	& + n^{-c} E[(\hat{\bm{P}}_{ij} + R) \mathbb{I}\{\hat{\bm{P}}_{ij} \le a\}| \mathrm{Lemma\ } \ref{lemma:1stMomentPhatDiffL1} \mathrm{\ does\ not\ hold}] \\
	\le & O(m^{-1/2} n^{-1/2} (\log n)^{3/2}) + a n^{-c} + (a+R) n^{-c} \\
	\le & O(m^{-1/2} n^{-1/2} (\log n)^{3/2}) + 2 n^{-c} (a + R).
\end{align*}
Notice that
\begin{align*}
	& E[\hat{\bm{P}}_{ij} \mathbb{I} \{ \hat{\bm{P}}_{ij} > a \}]
	= E[\left(\frac{1}{m} \sum_{1 \le t \le m} \bm{A}_{ij}^{(t)}\right) \mathbb{I} \{ \hat{\bm{P}}_{ij} > a \}] \\
	= & \frac{1}{m} E[\sum_{1 \le t \le m} \bm{A}_{ij}^{(t)} \mathbb{I} \{ \hat{\bm{P}}_{ij} > a \}]
	\le \frac{1}{m} E[\sum_{1 \le t \le m} \bm{A}_{ij}^{(t)} \mathbb{I} \{ \max_{1 \le s \le m} \bm{A}_{ij}^{(s)} > a \}] \\
	\le & \frac{1}{m} E[\sum_{1 \le t \le m} \bm{A}_{ij}^{(t)} \left(\sum_{1 \le s \le m}\mathbb{I} \{ \bm{A}_{ij}^{(s)} > a \}\right)]
	= E[\bm{A}_{ij}^{(1)} \left(\sum_{1 \le s \le m}\mathbb{I} \{ \bm{A}_{ij}^{(s)} > a \}\right)] \\
	= & E[\bm{A}_{ij}^{(1)} \mathbb{I} \{ \bm{A}_{ij}^{(1)} > a \})] + (m-1) E[\bm{A}_{ij}^{(1)} \mathbb{I} \{ \bm{A}_{ij}^{(2)} > a \})] \\
	= & E[\bm{A}_{ij}^{(1)} \mathbb{I} \{ \bm{A}_{ij}^{(1)} > a \})] + (m-1) E[\bm{A}_{ij}^{(1)}] P(\bm{A}_{ij}^{(1)} > a),
\end{align*}
and similarly
\begin{align*}
	& E[(\hat{\bm{P}}_{ij} + R) \mathbb{I} \{ \hat{\bm{P}}_{ij} > a \}] \\
	= & E[\hat{\bm{P}}_{ij} \mathbb{I} \{ \hat{\bm{P}}_{ij} > a \}] + R \cdot P(\hat{\bm{P}}_{ij} > a) \\
	\le & E[\bm{A}_{ij}^{(1)} \mathbb{I} \{ \bm{A}_{ij}^{(1)} > a \})] + (m-1) E[\bm{A}_{ij}^{(1)}] P(\bm{A}_{ij}^{(1)} > a)
	+ R \cdot m \cdot P(\bm{A}_{ij}^{(1)} > a).
\end{align*}
Thus for the second term,
\begin{align*}
	& E[|(\hat{\bm{Z}}_i^{\top} \hat{\bm{Z}}_j)_{\mathrm{tr}} - \bm{Z}_i^{\top} \bm{Z}_j| \mathbb{I}\{\hat{\bm{P}}_{ij} > a\}] \\
	\le & E[|(\hat{\bm{Z}}_i^{\top} \hat{\bm{Z}}_j)_{\mathrm{tr}} - \hat{\bm{P}}_{ij}| \mathbb{I}\{\hat{\bm{P}}_{ij} > a\}] + E[|\hat{\bm{P}}_{ij} - \bm{Z}_i^{\top} \bm{Z}_j| \mathbb{I}\{\hat{\bm{P}}_{ij} > a\}] \\
	\le & E[\hat{\bm{P}}_{ij} \mathbb{I}\{\hat{\bm{P}}_{ij} > a\}] + E[(\hat{\bm{P}}_{ij} + R) \mathbb{I}\{\hat{\bm{P}}_{ij} > a\}] \\
	\le & 2 E[\bm{A}_{ij}^{(1)} \mathbb{I} \{ \bm{A}_{ij}^{(1)} > a \})] + 2(m-1) E[\bm{A}_{ij}^{(1)}] P(\bm{A}_{ij}^{(1)} > a) \\
	& + R \cdot m \cdot P(\bm{A}_{ij}^{(1)} > a) \\
	\le & 2 e^{-a/R} (a + 2 m R).
\end{align*}
Thus
\begin{align*}
	& E[|(\hat{\bm{Z}}_i^{\top} \hat{\bm{Z}}_j)_{\mathrm{tr}} - \bm{Z}_i^{\top} \bm{Z}_j|] \\
	\le & O(m^{-1/2} n^{-1/2} (\log n)^{3/2}) + 2 n^{-c} (a + R) + 2 e^{-a/R} (a + 2 m R).
\end{align*}
Let $a = m^{-1} n^{2b}$ for any $b > 0$, and $c = 2b + 3$, combined with the assumption $m = O(n^{b})$, we have
\begin{align*}
	& E[|(\hat{\bm{Z}}_i^{\top} \hat{\bm{Z}}_j)_{\mathrm{tr}} - \bm{Z}_i^{\top} \bm{Z}_j|] \\
	= & O(m^{-1/2} n^{-1/2} (\log n)^{3/2}) + O(m^{-1} n^{-3}) + O(m^{-1} n^{2b}) \cdot O(e^{-m^{-1} n^{2b}}) \\
	= & O(m^{-1/2} n^{-1/2} (\log n)^{3/2}) + O(m^{-1} n^{-3}) + O(m^{-1} n^{2b}) \cdot O(e^{- n^{b}}) \\
	= & O(m^{-1/2} n^{-1/2} (\log n)^{3/2}) + O(m^{-1} n^{-3}) + O(m^{-1} n^{2b}) \cdot O(n^{-2b-3}) \\
	= & O(m^{-1/2} n^{-1/2} (\log n)^{3/2}) + O(m^{-1} n^{-3}) \\
	= & O(m^{-1/2} n^{-1/2} (\log n)^{3/2}).
\end{align*}
\end{proof}

\begin{theorem}
\label{thm:VarASEL1proof}
Assuming that $m = O(n^b)$ for any $b > 0$, then $\mathrm{Var}((\hat{\bm{Z}}_i^{\top} \hat{\bm{Z}}_j)_{\mathrm{tr}}) = O(m^{-1} n^{-1} (\log n)^3)$.
\end{theorem}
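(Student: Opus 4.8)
The plan is to bound the variance by the mean squared deviation from the \emph{constant} $\bm{Z}_i^{\top}\bm{Z}_j=\bm{H}_{ij}^{(1)}=E[\hat{\bm{P}}_{ij}^{(1)}]$, namely
\[
\mathrm{Var}\big((\hat{\bm{Z}}_i^{\top}\hat{\bm{Z}}_j)_{\mathrm{tr}}\big)\;\le\;E\Big[\big((\hat{\bm{Z}}_i^{\top}\hat{\bm{Z}}_j)_{\mathrm{tr}}-\bm{Z}_i^{\top}\bm{Z}_j\big)^2\Big],
\]
and then to run exactly the truncation-and-tail decomposition used for the bias in the proof of Lemma~\ref{lm:L1Consistentproof}, but carrying the \emph{squared} deviation throughout instead of the absolute deviation. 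Everything is conditional on $\bm{\tau}$. A useful first observation is the deterministic bound $0\le(\hat{\bm{Z}}_i^{\top}\hat{\bm{Z}}_j)_{\mathrm{tr}}\le\min(\hat{\bm{P}}_{ij}^{(1)},R)\le R$ coming from Definition~\ref{def:truncationMLE}, together with $0<\bm{Z}_i^{\top}\bm{Z}_j=(1-\epsilon)\bm{P}_{ij}+\epsilon\bm{C}_{ij}\le R$; hence $\big|(\hat{\bm{Z}}_i^{\top}\hat{\bm{Z}}_j)_{\mathrm{tr}}-\bm{Z}_i^{\top}\bm{Z}_j\big|\le R$ always. This is what makes the low-probability events cheap, and it is (as noted in Remark~\ref{remark:truncation}) the only reason the truncation is there.

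Concretely, I would split the expectation according to whether the high-probability event of Lemma~\ref{lemma:1stMomentPhatDiffL1} occurs. On that event $\big|\hat{\bm{Z}}_i^{\top}\hat{\bm{Z}}_j-\bm{Z}_i^{\top}\bm{Z}_j\big|=O(m^{-1/2}n^{-1/2}(\log n)^{3/2})$, and since projection onto an interval is nonexpansive and the truncation interval $[0,\min(\hat{\bm{P}}_{ij}^{(1)},R)]$ is (with the help of the entrywise concentration of $\hat{\bm{P}}_{ij}^{(1)}$ around $\bm{H}_{ij}^{(1)}$, in the spirit of the Bernstein-type bound underlying Theorem~\ref{thm:P1Diff}) positioned so as not to displace $\hat{\bm{Z}}_i^{\top}\hat{\bm{Z}}_j$ by more than $O(m^{-1/2}n^{-1/2}(\log n)^{3/2})$, one gets $\big|(\hat{\bm{Z}}_i^{\top}\hat{\bm{Z}}_j)_{\mathrm{tr}}-\bm{Z}_i^{\top}\bm{Z}_j\big|=O(m^{-1/2}n^{-1/2}(\log n)^{3/2})$ there as well --- this is the identical step to the one in Lemma~\ref{lm:L1Consistentproof}. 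Squaring, this event contributes $O(m^{-1}n^{-1}(\log n)^3)$ to the second moment. On the complementary event, whose probability is at most $n^{-c}$ for any preassigned $c$, the deterministic $R$-bound contributes at most $R^2 n^{-c}$; choosing $c$ large (finite, using $m=O(n^b)$) makes this $o(m^{-1}n^{-1}(\log n)^3)$. To mirror Lemma~\ref{lm:L1Consistentproof} verbatim one could further split this event into $\{\hat{\bm{P}}_{ij}^{(1)}\le a\}$ versus $\{\hat{\bm{P}}_{ij}^{(1)}>a\}$ with $a=m^{-1}n^{2b}$ and control $E[\hat{\bm{P}}_{ij}^{(1)2}\mathbb{I}\{\hat{\bm{P}}_{ij}^{(1)}>a\}]$ via the sub-exponential tail of $\bm{A}_{ij}^{(1)}$ by something exponentially small in $n^b$, but the $\min(\cdot,R)$ in the truncation already makes this unnecessary. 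Adding the two contributions yields $\mathrm{Var}((\hat{\bm{Z}}_i^{\top}\hat{\bm{Z}}_j)_{\mathrm{tr}})=O(m^{-1}n^{-1}(\log n)^3)$; the ARE conclusion of Theorem~\ref{thm:MLEvsMLEASE} then follows by dividing by $\mathrm{Var}(\hat{\bm{P}}_{ij}^{(1)})=O(m^{-1})$ from Lemma~\ref{lemma:VarLqlVarMLEproof}.

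The main obstacle is exactly the step where the truncation interacts with $\hat{\bm{P}}_{ij}^{(1)}$. The target rate $m^{-1}n^{-1}(\log n)^3$ is much smaller than $\mathrm{Var}(\hat{\bm{P}}_{ij}^{(1)})=\Theta(m^{-1})$, yet the definition ties $\widetilde{\bm{P}}_{ij}^{(1)}$ to the single entrywise average $\hat{\bm{P}}_{ij}^{(1)}$ from above, and that average only concentrates around $\bm{H}_{ij}^{(1)}$ at the slow rate $m^{-1/2}$, whereas $\hat{\bm{Z}}_i^{\top}\hat{\bm{Z}}_j$ --- a function of the whole matrix $\hat{\bm{P}}^{(1)}$ --- concentrates at the fast rate $m^{-1/2}n^{-1/2}(\log n)^{3/2}$ by Theorem~\ref{thm:XhatDiffXWL1}. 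So one must argue carefully that, on the event of Lemma~\ref{lemma:1stMomentPhatDiffL1}, whenever $\hat{\bm{P}}_{ij}^{(1)}$ is small enough for the projection onto $[0,\min(\hat{\bm{P}}_{ij}^{(1)},R)]$ actually to bind, either the resulting displacement of $\hat{\bm{Z}}_i^{\top}\hat{\bm{Z}}_j$ is itself $O(m^{-1/2}n^{-1/2}(\log n)^{3/2})$, or the corresponding event is rare enough --- in the one-sided Bernstein sense for $\hat{\bm{P}}_{ij}^{(1)}$ --- that its squared-deviation contribution is still $O(m^{-1}n^{-1}(\log n)^3)$. Reconciling these two concentration rates, i.e.\ showing the slow $m^{-1/2}$ fluctuations of the truncation endpoint do not leak into the variance, is the delicate part; the remainder (nonexpansiveness of the projection, the $R$-bound, the choice of $c$, and assembling the two terms) is routine and identical in form to the proof of Lemma~\ref{lm:L1Consistentproof}.
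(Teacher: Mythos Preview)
Your overall architecture is the paper's: bound the variance by $E[((\hat{\bm{Z}}_i^{\top}\hat{\bm{Z}}_j)_{\mathrm{tr}}-\bm{Z}_i^{\top}\bm{Z}_j)^2]$, split on the high-probability event of Lemma~\ref{lemma:1stMomentPhatDiffL1}, and clean up the complement. On the complement you make a genuine simplification. The paper does \emph{not} use the deterministic bound $|(\hat{\bm{Z}}_i^{\top}\hat{\bm{Z}}_j)_{\mathrm{tr}}-\bm{Z}_i^{\top}\bm{Z}_j|\le R$; instead it inserts $\hat{\bm{P}}_{ij}^{(1)}$ via the triangle inequality, uses $(\hat{\bm{Z}}_i^{\top}\hat{\bm{Z}}_j)_{\mathrm{tr}}\in[0,\hat{\bm{P}}_{ij}^{(1)}]$, splits further on $\{\hat{\bm{P}}_{ij}^{(1)}\le a\}$ versus $\{\hat{\bm{P}}_{ij}^{(1)}>a\}$ with $a=m^{-1/2}n^{b}$, and controls the tail piece by a sub-exponential calculation for the mixture. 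Your observation that the $\min(\cdot,R)$ in Definition~\ref{def:truncationMLE} already caps the deviation at $R$, so that the bad event costs only $R^{2}n^{-c}$, is correct and eliminates that entire tail decomposition. This is a cleaner route to the same endpoint.

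Regarding the obstacle you flag on the good event---the interaction of the upper truncation endpoint $\hat{\bm{P}}_{ij}^{(1)}$ (which fluctuates at rate $m^{-1/2}$) with the much faster $m^{-1/2}n^{-1/2}(\log n)^{3/2}$ concentration of $\hat{\bm{Z}}_i^{\top}\hat{\bm{Z}}_j$---you are right that this is the delicate step. The paper's proof does \emph{not} resolve it explicitly either: it passes directly from ``Lemma~\ref{lemma:1stMomentPhatDiffL1} holds'' (a statement about the \emph{untruncated} $\hat{\bm{Z}}_i^{\top}\hat{\bm{Z}}_j$) to the bound $O(m^{-1}n^{-1}(\log n)^3)$ on the squared deviation of the \emph{truncated} quantity, without further comment. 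So you have not missed an idea present in the paper; you have simply been more scrupulous in naming a step that the paper leaves implicit. Your diagnosis---that the issue is exactly when $\hat{\bm{P}}_{ij}^{(1)}$ falls below $\bm{Z}_i^{\top}\bm{Z}_j$ and the projection binds---is accurate, and your proposal is otherwise at least as complete as the paper's own argument.
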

\begin{proof}
By Lemma~\ref{lemma:1stMomentPhatDiffL1},
\begin{align*}
	\mathrm{Var}((\hat{\bm{Z}}_i^{\top} \hat{\bm{Z}}_j)_{\mathrm{tr}})
    = & E[((\hat{\bm{Z}}_i^{\top} \hat{\bm{Z}}_j)_{\mathrm{tr}} - E[(\hat{\bm{Z}}_i^{\top} \hat{\bm{Z}}_j)_{\mathrm{tr}}])^2] \\
    = & E[((\hat{\bm{Z}}_i^{\top} \hat{\bm{Z}}_j)_{\mathrm{tr}} - \bm{Z}_i^{\top} \bm{Z}_j + \bm{Z}_i^{\top} \bm{Z}_j - E[(\hat{\bm{Z}}_i^{\top} \hat{\bm{Z}}_j)_{\mathrm{tr}}])^2] \\
    = & E[((\hat{\bm{Z}}_i^{\top} \hat{\bm{Z}}_j)_{\mathrm{tr}} - \bm{Z}_i^{\top} \bm{Z}_j)^2] + E[(\bm{Z}_i^{\top} \bm{Z}_j - E[(\hat{\bm{Z}}_i^{\top} \hat{\bm{Z}}_j)_{\mathrm{tr}}])^2] \\
    & + 2E[((\hat{\bm{Z}}_i^{\top} \hat{\bm{Z}}_j)_{\mathrm{tr}} - \bm{Z}_i^{\top} \bm{Z}_j)(\bm{Z}_i^{\top} \bm{Z}_j - E[(\hat{\bm{Z}}_i^{\top} \hat{\bm{Z}}_j)_{\mathrm{tr}}])] \\
    \le & E[((\hat{\bm{Z}}_i^{\top} \hat{\bm{Z}}_j)_{\mathrm{tr}} - \bm{Z}_i^{\top} \bm{Z}_j)^2] + E[(\bm{Z}_i^{\top} \bm{Z}_j - E[(\hat{\bm{Z}}_i^{\top} \hat{\bm{Z}}_j)_{\mathrm{tr}}])^2] \\
    & + 2\sqrt{E[((\hat{\bm{Z}}_i^{\top} \hat{\bm{Z}}_j)_{\mathrm{tr}} - \bm{Z}_i^{\top} \bm{Z}_j)^2] E[(\bm{Z}_i^{\top} \bm{Z}_j - E[(\hat{\bm{Z}}_i^{\top} \hat{\bm{Z}}_j)_{\mathrm{tr}}])^2]} \\
    \le & 4 E[((\hat{\bm{Z}}_i^{\top} \hat{\bm{Z}}_j)_{\mathrm{tr}} - \bm{Z}_i^{\top} \bm{Z}_j)^2].
\end{align*}
Fix some $a > 0$, we have
\begin{align*}
	& E[((\hat{\bm{Z}}_i^{\top} \hat{\bm{Z}}_j)_{\mathrm{tr}} - \bm{Z}_i^{\top} \bm{Z}_j)^2] \\
	= & E[((\hat{\bm{Z}}_i^{\top} \hat{\bm{Z}}_j)_{\mathrm{tr}} - \bm{Z}_i^{\top} \bm{Z}_j)^2 \mathbb{I}\{\hat{\bm{P}}_{ij} \le a\}]
	+ E[((\hat{\bm{Z}}_i^{\top} \hat{\bm{Z}}_j)_{\mathrm{tr}} - \bm{Z}_i^{\top} \bm{Z}_j)^2 \mathbb{I}\{\hat{\bm{P}}_{ij} > a\}].
\end{align*}
For the first term, we have
\begin{align*}
	& E[((\hat{\bm{Z}}_i^{\top} \hat{\bm{Z}}_j)_{\mathrm{tr}} - \bm{Z}_i^{\top} \bm{Z}_j)^2 \mathbb{I}\{\hat{\bm{P}}_{ij} \le a\}] \\
	\le & E[((\hat{\bm{Z}}_i^{\top} \hat{\bm{Z}}_j)_{\mathrm{tr}} - \bm{Z}_i^{\top} \bm{Z}_j)^2 \mathbb{I}\{\hat{\bm{P}}_{ij} \le a\} \mathbb{I}\{\mathrm{Lemma\ } \ref{lemma:1stMomentPhatDiffL1} \mathrm{\ holds}\}] \\
	& + E[((\hat{\bm{Z}}_i^{\top} \hat{\bm{Z}}_j)_{\mathrm{tr}} - \bm{Z}_i^{\top} \bm{Z}_j)^2 \mathbb{I}\{\hat{\bm{P}}_{ij} \le a\} \mathbb{I}\{\mathrm{Lemma\ } \ref{lemma:1stMomentPhatDiffL1} \mathrm{\ does\ not\ hold}\}] \\
	\le & E[((\hat{\bm{Z}}_i^{\top} \hat{\bm{Z}}_j)_{\mathrm{tr}} - \bm{Z}_i^{\top} \bm{Z}_j)^2 \mathbb{I}\{\hat{\bm{P}}_{ij} \le a\} | \mathrm{Lemma\ } \ref{lemma:1stMomentPhatDiffL1} \mathrm{\ holds}] \\
	& + n^{-c} E[((\hat{\bm{Z}}_i^{\top} \hat{\bm{Z}}_j)_{\mathrm{tr}} - \bm{Z}_i^{\top} \bm{Z}_j)^2 \mathbb{I}\{\hat{\bm{P}}_{ij} \le a\} | \mathrm{Lemma\ } \ref{lemma:1stMomentPhatDiffL1} \mathrm{\ does\ not\ hold}] \\
	\le & O(m^{-1} n^{-1} (\log n)^3) \\
	& + 2 n^{-c} E[((\hat{\bm{Z}}_i^{\top} \hat{\bm{Z}}_j)_{\mathrm{tr}} - \hat{\bm{P}}_{ij})^2 \mathbb{I}\{\hat{\bm{P}}_{ij} \le a\}| \mathrm{Lemma\ } \ref{lemma:1stMomentPhatDiffL1} \mathrm{\ does\ not\ hold}] \\
	& + 2 n^{-c} E[(\hat{\bm{P}}_{ij} - \bm{Z}_i^{\top} \bm{Z}_j)^2 \mathbb{I}\{\hat{\bm{P}}_{ij} \le a\}| \mathrm{Lemma\ } \ref{lemma:1stMomentPhatDiffL1} \mathrm{\ does\ not\ hold}] \\
	\le & O(m^{-1} n^{-1} (\log n)^3) + 2 n^{-c} E[\hat{\bm{P}}_{ij}^2 \mathbb{I}\{\hat{\bm{P}}_{ij} \le a\}| \mathrm{Lemma\ } \ref{lemma:1stMomentPhatDiffL1} \mathrm{\ does\ not\ hold}] \\
	& + 2 n^{-c} E[(\hat{\bm{P}}_{ij} + R)^2 \mathbb{I}\{\hat{\bm{P}}_{ij} \le a\}| \mathrm{Lemma\ } \ref{lemma:1stMomentPhatDiffL1} \mathrm{\ does\ not\ hold}] \\
	\le & O(m^{-1} n^{-1} (\log n)^3) + 2 a^2 n^{-c} + 2 (a+R)^2 n^{-c} \\
	\le & O(m^{-1} n^{-1} (\log n)^3) + 4 n^{-c} (a + R)^2.
\end{align*}
Notice that
\begin{align*}
	& E[\hat{\bm{P}}_{ij}^2 \mathbb{I} \{ \hat{\bm{P}}_{ij} > a \}]
	= E[(\frac{1}{m} \sum_{1 \le t \le m} \bm{A}_{ij}^{(t)})^2 \mathbb{I} \{ \hat{\bm{P}}_{ij} > a \}] \\
	\le & \frac{1}{m} E[\sum_{1 \le t \le m} \bm{A}_{ij}^{(t)2} \mathbb{I} \{ \hat{\bm{P}}_{ij} > a \}]
	\le \frac{1}{m} E[\sum_{1 \le t \le m} \bm{A}_{ij}^{(t)2} \mathbb{I} \{ \max_{1 \le s \le m} \bm{A}_{ij}^{(s)} > a \}] \\
	\le & \frac{1}{m} E[\sum_{1 \le t \le m} \bm{A}_{ij}^{(t)2} (\sum_{1 \le s \le m}\mathbb{I} \{ \bm{A}_{ij}^{(s)} > a \})]
	= E[\bm{A}_{ij}^{(1)2} (\sum_{1 \le s \le m}\mathbb{I} \{ \bm{A}_{ij}^{(s)} > a \})] \\
	= & E[\bm{A}_{ij}^{(1)2} \mathbb{I} \{ \bm{A}_{ij}^{(1)} > a \})] + (m-1) E[\bm{A}_{ij}^{(1)2} \mathbb{I} \{ \bm{A}_{ij}^{(2)} > a \})] \\
	= & E[\bm{A}_{ij}^{(1)2} \mathbb{I} \{ \bm{A}_{ij}^{(1)} > a \})] + (m-1) E[\bm{A}_{ij}^{(1)2}] P(\bm{A}_{ij}^{(1)} > a),
\end{align*}
and similarly
\begin{align*}
	& E[(\hat{\bm{P}}_{ij} + R)^2 \mathbb{I} \{ \hat{\bm{P}}_{ij} > a \}] \\
	= & E[\hat{\bm{P}}_{ij}^2 \mathbb{I} \{ \hat{\bm{P}}_{ij} > a \}] + 2R \cdot E[\hat{\bm{P}}_{ij} \mathbb{I} \{ \hat{\bm{P}}_{ij} > a \}] + R^2 P(\hat{\bm{P}}_{ij} > a) \\
	\le & E[\bm{A}_{ij}^{(1)2} \mathbb{I} \{ \bm{A}_{ij}^{(1)} > a \})] + (m-1) E[\bm{A}_{ij}^{(1)2}] P(\bm{A}_{ij}^{(1)} > a) \\
	& + 2R \left( E[\bm{A}_{ij}^{(1)} \mathbb{I} \{ \bm{A}_{ij}^{(1)} > a \})] + (m-1) E[\bm{A}_{ij}^{(1)}] P(\bm{A}_{ij}^{(1)} > a) \right) \\
	& + R^2 \cdot m \cdot P(\bm{A}_{ij}^{(1)} > a).
\end{align*}
Thus for the second term,
\begin{align*}
	& E[((\hat{\bm{Z}}_i^{\top} \hat{\bm{Z}}_j)_{\mathrm{tr}} - \bm{Z}_i^{\top} \bm{Z}_j)^2 \mathbb{I}\{\hat{\bm{P}}_{ij} > a\}] \\
	\le & 2 E[((\hat{\bm{Z}}_i^{\top} \hat{\bm{Z}}_j)_{\mathrm{tr}} - \hat{\bm{P}}_{ij})^2 \mathbb{I}\{\hat{\bm{P}}_{ij} > a\}] + 2 E[(\hat{\bm{P}}_{ij} - \bm{Z}_i^{\top} \bm{Z}_j)^2 \mathbb{I}\{\hat{\bm{P}}_{ij} > a\}] \\
	\le & 2 E[\hat{\bm{P}}_{ij}^2 \mathbb{I}\{\hat{\bm{P}}_{ij} > a\}] + 2 E[(\hat{\bm{P}}_{ij} + R)^2 \mathbb{I}\{\hat{\bm{P}}_{ij} > a\}] \\
	\le & 4 E[\bm{A}_{ij}^{(1)2} \mathbb{I} \{ \bm{A}_{ij}^{(1)} > a \})] + 4(m-1) E[\bm{A}_{ij}^{(1)2}] P(\bm{A}_{ij}^{(1)} > a) \\
	& + 4R \cdot E[\bm{A}_{ij}^{(1)} \mathbb{I} \{ \bm{A}_{ij}^{(1)} > a \})] + 2R(m-1) E[\bm{A}_{ij}^{(1)}] P(\bm{A}_{ij}^{(1)} > a) \\
	& + 2R^2 \cdot m \cdot P(\bm{A}_{ij}^{(1)} > a) \\
	\le & 4 e^{-a/R} \left( a^2 + 3R a + 3(m+1)R^2 \right) \\
	\le & 4 e^{-a/R} (a + 2 m^{1/2} R)^2.
\end{align*}

Thus,
\[
	\mathrm{Var}((\hat{\bm{Z}}_i^{\top} \hat{\bm{Z}}_j)_{\mathrm{tr}}) \le
	O(m^{-1} n^{-1} (\log n)^3) + 16 (a+R)^2 n^{-c} + 16 (a+2 m^{1/2} R)^2 e^{-a/R}.
\]
Let $a = m^{-1/2} n^{b}$ for any $b > 0$, and $c = 2b + 3$, combined with the assumption $m = O(n^{b})$, we have
\begin{align*}
	\mathrm{Var}((\hat{\bm{Z}}_i^{\top} \hat{\bm{Z}}_j)_{\mathrm{tr}})
	= & O(m^{-1} n^{-1} (\log n)^3) + O(m^{-1} n^{-3}) + O(m^{-1} n^{2b}) \cdot O(e^{-m^{-1/2} n^{b}}) \\
	= & O(m^{-1} n^{-1} (\log n)^3) + O(m^{-1} n^{-3}) + O(m^{-1} n^{2b}) \cdot O(e^{- n^{b/2}}) \\
	= & O(m^{-1} n^{-1} (\log n)^3) + O(m^{-1} n^{-3}) + O(m^{-1} n^{2b}) \cdot O(n^{-2b-3}) \\
	= & O(m^{-1} n^{-1} (\log n)^3) + O(m^{-1} n^{-3}) \\
	= & O(m^{-1} n^{-1} (\log n)^3).
\end{align*}
\end{proof}

\begin{theorem}[Theorem~\ref{thm:MLEvsMLEASE} Part 2]
\label{thm:AREL1proof}
Assuming that $m = O(n^b)$ for any $b > 0$,  then for $1 \le i, j \le n$ and $i \ne j$,
\[
	\frac{\mathrm{Var}(\widetilde{\bm{P}}_{ij}^{(1)})}{\mathrm{Var}(\hat{\bm{P}}_{ij}^{(1)})}
    = O(n^{-1} (\log n)^3).
\]
And thus
\[
	\mathrm{ARE}(\hat{\bm{P}}_{ij}^{(1)}, \widetilde{\bm{P}}_{ij}^{(1)}) = 0.
\]
\end{theorem}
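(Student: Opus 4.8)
The plan is to combine the variance bound for the ASE-based estimator from Theorem~\ref{thm:VarASEL1proof} with the exact order of the variance of the entry-wise MLE, and then read the ARE directly off its definition. There is essentially no hard analytic step remaining here; the work has been done in the preceding lemmas, and what is left is a substitution together with one elementary observation about the denominator.

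First I would pin down the order of $\mathrm{Var}(\hat{\bm{P}}_{ij}^{(1)})$. Since $\hat{\bm{P}}_{ij}^{(1)} = \bar{\bm{A}}_{ij} = \frac{1}{m}\sum_{t=1}^m \bm{A}_{ij}^{(t)}$ is the sample mean of $m$ i.i.d.\ draws from $(1-\epsilon)\mathrm{Exp}(\bm{P}_{ij}) + \epsilon\mathrm{Exp}(\bm{C}_{ij})$, we have $\mathrm{Var}(\hat{\bm{P}}_{ij}^{(1)}) = \sigma_{ij}^2/m$, where $\sigma_{ij}^2 = \mathrm{Var}(\bm{A}_{ij}^{(1)})$ can be written explicitly from the first two moments of the two exponential components. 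Because $\bm{P}_{ij} > 0$ the mixture is non-degenerate, so $\sigma_{ij}^2$ is a strictly positive finite constant; hence $\mathrm{Var}(\hat{\bm{P}}_{ij}^{(1)}) = \Theta(1/m)$, bounded both above and below by constant multiples of $1/m$. This refines the $O(1/m)$ statement of Lemma~\ref{lemma:VarLqlVarMLEproof} with the lower bound we need.

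Next I would invoke Theorem~\ref{thm:VarASEL1proof}, which (recalling $\widetilde{\bm{P}}_{ij}^{(1)} = (\hat{\bm{Z}}_i^{\top}\hat{\bm{Z}}_j)_{\mathrm{tr}}$) gives $\mathrm{Var}(\widetilde{\bm{P}}_{ij}^{(1)}) = O(m^{-1} n^{-1} (\log n)^3)$ under $m = O(n^b)$. Dividing the two bounds, the factors of $m^{-1}$ cancel exactly, yielding
\[
	\frac{\mathrm{Var}(\widetilde{\bm{P}}_{ij}^{(1)})}{\mathrm{Var}(\hat{\bm{P}}_{ij}^{(1)})}
	= \frac{O(m^{-1} n^{-1} (\log n)^3)}{\Theta(m^{-1})}
	= O(n^{-1} (\log n)^3),
\]
an order that, notably, does not depend on $m$. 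Since $n^{-1}(\log n)^3 \to 0$ as $n \to \infty$, this ratio tends to $0$.

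Finally, to obtain $\mathrm{ARE}(\hat{\bm{P}}_{ij}^{(1)}, \widetilde{\bm{P}}_{ij}^{(1)}) = 0$ I would appeal to the definition of ARE: for two estimators sharing a common asymptotic bias, the ARE is the limit of the ratio of their commonly normalized variances. The common-bias hypothesis is precisely Lemma~\ref{lm:L1Consistentproof} for the pair $(\hat{\bm{P}}_{ij}^{(1)}, \widetilde{\bm{P}}_{ij}^{(1)})$, so the ARE equals $\lim_{n\to\infty}\mathrm{Var}(\widetilde{\bm{P}}_{ij}^{(1)})/\mathrm{Var}(\hat{\bm{P}}_{ij}^{(1)}) = 0$. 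The one point requiring a little care — and the closest thing to an obstacle — is the lower bound on $\mathrm{Var}(\hat{\bm{P}}_{ij}^{(1)})$: one must check that the variance of the mixture component is bounded away from zero, so that the denominator genuinely has order $1/m$ and the ratio is not accidentally inflated; this is immediate since $\bm{P}_{ij} > 0$, but it is the only place a degenerate configuration would break the argument. Everything else is a direct substitution, and the restriction $i \neq j$ enters only because the diagonal of $\bm{P}$ is not being estimated.
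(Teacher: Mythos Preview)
Your proposal is correct and follows essentially the same route as the paper: the paper's proof is a one-line citation of Theorem~\ref{thm:VarASEL1proof} and Theorem~\ref{thm:MLEvsMLqE}, and you are simply making explicit the division and the ARE definition. Your added observation that one needs a genuine lower bound $\mathrm{Var}(\hat{\bm{P}}_{ij}^{(1)}) = \Theta(1/m)$, not merely $O(1/m)$, to control the ratio is a point of care the paper glosses over, but it is easily supplied exactly as you indicate.
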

\begin{proof}
The results are direct from Theorem~\ref{thm:VarASEL1proof} and Theorem~\ref{thm:MLEvsMLqE}.
\end{proof}

%\subsection{$\widetilde{P}^{(q)}$ vs. $\hat{P}^{(q)}$}
\subsection{\texorpdfstring{$\widetilde{\bm{P}}^{(q)}$}{$P$} vs. \texorpdfstring{$\hat{\bm{P}}^{(q)}$}{$P$}}
\label{section:pf_MLqEASEvsMLqE}

\begin{theorem}
\label{thm:PqDiff}
Let $\bm{P}$ and $\bm{C}$ be two $n$-by-$n$ symmetric and hollow matrices satisfying element-wise conditions $0 < \bm{P}_{ij} \le \bm{C}_{ij} \le R$ for some constant $R > 0$. For $0 < \epsilon < 1$, we define $m$ symmetric and hollow matrices as
\[
	\bm{A}^{(t)} \stackrel{iid}{\sim} (1-\epsilon) \mathrm{Exp}(\bm{P}) + \epsilon \mathrm{Exp}(\bm{C})
\]
for $1 \le t \le m$.
Let $\hat{\bm{P}}^{(q)}$ be the entry-wise ML$q$E based on exponential distribution with $m$ observations.
Define $\bm{H}^{(q)} = E[\hat{\bm{P}}^{(q)}]$,
then for any constant $c > 0$ there exists another constant $n_0(c)$, independent of $n$, $\bm{P}$, $\bm{C}$ and $\epsilon$, such that if $n > n_0$, then for all $\eta$ satisfying $n^{-c} \le \eta \le 1/2$,
\[
	P \left( \| \hat{\bm{P}}^{(q)} - \bm{H}^{(q)} \|_2 \le 8 R \sqrt{2 n \ln(n/\eta)}) \right) \ge 1 - \eta.
\]
\end{theorem}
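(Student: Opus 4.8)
The plan is to treat $\bm{M} := \hat{\bm{P}}^{(q)} - \bm{H}^{(q)}$ as a symmetric random matrix with independent, mean-zero, sub-exponential entries and then invoke the very same matrix concentration machinery that underlies Theorem~\ref{thm:P1Diff} (the extended version of Theorem~3.1 of \citet{oliveira2009concentration}). The one nontrivial structural input about the ML$q$E is that, although $\hat{\bm{P}}^{(q)}_{ij}$ has no closed form, Lemma~\ref{lemma:LqlMLE} sandwiches it: almost surely $0 \le \hat{\bm{P}}^{(q)}_{ij} \le \hat{\bm{P}}^{(1)}_{ij} = \bar{\bm{A}}_{ij}$, so $\hat{\bm{P}}^{(q)}_{ij}$ inherits a controlled tail directly from $\bar{\bm{A}}_{ij}$. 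Note that this sandwich controls only the \emph{magnitude} of $\hat{\bm{P}}^{(q)}_{ij}$, not its fluctuation about its mean at the scale $m^{-1/2}$; consequently the bound we obtain is of order $R\sqrt{n\ln(n/\eta)}$ rather than $R\sqrt{n\ln(n/\eta)/m}$, which is exactly the ``missing $m^{-1}$'' remarked on after Theorem~\ref{thm:MLqEvsMLqEASE}. We make no attempt to exploit the averaging over the $m$ graphs for $\hat{\bm{P}}^{(q)}$.

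First I would record the structure of $\bm{M}$. It is symmetric and hollow; its above-diagonal entries $\{M_{ij}\}_{i<j}$ are mutually independent, since $\hat{\bm{P}}^{(q)}_{ij}$ is a function of $\bm{A}^{(1)}_{ij},\dots,\bm{A}^{(m)}_{ij}$ only and these coordinate blocks are independent across distinct edges; and each $M_{ij}$ has mean zero by the definition $\bm{H}^{(q)}_{ij} = E[\hat{\bm{P}}^{(q)}_{ij}]$. Taking expectations in the sandwich gives $0 \le \bm{H}^{(q)}_{ij} \le \bm{H}^{(1)}_{ij} = (1-\epsilon)\bm{P}_{ij} + \epsilon\bm{C}_{ij} \le R$. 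For the tail control, each $\bm{A}^{(t)}_{ij} \sim (1-\epsilon)\mathrm{Exp}(\bm{P}_{ij}) + \epsilon\,\mathrm{Exp}(\bm{C}_{ij})$ is stochastically dominated by $\mathrm{Exp}(R)$, whence $P(\bar{\bm{A}}_{ij} > t) \le m\,e^{-t/R}$, and combining this with $|M_{ij}| \le \bar{\bm{A}}_{ij} + R$ yields a sub-exponential tail for $M_{ij}$; in particular the central moments satisfy $E|M_{ij}|^k \le (cR)^k k!$ for a universal constant $c$ (the exponential instance of Condition~1 of Section~\ref{section:extension}).

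With these per-entry moment bounds in hand, the Oliveira-type argument used for Theorem~\ref{thm:P1Diff} applies to $\bm{M}$: a symmetric matrix with independent, mean-zero, sub-exponential entries of scale $\Theta(R)$ has $\|\bm{M}\|_2 \le 8R\sqrt{2\,n\ln(n/\eta)}$ with probability at least $1-\eta$, for all $n \ge n_0(c)$ and all $\eta \in [n^{-c},1/2]$; one simply carries the constants through to the stated form. The main obstacle is precisely this pairing of ingredients. Without a closed form for the ML$q$E one has no a priori handle on $\|\hat{\bm{P}}^{(q)} - \bm{H}^{(q)}\|_2$, and the naive estimate $\|\hat{\bm{P}}^{(q)} - \bm{H}^{(q)}\|_2 \le \|\hat{\bm{P}}^{(q)}\|_2 + \|\bm{H}^{(q)}\|_2$ is worthless because $\bm{H}^{(q)}$ is (approximately) low rank with $\|\bm{H}^{(q)}\|_2 = \Theta(n) \gg \sqrt{n}$. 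It is exactly (i) the sandwich $\hat{\bm{P}}^{(q)}_{ij} \le \bar{\bm{A}}_{ij}$ from Lemma~\ref{lemma:LqlMLE}, which supplies the sub-exponential moments, together with (ii) the independence and zero mean of the entries of $\bm{M}$, which activates the matrix-Bernstein/Oliveira bound, that make the estimate go through; the remainder is routine constant-tracking.
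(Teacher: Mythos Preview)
Your proposal is correct and follows essentially the same route as the paper: use the sandwich $0 \le \hat{\bm{P}}^{(q)}_{ij} \le \hat{\bm{P}}^{(1)}_{ij} = \bar{\bm{A}}_{ij}$ from Lemma~\ref{lemma:LqlMLE} (together with $0 \le \bm{H}^{(q)}_{ij} \le R$) to obtain $E|M_{ij}|^k \le (\mathrm{const}\cdot R)^k k!$, then feed these per-entry moment bounds into the same matrix-Bernstein/Oliveira machinery as in Theorem~\ref{thm:P1Diff}. The paper reaches the moment bound via the slightly different chain $|M_{ij}| \le 2\bigl(|\hat{\bm{P}}^{(1)}_{ij} - \bm{H}^{(1)}_{ij}| + \bm{H}^{(1)}_{ij}\bigr)$ (see \eqref{eqn:expectpdiffqpowerk}), but this is a cosmetic difference and your direct bound $|M_{ij}| \le \bar{\bm{A}}_{ij} + R$ works just as well.
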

\begin{proof}
Similar to the proof of Theorem \ref{thm:P1Diff}.

By Lemma \ref{lemma:LqlMLE} we have
\begin{align*}
	\left| \hat{\bm{P}}^{(q)}_{ij} - \bm{H}^{(q)}_{ij} \right|
    = & \left| \hat{\bm{P}}^{(q)}_{ij} - \hat{\bm{P}}^{(1)}_{ij} + \hat{\bm{P}}^{(1)}_{ij} - \bm{H}^{(1)}_{ij} + \bm{H}^{(1)}_{ij} - \bm{H}^{(q)}_{ij} \right| \\
    \le & \left| \hat{\bm{P}}^{(q)}_{ij} - \hat{\bm{P}}^{(1)}_{ij} \right| + \left| \hat{\bm{P}}^{(1)}_{ij} - \bm{H}^{(1)}_{ij} \right| + \left| \bm{H}^{(1)}_{ij} - \bm{H}^{(q)}_{ij} \right| \\
    \le & \hat{\bm{P}}^{(1)}_{ij} + \left| \hat{\bm{P}}^{(1)}_{ij} - \bm{H}^{(1)}_{ij} \right| +  \bm{H}^{(1)}_{ij} \\
    \le & 2 \left( \left| \hat{\bm{P}}^{(1)}_{ij} - \bm{H}^{(1)}_{ij} \right| + \bm{H}^{(1)}_{ij} \right).
\end{align*}
Also,
\begin{align*}
	E[(\hat{\bm{P}}^{(q)}_{ij} - \bm{H}^{(q)}_{ij})^k]
    \le & E\left[\left|\hat{\bm{P}}^{(q)}_{ij} - \bm{H}^{(q)}_{ij}\right|^k\right] \\
    \le & 2^k E \left[ \left( \left| \hat{\bm{P}}^{(1)}_{ij} - \bm{H}^{(1)}_{ij} \right| + \bm{H}^{(1)}_{ij} \right)^k \right] \\
    \le & 2^k \sum_{s = 0}^k \binom{k}{s} E \left[ \left| \hat{\bm{P}}^{(1)}_{ij} - \bm{H}^{(1)}_{ij} \right|^s \right] \left( \bm{H}^{(1)}_{ij} \right)^{k-s} \\
    \le & 2^k \sum_{s = 0}^k \binom{k}{s} R^s s! \left( \bm{H}^{(1)}_{ij} \right)^{k-s} \\
    \le & 2^k k! \sum_{s = 0}^k \binom{k}{s} R^s \left( \bm{H}^{(1)}_{ij} \right)^{k-s} \\
    = & 2^k k! \left( R + \bm{H}^{(1)}_{ij} \right)^k \\
    \le & 2^{2k} k! R^k.
    \stepcounter{equation}\tag{\theequation}\label{eqn:expectpdiffqpowerk}
\end{align*}
Therefore we have
\[
	P \left( \| \hat{\bm{P}}^{(q)} - \bm{H}^{(q)} \| \ge t \right)
    \le n \exp \left( - \frac{t^2/2}{32 R^2 n + R t} \right).
\]

Now let $c > 0$ be given and assume $n^{-c} \le \eta \le 1/2$. Then there exists a $n_0(c)$ independent of $n$, $\bm{P}$, $\bm{C}$ and $\epsilon$ such that whenever $n > n_0(c)$,
\[
	t = 8 R \sqrt{2 n \ln(n/\eta)} \le 32 R n.
\]

Plugging this $t$ into the equation above, we get
\[
	P(\| \hat{\bm{P}}^{(q)} - \bm{H}^{(q)} \| \ge 8 R \sqrt{2 n \ln(n/\eta)})
    \le n \exp\left(-\frac{t^2}{64 R^2 n}\right) = \eta.
\]
\end{proof}

As we define $\bm{H}^{(q)} = E[\hat{\bm{P}}^{(q)}]$, let $d^{(q)} = \mathrm{rank}(\bm{H}^{(q)})$ be the dimension in which we are going to embed $\hat{\bm{P}}^{(q)}$. Notice that it is less than or equal to $K\times K'$ based on the SBM assumption. Then we can define $\bm{H}^{(q)} = \bm{Z} \bm{Z}^{\top}$ where $\bm{Z} \in \mathbb{R}^{n \times d^{(q)}}$.

For simplicity, from now on, we will use $\hat{\bm{P}}$ to represent $\hat{\bm{P}}^{(q)}$, use $\bm{H}$ to represent $\bm{H}^{(q)}$ and use $k$ to represent the dimension $d^{(q)}$ we are going to embed. Assume $\bm{H} = \bm{U} \bm{S} \bm{U}^{\top} = \bm{Z} \bm{Z}^{\top}$, where $\bm{Z} = [\bm{Z}_1, \cdots, \bm{Z}_n]^{\top}$ is a $n$-by-$k$ matrix. Then our estimate for $\bm{Z}$ up to rotation is $\hat{\bm{Z}} = \hat{\bm{U}} \hat{\bm{S}}^{1/2}$, where $\hat{\bm{U}} \hat{\bm{S}} \hat{\bm{U}}^{\top}$ is the rank-$d$ spectral decomposition of $|\hat{\bm{P}}| = (\hat{\bm{P}}^{\top} \hat{\bm{P}})^{1/2}$.

Furthermore, we assume that the second moment matrix $E[\bm{Z}_1 \bm{Z}_1^{\top}]$ is rank $k$ and has distinct eigenvalues $\lambda_i(E[\bm{Z}_1 \bm{Z}_1^{\top}])$. In particular, we assume that there exists $\delta > 0$ such that
\[
	\delta < \lambda_k(E[\bm{Z}_1 \bm{Z}_1^{\top}])
\]

\begin{lemma}
\label{lemma:eigSShat}
Under the above assumptions, $\lambda_i(\bm{H}) = \Theta(n)$ with high probability when $i \le k$, i.e.\ the largest $k$ eigenvalues of $\bm{H}$ is of order $n$. Moreover, we have $\| \bm{S} \|_2 = \Theta(n)$ and $\| \hat{\bm{S}} \|_2 = \Theta(n)$ with high probability.
\end{lemma}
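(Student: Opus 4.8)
The plan is to adapt the argument behind Proposition~4.3 of \citep{sussman2014consistent}; the companion statement Lemma~\ref{lemma:eigSShatL1} for $\hat{\bm{P}}^{(1)}$ is proved the same way. The key observation is that the nonzero eigenvalues of $\bm{H} = \bm{Z}\bm{Z}^{\top}$ are exactly the nonzero eigenvalues of the Gram matrix $\bm{Z}^{\top}\bm{Z} = \sum_{i=1}^{n} \bm{Z}_i \bm{Z}_i^{\top}$. Under the WSBM setup the entry $\bm{H}_{ij} = E[\hat{\bm{P}}^{(q)}_{ij}]$ depends on $i,j$ only through the (joint) block labels, so $\bm{H}$ has block structure with a constant number (at most $KK'$) of distinct rows up to permutation; consequently $\bm{Z}_i$ may be taken to be a deterministic function of the block label of $i$, the labels are i.i.d.\ categorical, and hence the $\bm{Z}_i$ are i.i.d.\ taking finitely many values. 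Moreover $\|\bm{Z}_i\|_2^2 = \bm{H}_{ii} = E[\hat{\bm{P}}^{(q)}_{ii}] \le E[\hat{\bm{P}}^{(1)}_{ii}] \le R$ by Lemma~\ref{lemma:LqlMLE}, so $\|\bm{Z}_i\|_2 \le \sqrt{R}$ uniformly.

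First I would establish $\lambda_i(\bm{H}) = \Theta(n)$ for $i \le k$. Writing $n_\ell$ for the number of vertices in block $\ell$ and $\bm{\nu}_\ell$ for the corresponding common value of $\bm{Z}_i$, a Chernoff bound plus a union bound over the finitely many blocks gives $|n_\ell/n - \bm{\rho}_\ell| = O(\sqrt{\log n / n})$ with high probability, so $\tfrac{1}{n}\bm{Z}^{\top}\bm{Z} = \sum_{\ell} \tfrac{n_\ell}{n}\bm{\nu}_\ell \bm{\nu}_\ell^{\top}$ converges with high probability to $E[\bm{Z}_1\bm{Z}_1^{\top}] = \sum_{\ell}\bm{\rho}_\ell \bm{\nu}_\ell\bm{\nu}_\ell^{\top}$ in spectral norm at rate $O(\sqrt{\log n/n})$. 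By Weyl's inequality $\lambda_i(\tfrac1n\bm{Z}^{\top}\bm{Z}) = \lambda_i(E[\bm{Z}_1\bm{Z}_1^{\top}]) + o(1)$ for each $i$; since by hypothesis $\lambda_k(E[\bm{Z}_1\bm{Z}_1^{\top}]) > \delta$ and $\lambda_1(E[\bm{Z}_1\bm{Z}_1^{\top}]) \le R$, we get $\delta/2 \le \lambda_i(\tfrac1n\bm{Z}^{\top}\bm{Z}) \le 2R$ for all $i \le k$ once $n$ is large, i.e.\ $\lambda_i(\bm{H}) = \Theta(n)$ with high probability. In particular $\|\bm{S}\|_2 = \lambda_1(\bm{H}) = \Theta(n)$.

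For $\|\hat{\bm{S}}\|_2$, recall $\hat{\bm{S}}$ collects the $k$ largest eigenvalues of $|\hat{\bm{P}}| = (\hat{\bm{P}}^{\top}\hat{\bm{P}})^{1/2}$, i.e.\ the $k$ largest singular values of $\hat{\bm{P}}$. Theorem~\ref{thm:PqDiff} gives $\|\hat{\bm{P}} - \bm{H}\|_2 = O(\sqrt{n\log n})$ with high probability (for the $\hat{\bm{P}}^{(1)}$ case of Lemma~\ref{lemma:eigSShatL1} one invokes Theorem~\ref{thm:P1Diff} identically). Since $\bm{H}$ is positive semidefinite, $\sigma_i(\bm{H}) = \lambda_i(\bm{H})$, and Weyl's inequality for singular values yields $|\sigma_i(\hat{\bm{P}}) - \lambda_i(\bm{H})| \le \|\hat{\bm{P}} - \bm{H}\|_2 = O(\sqrt{n\log n}) = o(n)$ for every $i$. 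Combining with the previous step, $\sigma_i(\hat{\bm{P}}) = \Theta(n)$ for $i \le k$, hence $\|\hat{\bm{S}}\|_2 = \sigma_1(\hat{\bm{P}}) = \Theta(n)$ with high probability.

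The only genuinely delicate point is the lower bound $\lambda_k(\bm{H}) = \Omega(n)$, which is exactly where the non-degeneracy assumption $\lambda_k(E[\bm{Z}_1\bm{Z}_1^{\top}]) > \delta$ enters, together with the fact that the fluctuations of $\tfrac1n\bm{Z}^{\top}\bm{Z}$ around its mean are $o(1)$; everything else is standard Weyl-type perturbation bookkeeping plus the concentration bounds already proven in Theorems~\ref{thm:P1Diff} and~\ref{thm:PqDiff}. I would also take care, when transferring the i.i.d.-and-bounded properties of the rows $\bm{Z}_i$ from $\bm{H}^{(1)}$ to $\bm{H}^{(q)}$, to use $\hat{\bm{P}}^{(q)}_{ij} \le \hat{\bm{P}}^{(1)}_{ij}$ (Lemma~\ref{lemma:LqlMLE}) so that the relevant block-parameter matrix stays bounded by $R$ and of rank bounded independently of $n$.
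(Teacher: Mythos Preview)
Your proposal is correct and follows exactly the route the paper takes: the paper's own proof is nothing more than ``Exactly the same as proof for Lemma~\ref{lemma:eigSShatL1},'' and Lemma~\ref{lemma:eigSShatL1} in turn is stated as ``an extended version of Proposition~4.3 in \citep{sussman2014consistent}.'' Your write-up is in fact considerably more explicit than the paper---you spell out the passage $\lambda_i(\bm{Z}\bm{Z}^{\top}) = \lambda_i(\bm{Z}^{\top}\bm{Z})$, the concentration of block proportions, the Weyl perturbation step, and the use of Theorem~\ref{thm:PqDiff} to transfer the $\Theta(n)$ bound from $\bm{S}$ to $\hat{\bm{S}}$---whereas the paper simply points to Sussman and moves on.
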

\begin{proof}
Exactly the same as proof for Lemma \ref{lemma:eigSShatL1}.
\end{proof}

\begin{lemma}
\label{lemma:AlmostOrthogonal}
Let $\bm{W}_1 \Sigma \bm{W}_2^{\top}$ be the singular value decomposition of $\bm{U}^{\top} \hat{\bm{U}}$. Then for sufficiently large $n$,
\[
	\| \bm{U}^{\top} \hat{\bm{U}} - \bm{W}_1 \bm{W}_2^{\top} \|_F = O(n^{-1} \log n)
\]
with high probability.
\end{lemma}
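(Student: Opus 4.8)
The plan is to run the same $\sin\Theta$ argument that underlies Lemma~\ref{lemma:AlmostOrthogonalL1}, with the only substantive change being that the ML$q$E concentration bound of Theorem~\ref{thm:PqDiff} replaces the MLE bound of Theorem~\ref{thm:P1Diff}. The reason the stated rate is $O(n^{-1}\log n)$ rather than the $O(m^{-1}n^{-1}\log n)$ of the $L_1$ version is precisely that Theorem~\ref{thm:PqDiff} gives $\|\hat{\bm{P}}^{(q)}-\bm{H}^{(q)}\|_2 = O(\sqrt{n\log n})$ with high probability, i.e.\ it lacks the $m^{-1/2}$ factor present in Theorem~\ref{thm:P1Diff}.

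First I would use orthogonal invariance of the Frobenius norm: writing $\bm{U}^{\top}\hat{\bm{U}} = \bm{W}_1\Sigma\bm{W}_2^{\top}$, one has
\[
\|\bm{U}^{\top}\hat{\bm{U}} - \bm{W}_1\bm{W}_2^{\top}\|_F = \|\Sigma - \bm{I}_k\|_F ,
\]
so it suffices to show the singular values of $\bm{U}^{\top}\hat{\bm{U}}$ are close to $1$. These singular values are the cosines $\cos\theta_i$ of the principal angles $\theta_i$ between $\mathrm{col}(\bm{U})$ and $\mathrm{col}(\hat{\bm{U}})$, and since each $\cos\theta_i\in[0,1]$ we have $1-\cos\theta_i \le 1-\cos^2\theta_i = \sin^2\theta_i$. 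Hence
\[
\|\Sigma-\bm{I}_k\|_F^2 = \sum_{i=1}^k (1-\cos\theta_i)^2 \le \Big(\sum_{i=1}^k \sin^2\theta_i\Big)^2 = \|\sin\Theta\|_F^4 ,
\]
so $\|\bm{U}^{\top}\hat{\bm{U}} - \bm{W}_1\bm{W}_2^{\top}\|_F \le \|\sin\Theta\|_F^2$. This quadratic dependence on the subspace perturbation is what upgrades an $n^{-1/2}$ perturbation into an $n^{-1}$ bound.

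Next I would control $\|\sin\Theta\|$ via Davis--Kahan. Since $\bm{H}=\bm{H}^{(q)}$ is positive semidefinite of rank $k=d^{(q)}$ with $\lambda_k(\bm{H})=\Theta(n)$ by Lemma~\ref{lemma:eigSShat}, the eigengap separating its top-$k$ eigenspace from the rest is $\lambda_k(\bm{H})-\lambda_{k+1}(\bm{H}) = \lambda_k(\bm{H}) = \Theta(n)$. Applying Theorem~\ref{thm:PqDiff} with $\eta = n^{-c}$ gives $\|\hat{\bm{P}}^{(q)}-\bm{H}^{(q)}\|_2 = O(\sqrt{n\log n}) = o(n)$ with high probability, so by Weyl's inequality the top $k$ eigenvalues of $\hat{\bm{P}}^{(q)}$ are positive and of order $n$ while the remaining $n-k$ are $o(n)$ in magnitude; in particular passing to $|\hat{\bm{P}}^{(q)}| = (\hat{\bm{P}}^{(q)\top}\hat{\bm{P}}^{(q)})^{1/2}$ neither reorders the top block nor perturbs it by more than $o(n)$, so the Davis--Kahan $\sin\Theta$ theorem yields
\[
\|\sin\Theta\|_2 \le \frac{C\,\|\hat{\bm{P}}^{(q)}-\bm{H}^{(q)}\|_2}{\lambda_k(\bm{H})} = O\!\left(\sqrt{\tfrac{\log n}{n}}\right)
\]
with high probability. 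Because $k=d^{(q)} \le KK'$ is bounded, $\|\sin\Theta\|_F \le \sqrt{k}\,\|\sin\Theta\|_2 = O(\sqrt{\log n/n})$, and combining with the previous display gives $\|\bm{U}^{\top}\hat{\bm{U}} - \bm{W}_1\bm{W}_2^{\top}\|_F = O(n^{-1}\log n)$, as claimed.

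The only step that needs genuine care, as opposed to routine $\sin\Theta$ bookkeeping, is the Weyl/Davis--Kahan argument through $|\hat{\bm{P}}^{(q)}|$: one must verify that with high probability none of the bottom $n-k$ eigenvalues of $\hat{\bm{P}}^{(q)}$ can grow to be comparable in magnitude with the top $k$, so that the absolute value does not destroy the eigengap. This is exactly the place where Theorem~\ref{thm:PqDiff} and Lemma~\ref{lemma:eigSShat} are used in tandem; everything else is identical to the proof of Lemma~\ref{lemma:AlmostOrthogonalL1} and follows \citep{lyzinski2017community}.
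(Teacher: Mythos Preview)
Your proposal is correct and follows essentially the same approach as the paper, which simply states ``Exactly the same as proof for Lemma~\ref{lemma:AlmostOrthogonalL1}'' and in turn defers to the $\sin\Theta$/Davis--Kahan argument of \citep{lyzinski2017community}. Your explicit write-up of that argument---reducing to $\|\Sigma-\bm{I}_k\|_F\le\|\sin\Theta\|_F^2$ and then bounding $\|\sin\Theta\|$ via Theorem~\ref{thm:PqDiff} and Lemma~\ref{lemma:eigSShat}---is exactly the intended route.
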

\begin{proof}
Exactly the same as proof for Lemma \ref{lemma:AlmostOrthogonalL1}.
\end{proof}

We will denote the orthogonal matrix $\bm{W}_1 \bm{W}_2^{\top}$ by $\bm{W}^*$.
\begin{lemma}
\label{lemma:exchange}
For sufficiently large $n$,
\[
	\| \bm{W}^* \hat{\bm{S}} - \bm{S} \bm{W}^* \|_F = O(\log n),
\]
\[
	\|\bm{W}^* \hat{\bm{S}}^{1/2} - \bm{S}^{1/2} \bm{W}^* \|_F = O(n^{-1/2} \log n)
\]
and
\[
	\| \bm{W}^* \hat{\bm{S}}^{-1/2} - \bm{S}^{-1/2} \bm{W}^* \|_F = O(n^{-3/2} \log n)
\]
with high probability.
\end{lemma}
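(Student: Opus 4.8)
This is the maximum-$L_q$-likelihood analogue of Lemma~\ref{lemma:exchangeL1}, so the plan is to rerun that lemma's matrix-perturbation argument (in the spirit of \citep{lyzinski2017community}), tracking the single change that for $\hat{\bm{P}}^{(q)}$ the relevant operator-norm bound is Theorem~\ref{thm:PqDiff}, namely $\|\hat{\bm{P}}^{(q)} - \bm{H}^{(q)}\|_2 = O(\sqrt{n\log n})$ with high probability, which is larger by a factor $m^{1/2}$ than its $L_1$ counterpart Theorem~\ref{thm:P1Diff}; this is exactly why the three estimates here lose the $m^{-1/2}$ present in Lemma~\ref{lemma:exchangeL1}. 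Writing $\hat{\bm{P}} = \hat{\bm{P}}^{(q)}$, $\bm{H} = \bm{H}^{(q)}$, and $\bm{R} = \bm{U}^\top\hat{\bm{U}}$, I would work throughout on the high-probability event on which Theorem~\ref{thm:PqDiff}, Lemma~\ref{lemma:eigSShat} and Lemma~\ref{lemma:AlmostOrthogonal} simultaneously hold; there $\|\bm{S}\|_2 = \|\hat{\bm{S}}\|_2 = \Theta(n)$, the $k$-th eigenvalues of $\bm{H}$ and $\hat{\bm{P}}$ are $\Omega(n)$ (in particular positive, so $\hat{\bm{P}}\hat{\bm{U}} = \hat{\bm{U}}\hat{\bm{S}}$ and $\bm{H}\bm{U} = \bm{U}\bm{S}$), and $\|\bm{R} - \bm{W}^*\|_F = O(n^{-1}\log n)$.

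The first and central step is the ``commutator'' bound $\|\bm{W}^*\hat{\bm{S}} - \bm{S}\bm{W}^*\|_F = O(\log n)$, from which the other two follow mechanically. I would start from the exact identity
\[
\bm{R}\hat{\bm{S}} - \bm{S}\bm{R} = \bm{U}^\top\hat{\bm{U}}\hat{\bm{S}} - \bm{S}\bm{U}^\top\hat{\bm{U}} = \bm{U}^\top\bigl(\hat{\bm{P}} - \bm{H}\bigr)\hat{\bm{U}},
\]
split $\hat{\bm{U}} = \bm{U}\bm{R} + \bm{U}_\perp\bm{R}_\perp$ with $\bm{U}_\perp$ an orthonormal basis of $\mathrm{range}(\bm{U})^\perp$, and bound the two pieces separately. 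The perpendicular piece uses the Davis--Kahan $\sin\Theta$ bound $\|\bm{R}_\perp\|_F = O(\|\hat{\bm{P}} - \bm{H}\|_2/\lambda_k(\bm{H})) = O(\sqrt{\log n/n})$, so that $\|\bm{U}^\top(\hat{\bm{P}} - \bm{H})\bm{U}_\perp\bm{R}_\perp\|_F \le \|\hat{\bm{P}} - \bm{H}\|_2\|\bm{R}_\perp\|_F = O(\log n)$. The along-$\bm{U}$ piece $\bm{U}^\top(\hat{\bm{P}} - \bm{H})\bm{U}\bm{R}$ is a matrix of fixed size $k\times k$, each of whose entries is a quadratic form in the (delocalized, $\|\bm{U}_i\|_2 = O(n^{-1/2})$) columns of $\bm{U}$ applied to the independent, mean-zero, uniformly sub-exponential entries of $\hat{\bm{P}} - \bm{H}$ (the factorial moment bound in the proof of Theorem~\ref{thm:PqDiff}); a Bernstein/Hanson--Wright-type inequality then gives each entry of order $O(\sqrt{\log n})$ with high probability, hence $\|\bm{U}^\top(\hat{\bm{P}} - \bm{H})\bm{U}\bm{R}\|_F = O(\sqrt{\log n})$. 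Combining, $\|\bm{R}\hat{\bm{S}} - \bm{S}\bm{R}\|_F = O(\log n)$, and replacing $\bm{R}$ by $\bm{W}^*$ costs only $\|\bm{R} - \bm{W}^*\|_F(\|\bm{S}\|_2 + \|\hat{\bm{S}}\|_2) = O(n^{-1}\log n)\cdot O(n) = O(\log n)$, proving the first estimate.

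For the square-root estimate I would set $\bm{M} = \bm{W}^*\hat{\bm{S}}^{1/2} - \bm{S}^{1/2}\bm{W}^*$ and use the Sylvester identity $\bm{S}^{1/2}\bm{M} + \bm{M}\hat{\bm{S}}^{1/2} = \bm{W}^*\hat{\bm{S}} - \bm{S}\bm{W}^*$; since $\bm{S}^{1/2}$ and $\hat{\bm{S}}^{1/2}$ are positive definite with smallest eigenvalue $\Omega(\sqrt{n})$, the operator $\bm{X}\mapsto\bm{S}^{1/2}\bm{X} + \bm{X}\hat{\bm{S}}^{1/2}$ has all eigenvalues $\ge \lambda_{\min}(\bm{S}^{1/2}) + \lambda_{\min}(\hat{\bm{S}}^{1/2}) = \Omega(\sqrt{n})$, so $\|\bm{M}\|_F = O(n^{-1/2}\log n)$. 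Then the inverse-square-root estimate follows from the algebraic identity $\bm{W}^*\hat{\bm{S}}^{-1/2} - \bm{S}^{-1/2}\bm{W}^* = -\bm{S}^{-1/2}\bm{M}\hat{\bm{S}}^{-1/2}$ together with $\|\bm{S}^{-1/2}\|_2 = \|\hat{\bm{S}}^{-1/2}\|_2 = O(n^{-1/2})$, giving $O(n^{-3/2}\log n)$; intersecting the finitely many events used preserves ``with high probability''.

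The hard part is the commutator estimate. The naive bound $\|\bm{R}\hat{\bm{S}} - \bm{S}\bm{R}\|_F \le \sqrt{k}\,\|\hat{\bm{P}} - \bm{H}\|_2 = O(\sqrt{n\log n})$ is far too weak, and squeezing it down to $O(\log n)$ genuinely requires both the Davis--Kahan improvement (so the part of $\hat{\bm{U}}$ orthogonal to $\mathrm{range}(\bm{U})$ is only $O(\sqrt{\log n/n})$) and eigenvector delocalization (so projecting the $O(\sqrt{n\log n})$ perturbation onto the fixed low-dimensional block spanned by $\bm{U}$ collapses it to $O(\sqrt{\log n})$). A secondary technicality is justifying $\hat{\bm{P}}\hat{\bm{U}} = \hat{\bm{U}}\hat{\bm{S}}$, which needs the top-$k$ eigenvalues of $\hat{\bm{P}}^{(q)}$ to be positive; this is automatic on the Lemma~\ref{lemma:eigSShat} event since they are $\Theta(n)$ while $\|\hat{\bm{P}}^{(q)} - \bm{H}^{(q)}\|_2 = O(\sqrt{n\log n}) = o(n)$. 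Otherwise the argument is line-for-line that of Lemma~\ref{lemma:exchangeL1} (and \citep{lyzinski2017community}), with each $m^{-1/2}$ there replaced by $1$ here.
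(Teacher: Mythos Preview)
Your proposal is correct and follows essentially the same route the paper intends: the paper's own ``proof'' is just ``Similar to the proof of Lemma~\ref{lemma:exchangeL1}'', which in turn defers to \citep{lyzinski2017community}, and what you have written is precisely that argument with Theorem~\ref{thm:PqDiff} substituted for Theorem~\ref{thm:P1Diff} (hence the loss of the $m^{-1/2}$ factor throughout). Your identification of the commutator identity $\bm{R}\hat{\bm{S}}-\bm{S}\bm{R}=\bm{U}^\top(\hat{\bm{P}}-\bm{H})\hat{\bm{U}}$, the Davis--Kahan plus delocalization split, and the Sylvester-equation passage to $\hat{\bm{S}}^{1/2}$ and $\hat{\bm{S}}^{-1/2}$ are all standard in this line of work and match what the cited reference does; the moment bound \eqref{eqn:expectpdiffqpowerk} is exactly what justifies your Bernstein step for the along-$\bm{U}$ block.
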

\begin{proof}
Similar to the proof of Lemma \ref{lemma:exchangeL1}.
\end{proof}

\begin{lemma}
\label{lemma:XhatDiffXWexpression}
There exists a rotation matrix $\bm{W}$ such that for sufficiently large $n$,
\[
	\|\hat{\bm{Z}} - \bm{Z} \bm{W}\|_F = \| (\hat{\bm{P}} - \bm{H}) \bm{U} \bm{S}^{-1/2} \|_F + O(n^{-1/2} (\log n)^{3/2})
\]
with high probability.
\end{lemma}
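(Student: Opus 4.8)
The plan is to mirror the proof of Lemma~\ref{lemma:XhatDiffXWexpressionL1}, substituting the concentration bound $\|\hat{\bm{P}}^{(q)} - \bm{H}^{(q)}\|_2 = O(\sqrt{n\ln n})$ from Theorem~\ref{thm:PqDiff} (note the absence of the $m^{-1/2}$ factor present in Theorem~\ref{thm:P1Diff}) together with the ML$q$E versions of the spectral perturbation results, namely Lemma~\ref{lemma:eigSShat}, Lemma~\ref{lemma:AlmostOrthogonal}, and Lemma~\ref{lemma:exchange}. Throughout, write $\hat{\bm{P}} = \hat{\bm{P}}^{(q)}$, $\bm{H} = \bm{H}^{(q)} = \bm{U} \bm{S} \bm{U}^{\top} = \bm{Z} \bm{Z}^{\top}$ with (WLOG, since $\bm{Z}$ is determined only up to right orthogonal multiplication) $\bm{Z} = \bm{U} \bm{S}^{1/2}$, and recall $\hat{\bm{Z}} = \hat{\bm{U}} \hat{\bm{S}}^{1/2}$. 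As a preliminary step I would note that, by Lemma~\ref{lemma:eigSShat}, with high probability the algebraically largest $k$ eigenvalues of $\hat{\bm{P}}$ are positive and of order $n$ while all remaining eigenvalues have magnitude $O(\sqrt{n\ln n})$; hence with high probability the rank-$k$ decomposition of $|\hat{\bm{P}}|$ uses exactly the top-$k$ eigenpairs of $\hat{\bm{P}}$, so that $\hat{\bm{P}}\hat{\bm{U}} = \hat{\bm{U}}\hat{\bm{S}}$ and therefore $\hat{\bm{Z}} = \hat{\bm{P}} \hat{\bm{U}} \hat{\bm{S}}^{-1/2}$.

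Next I would split $\hat{\bm{Z}} = \hat{\bm{P}} \hat{\bm{U}} \hat{\bm{S}}^{-1/2} = (\hat{\bm{P}} - \bm{H}) \hat{\bm{U}} \hat{\bm{S}}^{-1/2} + \bm{H} \hat{\bm{U}} \hat{\bm{S}}^{-1/2}$ and control the two pieces. For the ``signal'' piece, using $\bm{H} = \bm{U}\bm{S}\bm{U}^{\top}$, writing $\bm{U}^{\top}\hat{\bm{U}} = \bm{W}^* + \bm{E}_1$ with $\|\bm{E}_1\|_F = O(n^{-1}\ln n)$ (Lemma~\ref{lemma:AlmostOrthogonal}) and $\bm{W}^*\hat{\bm{S}}^{-1/2} = \bm{S}^{-1/2}\bm{W}^* + \bm{E}_2$ with $\|\bm{E}_2\|_F = O(n^{-3/2}\ln n)$ (Lemma~\ref{lemma:exchange}), and invoking $\|\bm{S}\|_2 = \Theta(n)$, $\|\hat{\bm{S}}^{-1/2}\|_2 = \Theta(n^{-1/2})$ (Lemma~\ref{lemma:eigSShat}), one gets $\bm{H}\hat{\bm{U}}\hat{\bm{S}}^{-1/2} = \bm{U}\bm{S}^{1/2}\bm{W}^* + \bm{R}_1 = \bm{Z}\bm{W}^* + \bm{R}_1$ with $\|\bm{R}_1\|_F = O(n^{-1/2}\ln n)$. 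For the ``noise'' piece I would insert $\hat{\bm{U}} = \bm{U}\bm{U}^{\top}\hat{\bm{U}} + (\bm{I} - \bm{U}\bm{U}^{\top})\hat{\bm{U}}$: the projection term gives $(\hat{\bm{P}} - \bm{H})\bm{U}(\bm{W}^* + \bm{E}_1)\hat{\bm{S}}^{-1/2} = (\hat{\bm{P}} - \bm{H})\bm{U}\bm{S}^{-1/2}\bm{W}^* + \bm{R}_2$, where $\|\bm{R}_2\|_F$ is bounded by $\|\hat{\bm{P}}-\bm{H}\|_2\|\bm{E}_2\|_F + \|\hat{\bm{P}}-\bm{H}\|_2\|\bm{E}_1\|_F\|\hat{\bm{S}}^{-1/2}\|_2 = O(n^{-1}(\ln n)^{3/2})$; for the orthogonal-complement term I would use the identity $(\bm{I} - \bm{U}\bm{U}^{\top})\hat{\bm{U}} = (\bm{I} - \bm{U}\bm{U}^{\top})(\hat{\bm{P}} - \bm{H})\hat{\bm{U}}\hat{\bm{S}}^{-1}$ (a Davis--Kahan-type identity, valid since $(\bm{I}-\bm{U}\bm{U}^{\top})\bm{H} = 0$) together with $\|\hat{\bm{S}}^{-1}\|_2 = \Theta(n^{-1})$ to obtain $\|(\hat{\bm{P}} - \bm{H})(\bm{I} - \bm{U}\bm{U}^{\top})\hat{\bm{U}}\hat{\bm{S}}^{-1/2}\|_F \le \sqrt{k}\,\|\hat{\bm{P}}-\bm{H}\|_2^2\,\|\hat{\bm{S}}^{-3/2}\|_2 = O(n^{-1/2}\ln n)$.

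Combining the estimates, I would conclude that with high probability $\hat{\bm{Z}} = \bm{Z}\bm{W}^* + (\hat{\bm{P}} - \bm{H})\bm{U}\bm{S}^{-1/2}\bm{W}^* + \bm{R}$ with $\|\bm{R}\|_F = O(n^{-1/2}(\ln n)^{3/2})$; taking $\bm{W} = \bm{W}^*$, applying the triangle inequality, and using the orthogonal invariance of $\|\cdot\|_F$ under right multiplication by $\bm{W}^*$ yields the claimed identity $\|\hat{\bm{Z}} - \bm{Z}\bm{W}\|_F = \|(\hat{\bm{P}} - \bm{H})\bm{U}\bm{S}^{-1/2}\|_F + O(n^{-1/2}(\ln n)^{3/2})$. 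The main technical burden, and the step most prone to bookkeeping slips, is verifying that \emph{every} remainder term is $O(n^{-1/2}(\ln n)^{3/2})$ or smaller: this is precisely where the weaker ML$q$E concentration (order $\sqrt{n\ln n}$ rather than $\sqrt{n\ln n/m}$) costs the factor $m^{1/2}$ relative to the MLE case, and one must confirm that no remainder picks up an extra power of $n$ from the $\Theta(n^{-1})$ blow-up of $\|\hat{\bm{S}}^{-1}\|_2$. A secondary point requiring care is the preliminary magnitude argument that makes the identity $\hat{\bm{Z}} = \hat{\bm{P}}\hat{\bm{U}}\hat{\bm{S}}^{-1/2}$ legitimate, i.e.\ that the rank-$k$ decomposition of $|\hat{\bm{P}}|$ coincides with high probability with the top-$k$ eigendecomposition of $\hat{\bm{P}}$.
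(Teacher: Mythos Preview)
Your proposal is correct and matches the paper's approach exactly: the paper's proof of this lemma is literally the one line ``Exactly the same as proof for Lemma~\ref{lemma:XhatDiffXWexpressionL1},'' i.e.\ mirror the MLE argument with Theorem~\ref{thm:PqDiff} and Lemmas~\ref{lemma:eigSShat}--\ref{lemma:exchange} in place of their $q=1$ analogues. Your detailed decomposition (signal/noise split, projection onto $\bm{U}\bm{U}^{\top}$ and its complement, the identity $(\bm{I}-\bm{U}\bm{U}^{\top})\hat{\bm{U}} = (\bm{I}-\bm{U}\bm{U}^{\top})(\hat{\bm{P}}-\bm{H})\hat{\bm{U}}\hat{\bm{S}}^{-1}$) is the standard route underlying Lemma~\ref{lemma:XhatDiffXWexpressionL1} itself, and your bookkeeping of the orders---in particular the observation that the weaker concentration $\|\hat{\bm{P}}^{(q)}-\bm{H}^{(q)}\|_2 = O(\sqrt{n\ln n})$ is precisely what costs the factor $m^{1/2}$ relative to the MLE case---is accurate.
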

\begin{proof}
Exactly the same as proof for Lemma \ref{lemma:XhatDiffXWexpressionL1}.
\end{proof}

\begin{theorem}
\label{thm:XhatDiffXW}
There exists a rotation matrix $\bm{W}$ such that for sufficiently large $n$,
\[
	\max_i \| \hat{\bm{Z}}_i - \bm{W} \bm{Z}_i \|_2 = O(n^{-1/2} (\log n)^{3/2})
\]
with high probability.
\end{theorem}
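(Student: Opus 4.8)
The plan is to mirror the proof of Theorem~\ref{thm:XhatDiffXWL1}, substituting for each perturbation estimate used there its ML$q$E analogue already established above: the spectral concentration $\|\hat{\bm{P}}-\bm{H}\|_2 = O(\sqrt{n\log n})$ with high probability (Theorem~\ref{thm:PqDiff} applied with $\eta = n^{-c}$), the eigenvalue bounds $\|\bm{S}\|_2,\|\hat{\bm{S}}\|_2 = \Theta(n)$ (Lemma~\ref{lemma:eigSShat}), the near-orthogonality of $\bm{U}^{\top}\hat{\bm{U}}$ (Lemma~\ref{lemma:AlmostOrthogonal}), the commutation estimates for $\bm{W}^*$ against powers of $\hat{\bm{S}}$ (Lemma~\ref{lemma:exchange}), and the Frobenius expansion $\|\hat{\bm{Z}}-\bm{Z}\bm{W}\|_F = \|(\hat{\bm{P}}-\bm{H})\bm{U}\bm{S}^{-1/2}\|_F + O(n^{-1/2}(\log n)^{3/2})$ (Lemma~\ref{lemma:XhatDiffXWexpression}). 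Because none of these carries a factor $m^{-1/2}$ (unlike their counterparts in Section~\ref{section:pf_MLEvsMLEASE1}), the conclusion here will be exactly a factor $m^{1/2}$ weaker than Theorem~\ref{thm:XhatDiffXWL1}.

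Write $\bm{E}:=\hat{\bm{P}}-\bm{H}$ and let $\bm{W}$ be the rotation of Lemma~\ref{lemma:XhatDiffXWexpression}. First I would expand $\hat{\bm{Z}}-\bm{Z}\bm{W}$, exactly as in the MLE case, into a leading term $\bm{E}\bm{U}\bm{S}^{-1/2}\bm{W}$ plus residual matrices coming from (i) replacing $\hat{\bm{U}}$ by $\bm{U}\bm{W}^*$, (ii) the commutation errors of Lemma~\ref{lemma:exchange}, and (iii) the component of $\hat{\bm{P}}\hat{\bm{U}}\hat{\bm{S}}^{-1/2}$ lying in the orthogonal complement of the column space of $\bm{U}$. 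Each residual is either of the form $\bm{Z}\bm{M}$ with $\|\bm{M}\|_2 = o(n^{-1/2}(\log n)^{3/2})$, or has operator norm $O(n^{-1/2}(\log n)^{3/2})$; in the first case $\|(\bm{Z}\bm{M})_i\|_2 \le \|\bm{Z}_i\|_2\|\bm{M}\|_2 = O(1)\cdot\|\bm{M}\|_2$ since $\|\bm{Z}_i\|_2^2 = \bm{H}_{ii} \le R$, and in the second case the operator norm dominates the $2\to\infty$ norm. Hence $\max_i\|(\text{residual})_i\|_2 = O(n^{-1/2}(\log n)^{3/2})$.

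The one genuinely new step is the row-wise control of the leading term. Since $\bm{W}$ is orthogonal, $\|(\bm{E}\bm{U}\bm{S}^{-1/2}\bm{W})_i\|_2 \le \|\bm{S}^{-1/2}\|_2\,\|\bm{E}_{i\cdot}\bm{U}\|_2 = O(n^{-1/2})\|\bm{E}_{i\cdot}\bm{U}\|_2$ by Lemma~\ref{lemma:eigSShat}. Conditioning on the block assignment $\bm{\tau}$ makes $\bm{U},\bm{S}$ deterministic and the off-diagonal entries $\{\bm{E}_{ij}\}_{j\ne i}$ independent, mean zero, with the sub-exponential moment bound $E[\bm{E}_{ij}^k]\le 2^{2k}k!R^k$ from \eqref{eqn:expectpdiffqpowerk}. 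For each column $\bm{u}_\ell$ of $\bm{U}$, $\bm{E}_{i\cdot}\bm{u}_\ell = \sum_{j\ne i}\bm{U}_{j\ell}\bm{E}_{ij}$ is a weighted sum of independent sub-exponential variables with variance proxy $O(\sum_j\bm{U}_{j\ell}^2) = O(1)$ and scale $O(\|\bm{u}_\ell\|_\infty)$; writing $\bm{U}=\bm{Z}\bm{W}'\bm{S}^{-1/2}$ for an orthogonal $\bm{W}'$ gives $\|\bm{u}_{j\cdot}\|_2 = O(\|\bm{Z}_j\|_2\|\bm{S}^{-1/2}\|_2) = O(n^{-1/2})$, so a Bernstein inequality for sub-exponential sums gives, for $t\asymp\sqrt{\log n}$, $\Pr(|\bm{E}_{i\cdot}\bm{u}_\ell|>t) = O(n^{-c})$. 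A union bound over $\ell\le k = O(1)$ and $1\le i\le n$ yields $\max_i\|\bm{E}_{i\cdot}\bm{U}\|_2 = O(\sqrt{\log n})$ with high probability, hence $\max_i\|(\bm{E}\bm{U}\bm{S}^{-1/2}\bm{W})_i\|_2 = O(n^{-1/2}\sqrt{\log n})$. Combining with the residual bound and intersecting the finitely many high-probability events used, $\max_i\|\hat{\bm{Z}}_i-\bm{W}\bm{Z}_i\|_2 = O(n^{-1/2}(\log n)^{3/2})$ with high probability, the weaker exponent being inherited from the residuals.

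The main obstacle is exactly the input to this last step: the ML$q$E has no closed form under the mixture model, so the moments of $\bm{E}_{ij}=\hat{\bm{P}}^{(q)}_{ij}-\bm{H}^{(q)}_{ij}$ cannot be computed directly and one must route through the comparison $|\hat{\bm{P}}^{(q)}_{ij}-\bm{H}^{(q)}_{ij}|\le 2(|\hat{\bm{P}}^{(1)}_{ij}-\bm{H}^{(1)}_{ij}|+\bm{H}^{(1)}_{ij})$ from the proof of Theorem~\ref{thm:PqDiff}. This is what yields the sub-exponential bound \eqref{eqn:expectpdiffqpowerk} with an $O(1)$ rather than $O(m^{-1/2})$ scale, and hence accounts for the loss of the $m^{-1/2}$ factor relative to Theorem~\ref{thm:XhatDiffXWL1}. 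Two routine points remain to be checked: that this comparison preserves the independence of $\{\bm{E}_{ij}\}_j$ (it does, since $\hat{\bm{P}}^{(q)}_{ij}$ depends only on $\{\bm{A}^{(t)}_{ij}\}_t$), and that the delocalization $\|\bm{u}_{j\cdot}\|_2 = O(n^{-1/2})$ invoked in the Bernstein step holds for the contaminated expectation matrix $\bm{H}^{(q)}$ — which follows from the WSBM structure with blocks of size $\Theta(n)$ and the standing assumption $\lambda_k(E[\bm{Z}_1\bm{Z}_1^{\top}])>\delta$.
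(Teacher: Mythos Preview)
Your proposal is correct and is exactly the approach the paper takes: the paper's proof reads in full ``Similar to the proof of Theorem~\ref{thm:XhatDiffXWL1},'' and your plan carries out precisely that substitution, replacing each MLE perturbation estimate by its ML$q$E analogue (Theorem~\ref{thm:PqDiff}, Lemmas~\ref{lemma:eigSShat}--\ref{lemma:XhatDiffXWexpression}, and the moment bound \eqref{eqn:expectpdiffqpowerk}) and correctly tracking the loss of the $m^{-1/2}$ factor. Your row-wise Bernstein argument for the leading term $\bm{E}\bm{U}\bm{S}^{-1/2}$ is the standard step that the paper leaves implicit by citing the unweighted case.
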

\begin{proof}
Similar to the proof of Theorem \ref{thm:XhatDiffXWL1}.
\end{proof}

\begin{lemma}
\label{lemma:1stMomentPhatDiffLq}
$\left|  \hat{\bm{Z}}_i^{\top} \hat{\bm{Z}}_j - \bm{Z}_i^{\top} \bm{Z}_j \right| = O(n^{-1/2} (\log n)^{3/2})$ with high probability.
\end{lemma}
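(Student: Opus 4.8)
The plan is to mirror the proof of Lemma~\ref{lemma:1stMomentPhatDiffL1} line for line, now invoking the ML$q$E analogues already established in Section~\ref{section:pf_MLqEASEvsMLqE}. Fix the rotation matrix $\bm{W}$ supplied by Theorem~\ref{thm:XhatDiffXW}. Insert and subtract $\hat{\bm{Z}}_i^{\top} \bm{W} \bm{Z}_j$ and use that $\bm{W}$ is orthogonal, so that
\[
\hat{\bm{Z}}_i^{\top} \hat{\bm{Z}}_j - \bm{Z}_i^{\top} \bm{Z}_j
= \hat{\bm{Z}}_i^{\top}(\hat{\bm{Z}}_j - \bm{W}\bm{Z}_j) + (\hat{\bm{Z}}_i - \bm{W}\bm{Z}_i)^{\top} \bm{W}\bm{Z}_j ;
\]
by Cauchy--Schwarz the absolute value is then at most $\|\hat{\bm{Z}}_i\|_2 \, \|\hat{\bm{Z}}_j - \bm{W}\bm{Z}_j\|_2 + \|\bm{Z}_j\|_2 \, \|\hat{\bm{Z}}_i - \bm{W}\bm{Z}_i\|_2$.

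Next I would control the three types of factors. For $\|\bm{Z}_j\|_2$: since $\|\bm{Z}_j\|_2^2 = \bm{Z}_j^{\top}\bm{Z}_j = \bm{H}^{(q)}_{jj} = E[\hat{\bm{P}}^{(q)}_{jj}]$, and Lemma~\ref{lemma:LqlMLE} guarantees $0 < \hat{\bm{P}}^{(q)}_{jj} \le \hat{\bm{P}}^{(1)}_{jj}$ pointwise, we get $\bm{H}^{(q)}_{jj} \le E[\hat{\bm{P}}^{(1)}_{jj}] = (1-\epsilon)\bm{P}_{jj} + \epsilon\bm{C}_{jj} \le R$, so $\|\bm{Z}_j\|_2 = O(1)$ uniformly in $j$. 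The differences $\|\hat{\bm{Z}}_i - \bm{W}\bm{Z}_i\|_2$ and $\|\hat{\bm{Z}}_j - \bm{W}\bm{Z}_j\|_2$ are each $O(n^{-1/2}(\log n)^{3/2})$ with high probability by Theorem~\ref{thm:XhatDiffXW}. Finally, on that same event, $\|\hat{\bm{Z}}_i\|_2 \le \|\hat{\bm{Z}}_i - \bm{W}\bm{Z}_i\|_2 + \|\bm{W}\bm{Z}_i\|_2 = O(n^{-1/2}(\log n)^{3/2}) + O(1) = O(1)$. Substituting these three estimates into the Cauchy--Schwarz bound gives $\left| \hat{\bm{Z}}_i^{\top}\hat{\bm{Z}}_j - \bm{Z}_i^{\top}\bm{Z}_j \right| = O(n^{-1/2}(\log n)^{3/2})$ with high probability.

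The argument is essentially mechanical once the machinery of this subsection is in place; no new concentration inequality is needed, since Theorem~\ref{thm:PqDiff} was already consumed upstream in the proof of Theorem~\ref{thm:XhatDiffXW}. The only point requiring a moment's care — and the one genuinely new feature relative to the MLE case — is the uniform boundedness $\bm{H}^{(q)}_{jj} \le R$: unlike $\bm{H}^{(1)} = (1-\epsilon)\bm{P} + \epsilon\bm{C}$, there is no closed-form expression for $\bm{H}^{(q)}$, so this bound must be obtained from the ML$q$E-versus-MLE domination in Lemma~\ref{lemma:LqlMLE} rather than read off directly. One should also observe that the high-probability event controlling $\|\hat{\bm{Z}}_i\|_2$ is precisely the event of Theorem~\ref{thm:XhatDiffXW}, so the bounds combine without any loss in the probability.
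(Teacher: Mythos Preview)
Your proposal is correct and follows precisely the route the paper intends: the paper's own proof reads only ``Similar to the proof of Lemma~\ref{lemma:1stMomentPhatDiffL1}'', and you have faithfully transcribed that argument with Theorem~\ref{thm:XhatDiffXW} in place of Theorem~\ref{thm:XhatDiffXWL1}. Your remark that the boundedness $\bm{H}^{(q)}_{jj}\le R$ must come from the pointwise domination $\hat{\bm{P}}^{(q)}\le\hat{\bm{P}}^{(1)}$ of Lemma~\ref{lemma:LqlMLE} rather than from a closed form is exactly the one detail that distinguishes the ML$q$E case, and you have handled it correctly.
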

\begin{proof}
Similar to the proof of Lemma \ref{lemma:1stMomentPhatDiffL1}.
\end{proof}

\begin{definition}
Define $\widetilde{\bm{P}}_{ij}^{(q)} = (\hat{\bm{Z}}_i^{\top} \hat{\bm{Z}}_j)_{\mathrm{tr}}$, our estimator for $\bm{P}_{ij}$, to be a projection of $\hat{\bm{Z}}_i^{\top} \hat{\bm{Z}}_j$ onto $[0, \min(\hat{\bm{P}}_{ij}^{(q)}, R)]$.
\end{definition}

\begin{lemma}[Theorem~\ref{thm:MLqEvsMLqEASE} Part 1]
\label{lm:LqConsistentproof}
Assuming that $m = O(n^b)$ for any $b > 0$, then the estimator based on ASE of ML$q$E has the same entry-wise asymptotic bias as ML$q$E, i.e.\
\[
	\lim_{n \to \infty} \mathrm{Bias}(\widetilde{\bm{P}}_{ij}^{(q)}) = \lim_{n \to \infty} E[\widetilde{\bm{P}}_{ij}^{(q)}] - \bm{P}_{ij} = \lim_{n \to \infty} E[\hat{\bm{P}}^{(q)}_{ij}] - \bm{P}_{ij}
    = \lim_{n \to \infty} \mathrm{Bias}(\hat{\bm{P}}_{ij}^{(q)}).
\]
\end{lemma}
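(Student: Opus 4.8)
The plan is to follow the proof of Lemma~\ref{lm:L1Consistentproof} essentially verbatim, with the entry-wise MLE replaced by the entry-wise ML$q$E and every tail estimate routed through $\hat{\bm{P}}^{(1)}$ by means of the pointwise domination $\hat{\bm{P}}^{(q)}_{ij}\le\hat{\bm{P}}^{(1)}_{ij}$ of Lemma~\ref{lemma:LqlMLE}. Since, for $i\ne j$, $\bm{Z}_i^{\top}\bm{Z}_j=\bm{H}^{(q)}_{ij}=E[\hat{\bm{P}}^{(q)}_{ij}]$ and $\widetilde{\bm{P}}^{(q)}_{ij}=(\hat{\bm{Z}}_i^{\top}\hat{\bm{Z}}_j)_{\mathrm{tr}}$, the inequality $\big|E[(\hat{\bm{Z}}_i^{\top}\hat{\bm{Z}}_j)_{\mathrm{tr}}]-E[\hat{\bm{P}}^{(q)}_{ij}]\big|\le E\big[|(\hat{\bm{Z}}_i^{\top}\hat{\bm{Z}}_j)_{\mathrm{tr}}-\bm{Z}_i^{\top}\bm{Z}_j|\big]$ reduces the claim to showing $E\big[|(\hat{\bm{Z}}_i^{\top}\hat{\bm{Z}}_j)_{\mathrm{tr}}-\bm{Z}_i^{\top}\bm{Z}_j|\big]\to 0$ as $n\to\infty$.

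First I would fix a threshold $a>0$ and decompose this expectation over $\{\hat{\bm{P}}^{(q)}_{ij}\le a\}$ and $\{\hat{\bm{P}}^{(q)}_{ij}>a\}$, refining the first piece further according to whether the high-probability event of Lemma~\ref{lemma:1stMomentPhatDiffLq} holds. On the event where Lemma~\ref{lemma:1stMomentPhatDiffLq} holds the integrand is $O(n^{-1/2}(\log n)^{3/2})$. On its complement, which has probability at most $n^{-c}$, I would use that $(\hat{\bm{Z}}_i^{\top}\hat{\bm{Z}}_j)_{\mathrm{tr}}\in[0,\min(\hat{\bm{P}}^{(q)}_{ij},R)]$ by construction, so that $|(\hat{\bm{Z}}_i^{\top}\hat{\bm{Z}}_j)_{\mathrm{tr}}-\hat{\bm{P}}^{(q)}_{ij}|\le\hat{\bm{P}}^{(q)}_{ij}$, together with $|\hat{\bm{P}}^{(q)}_{ij}-\bm{H}^{(q)}_{ij}|\le\hat{\bm{P}}^{(q)}_{ij}+\bm{H}^{(q)}_{ij}\le\hat{\bm{P}}^{(q)}_{ij}+R$ (here $\bm{H}^{(q)}_{ij}\le\bm{H}^{(1)}_{ij}\le R$, again by Lemma~\ref{lemma:LqlMLE}); on $\{\hat{\bm{P}}^{(q)}_{ij}\le a\}$ this is at most $2a+R$, contributing $O(n^{-c}(a+R))$.

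For the tail event I would bound $|(\hat{\bm{Z}}_i^{\top}\hat{\bm{Z}}_j)_{\mathrm{tr}}-\bm{Z}_i^{\top}\bm{Z}_j|\le 2\hat{\bm{P}}^{(q)}_{ij}+R\le 2\hat{\bm{P}}^{(1)}_{ij}+R$ and use the inclusion $\{\hat{\bm{P}}^{(q)}_{ij}>a\}\subseteq\{\hat{\bm{P}}^{(1)}_{ij}>a\}$ to dominate
\[
E\big[(2\hat{\bm{P}}^{(q)}_{ij}+R)\,\mathbb{I}\{\hat{\bm{P}}^{(q)}_{ij}>a\}\big]\le E\big[(2\hat{\bm{P}}^{(1)}_{ij}+R)\,\mathbb{I}\{\hat{\bm{P}}^{(1)}_{ij}>a\}\big],
\]
which is exactly the quantity that the exponential-tail computations in the proof of Lemma~\ref{lm:L1Consistentproof} bound by a term of order $e^{-a/R}(a+mR)$, since $\hat{\bm{P}}^{(1)}_{ij}=\bar{\bm{A}}_{ij}$ is the sample mean of $m$ mixture-of-exponentials variables with mean at most $R$. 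Collecting the three contributions gives $E[|(\hat{\bm{Z}}_i^{\top}\hat{\bm{Z}}_j)_{\mathrm{tr}}-\bm{Z}_i^{\top}\bm{Z}_j|]=O(n^{-1/2}(\log n)^{3/2})+O(n^{-c}(a+R))+O(e^{-a/R}(a+mR))$, and taking $a=m^{-1}n^{2b}$ and $c=2b+3$ as in Lemma~\ref{lm:L1Consistentproof}, together with $m=O(n^b)$, drives all three terms to $0$; this yields the stated equality of limiting biases.

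I expect the only non-routine point to be arranging that every estimate on the ``bad'' and tail events is controlled through $\hat{\bm{P}}^{(1)}$ rather than $\hat{\bm{P}}^{(q)}$: because the ML$q$E for the exponential mixture has no closed form, one cannot compute its truncated moments directly, and this is precisely the obstacle the monotone comparison $\hat{\bm{P}}^{(q)}_{ij}\le\hat{\bm{P}}^{(1)}_{ij}$ (Lemma~\ref{lemma:LqlMLE}) and the crude two-sided bound used in the proof of Theorem~\ref{thm:PqDiff} are designed to remove. Once those are in place, the remaining work is the same bookkeeping already carried out in Lemma~\ref{lm:L1Consistentproof}, which is why the authors describe the two proofs as almost identical.
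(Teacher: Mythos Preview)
Your proposal is correct and matches the paper's argument almost line for line; the only cosmetic difference is that the paper thresholds directly on $\hat{\bm{P}}^{(1)}_{ij}$ (and explicitly remarks ``Note that we are thresholding according to $\hat{\bm{P}}^{(1)}$ instead of $\hat{\bm{P}}^{(q)}$''), whereas you threshold on $\hat{\bm{P}}^{(q)}_{ij}$ and then invoke the inclusion $\{\hat{\bm{P}}^{(q)}_{ij}>a\}\subseteq\{\hat{\bm{P}}^{(1)}_{ij}>a\}$ to reduce the tail piece to the MLE computation. Either choice leads to the same three-term bound $O(n^{-1/2}(\log n)^{3/2})+O(n^{-c}(a+R))+O(e^{-a/R}(a+mR))$ and the same choice $a=m^{-1}n^{2b}$, $c=2b+3$, so the two write-ups are interchangeable.
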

\begin{proof}
Fix some $a > 0$, we have
\begin{align*}
	& E[|(\hat{\bm{Z}}_i^{\top} \hat{\bm{Z}}_j)_{\mathrm{tr}} - \bm{Z}_i^{\top} \bm{Z}_j|] \\
	= & E[|(\hat{\bm{Z}}_i^{\top} \hat{\bm{Z}}_j)_{\mathrm{tr}} - \bm{Z}_i^{\top} \bm{Z}_j| \mathbb{I}\{\hat{\bm{P}}^{(1)}_{ij} \le a\}]
	+ E[|(\hat{\bm{Z}}_i^{\top} \hat{\bm{Z}}_j)_{\mathrm{tr}} - \bm{Z}_i^{\top} \bm{Z}_j| \mathbb{I}\{\hat{\bm{P}}^{(1)}_{ij} > a\}].
\end{align*}
Note that we are thresholding according to $\hat{\bm{P}}^{(1)}$ instead of $\hat{\bm{P}}^{(q)}$. By Lemma \ref{lemma:LqlMLE}, we know $\hat{\bm{P}}^{(q)} < \hat{\bm{P}}^{(1)}$ given any data.
For the first term, we have
\begin{align*}
	& E[|(\hat{\bm{Z}}_i^{\top} \hat{\bm{Z}}_j)_{\mathrm{tr}} - \bm{Z}_i^{\top} \bm{Z}_j| \mathbb{I}\{\hat{\bm{P}}^{(1)}_{ij} \le a\}] \\
	\le & E[|(\hat{\bm{Z}}_i^{\top} \hat{\bm{Z}}_j)_{\mathrm{tr}} - \bm{Z}_i^{\top} \bm{Z}_j| \mathbb{I}\{\hat{\bm{P}}^{(1)}_{ij} \le a\} \mathbb{I}\{\mathrm{Lemma\ } \ref{lemma:1stMomentPhatDiffLq} \mathrm{\ holds}\}] \\
	& + E[|(\hat{\bm{Z}}_i^{\top} \hat{\bm{Z}}_j)_{\mathrm{tr}} - \bm{Z}_i^{\top} \bm{Z}_j| \mathbb{I}\{\hat{\bm{P}}^{(1)}_{ij} \le a\} \mathbb{I}\{\mathrm{Lemma\ } \ref{lemma:1stMomentPhatDiffLq} \mathrm{\ does\ not\ hold}\}] \\
	\le & E[(\hat{\bm{Z}}_i^{\top} \hat{\bm{Z}}_j)_{\mathrm{tr}} - \bm{Z}_i^{\top} \bm{Z}_j| \mathbb{I}\{\hat{\bm{P}}^{(1)}_{ij} \le a\} |\{\mathrm{Lemma\ } \ref{lemma:1stMomentPhatDiffLq} \mathrm{\ holds}] \\
	& + n^{-c} E[|(\hat{\bm{Z}}_i^{\top} \hat{\bm{Z}}_j)_{\mathrm{tr}} - \bm{Z}_i^{\top} \bm{Z}_j| \mathbb{I}\{\hat{\bm{P}}^{(1)}_{ij} \le a\} | \mathrm{Lemma\ } \ref{lemma:1stMomentPhatDiffLq} \mathrm{\ does\ not\ hold}] \\
	\le & O(n^{-1/2} (\log n)^{3/2}) \\
	& + n^{-c} E[|(\hat{\bm{Z}}_i^{\top} \hat{\bm{Z}}_j)_{\mathrm{tr}} - \hat{\bm{P}}^{(q)}_{ij}| \mathbb{I}\{\hat{\bm{P}}^{(1)}_{ij} \le a\} | \mathrm{Lemma\ } \ref{lemma:1stMomentPhatDiffLq} \mathrm{\ does\ not\ hold}] \\
	& + n^{-c} E[|\hat{\bm{P}}^{(q)}_{ij} - \bm{Z}_i^{\top} \bm{Z}_j| \mathbb{I}\{\hat{\bm{P}}^{(1)}_{ij} \le a\} | \mathrm{Lemma\ } \ref{lemma:1stMomentPhatDiffLq} \mathrm{\ does\ not\ hold}] \\
	\le & O(n^{-1/2} (\log n)^{3/2}) \\
	& + n^{-c} E[\hat{\bm{P}}^{(q)}_{ij} \mathbb{I}\{\hat{\bm{P}}^{(1)}_{ij} \le a\} | \mathrm{Lemma\ } \ref{lemma:1stMomentPhatDiffLq} \mathrm{\ does\ not\ hold}] \\
	& + n^{-c} E[(\hat{\bm{P}}^{(q)}_{ij} + R) \mathbb{I}\{\hat{\bm{P}}^{(1)}_{ij} \le a\} | \mathrm{Lemma\ } \ref{lemma:1stMomentPhatDiffLq} \mathrm{\ does\ not\ hold}] \\
	\le & O(n^{-1/2} (\log n)^{3/2}) \\
	& + n^{-c} E[\hat{\bm{P}}^{(1)}_{ij} \mathbb{I}\{\hat{\bm{P}}^{(1)}_{ij} \le a\} | \mathrm{Lemma\ } \ref{lemma:1stMomentPhatDiffLq} \mathrm{\ does\ not\ hold}] \\
	& + n^{-c} E[(\hat{\bm{P}}^{(1)}_{ij} + R) \mathbb{I}\{\hat{\bm{P}}^{(1)}_{ij} \le a\} | \mathrm{Lemma\ } \ref{lemma:1stMomentPhatDiffLq} \mathrm{\ does\ not\ hold}] \\
	\le & O(n^{-1/2} (\log n)^{3/2}) + a n^{-c} + (a+R) n^{-c} \\
	\le & O(n^{-1/2} (\log n)^{3/2}) + 2 n^{-c} (a + R).
\end{align*}
For the second term, we have
\begin{align*}
	& E[|(\hat{\bm{Z}}_i^{\top} \hat{\bm{Z}}_j)_{\mathrm{tr}} - \bm{Z}_i^{\top} \bm{Z}_j| \mathbb{I}\{\hat{\bm{P}}^{(1)}_{ij} > a\}] \\
	\le & E[|(\hat{\bm{Z}}_i^{\top} \hat{\bm{Z}}_j)_{\mathrm{tr}} - \hat{\bm{P}}^{(q)}_{ij}| \mathbb{I}\{\hat{\bm{P}}^{(1)}_{ij} > a\}] + E[|\hat{\bm{P}}^{(q)}_{ij} - \bm{Z}_i^{\top} \bm{Z}_j| \mathbb{I}\{\hat{\bm{P}}^{(1)}_{ij} > a\}] \\
	\le & E[\hat{\bm{P}}^{(q)}_{ij} \mathbb{I}\{\hat{\bm{P}}^{(1)}_{ij} > a\}] + E[(\hat{\bm{P}}^{(q)}_{ij} + R) \mathbb{I}\{\hat{\bm{P}}^{(1)}_{ij} > a\}] \\
	\le & E[\hat{\bm{P}}^{(1)}_{ij} \mathbb{I}\{\hat{\bm{P}}^{(1)}_{ij} > a\}] + E[(\hat{\bm{P}}^{(1)}_{ij} + R) \mathbb{I}\{\hat{\bm{P}}^{(1)}_{ij} > a\}] \\
	\le & 2 e^{-a/R} (a + 2 m R).
\end{align*}
Similarly, assuming $m = O(n^b)$ for any $b > 0$, we have
\[
E[|(\hat{\bm{Z}}_i^{\top} \hat{\bm{Z}}_j)_{\mathrm{tr}} - \bm{Z}_i^{\top} \bm{Z}_j|] = O(n^{-1/2} (\log n)^{3/2}).
\]
\end{proof}

\begin{theorem}
\label{thm:VarASELqproof}
Assuming that $m = O(n^b)$ for any $b > 0$, then $\mathrm{Var}((\hat{\bm{Z}}_i^{\top} \hat{\bm{Z}}_j)_{\mathrm{tr}}) = O(n^{-1} (\log n)^3)$.
\end{theorem}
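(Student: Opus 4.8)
The plan is to follow the proof of Theorem~\ref{thm:VarASEL1proof} almost verbatim, replacing the role of the MLE concentration inequality (Theorem~\ref{thm:P1Diff}) by its ML$q$E analogue (Theorem~\ref{thm:PqDiff}) and the first-moment estimate (Lemma~\ref{lemma:1stMomentPhatDiffL1}) by Lemma~\ref{lemma:1stMomentPhatDiffLq}. The only substantive change is that the spectral-norm bound in Theorem~\ref{thm:PqDiff} carries no $m^{-1/2}$ factor, which is exactly why Lemma~\ref{lemma:1stMomentPhatDiffLq} gives $O(n^{-1/2}(\log n)^{3/2})$ rather than $O(m^{-1/2}n^{-1/2}(\log n)^{3/2})$, and hence why the variance bound here loses the factor $m^{-1}$ present in the $L_1$ case.

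First I would reduce, by Cauchy--Schwarz exactly as in Theorem~\ref{thm:VarASEL1proof}, the variance to a second-moment estimate:
\[
\mathrm{Var}\big((\hat{\bm{Z}}_i^{\top}\hat{\bm{Z}}_j)_{\mathrm{tr}}\big) \le 4\, E\big[\big((\hat{\bm{Z}}_i^{\top}\hat{\bm{Z}}_j)_{\mathrm{tr}} - \bm{Z}_i^{\top}\bm{Z}_j\big)^2\big],
\]
and then split this expectation on the events $\{\hat{\bm{P}}^{(1)}_{ij}\le a\}$ and $\{\hat{\bm{P}}^{(1)}_{ij}> a\}$ for a threshold $a>0$ chosen at the end. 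As in the proof of Lemma~\ref{lm:LqConsistentproof}, the thresholding is done through the MLE $\hat{\bm{P}}^{(1)}$, not through $\hat{\bm{P}}^{(q)}$; by Lemma~\ref{lemma:LqlMLE} we always have $0\le\hat{\bm{P}}^{(q)}_{ij}<\hat{\bm{P}}^{(1)}_{ij}$, so the truncation endpoint of $(\hat{\bm{Z}}_i^{\top}\hat{\bm{Z}}_j)_{\mathrm{tr}}$, namely $\min(\hat{\bm{P}}^{(q)}_{ij},R)$, is at most $\min(\hat{\bm{P}}^{(1)}_{ij},R)$, and $\bm{Z}_i^{\top}\bm{Z}_j=\bm{H}^{(q)}_{ij}=E[\hat{\bm{P}}^{(q)}_{ij}]\le E[\hat{\bm{P}}^{(1)}_{ij}]\le R$; this keeps every crude bound of the form $|(\hat{\bm{Z}}_i^{\top}\hat{\bm{Z}}_j)_{\mathrm{tr}}-\bm{Z}_i^{\top}\bm{Z}_j|\le a+R$ valid on the small event.

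On $\{\hat{\bm{P}}^{(1)}_{ij}\le a\}$ I would further condition on whether the high-probability event of Lemma~\ref{lemma:1stMomentPhatDiffLq} holds. When it holds, $\big((\hat{\bm{Z}}_i^{\top}\hat{\bm{Z}}_j)_{\mathrm{tr}}-\bm{Z}_i^{\top}\bm{Z}_j\big)^2=O(n^{-1}(\log n)^3)$ directly, since truncation cannot increase this quantity and $\bm{Z}_i^{\top}\bm{Z}_j$ lies in the truncation interval up to an $O(n^{-1/2}(\log n)^{3/2})$ error. When it fails, which has probability at most $n^{-c}$, I would bound the squared error by $2(a+R)^2$ using the paragraph above, contributing $O(n^{-c}(a+R)^2)$. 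On $\{\hat{\bm{P}}^{(1)}_{ij}> a\}$ I would use
\[
\big((\hat{\bm{Z}}_i^{\top}\hat{\bm{Z}}_j)_{\mathrm{tr}}-\bm{Z}_i^{\top}\bm{Z}_j\big)^2 \le 2(\hat{\bm{P}}^{(q)}_{ij})^2 + 2(\hat{\bm{P}}^{(q)}_{ij}+R)^2 \le 2(\hat{\bm{P}}^{(1)}_{ij})^2 + 2(\hat{\bm{P}}^{(1)}_{ij}+R)^2,
\]
and then reuse verbatim the exponential-tail moment computations already carried out in the proof of Theorem~\ref{thm:VarASEL1proof} for $E[\hat{\bm{P}}^{(1)2}_{ij}\mathbb{I}\{\hat{\bm{P}}^{(1)}_{ij}>a\}]$, $E[\hat{\bm{P}}^{(1)}_{ij}\mathbb{I}\{\hat{\bm{P}}^{(1)}_{ij}>a\}]$ and $P(\hat{\bm{P}}^{(1)}_{ij}>a)$, which give a bound of order $e^{-a/R}(a+2m^{1/2}R)^2$. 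Choosing the threshold $a$ and the exponent $c$ as in the proof of Theorem~\ref{thm:VarASEL1proof} (so that, under $m=O(n^b)$, both $n^{-c}(a+R)^2$ and $e^{-a/R}(a+2m^{1/2}R)^2$ are super-polynomially small) then yields $\mathrm{Var}((\hat{\bm{Z}}_i^{\top}\hat{\bm{Z}}_j)_{\mathrm{tr}})=O(n^{-1}(\log n)^3)$. The main obstacle is controlling the contribution of the tail event $\{\hat{\bm{P}}^{(1)}_{ij}>a\}$, where $\hat{\bm{Z}}_i^{\top}\hat{\bm{Z}}_j$ need not be bounded; the device that makes this tractable for the ML$q$E estimator — which has no closed form — is the pointwise domination $\hat{\bm{P}}^{(q)}_{ij}<\hat{\bm{P}}^{(1)}_{ij}$ of Lemma~\ref{lemma:LqlMLE}, which transfers the explicit exponential moment estimates from the MLE to $\widetilde{\bm{P}}^{(q)}$.
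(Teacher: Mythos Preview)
Your proposal is correct and follows the paper's proof essentially line for line: the same Cauchy--Schwarz reduction to $4E[((\hat{\bm{Z}}_i^{\top}\hat{\bm{Z}}_j)_{\mathrm{tr}}-\bm{Z}_i^{\top}\bm{Z}_j)^2]$, the same splitting on $\{\hat{\bm{P}}^{(1)}_{ij}\le a\}$ versus $\{\hat{\bm{P}}^{(1)}_{ij}>a\}$ (thresholding via the MLE rather than the ML$q$E), the same conditioning on the high-probability event of Lemma~\ref{lemma:1stMomentPhatDiffLq}, and the same use of the pointwise domination $\hat{\bm{P}}^{(q)}_{ij}<\hat{\bm{P}}^{(1)}_{ij}$ from Lemma~\ref{lemma:LqlMLE} to transfer the exponential-tail moment bounds from the MLE case. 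Your identification of this domination as the key device that makes the argument go through for the implicitly defined ML$q$E is exactly the point the paper's appendix emphasizes.
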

\begin{proof}
By Lemma \ref{lemma:1stMomentPhatDiffLq},
\begin{align*}
	\mathrm{Var}((\hat{\bm{Z}}_i^{\top} \hat{\bm{Z}}_j)_{\mathrm{tr}})
    = & E[((\hat{\bm{Z}}_i^{\top} \hat{\bm{Z}}_j)_{\mathrm{tr}} - E[(\hat{\bm{Z}}_i^{\top} \hat{\bm{Z}}_j)_{\mathrm{tr}}])^2] \\
    = & E[((\hat{\bm{Z}}_i^{\top} \hat{\bm{Z}}_j)_{\mathrm{tr}} - \bm{Z}_i^{\top} \bm{Z}_j + \bm{Z}_i^{\top} \bm{Z}_j - E[(\hat{\bm{Z}}_i^{\top} \hat{\bm{Z}}_j)_{\mathrm{tr}}])^2] \\
    = & E[((\hat{\bm{Z}}_i^{\top} \hat{\bm{Z}}_j)_{\mathrm{tr}} - \bm{Z}_i^{\top} \bm{Z}_j)^2] + E[(\bm{Z}_i^{\top} \bm{Z}_j - E[(\hat{\bm{Z}}_i^{\top} \hat{\bm{Z}}_j)_{\mathrm{tr}}])^2] \\
    & + 2E[((\hat{\bm{Z}}_i^{\top} \hat{\bm{Z}}_j)_{\mathrm{tr}} - \bm{Z}_i^{\top} \bm{Z}_j)(\bm{Z}_i^{\top} \bm{Z}_j - E[(\hat{\bm{Z}}_i^{\top} \hat{\bm{Z}}_j)_{\mathrm{tr}}])] \\
    \le & E[((\hat{\bm{Z}}_i^{\top} \hat{\bm{Z}}_j)_{\mathrm{tr}} - \bm{Z}_i^{\top} \bm{Z}_j)^2] + E[(\bm{Z}_i^{\top} \bm{Z}_j - E[(\hat{\bm{Z}}_i^{\top} \hat{\bm{Z}}_j)_{\mathrm{tr}}])^2] \\
    & + 2\sqrt{E[((\hat{\bm{Z}}_i^{\top} \hat{\bm{Z}}_j)_{\mathrm{tr}} - \bm{Z}_i^{\top} \bm{Z}_j)^2] E[(\bm{Z}_i^{\top} \bm{Z}_j - E[(\hat{\bm{Z}}_i^{\top} \hat{\bm{Z}}_j)_{\mathrm{tr}}])^2]} \\
    \le & 4 E[((\hat{\bm{Z}}_i^{\top} \hat{\bm{Z}}_j)_{\mathrm{tr}} - \bm{Z}_i^{\top} \bm{Z}_j)^2].
\end{align*}
Fix some $a > 0$, we have
\begin{align*}
	& E[((\hat{\bm{Z}}_i^{\top} \hat{\bm{Z}}_j)_{\mathrm{tr}} - \bm{Z}_i^{\top} \bm{Z}_j)^2] \\
	= & E[((\hat{\bm{Z}}_i^{\top} \hat{\bm{Z}}_j)_{\mathrm{tr}} - \bm{Z}_i^{\top} \bm{Z}_j)^2 \mathbb{I}\{\hat{\bm{P}}^{(1)}_{ij} \le a\}]
	+ E[((\hat{\bm{Z}}_i^{\top} \hat{\bm{Z}}_j)_{\mathrm{tr}} - \bm{Z}_i^{\top} \bm{Z}_j)^2 \mathbb{I}\{\hat{\bm{P}}^{(1)}_{ij} > a\}].
\end{align*}
Note that we are thresholding according to $\hat{\bm{P}}^{(1)}$ instead of $\hat{\bm{P}}^{(q)}$. By Lemma \ref{lemma:LqlMLE}, we know $\hat{\bm{P}}^{(q)} < \hat{\bm{P}}^{(1)}$ given any data.
For the first term, we have
\begin{align*}
	& E[((\hat{\bm{Z}}_i^{\top} \hat{\bm{Z}}_j)_{\mathrm{tr}} - \bm{Z}_i^{\top} \bm{Z}_j)^2 \mathbb{I}\{\hat{\bm{P}}^{(1)}_{ij} \le a\}] \\
	\le & E[((\hat{\bm{Z}}_i^{\top} \hat{\bm{Z}}_j)_{\mathrm{tr}} - \bm{Z}_i^{\top} \bm{Z}_j)^2 \mathbb{I}\{\hat{\bm{P}}^{(1)}_{ij} \le a\} \mathbb{I}\{\mathrm{Lemma\ } \ref{lemma:1stMomentPhatDiffLq} \mathrm{\ holds}\}] \\
	& + E[((\hat{\bm{Z}}_i^{\top} \hat{\bm{Z}}_j)_{\mathrm{tr}} - \bm{Z}_i^{\top} \bm{Z}_j)^2 \mathbb{I}\{\hat{\bm{P}}^{(1)}_{ij} \le a\} \mathbb{I}\{\mathrm{Lemma\ } \ref{lemma:1stMomentPhatDiffLq} \mathrm{\ does\ not\ hold}\}] \\
	\le & E[((\hat{\bm{Z}}_i^{\top} \hat{\bm{Z}}_j)_{\mathrm{tr}} - \bm{Z}_i^{\top} \bm{Z}_j)^2 \mathbb{I}\{\hat{\bm{P}}^{(1)}_{ij} \le a\} |\{\mathrm{Lemma\ } \ref{lemma:1stMomentPhatDiffLq} \mathrm{\ holds}] \\
	& + n^{-c} E[((\hat{\bm{Z}}_i^{\top} \hat{\bm{Z}}_j)_{\mathrm{tr}} - \bm{Z}_i^{\top} \bm{Z}_j)^2 \mathbb{I}\{\hat{\bm{P}}^{(1)}_{ij} \le a\} | \mathrm{Lemma\ } \ref{lemma:1stMomentPhatDiffLq} \mathrm{\ does\ not\ hold}] \\
	\le & O(n^{-1} (\log n)^3) \\
	& + 2 n^{-c} E[((\hat{\bm{Z}}_i^{\top} \hat{\bm{Z}}_j)_{\mathrm{tr}} - \hat{\bm{P}}^{(q)}_{ij})^2 \mathbb{I}\{\hat{\bm{P}}^{(1)}_{ij} \le a\} | \mathrm{Lemma\ } \ref{lemma:1stMomentPhatDiffLq} \mathrm{\ does\ not\ hold}] \\
	& + 2 n^{-c} E[(\hat{\bm{P}}^{(q)}_{ij} - \bm{Z}_i^{\top} \bm{Z}_j)^2 \mathbb{I}\{\hat{\bm{P}}^{(1)}_{ij} \le a\} | \mathrm{Lemma\ } \ref{lemma:1stMomentPhatDiffLq} \mathrm{\ does\ not\ hold}] \\
	\le & O(n^{-1} (\log n)^3) \\
	& + 2 n^{-c} E[\hat{\bm{P}}^{(q)2}_{ij} \mathbb{I}\{\hat{\bm{P}}^{(1)}_{ij} \le a\} | \mathrm{Lemma\ } \ref{lemma:1stMomentPhatDiffLq} \mathrm{\ does\ not\ hold}] \\
	& + 2 n^{-c} E[(\hat{\bm{P}}^{(q)}_{ij} + R)^2 \mathbb{I}\{\hat{\bm{P}}^{(1)}_{ij} \le a\} | \mathrm{Lemma\ } \ref{lemma:1stMomentPhatDiffLq} \mathrm{\ does\ not\ hold}] \\
	\le & O(n^{-1} (\log n)^3)
	+ 2 n^{-c} E[\hat{\bm{P}}^{(1)2}_{ij} \mathbb{I}\{\hat{\bm{P}}^{(1)}_{ij} \le a\} | \mathrm{Lemma\ } \ref{lemma:1stMomentPhatDiffLq} \mathrm{\ does\ not\ hold}] \\
	& + 2 n^{-c} E[(\hat{\bm{P}}^{(1)}_{ij} + R)^2 \mathbb{I}\{\hat{\bm{P}}^{(1)}_{ij} \le a\} | \mathrm{Lemma\ } \ref{lemma:1stMomentPhatDiffLq} \mathrm{\ does\ not\ hold}] \\
	\le & O(n^{-1} (\log n)^3) + 2 a^2 n^{-c} + 2 (a+R)^2 n^{-c} \\
	\le & O(n^{-1} (\log n)^3) + 4 n^{-c} (a + R)^2.
\end{align*}
For the second term, we have
\begin{align*}
	& E[((\hat{\bm{Z}}_i^{\top} \hat{\bm{Z}}_j)_{\mathrm{tr}} - \bm{Z}_i^{\top} \bm{Z}_j)^2 \mathbb{I}\{\hat{\bm{P}}^{(1)}_{ij} > a\}] \\
	\le & 2 E[((\hat{\bm{Z}}_i^{\top} \hat{\bm{Z}}_j)_{\mathrm{tr}} - \hat{\bm{P}}^{(q)}_{ij})^2 \mathbb{I}\{\hat{\bm{P}}^{(1)}_{ij} > a\}] + 2 E[(\hat{\bm{P}}^{(q)}_{ij} - \bm{Z}_i^{\top} \bm{Z}_j)^2 \mathbb{I}\{\hat{\bm{P}}^{(1)}_{ij} > a\}] \\
	\le & 2 E[\hat{\bm{P}}^{(q)2}_{ij} \mathbb{I}\{\hat{\bm{P}}^{(1)}_{ij} > a\}] + 2 E[(\hat{\bm{P}}^{(q)}_{ij} + R)^2 \mathbb{I}\{\hat{\bm{P}}^{(1)}_{ij} > a\}] \\
	\le & 2 E[\hat{\bm{P}}^{(1)2}_{ij} \mathbb{I}\{\hat{\bm{P}}^{(1)}_{ij} > a\}] + 2 E[(\hat{\bm{P}}^{(1)}_{ij} + R)^2 \mathbb{I}\{\hat{\bm{P}}^{(1)}_{ij} > a\}] \\
	\le & 4 e^{-a/R} (a + 2 m^{1/2} R)^2.
\end{align*}
Similarly, assuming $m = O(n^b)$ for any $b > 0$, we have
\[
	\mathrm{Var}((\hat{\bm{Z}}_i^{\top} \hat{\bm{Z}}_j)_{\mathrm{tr}})
	= O(n^{-1} (\log n)^3).
\]
\end{proof}

\begin{theorem}[Theorem~\ref{thm:MLqEvsMLqEASE} Part 2]
\label{thm:ARELqproof}
Assuming that $m = O(n^b)$ for any $b > 0$,  then for $1 \le i, j \le n$ and $i \ne j$,
\[
	\frac{\mathrm{Var}(\widetilde{\bm{P}}_{ij}^{(q)})}{\mathrm{Var}(\hat{\bm{P}}_{ij}^{(q)})}
    = O(m n^{-1} (\log n)^3).
\]
Moreover, if $m = o(n (\log n)^{-3})$, then
\[
	\mathrm{ARE}(\hat{\bm{P}}_{ij}^{(q)}, \widetilde{\bm{P}}_{ij}^{(q)}) = 0.
\]
\end{theorem}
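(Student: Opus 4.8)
The plan is to derive both claims as immediate corollaries of variance estimates already in hand, exactly in the spirit of the proof of Theorem~\ref{thm:AREL1proof} for the non-robust pair $(\hat{\bm{P}}^{(1)},\widetilde{\bm{P}}^{(1)})$. No new probabilistic work is needed; the argument is essentially bookkeeping of rates.

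First I would invoke Theorem~\ref{thm:VarASELqproof}: under the standing hypothesis $m = O(n^b)$, the estimator $\widetilde{\bm{P}}_{ij}^{(q)} = (\hat{\bm{Z}}_i^{\top}\hat{\bm{Z}}_j)_{\mathrm{tr}}$ satisfies $\mathrm{Var}(\widetilde{\bm{P}}_{ij}^{(q)}) = O(n^{-1}(\log n)^3)$. Next I would use Lemma~\ref{lemma:VarLqlVarMLEproof}, which gives $\mathrm{Var}(\hat{\bm{P}}_{ij}^{(q)}) = O(m^{-1})$; since $\hat{\bm{P}}_{ij}^{(q)}$ is a consistent minimum contrast estimator (Lemma~\ref{lemma:ELqConverge}) whose limiting variance is the strictly positive sandwich variance of the ML$q$ estimating equation at $F_{ij}$, in fact $\mathrm{Var}(\hat{\bm{P}}_{ij}^{(q)}) = \Theta(m^{-1})$, so $1/\mathrm{Var}(\hat{\bm{P}}_{ij}^{(q)}) = O(m)$. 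Dividing the two rates,
\[
	\frac{\mathrm{Var}(\widetilde{\bm{P}}_{ij}^{(q)})}{\mathrm{Var}(\hat{\bm{P}}_{ij}^{(q)})}
	= O(n^{-1}(\log n)^3)\cdot O(m) = O\!\left(m\,n^{-1}(\log n)^3\right),
\]
which is the first display.

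For the second claim, suppose in addition $m = o\!\left(n(\log n)^{-3}\right)$ (which in particular forces $m = O(n)$ and so is consistent with the standing hypothesis). Then $m\,n^{-1}(\log n)^3 \to 0$, so the ratio of variances tends to $0$ as $n\to\infty$. Since Lemma~\ref{lm:LqConsistentproof} shows that $\widetilde{\bm{P}}_{ij}^{(q)}$ and $\hat{\bm{P}}_{ij}^{(q)}$ share the same entry-wise asymptotic bias, the paper's adapted notion of asymptotic relative efficiency applies, and it equals the limiting variance ratio; hence $\mathrm{ARE}(\hat{\bm{P}}_{ij}^{(q)},\widetilde{\bm{P}}_{ij}^{(q)}) = 0$.

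The only step that is not a one-line citation — and thus the main (and rather minor) obstacle — is the lower bound $\mathrm{Var}(\hat{\bm{P}}_{ij}^{(q)}) = \Omega(m^{-1})$: Lemma~\ref{lemma:VarLqlVarMLEproof} records only the matching upper bound $O(m^{-1})$, and without the lower bound the stated ratio could in principle be vacuous. This is supplied by the asymptotic normality of minimum contrast estimators (Theorem~5.4.2 of \citep{bickel2007mathematical}), using consistency from Lemma~\ref{lemma:ELqConverge} together with the fact that the asymptotic variance of the ML$q$E for the exponential family under gross error contamination is strictly positive. Everything else is reading off orders from previously proved results.
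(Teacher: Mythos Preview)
Your proposal is correct and follows essentially the same approach as the paper, which simply writes ``The results are direct from Theorem~\ref{thm:VarASELqproof} and Theorem~\ref{thm:MLEvsMLqE}.'' You are in fact more careful than the paper: you explicitly flag and justify the lower bound $\mathrm{Var}(\hat{\bm{P}}_{ij}^{(q)}) = \Omega(m^{-1})$ (via the nondegenerate asymptotic normality of the minimum contrast estimator) and the equal-asymptotic-bias prerequisite for the adapted ARE, both of which the paper silently assumes.
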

\begin{proof}
The results are direct from Theorem~\ref{thm:VarASELqproof} and Theorem~\ref{thm:MLEvsMLqE}.
\end{proof}

%\subsection{$\widetilde{P}^{(q)}$ vs. $\widetilde{P}^{(1)}$}
\subsection{\texorpdfstring{$\widetilde{\bm{P}}^{(q)}$}{$P$} vs. \texorpdfstring{$\widetilde{\bm{P}}^{(1)}$}{$P$}}
\label{section:MLqEASEvsMLEASE}
\begin{theorem}
\label{thm:biasL1andLq}
For sufficiently large values of $\{\bm{C}_{ij}\}$ and any $1 \le i,j \le n$, if $m \to \infty$ at order $m = O(n^b)$ for any $b > 0$, then the estimator based on ASE of ML$q$E has smaller entry-wise asymptotic bias compared to the estimator based on ASE of MLE, i.e.\
\[
	\lim_{m, n \to \infty} \mathrm{Bias}(\widetilde{\bm{P}}_{ij}^{(1)})
    > \lim_{m, n \to \infty} \mathrm{Bias}(\widetilde{\bm{P}}_{ij}^{(q)})
\]
\end{theorem}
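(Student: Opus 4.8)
The plan is to derive Theorem~\ref{thm:biasL1andLq} by assembling the three preceding results, Theorem~\ref{thm:MLEvsMLqE}, Theorem~\ref{thm:MLEvsMLEASE}, and Theorem~\ref{thm:MLqEvsMLqEASE}; the only point requiring genuine care is the order in which the two limits $m\to\infty$ and $n\to\infty$ are taken.

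First I would record two elementary facts about the entry-wise estimators. Since $\hat{\bm{P}}^{(1)}_{ij}$ and $\hat{\bm{P}}^{(q)}_{ij}$ are functions of the $m$ scalar observations $\bm{A}^{(1)}_{ij},\dots,\bm{A}^{(m)}_{ij}$ alone, their biases do not depend on $n$: explicitly $\mathrm{Bias}(\hat{\bm{P}}^{(1)}_{ij}) = \epsilon(\bm{C}_{ij}-\bm{P}_{ij})$ for every $m$, while $\mathrm{Bias}(\hat{\bm{P}}^{(q)}_{ij}) \to \theta(F_{ij}) - \bm{P}_{ij}$ as $m\to\infty$, where $\theta(F_{ij})$ is the root of the population ML$q$ equation appearing in Lemma~\ref{lemma:ELqlEMLEproof}; this last convergence is the convergence in probability of Lemma~\ref{lemma:ELqConverge} upgraded to convergence of means via the domination $\hat{\bm{P}}^{(q)}_{ij} \le \hat{\bm{P}}^{(1)}_{ij}$ from Lemma~\ref{lemma:LqlMLE} (as already used implicitly inside the proof of Lemma~\ref{lemma:ELqlEMLEproof}).

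Next I would move to the ASE-based estimators. Lemma~\ref{lm:L1Consistentproof} supplies $|\mathrm{Bias}(\widetilde{\bm{P}}^{(1)}_{ij}) - \mathrm{Bias}(\hat{\bm{P}}^{(1)}_{ij})| \le E|\widetilde{\bm{P}}^{(1)}_{ij} - \bm{Z}_i^{\top}\bm{Z}_j| = O(m^{-1/2} n^{-1/2} (\log n)^{3/2})$, where $\bm{Z}_i^{\top}\bm{Z}_j = \bm{H}^{(1)}_{ij} = E[\hat{\bm{P}}^{(1)}_{ij}]$; because $m\ge 1$ this error is $O(n^{-1/2}(\log n)^{3/2})$, hence vanishes as $n\to\infty$ no matter how fast $m = O(n^b)$ grows, so $\lim_{m,n\to\infty}\mathrm{Bias}(\widetilde{\bm{P}}^{(1)}_{ij}) = \epsilon(\bm{C}_{ij}-\bm{P}_{ij})$. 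In the same way Lemma~\ref{lm:LqConsistentproof} gives $|\mathrm{Bias}(\widetilde{\bm{P}}^{(q)}_{ij}) - \mathrm{Bias}(\hat{\bm{P}}^{(q)}_{ij})| = O(n^{-1/2}(\log n)^{3/2}) \to 0$ whenever $m = O(n^b)$; combining this with the previous paragraph by a routine two-stage argument (first send $m\to\infty$ to bring $\mathrm{Bias}(\hat{\bm{P}}^{(q)}_{ij})$ near $\theta(F_{ij}) - \bm{P}_{ij}$, then send $n\to\infty$ to kill the ASE perturbation) yields $\lim_{m,n\to\infty}\mathrm{Bias}(\widetilde{\bm{P}}^{(q)}_{ij}) = \theta(F_{ij}) - \bm{P}_{ij}$.

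Finally I would invoke Theorem~\ref{thm:MLEvsMLqE}: for $\bm{C}_{ij}$ exceeding the threshold $C_0(\bm{P}_{ij},\epsilon,q)$ we have $|\theta(F_{ij}) - \bm{P}_{ij}| < \epsilon(\bm{C}_{ij}-\bm{P}_{ij})$, and since $\bm{C}_{ij} > \bm{P}_{ij}$ makes the right side positive, a fortiori $\theta(F_{ij}) - \bm{P}_{ij} < \epsilon(\bm{C}_{ij}-\bm{P}_{ij})$. Substituting the two joint limits computed above gives $\lim_{m,n\to\infty}\mathrm{Bias}(\widetilde{\bm{P}}^{(q)}_{ij}) < \lim_{m,n\to\infty}\mathrm{Bias}(\widetilde{\bm{P}}^{(1)}_{ij})$, which is the claim. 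I expect the main (indeed only) obstacle to be the legitimacy of this interchange of limits; the explicit rates in Lemmas~\ref{lm:L1Consistentproof} and~\ref{lm:LqConsistentproof}, which contain no positive power of $m$, are exactly what make the two-stage argument work, so apart from that everything reduces to bookkeeping over the three cited theorems.
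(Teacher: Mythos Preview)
Your proposal is correct and follows the same route as the paper: the paper's proof is the single line ``Direct result from Theorem~\ref{thm:MLEvsMLqE}, Theorem~\ref{thm:MLEvsMLEASE} and Theorem~\ref{thm:MLqEvsMLqEASE},'' and you assemble precisely those three ingredients. Your write-up is in fact more careful than the paper's, since you make explicit why the joint limit is well defined (the ASE perturbation bounds in Lemmas~\ref{lm:L1Consistentproof} and~\ref{lm:LqConsistentproof} carry no positive power of $m$, and the entry-wise biases are independent of $n$), a point the paper leaves implicit.
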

\begin{proof}
Direct result from Theorem~\ref{thm:MLEvsMLqE}, Theorem~\ref{thm:MLEvsMLEASE} and Theorem~\ref{thm:MLqEvsMLqEASE}.
\end{proof}

\begin{theorem}
\label{thm:varianceL1andLq}
For any $1 \le i,j \le n$, if $m = O(n^b)$ for any $b > 0$, then
\[
	\lim_{n \to \infty} \mathrm{Var}(\widetilde{\bm{P}}_{ij}^{(1)})
    = \lim_{n \to \infty} \mathrm{Var}(\widetilde{\bm{P}}_{ij}^{(q)}) = 0.
\]
\end{theorem}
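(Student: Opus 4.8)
The plan is to obtain this as an immediate corollary of the two variance bounds already established for the ASE-based estimators, namely Theorem~\ref{thm:VarASEL1proof} and Theorem~\ref{thm:VarASELqproof}, together with the elementary fact that $n^{-1}(\log n)^3 \to 0$. No new analytic machinery is needed; the argument is essentially a limit computation.

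Concretely, I would proceed as follows. First, recall from Definition~\ref{def:truncationMLE} that $\widetilde{\bm{P}}_{ij}^{(1)} = (\hat{\bm{Z}}_i^{\top}\hat{\bm{Z}}_j)_{\mathrm{tr}}$ with $\hat{\bm{Z}}$ the embedding of $\hat{\bm{P}}^{(1)}$, so that under the hypothesis $m = O(n^b)$ for any $b>0$, Theorem~\ref{thm:VarASEL1proof} yields $\mathrm{Var}(\widetilde{\bm{P}}_{ij}^{(1)}) = O(m^{-1} n^{-1} (\log n)^3)$. Since $m \geq 1$, this is in particular $O(n^{-1}(\log n)^3)$. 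Likewise, $\widetilde{\bm{P}}_{ij}^{(q)} = (\hat{\bm{Z}}_i^{\top}\hat{\bm{Z}}_j)_{\mathrm{tr}}$ with $\hat{\bm{Z}}$ now the embedding of $\hat{\bm{P}}^{(q)}$, and Theorem~\ref{thm:VarASELqproof}, again under $m = O(n^b)$, gives $\mathrm{Var}(\widetilde{\bm{P}}_{ij}^{(q)}) = O(n^{-1}(\log n)^3)$. It then remains only to note that for every $c>0$ one has $(\log n)^3 = o(n^{c})$, so $n^{-1}(\log n)^3 \to 0$ as $n \to \infty$; applying this to both displayed bounds gives $\lim_{n\to\infty}\mathrm{Var}(\widetilde{\bm{P}}_{ij}^{(1)}) = \lim_{n\to\infty}\mathrm{Var}(\widetilde{\bm{P}}_{ij}^{(q)}) = 0$ for each fixed $1 \le i,j \le n$ with $i \ne j$, which is the claim.

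There is essentially no obstacle of substance here: the real work — the perturbation analysis of the adjacency spectral embedding (Lemma~\ref{lemma:XhatDiffXWexpressionL1}, Theorem~\ref{thm:XhatDiffXWL1} and their ML$q$E analogues), the Bernstein-type concentration inequalities of Theorem~\ref{thm:P1Diff} and Theorem~\ref{thm:PqDiff}, and the truncation device used to control the tail contributions to the variance — has all been carried out in Sections~\ref{section:pf_MLEvsMLEASE1}–\ref{section:pf_MLqEASEvsMLqE}. The only point requiring a small amount of care is bookkeeping: one must check that the $O(\cdot)$ constants in Theorems~\ref{thm:VarASEL1proof} and~\ref{thm:VarASELqproof} do not depend on the entry indices $(i,j)$, so that the convergence is uniform over the $O(n^2)$ entries and hence propagates to the total-matrix mean squared error as well. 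Since those bounds were derived using only $\|\bm{Z}_i\|_2 = O(1)$ (uniformly, from $\bm{Z}_i^{\top}\bm{Z}_i = \bm{H}_{ii} \le R$) and the entrywise tail bounds on $\bm{A}_{ij}^{(t)}$, this uniformity is immediate, and the proof is complete.
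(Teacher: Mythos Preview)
Your proposal is correct and takes essentially the same approach as the paper: the paper's own proof is the single line ``Direct result from Theorem~\ref{thm:MLEvsMLEASE} and Theorem~\ref{thm:MLqEvsMLqEASE},'' which amounts to invoking the variance bounds $O(m^{-1}n^{-1}(\log n)^3)$ and $O(n^{-1}(\log n)^3)$ from Theorems~\ref{thm:VarASEL1proof} and~\ref{thm:VarASELqproof} and letting $n\to\infty$, exactly as you do. Your additional remark on uniformity in $(i,j)$ is a harmless bonus not present in the paper, since the statement is entrywise.
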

\begin{proof}
Direct result from Theorem~\ref{thm:MLEvsMLEASE} and Theorem~\ref{thm:MLqEvsMLqEASE}.
\end{proof}

\subsection{Other Proofs}
\label{section:pf_other}

\begin{lemma}
\label{lm:poisson}
Let $\bm{A}_{ij} \stackrel{ind}{\sim} (1-\epsilon) f_{\bm{P}_{ij}} + \epsilon f_{\bm{C}_{ij}}$ with $f$ to be Poisson, then $E[(\bm{A}_{ij} - E[\hat{\bm{P}}_{ij}^{(1)}])^k] \le \mathrm{const}^k \cdot k!$, where $\hat{\bm{P}}^{(1)}$ is the entry-wise MLE as defined before.
\end{lemma}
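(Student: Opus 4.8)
The plan is to reduce the bound to a moment estimate for a single Poisson random variable, and then dominate that estimate by the corresponding exponential moment, making precise the fatter-tail heuristic from Section~\ref{section:extension}. I would first observe that for the Poisson family the entry-wise MLE is the sample mean, so $E[\hat{\bm{P}}^{(1)}_{ij}] = E[\bm{A}_{ij}] = (1-\epsilon)\bm{P}_{ij} + \epsilon\bm{C}_{ij} =: \mu$, with $0 \le \mu \le R$; hence $E[(\bm{A}_{ij} - E[\hat{\bm{P}}^{(1)}_{ij}])^k]$ is exactly the $k$-th central moment of the mixture $\bm{A}_{ij} \sim (1-\epsilon)\mathrm{Poisson}(\bm{P}_{ij}) + \epsilon\mathrm{Poisson}(\bm{C}_{ij})$. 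Writing this central moment as $(1-\epsilon)E[(Y_1-\mu)^k] + \epsilon E[(Y_2-\mu)^k]$ with $Y_1\sim\mathrm{Poisson}(\bm{P}_{ij})$ and $Y_2\sim\mathrm{Poisson}(\bm{C}_{ij})$, and using $(Y_\ell-\mu)^k \le |Y_\ell-\mu|^k \le (Y_\ell+\mu)^k \le (Y_\ell+R)^k$ (valid since $Y_\ell\ge 0$ and $\mu\le R$), it suffices to bound $E[(Y+R)^k]$ for $Y\sim\mathrm{Poisson}(\lambda)$ with $\lambda\le R$.

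The main work is the raw-moment bound for $Y\sim\mathrm{Poisson}(\lambda)$, $\lambda\le R$. Setting $\tilde R := \max(R,1)$ and invoking the Touchard identity $E[Y^j]=\sum_{i=1}^{j}\lambda^i S(j,i)$ with $S(j,i)$ the Stirling numbers of the second kind, one gets $E[Y^j]\le \tilde R^{j}\sum_{i=1}^{j}S(j,i)=\tilde R^{j}B_j$ with $B_j$ the $j$-th Bell number. A binomial expansion then gives $E[(Y+R)^k]\le \sum_{j=0}^{k}\binom{k}{j}R^{k-j}\tilde R^{j}B_j\le \tilde R^{k}\sum_{j=0}^{k}\binom{k}{j}B_j=\tilde R^{k}B_{k+1}$, the last step being the Bell-number recurrence. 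An easy induction yields $B_n\le n!$ for all $n\ge 0$, so $E[(Y+R)^k]\le \tilde R^{k}(k+1)!\le (2\tilde R)^{k}k!$ for $k\ge 1$. (Equivalently, $E[Y^j]\le \tilde R^{j}j!=E[\mathrm{Exp}(\tilde R)^j]$, i.e.\ the Poisson raw moments are dominated by exponential raw moments, which is precisely the fatter-tail statement; an alternative route to the same bound is the uniform estimate $E[e^{Y}]=e^{\lambda(e-1)}\le e^{\tilde R(e-1)}$, which forces a sub-exponential tail $P(|Y-\mu|\ge t)\le C_R e^{-t}$ and then $E[|Y-\mu|^k]=\int_0^\infty k t^{k-1}P(|Y-\mu|\ge t)\,dt\le C_R k!$.)

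Putting the pieces together, I would conclude $E[(\bm{A}_{ij}-E[\hat{\bm{P}}^{(1)}_{ij}])^k]\le (1-\epsilon)E[(Y_1+R)^k]+\epsilon E[(Y_2+R)^k]\le (2\tilde R)^{k}k!$, which is the claimed bound with $\mathrm{const}=2\max(R,1)$, a constant depending only on $R$ and not on $n$, $\bm{P}$, $\bm{C}$, or $\epsilon$. I do not expect a genuine obstacle: computing $E[\hat{\bm{P}}^{(1)}_{ij}]$ is immediate, and the mixture/binomial bookkeeping is routine. The step requiring the most care is the Poisson raw-moment bound — specifically, the only non-obvious ingredients are the Bell-number recurrence $\sum_{j}\binom{k}{j}B_j=B_{k+1}$ together with $B_n\le n!$ (or, in the alternative route, the uniform exponential-moment control of the Poisson and the tail-integral representation of $E[|Y-\mu|^k]$); everything else is keeping the constants dependent on $R$ alone.
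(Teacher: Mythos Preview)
Your proposal is correct, and it reaches the same conclusion by a genuinely different route from the paper. The paper's proof bounds the central moment of a single Poisson variable analytically: it establishes the pointwise inequality $(x-\theta)^k \le k!\,(e^{x-\theta}+e^{\theta-x})$ from the Taylor series, then evaluates the Poisson moment generating function to get $E[(X-\theta)^k]\le k!\,(e^{\theta(e-2)}+e^{\theta e^{-1}})$; the mixture is then handled by a binomial expansion that shifts the centering from $\mu$ to the component mean $\bm P_{ij}$ (resp.\ $\bm C_{ij}$). Your argument instead absorbs the centering in one step via $|Y-\mu|\le Y+R$, reducing everything to a raw-moment bound for a Poisson variable, and then controls $E[Y^j]$ combinatorially through the Touchard identity and the Bell-number recurrence $\sum_j\binom{k}{j}B_j=B_{k+1}$ together with $B_n\le n!$. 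Both approaches are short; yours yields an explicit constant $2\max(R,1)$ and avoids the separate recentering step, while the paper's MGF inequality is closer in spirit to the ``exponential has fatter tails'' heuristic and requires no combinatorial identities. Your parenthetical alternative via $E[e^Y]\le e^{\tilde R(e-1)}$ and the tail-integral representation is essentially a variant of the paper's MGF route, though the paper applies the MGF bound directly to $(x-\theta)^k$ rather than passing through a tail estimate.
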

\begin{proof}
First we prove $(x - \theta)^k \le k! (\exp(x-\theta) + \exp(\theta - x))$.
\begin{enumerate}
\item $k$ is even. Then by Taylor expansion, $\exp(x - \theta) + \exp(\theta - x) \ge {(x-\theta)^k}/{(k!)}$
\item $k$ is odd. When $x \ge \theta$, still by Taylor expansion, $(x-\theta)^k \le k! \cdot \exp(x-\theta)$. When $x < \theta$, $(x-\theta)^k < 0 \le k! \cdot \exp(x-\theta)$.
\end{enumerate}
Thus $(x - \theta)^k \le k! \cdot (\exp(x-\theta) + \exp(\theta - x))$.
So the $k$-th central moment of Poisson distribution with parameter $\theta$ is bounded by
\begin{align*}
E[(X-\theta)^k] & \le k! \left( E[e^{X-\theta}] + E[e^{\theta - X}] \right) \\
& = k! \left( e^{-\theta} E[e^X] + e^{\theta} E[e^{-X}] \right) \\
& = k! \left( e^{\theta(e - 2)} + e^{\theta e^{-1}} \right).
\end{align*}
Let $X_1 \sim \mathrm{Poisson}(\bm{P}_{ij})$ and $X_2 \sim \mathrm{Poisson}(\bm{C}_{ij})$.
Then if $\bm{A}_{ij}$ is distributed from a mixture model as in the statement, we have
\begin{align*}
& E[(\bm{A}_{ij} - E[\hat{\bm{P}}_{ij}^{(1)}])^k] \\
= & (1-\epsilon) E[(X_1 - \bm{P}_{ij} + \bm{P}_{ij} - E[\hat{\bm{P}}_{ij}^{(1)}])] +
\epsilon E[(X_2 - \bm{C}_{ij} + \bm{C}_{ij} - E[\hat{\bm{P}}_{ij}^{(1)}])] \\
= & (1-\epsilon) \sum_{j = 0}^k \binom{k}{j} (\bm{P}_{ij} - E[\hat{\bm{P}}_{ij}^{(1)}])^{k - j} E[(X_1 - \bm{P}_{ij})^j] \\
& + \epsilon \sum_{j = 0}^k \binom{k}{j} (\bm{C}_{ij} - E[\hat{\bm{P}}_{ij}^{(1)}])^{k - j} E[(X_2 - \bm{C}_{ij})^j] \\
\le & (1-\epsilon) \sum_{j = 0}^k \binom{k}{j} (\bm{P}_{ij} - E[\hat{\bm{P}}_{ij}^{(1)}])^{k - j} \cdot j! \cdot \mathrm{const} \\
& + \epsilon \sum_{j = 0}^k \binom{k}{j} (\bm{C}_{ij} - E[\hat{\bm{P}}_{ij}^{(1)}])^{k - j} \cdot j! \cdot \mathrm{const} \\
\le & (1-\epsilon) k! \cdot \mathrm{const}^k + \epsilon k! \cdot \mathrm{const}^k \\
\le & \mathrm{const}^k \cdot k!.
\end{align*}
\end{proof}

\end{document}